\appto{\equation}{%
  \notag
  \preto{\label}{%
    \refstepcounter{equation}
    \tag{\arabic{equation}}%
  }
}
\DeclareMathAlphabet{\mathpzc}{OT1}{pzc}{m}{it}
\newtheorem{assumption}{Assumption}
\newtheorem{claim}{Claim}
\newcommand*{\unity}{\textrm{{\usefont{U}{fplmbb}{m}{n}1}}}
\newcommand{\X}{X}
\newcommand{\x}{x}
\newcommand{\Y}{Y}
\newcommand{\y}{y}
\newcommand{\kk}{k}
\newcommand{\n}{\ell}
\newcommand{\D}{D}
\newcommand{\Q}{Q}
\newcommand{\q}{q}
\newcommand{\s}{S}
\newcommand{\T}{T}
\newcommand{\PP}{P}
\newcommand{\pp}{P}
\newcommand{\MM}{M}
\newcommand{\mm}{M}
\newcommand{\Po}{P_0}
\newcommand{\po}{P_0}
\newcommand{\Mo}{M_0}
\newcommand{\mo}{M_0}
\newcommand{\Pt}{P_1}
\newcommand{\pt}{P_1}
\newcommand{\Mt}{M_1}
\newcommand{\mt}{M_1}
\newcommand{\B}{\Theta}
\newcommand{\bb}{\theta}
\newcommand{\dd}{\alpha}
\newcommand{\reals}{{\mathbb R}}
\newcommand{\cX}{{\cal \X}}
\newcommand{\cY}{{\cal \Y}}
\newcommand{\cS}{{\cal \s}}
\newcommand{\Dkl}{D_{\rm kl}}
\newcommand{\prob}{\mathbb P}
\begin{document}

\title{Extremal Mechanisms for Local Differential Privacy}


\author{\name Peter Kairouz \email kairouz2@illinois.edu \\
       \addr Department of Electrical and Computer Engineering\\
        University of Illinois at Urbana-Champaign\\
       Urbana, IL 61801, USA
       \AND
       \name Sewoong Oh \email swoh@illinois.edu \\
       \addr Department of Industrial and Enterprise Systems Engineering\\
       University of Illinois at Urbana-Champaign\\
       Urbana, IL 61820, USA
       \AND
       \name Pramod Viswanath \email pramodv@illinois.edu \\
       \addr Department of Electrical and Computer Engineering\\
        University of Illinois at Urbana-Champaign\\
       Urbana, IL 61801, USA}

\editor{Mehryar Mohri}

\maketitle

\begin{abstract}%
Local differential privacy has recently surfaced as a strong measure of privacy in contexts where personal information remains private even from data analysts. Working in a setting where both the data providers and data analysts want to maximize the utility of statistical analyses performed on the released data, we study the fundamental trade-off between local differential privacy and utility. This trade-off is formulated as a constrained optimization problem: maximize utility subject to local differential privacy constraints. We introduce a combinatorial family of extremal privatization mechanisms, which we call staircase mechanisms, and show that it contains the optimal privatization mechanisms for a broad class of information theoretic utilities such as mutual information and $f$-divergences. We further prove that for any utility function and any privacy level, solving the privacy-utility maximization problem is equivalent to solving a finite-dimensional linear program, the outcome of which is the optimal staircase mechanism. However, solving this linear program can be computationally expensive since it has a number of variables that is exponential in the size of the alphabet the data lives in. To account for this, we show that two simple privatization mechanisms, the binary and randomized response mechanisms, are universally optimal in the low and high privacy regimes, and well approximate the intermediate regime.

\end{abstract}

\begin{keywords}
  local differential privacy, privacy-preserving machine learning algorithms, information theoretic utilities, $f$-divergences, mutual information, statistical inference, hypothesis testing, estimation
\end{keywords}
%
%

\ShortHeadings{Extremal Mechanisms for Local Differential Privacy}{ Kairouz,  Oh and  Viswanath}

\section{Introduction}
\label{sec:intro}

In statistical analyses involving data from individuals, there is an
increasing tension between the need to share data and the need to protect
sensitive information about the individuals. For example, users of social networking sites are
increasingly cautious about their privacy, but still find it inevitable to
agree to share their personal information in order to benefit from customized services such as
recommendations and personalized search \citep{acquisti1,acquisti2}. There is a certain
utility in sharing data for both data providers and data analysts, but at the
same time, individuals want {\em plausible deniability} when it comes to
sensitive information.

For such applications, there is a natural core optimization problem to be solved.
Assuming both the data providers and analysts want to maximize the
utility of the released data, how can they do so while preserving the
privacy of participating individuals? The formulation and study of a framework that addresses the fundamental tradeoff between utility and privacy is the focus of this paper.

\subsection{Local differential privacy}
The need for data privacy appears in two different contexts: the {\em local privacy}
context, as in when individuals disclose their personal information (e.g., voluntarily on social
network sites), and the {\em global privacy} context, as in when institutions
release databases of information of several people or answer queries on such databases
(e.g., US Government releases census
data, companies like Netflix release proprietary data for others to test state of the art
 machine learning algorithms). In both contexts, privacy is achieved by {\em randomizing} the data before releasing it. We study the setting of local privacy, in which data providers do not trust the data collector (analyst). Local privacy dates back to \cite{War65}, who proposed the \textit{randomized response} method to provide plausible deniability for individuals responding to sensitive surveys.

A natural notion of privacy protection is making inference of information beyond what is released  hard.
{\em Differential privacy} has been proposed in the global privacy context
to formally capture this notion of privacy \citep{Dwo06,DMNS06,DL09}. In a nutshell, differential privacy ensures that
an adversary
should not be able to reliably infer an individual's record in a database,
even with unbounded computational power and access to every other record in the database. Recently, \cite{DJW13} extended the notion of differential privacy to the local privacy context. Formally, consider a setting where there are $n$ data providers each owning a data $\X_i$ defined on an input alphabet $\cX$. The $\X_i$'s are independently sampled from some distribution $\PP_\nu$ parameterized by $\nu$. A statistical privatization mechanism $\Q$ is a conditional distribution that maps $\X_i \in \cX$ stochastically to $\Y_i\in \cY$, where $\cY$ is an output alphabet possibly larger than $\cX$. The $\Y_i$'s are referred to as the privatized (sanitized) views of $\X_i$'s. In a non-interactive setting, the same privatization mechanism $\Q$ is used locally by all individuals. This setting is shown in Figure \ref{fig:csm} for the special case of $n=2$. For some non-negative $\varepsilon$, we follow the definition of \cite{DJW13}
and say that a mechanism $\Q$ is
{\em $\varepsilon$-locally differentially private} if
\begin{eqnarray}
	\sup_{S\subset \cY, x,x' \in\cX} \frac{\Q(S|\x)}{\Q(S|\x')} \;\leq\; e^\varepsilon \;,
	\label{eq:deflocalDP1}
\end{eqnarray}
where $\Q(S|x) = \prob(Y_i\in S | X_i=x)$ represents the privatization mechanism.
This ensures that for small values of $\varepsilon$, given a privatized data $Y_i$, it is (almost) equally likely to have come from
any data, i.e. $x$ or $x'$. A small value of $\varepsilon$ means that we require a high level of privacy and a large value corresponds to a low level of privacy. At one extreme, for
$\varepsilon=0$, the privatized output must be independent of the private data, and on the other extreme, for $\varepsilon=\infty$, the privatized output can be made equal to the private data.

\begin{figure}

\setlength{\unitlength}{5cm}
~~~~~~~~ \begin{picture}(1,1)
     \put(0,0){\includegraphics[width=0.8\textwidth]{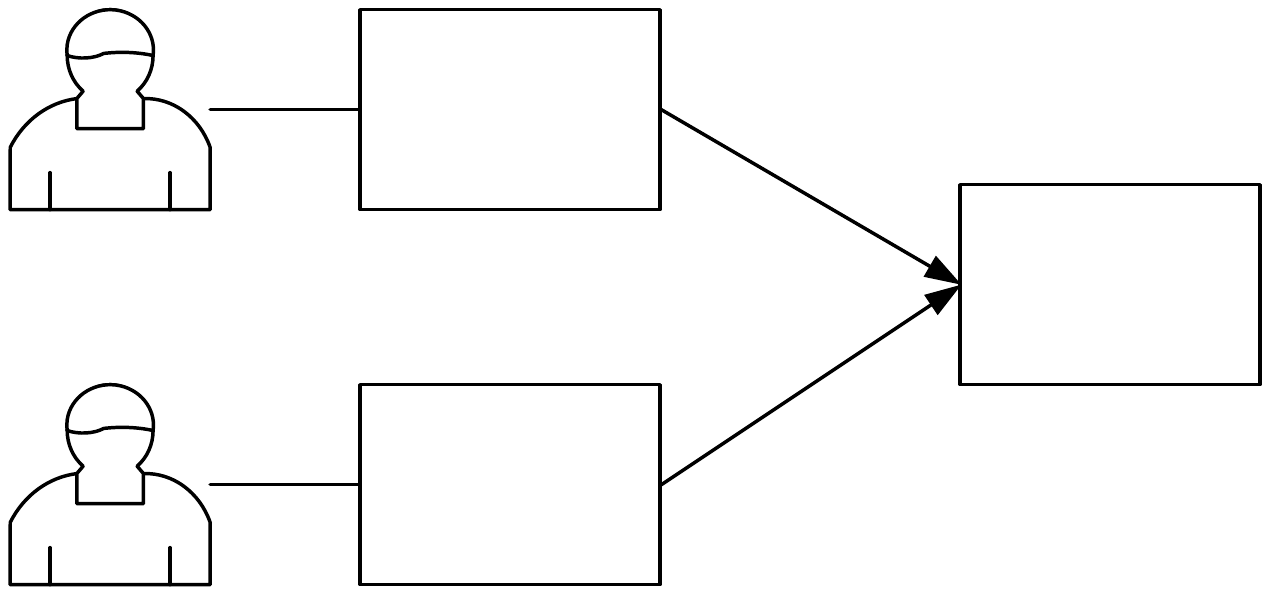}}
    \put(0.4,0.95){{$\X_{1} \sim \PP_{\nu}$}}
     \put(0.4,0.233){{$\X_{2} \sim \PP_{\nu}$}}
     \put(0.1,0.53){Clients}
     \put(0.785,0.95){Privatization}
     \put(0.985,0.85){$\Q$}
     \put(0.785,0.25){Privatization}
     \put(0.985,0.15){$\Q$}
    \put(1.4,0.9){\rotatebox{-32}{{$\Y_{1} \sim \MM_{\nu}$}}}
     \put(1.4,0.21){\rotatebox{32}{{$\Y_{2} \sim \MM_{\nu}$}}}
     \put(1.905,0.58){Data Analyst}
  \end{picture}%

\caption{Client $i$ owns $X_i$ sampled from $\PP_\nu$. Each $\X_i$ is privatized by the same $\varepsilon$-locally differentially private mechanism $\Q$. The data analyst only observes the privatized data ($Y_i$'s) and makes an inference on the statistics of the original distribution of the data.}
\label{fig:csm}
\end{figure}

\subsection{Information theoretic utilities for statistical analysis}
In  analyses of statistical databases, the analyst is interested in the {\em statistics} of the data as opposed to individual records.
Naturally, the utility should also be measured
in terms of the distribution rather than sample quantities.
Concretely, consider a client-server setting where each client with data $\X_i$
releases $\Y_i$, a privatized version of the data, via a non-interactive
$\varepsilon$-locally differentially private privatization mechanism $\Q$.
Assume all the clients use the same privatization mechanism $Q$,
and each client's data is an i.i.d.\ sample from a distribution $P_\nu$ for some parameter  $\nu$.
Given the privatized views $\{\Y_i\}_{i=1}^{n}$, the data analyst would like to make inferences based on the induced marginal distribution
\begin{eqnarray}
	\MM_\nu(S) &\equiv& \sum_{\x\in\cX} \Q(S|\x)  \PP_\nu(\x) \;,
	\label{eq:defM2}
\end{eqnarray}
for $S \subseteq \cY$.
We consider a broad class of convex utility functions,
and identify the class of optimal mechanisms, which we call {\em staircase mechanisms}, in Section \ref{sec:result}.
We apply this framework to two specific applications: (a) hypothesis testing where the utility is measured in $f$-divergences (Section \ref{sec:hyp}) and (b) information preservation where the utility is measured in mutual information (Section \ref{sec:est}).

In the binary hypothesis testing setting, $\nu \in \{0,1\}$; therefore, $\X$ can be generated by one of two possible distributions $\PP_0$ and $\PP_1$. The power to discriminate data generated from $\PP_0$ to data generated from $\PP_1$
depends on the `distance' between the marginals $\MM_0$ and $\MM_1$.
To measure the ability of such statistical discrimination,
our choice of utility of a particular privatization mechanism $\Q$ is an information theoretic quantity called Csisz\'ar's $f$-divergence defined as
\begin{eqnarray}
	D_f(M_0||M_1) = \sum_{\x\in\cX} f\Big(\,\frac{M_0(x)}{M_1(x)}\,\Big) \,M_1(x)\;, \label{eq:defdiv}
\end{eqnarray}
for some convex function $f$ such that $f(1)=0$.
The Kullback-Leibler (KL) divergence $\Dkl(\MM_0||\MM_1)$ is a special case with
$f(x)=x\log x$, and so is the total variation distance $\|\MM_0-\MM_1\|_{\rm TV}$ with $f(x)=(1/2)|x-1|$.
Such $f$-divergences capture the quality of statistical inference, such as
minimax rates of statistical estimation or error exponents in hypothesis testing \citep{TZ09,CT12}.
As a motivating example, suppose a data analyst wants to
test whether the data is generated from $\PP_0$ or $\PP_1$
based on privatized views $Y_1,\ldots,Y_n$.
According to Chernoff-Stein's lemma,
for a bounded type I error probability, the best type II error probability
scales as $e^{-n\, \Dkl(\MM_0||\MM_1)}$.
Naturally, we are interested in finding a privatization mechanism $\Q$
that minimizes the probability of error by solving the following constraint maximization problem
\begin{equation}
\label{eq:optKL}
\begin{aligned}
& \underset{\Q}{\text{maximize}}
& & \Dkl(\MM_0||\MM_1)  \\
& \text{subject to}
& & \Q \in \mathcal{\D}_{\varepsilon}
\end{aligned},
\end{equation}
where $\mathcal{\D}_{\varepsilon}$ is the set of all $\varepsilon$-locally differentially private mechanisms
 satisfying (\ref{eq:deflocalDP1}).

In the information preservation setting, $\X$ is generated from an underlying distribution $\PP$.
We are interested in quantifying how much information can be preserved when releasing a private view of the data. In other words, the data provider would like to release an $\varepsilon$-locally differentially private view $\Y$ of $\X$ that preserves the amount of information in $\X$ as much as possible.
The utility in this case is measured by the mutual information between $\X$ and $\Y$
\begin{equation}
I\left(\X;\Y\right)=\sum_{\cX}\sum_{ \cY} \pp\left(\x\right)\Q\left(\y|\x\right)\log\left(\frac{\Q\left(\y|\x\right)}{\sum_{l \in \cX}\pp\left(l\right)\Q\left(\y|l\right)}\right).
\label{eq:defmi}
\end{equation}
Mutual information, as the name suggests, measures the mutual dependence between two random variables.  It has been used as a criterion for feature selection and as a measure of similarity between two different clusterings of a data set, in addition to many other applications in signal processing and machine learning. To characterize the fundamental tradeoff between privacy and mutual information, we solve the following constrained maximization problem
\begin{equation}
\label{eq:optest}
\begin{aligned}
& \underset{\Q}{\text{maximize}}
& & I(\X;\Y)  \\
& \text{subject to}
& & \Q \in \mathcal{\D}_{\varepsilon}
\end{aligned},
\end{equation}
where $\mathcal{\D}_{\varepsilon}$ is the set of all $\varepsilon$-locally differentially private mechanisms
 satisfying (\ref{eq:deflocalDP1}).

Motivated by such applications in statistical analysis,
our goal is to provide a general framework for finding
optimal privatization mechanisms
that maximize information theoretic utilities
under local differential privacy.  We demonstrate the power of our techniques
in a very general setting that includes both hypothesis testing and information preservation.

\subsection{Our contributions}
We study the fundamental tradeoff between local differential privacy and a rich class of convex utility functions. This class of utilities includes several information theoretic quantities such as mutual information and $f$-divergences. The privacy-utility tradeoff is posed as a constrained maximization problem: maximize utility subject to local differential privacy constraints. This maximization problem is (a) nonlinear: the utility functions we consider are convex in $\Q$; (b) non-standard: we are maximizing instead of minimizing a convex function; and (c) infinite dimensional: the space of all differentially private mechanisms is infinite dimensional. We show, in Theorem \ref{thm:sc}, that for all utility functions considered and any privacy level $\varepsilon$, a {\em finite} family of {\em extremal} mechanisms (a finite subset of the corner points of the space of differentially private mechanisms), which we call {\em staircase} mechanisms, contains the optimal privatization mechanism. We further prove, in Theorem \ref{thm:lp}, that solving the original privacy-utility problem is equivalent to solving a finite dimensional linear program, the outcome of which is the optimal mechanism. However, solving this linear program can be computationally expensive because it has $2^{|\cX|}$ variables. To account for this, we show that two simple staircase mechanisms (the binary and randomized response mechanisms) are optimal in the high and low privacy regimes, respectively, and well approximate the intermediate regime.  This contributes an important progress in the differential privacy area, where the privatization mechanisms have been few and almost no exact optimality results are known. As an application, we show that the effective sample size reduces from $n$  to $\varepsilon^2 n$ under local differential privacy in the context of hypothesis testing.

We also study the fundamental tradeoff between utility and approximate differential privacy, a generalized notion of privacy that was first introduced in \cite{DKM06}. The techniques we develop for differential privacy do not generalize to approximate differential privacy. To account for this, we use an operational interpretation of approximate differential privacy (developed in \cite{KOV14}) to prove that a simple mechanism maximizes utility for all levels of privacy when the data is binary.

\subsection{Related work}

Our work is closely related to the recent work of \cite{DJW13}
where an upper bound on $\Dkl(\MM_0||\MM_1)$ was derived under the same local differential privacy setting.
Precisely, Duchi et.\ al.\ proved that the KL-divergence maximization problem
in \eqref{eq:optKL} is at most $4(e^\varepsilon-1)^2\|P_1-P_2\|_{TV}^2$.
This bound was further used to provide a minimax bound on statistical estimation
using information theoretic converse techniques such as Fano's and Le Cam's inequalities.
Such tradeoffs also provide tools for comparing various notions of privacy \citep{BD14}.

In a similar spirit, we are also interested in
maximizing information theoretic quantities under local differential privacy.
We generalize the results of \cite{DJW13}, and provide stronger results in the sense that we
$(a)$ consider a broader class of information theoretic utilities;
$(b)$ provide explicit constructions for the optimal mechanisms; and
$(c)$ recover the existing result of \cite[Theorem 1]{DJW13} (with a stronger condition on $\varepsilon$).

Our work also provides a formal connection to an information theoretic notion of privacy called information leakage \citep{CCG10,SRP13}. Given a privatization mechanism $\Q$, the information leakage is measured
  by the mutual information between the private data $\X$ and the released output $\Y$, i.e. $I(\X;\Y)$.
  Information leakage has been widely studied as a practical notion of privacy.
    However, connections to differential privacy have been studied only indirectly through comparisons to how much distortion
    is incurred under the two notions of privacy \citep{SS14, WYZ14}. We show that under $\varepsilon$-local differential privacy, $I(X;Y)$ is upper bounded by $0.5 \varepsilon^2 \max_{A\subseteq \cX} P(A) P(A^c) + O(\varepsilon^3)$ for small $\varepsilon$. Moreover, we provide an explicit privatization mechanism that achieves this bound.

While there is a vast literature on differential privacy,
exact optimality results are only known for a few cases.
The typical recipe is to propose a differentially private mechanism inspired by the work of \cite{Dwo06,DMNS06,MT07} and \cite{HR10},
and then establish its near-optimality by comparing the achievable utility to a converse,
for example in linear dynamical systems \citep{YZMD14}, principal component analysis \citep{CSS12,BBDS12,HR12,KT13}, linear queries \citep{HT10,HLM12}, logistic regression \citep{CM08}
and histogram release \citep{Lei11}.
In this paper, we take a different route and solve the utility maximization problem {\em exactly}.

Optimal differentially private mechanisms are known only in a few cases.
\cite{GRS12} showed that the
geometric noise adding mechanism is optimal (under a Bayesian setting)
for monotone utility functions under count queries (sensitivity one).
This was generalized by Geng et.\ al.\ (for a worst-case input setting)  who proposed a family of mechanisms
and proved its optimality for monotone utility functions under queries with arbitrary sensitivity \citep{GV12,GV13,GV13multi}.
The family of optimal mechanisms was called {\em staircase mechanisms} because for any $\y$ and any neighboring $\x$ and $\x'$,
the ratio of $\Q(\y|\x)$ to $\Q(\y|\x')$ takes one of three possible values $e^\varepsilon$, $e^{-\varepsilon}$, or $1$. Since the optimal mechanisms we develop also have an identical property, we retain the same nomenclature.

\subsection{Organization}

The remainder of the paper is organized as follows. In Section \ref{sec:result}, we introduce the family of staircase mechanisms, prove its optimality for a broad class of convex utility functions, and study its combinatorial structure. In Section \ref{sec:hyp}, we study the problem of private hypothesis testing and prove that two staircase mechanisms, the binary and randomized response mechanisms, are optimal for KL-divergence in the high and low privacy regimes, respectively, and (nearly) optimal the intermediate regime. We show, in Section \ref{sec:est}, similar results for mutual information. In Section \ref{sec:general}, we study approximate local differential privacy, a more general notion of local privacy. Finally, we conclude this paper in Section \ref{sec:discussion} with a few interesting and nontrivial extensions.

\section{Main Results}
\label{sec:result}

In this section, we first present a formal definition for staircase mechanisms
and prove that they are the optimal solutions to optimization
problems of the form \eqref{eq:opt}.
We then provide a combinatorial representation for staircase mechanisms that allows us to reduce the infinite dimensional nonlinear program of \eqref{eq:opt} to
a finite dimensional linear program with $2^{|\cX|}$ variables.
For any given privacy level $\varepsilon$ and utility function $U(\cdot)$, one can solve this linear program to obtain the optimal privatization mechanism, albeit with significant computational challenges
since the number of variables scales exponentially in the alphabet size. To address this issue, we prove, in Sections \ref{sec:hyp} and \ref{sec:est}, that two simple staircase mechanisms, which we call the binary mechanism and the randomized response mechanism, are optimal in the high and low privacy regimes, respectively, and well approximate the intermediate regime.
\subsection{Optimality of staircase mechanisms}
For an input alphabet $\cX$ with $|\cX|=\kk$, we represent the set of
$\varepsilon$-locally differentially private mechanisms that lead to output
alphabets $\cY$ with $|\cY|=\n$ by
\begin{equation*}
\mathcal{\D}_{\varepsilon,\n}=\mathcal{\Q}_{\kk\times\n}\cap\left\{\Q\;:\;\forall \; \x, \x'\in\mathcal{\X}, S \subseteq \mathcal{\Y},   \;\Big|\,\ln\frac{\Q\left(S|x\right)}{\Q\left(S|x'\right)}\,\Big| \leq\varepsilon\right\},
\end{equation*}
where $\mathcal{\Q}_{\kk\times\n}$ denotes the set of all $\kk \times \n$ dimensional
conditional distributions. The set of all $\varepsilon$-locally differentially private mechanisms is given by
\begin{equation}
\label{eq:defdp}
\mathcal{\D}_{\varepsilon}= \cup_{\n \in \mathbb{N}} \mathcal{\D}_{\varepsilon,\n}.
\end{equation}
The set of all conditional distributions acting on $\cX$ is given by $\mathcal{\Q}=\cup_{\n \in \mathbb{N}} \mathcal{\Q}_{\kk,\n}$.

We consider two
types of utility functions, one for the hypothesis testing setup and another for the information preservation setup. In the hypothesis testing setup, the utility is a
function of the privatization mechanism and two priors defined on the input alphabet.
Namely, $U\left(\Po, \Pt, \Q\right): \mathbb{S}^{\kk} \times\mathbb{S}^{\kk}
\times \mathcal{\Q} \rightarrow \mathbb{R}_{+}$, where $\Po$ and $\Pt$ are
positive priors defined on $\cX$, and $\mathbb{S}^{\kk}$ is the $(\kk-1)$-dimensional probability simplex. $\PP_{\nu}$ is said to be positive if
$\pp_{\nu}\left(\x\right) >0$ for all $\x \in \cX$. In the information preservation setup, the utility is a function of the
privatization mechanism and a prior defined on the input alphabet. Namely,
$U\left(\PP, \Q\right): \mathbb{S}^{\kk} \times \mathcal{\Q} \rightarrow
\mathbb{R}_{+}$, where $\PP$ is a positive prior defined on $\mathcal{\X}$. For notational convenience,
we will use $U\left(\Q\right)$ to refer to both $U\left(\PP, \Q\right)$ and
$U\left(\po, \pt, \Q\right)$.

\begin{definition}[Sublinear Functions] A function $\mu\left(z\right): \mathbb{R}^{\kk} \rightarrow \mathbb{R}$ is said to be sublinear if the following two conditions are met
\begin{enumerate}
\item $\mu\left(\gamma z\right)=\gamma \mu\left(z\right)$ for all $\gamma
    \in \mathbb{R}_{+}$.
\item $\mu\left(z_1 + z_2\right) \leq \mu\left(z_1\right) +
    \mu\left(z_2\right)$ for all $z_1,z_2 \in \mathbb{R}^{\kk}$.
\end{enumerate}
\end{definition}

Let $\Q_\y$ be the column of $\Q$ corresponding to $\Q(\y|\cdot)$ and $\mu$ be any sublinear function.
We are interested in utilities that can be decomposed into a sum of
sublinear functions.
We study the \emph{fundamental tradeoff between privacy and utility} by
solving the following constrained maximization problem
\begin{equation}
\label{eq:opt}
\begin{aligned}
& \underset{\Q}{\text{maximize}}
& & U\left(\Q\right)=\sum_{\y \in  \cY} \mu(\Q_\y) \\
& \text{subject to}
& & \Q \in \mathcal{\D}_{\varepsilon}
\end{aligned}.
\end{equation}
This includes maximization over information theoretic quantities
of interest in statistical estimation and hypothesis testing such as mutual information,
total variation, KL-divergence, and $\chi^2$-divergence \citep{TZ09}.
Since sub-linearity implies convexity,
\eqref{eq:opt} is in general a complicated nonlinear program:
we are maximizing (instead of minimizing) a convex function in $Q$; further,
the dimension of $Q$ might be unbounded: the optimal privatization mechanism $\Q^*$ might produce an infinite output alphabet $\cY$. The following theorem proves that
one never needs an output alphabet larger than
the input alphabet in order to achieve the maximum utility,
and provides a combinatorial representation for the optimal solution.
\begin{theorem}
\label{thm:sc}
	For any sublinear function $\mu$ and any $\varepsilon\geq0$,
	there exists an optimal mechanism $\Q^*$
	maximizing the utility in \eqref{eq:opt}
	over all $\varepsilon$-locally differentially private mechanisms,
	such that
	\begin{itemize}
	\item[$(a)$] the output alphabet size is at most the input alphabet size, i.e. $|\cY|\leq|\cX|$; and
	\item[$(b)$] 	for all  $\y\in\cY$, and $\x,\x'\in\cX$
\begin{eqnarray}
	\Big| \ln \frac{\Q^*(\y|\x)}{\Q^*(\y|\x')} \Big| \in \{0, \varepsilon\} \;.
	\label{eq:defsc}
	\end{eqnarray}
	\end{itemize}
\end{theorem}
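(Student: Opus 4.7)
The plan is to show that an optimal mechanism can be reached by first decomposing each column along the extremal rays of the per-column privacy cone (giving the staircase structure (b)), then invoking standard linear-programming theory to cap the number of columns (giving (a)). The key objects are the vectors $\theta_S \in \reals^k$ indexed by nonempty $S \subseteq \cX$ with $\theta_S(x) = 1$ for $x \in S$ and $\theta_S(x) = e^{-\varepsilon}$ otherwise. A direct check shows these span the extremal rays of the cone $C_\varepsilon = \{q \in \reals_+^k : q(x) \leq e^{\varepsilon} q(x')\ \forall x,x'\}$, which is exactly the set of valid (unnormalized) columns of an $\varepsilon$-locally differentially private mechanism.

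To prove (b), take any $Q \in \mathcal{D}_\varepsilon$ and decompose each column as $Q_y = \sum_S \alpha_y^S \theta_S$ with $\alpha_y^S \geq 0$. Construct $\widetilde{Q}$ by splitting every column $y$ into one new column per nonzero term in its decomposition: replace $Q_y$ by the family $\{\alpha_y^S \theta_S\}_{S:\alpha_y^S>0}$. Each new column lies in $C_\varepsilon$, and since $\sum_S \alpha_y^S \theta_S = Q_y$ the row sums are preserved, so $\widetilde{Q} \in \mathcal{D}_\varepsilon$. By subadditivity and positive homogeneity of $\mu$,
\begin{equation*}
\mu(Q_y) \;\leq\; \sum_S \mu(\alpha_y^S \theta_S) \;=\; \sum_S \alpha_y^S\, \mu(\theta_S),
\end{equation*}
so summing over $y$ yields $U(Q) \leq U(\widetilde{Q})$. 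Hence some maximizer has every column proportional to a $\theta_S$, i.e.\ every ratio $Q(y|x)/Q(y|x')$ lies in $\{1, e^{\varepsilon}, e^{-\varepsilon}\}$, which is (b).

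For (a), two columns with the same shape can be merged: columns $a\theta_S$ and $b\theta_S$ can be combined into $(a+b)\theta_S$, which preserves row sums and, by positive homogeneity, leaves the utility unchanged since $\mu(a\theta_S)+\mu(b\theta_S) = (a+b)\mu(\theta_S)$. After merging, the mechanism is encoded by nonnegative scalars $\{\alpha_S\}$ indexed by nonempty $S \subseteq \cX$; the row-sum constraints become the $k$ linear equalities
\begin{equation*}
\sum_{S \ni x} \alpha_S + e^{-\varepsilon}\!\! \sum_{S \not\ni x} \alpha_S \;=\; 1, \qquad x \in \cX,
\end{equation*}
and the utility becomes the linear objective $\sum_S \mu(\theta_S)\,\alpha_S$. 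This is a finite-dimensional linear program, so some optimal basic feasible solution has at most $k$ nonzero coordinates, yielding $|\cY| \leq k$.

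The main obstacle I anticipate is attainment: a priori the output alphabet of $Q \in \mathcal{D}_\varepsilon$ is arbitrary, so it is not obvious that the supremum of $U$ is realized. The extremal-ray splitting above already reduces any mechanism to one with at most $2^k - 1$ distinct column-shapes while weakly improving the utility, so the supremum over $\mathcal{D}_\varepsilon$ equals the supremum over the compact polytope $\mathcal{D}_{\varepsilon, 2^k-1}$; on this polytope the continuous function $\sum_y \mu(Q_y)$ (sublinear implies convex, hence continuous on $\reals^k$) attains its maximum, after which the LP step sharpens $|\cY|$ from $2^k - 1$ down to $k$.
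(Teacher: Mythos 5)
Your proof is correct, and its core ingredients (conic decomposition of each column into staircase patterns, sublinearity to justify splitting, and an LP basic-solution count to cap the output alphabet at $k$) coincide with the paper's. Where you differ is in how attainment and the initial dimension reduction are handled. The paper first proves, purely from convexity of $U$, that corner points of $\mathcal{D}_{\varepsilon,\ell}$ have at most $k$ nonzero columns (Lemma~\ref{lemma:max_non_zero}), which both establishes attainment and shows $|\cY|\leq k$ for \emph{any} convex utility (this is why the paper notes that part $(a)$ holds more generally than for sublinear sums); only afterwards is sublinearity invoked to get the staircase structure and the LP reformulation. You instead use sublinearity immediately: split-and-merge reduces any mechanism to at most $2^k-1$ distinct staircase columns with no loss of utility, then compactness of $\mathcal{D}_{\varepsilon,2^k-1}$ gives attainment, and the LP step finishes $(a)$. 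Your route is shorter and more self-contained for the sublinear case, at the cost of not separately giving $(a)$ under mere convexity. One small inaccuracy worth fixing: $\theta_{\cX}=\unity$ is not an extremal ray of $C_\varepsilon$ when $k\geq 2$ and $\varepsilon>0$ (all defining inequalities are strict there, so it lies in the interior); the extremal rays are exactly $\theta_S$ for nonempty \emph{proper} $S\subsetneq\cX$. This does not affect your argument, since $\unity$ lies in the conic hull of those rays and including it in the generating set is merely redundant, but the claim as stated is imprecise.
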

The first claim of bounded alphabet size is more generally true for
any general utility $U\left(\Q\right)$ that is convex in $\Q$ (not necessarily
decomposing into a sum of sublinear functions as in \eqref{eq:opt}).
The second claim establishes that
there is an optimal mechanism with an extremal structure;
 the absolute value of the log-likelihood ratios can only take one of the two extremal values: 0 or $e^\varepsilon$  (see Figure \ref{fig:mechanism} for example).
 We refer to such a mechanism as a staircase mechanism,
and define the {\em family of staircase mechanisms} formally as
\begin{eqnarray*}
	\cS_\varepsilon \;\equiv\; \{Q \,|\, \text{ satisfying }\eqref{eq:defsc}\}\;.
\end{eqnarray*}
For all choices of $U\left(\Q\right) = \sum_{\cY} \mu(\Q_\y)$ and any $\varepsilon \geq 0$, Theorem \ref{thm:sc} implies that the family of staircase mechanisms includes the optimal solutions to maximization problems of the form \eqref{eq:opt}.
Notice that  staircase mechanisms are $\varepsilon$-locally differentially private,
since any $\Q$ satisfying \eqref{eq:defsc} implies that $\Q(y|x)/\Q(y|x')  \leq e^\varepsilon$.

\begin{figure}[h]
	\begin{center}
	\includegraphics[width=.32\textwidth]{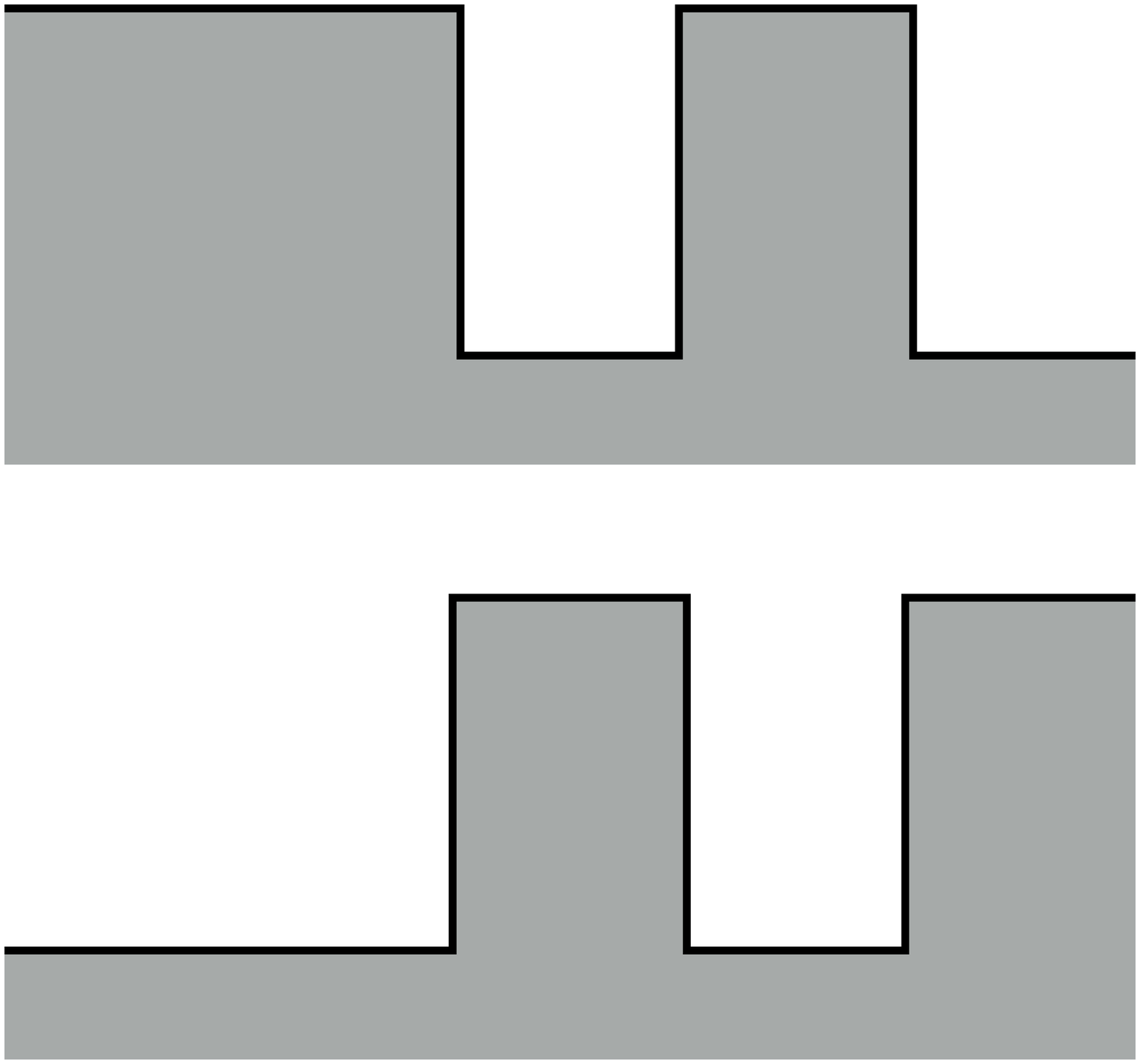}
	\put(-149,75){\small$y=1$}
	\put(-130,26){\small$2$}
	\put(-133,-10){\small$x=1$}
	\put(-93,-10){\small$2$}
	\put(-72,-10){\small$3$}
	\put(-52,-10){\small$4$}
	\put(-31,-10){\small$5$}
	\put(-56,108){\small$\frac{e^\varepsilon}{1+e^\varepsilon}$}
	\put(-36,76){\small$\frac{1}{1+e^\varepsilon}$}
	\put(-163,150){$Q^T=\frac{1}{1+e^\varepsilon}\begin{bmatrix}
		e^\varepsilon&e^\varepsilon&1&e^\varepsilon&1\\
		1&1&e^\varepsilon&1&e^\varepsilon
		\end{bmatrix}$}
		\hspace{1.2cm}
	\includegraphics[width=.32\textwidth]{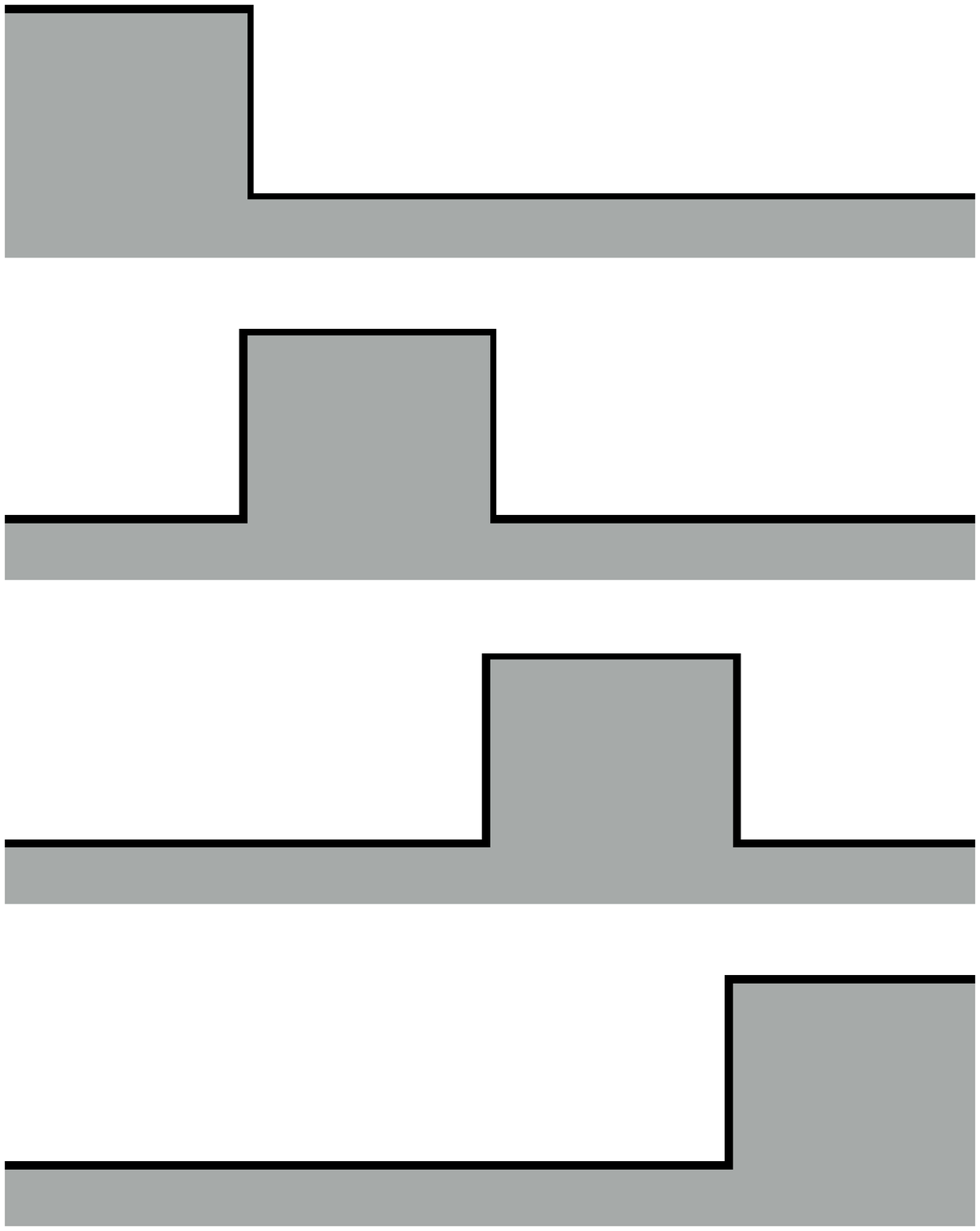}
	\put(-142,92){\small$y=1$}
	\put(-125,64){\small$2$}
	\put(-125,38){\small$3$}
	\put(-125,10){\small$4$}
	\put(-125,-10){\small$x=1$}
	\put(-88,-10){\small$2$}
	\put(-65,-10){\small$3$}
	\put(-43,-10){\small$4$}
	\put(-112,112){\small$\frac{e^\varepsilon}{3+e^\varepsilon}$}
	\put(-90,96){\small$\frac{1}{3+e^\varepsilon}$}
	\put(-163,150){$Q^T=\frac{1}{3+e^\varepsilon}\begin{bmatrix}
		e^\varepsilon&1&1&1\\
		1&e^\varepsilon&1&1\\
		1&1&e^\varepsilon&1\\
		1&1&1&e^\varepsilon
		\end{bmatrix}$}
	\end{center}
	\caption{Examples of staircase mechanisms: the binary (left) and
	the randomized response (right) mechanisms.}
	\label{fig:mechanism}
\end{figure}

For global differential privacy,
we can generalize the definition of staircase mechanisms to hold for all neighboring database queries $\x,\x'$ (or equivalently within some sensitivity),
and recover all known existing optimal mechanisms.
Precisely, the geometric mechanism shown to be optimal in \cite{GRS12},
and the mechanisms shown to be optimal in \cite{GV12,GV13} (also called staircase mechanisms)
are special cases of the staircase mechanisms defined above.
We believe that the characterization of these extremal mechanisms
and the analysis techniques developed in this paper can be of independent interest to
researchers interested  in optimal mechanisms for global privacy and more general utilities.

\subsection{Combinatorial representation of the staircase mechanisms}
Now that we know that staircase mechanisms are optimal, we can try to combinatorially search for the
best staircase mechanism for an instance of the function $\mu$ and  a fixed $\varepsilon$.
To this end, we give a simple representation for  all staircase mechanisms,  exploiting the fact that they are scaled copies of a finite number of patterns.

Let $Q\in\reals^{|\cX|\times |\cY|}$ be a staircase mechanism, and $k=|\cX|$ denote the size of the input alphabet.
Then, from the definition of staircase mechanisms,
$Q(y|x)/Q(y|x') \in \{e^{-\varepsilon},1,e^{\varepsilon} \}$ and
each column $Q(y|\cdot)$ must be proportional to one of the canonical staircase patterns we define next.

\begin{definition}[Staircase Pattern Matrix]
\label{def:staircase_matrix} Let $b_j$ be the $\kk$-dimensional binary vector corresponding to the binary representation of $j$ for $j \leq 2^\kk-1$. A matrix $\s^{(\kk)} \in \{1,e^\varepsilon\}^{\kk\times2^\kk}$ is called a staircase pattern matrix if the $j$-th column of $\s^{(\kk)}$ is $\s^{(\kk)}_j= \left(e^\varepsilon -1\right)b_{j-1} + \unity$, for $j\in\{1,\ldots,2^\kk\}$. Each column of $\s^{(\kk)}$ is a staircase pattern.
\end{definition}
When $k=3$, there are $2^k=8$ staircase patterns and the staircase pattern matrix
is given by
\begin{eqnarray*}
	S^{(3)} = \begin{bmatrix}
	1&1& 1&1&e^{\varepsilon} &e^{\varepsilon}&e^{\varepsilon}&e^{\varepsilon}\\
	1&1& e^{\varepsilon}&e^{\varepsilon}&1&1&e^{\varepsilon}&e^{\varepsilon}\\
	1&e^{\varepsilon}& 1& e^{\varepsilon} & 1& e^{\varepsilon} & 1& e^{\varepsilon}\\
	\end{bmatrix}\;.
\end{eqnarray*}
For all values of $k$, there are exactly $2^k$ such patterns, and
any column $Q(y|\cdot)$ of $Q$, a staircase mechanism,  is a scaled version of one of the columns of $S^{(k)}$.
%
%
Using this pattern matrix, we can show that any staircase mechanism
$Q$ can be represented as
\begin{eqnarray}
	Q&=&S^{(k)} \B\;,
	\label{eq:canonical}
\end{eqnarray}
where $\B={\rm diag}(\bb)$ is a $2^k\times2^k$ diagonal matrix and $\bb$ is a $2^\kk$-dimensional vector representing the scaling of the columns of $S^{(k)}$.
We can now formulate the problem of maximizing the utility
as a linear program and prove their equivalence.
\begin{theorem}
	\label{thm:lp}
	For any sublinear function $\mu$ and any $\varepsilon\geq0$,
	the nonlinear program of \eqref{eq:opt} and the following linear program
	have the same optimal value
\begin{eqnarray}
      \label{eq:optsc}
      \underset{\bb \in\reals^{2^k}}{\text{maximize}} && \sum_{j=1}^{2^k} \mu(S^{(k)}_j) \bb_{j} = \mu^T\bb \\
      \text{subject to}&& S^{(k)}\bb \, =\, \unity\;\nonumber\\
	&& \bb\geq 0\;, \nonumber
\end{eqnarray}
 and the optimal solutions are related by \eqref{eq:canonical}.
 \end{theorem}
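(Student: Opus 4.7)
The plan is to exhibit a value-preserving correspondence between staircase mechanisms and feasible points of the linear program \eqref{eq:optsc}, and then invoke Theorem \ref{thm:sc} to restrict the supremum in \eqref{eq:opt} to staircase mechanisms without loss of optimality.

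First, I would argue that every staircase mechanism admits the canonical form $\Q = S^{(\kk)}\,\mathrm{diag}(\bb)$ for some $\bb \geq 0$. The staircase condition $|\ln(\Q(\y|\x)/\Q(\y|\x'))| \in \{0,\varepsilon\}$ forces each column $\Q_\y$, after dividing by its minimum entry, to be a $\{1,e^\varepsilon\}$-valued vector, i.e.\ a scalar multiple of exactly one column of $S^{(\kk)}$. When several columns of $\Q$ are proportional to the same pattern $S^{(\kk)}_j$, I would coalesce them into a single coordinate $\bb_j$ by summing their scale factors; positive homogeneity of $\mu$ gives $\mu(\gamma_1 v)+\mu(\gamma_2 v)=(\gamma_1+\gamma_2)\mu(v)$, so this merging preserves both the row sums of $\Q$ and the objective $\sum_\y \mu(\Q_\y)$. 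In particular, a staircase mechanism with a possibly large output alphabet is mapped to a single vector $\bb\in\reals^{2^\kk}$.

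Second, I would translate feasibility and objective value. Nonnegativity of $\Q$ becomes $\bb\geq 0$; the constraint that each row of $\Q$ sums to one becomes, upon substituting $\Q = S^{(\kk)}\,\mathrm{diag}(\bb)$, the linear equality $S^{(\kk)}\bb=\unity$; local differential privacy is automatic for the staircase form. For the objective, positive homogeneity gives
\begin{equation*}
U(\Q) \;=\; \sum_{j=1}^{2^\kk} \mu(\bb_j\,S^{(\kk)}_j) \;=\; \sum_{j=1}^{2^\kk} \bb_j\,\mu(S^{(\kk)}_j) \;=\; \mu^T\bb,
\end{equation*}
which matches the LP objective. The reverse direction is immediate: any LP-feasible $\bb$ yields, via $\Q=S^{(\kk)}\,\mathrm{diag}(\bb)$, a valid $\varepsilon$-locally differentially private mechanism with the same utility, since the columns of $S^{(\kk)}$ have entries in $\{1,e^\varepsilon\}$ and $S^{(\kk)}\bb=\unity$ enforces unit row sums.

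Combining these maps shows that the two problems have the same optimal value and that optimizers correspond under \eqref{eq:canonical}. The main subtlety I expect to work through carefully is the coalescing step: it is essential that $\mu$ is positively homogeneous (and not merely subadditive), because only then does the utility depend on a staircase mechanism solely through the total mass assigned to each of the $2^\kk$ patterns rather than through the individual columns used. Once that observation is in place, both halves of the correspondence reduce to routine linear algebra on the identity $\Q = S^{(\kk)}\,\mathrm{diag}(\bb)$.
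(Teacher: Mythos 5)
Your argument is correct as a logical implication, but it takes the opposite route from the paper and, in the context of the paper's development, would be circular. You invoke Theorem~\ref{thm:sc} (existence of an optimal staircase mechanism) as a black box, then observe that every staircase mechanism coalesces to the canonical form $\Q=S^{(\kk)}\mathrm{diag}(\bb)$ and that positive homogeneity turns $U(\Q)$ into the linear objective $\mu^T\bb$. That translation is clean and all the steps check out, including the observation that coalescing columns proportional to the same pattern preserves both row sums and, by homogeneity, the objective.

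The paper, however, proves the two theorems in the reverse order: Theorem~\ref{thm:lp} is established \emph{first}, and Theorem~\ref{thm:sc}(b) is then \emph{derived} from it. Concretely, the paper uses Lemma~\ref{lemma:diff_priv_reps} to write any $\varepsilon$-LDP mechanism as $\Q=\s\B$ with $\s\in[1,e^\varepsilon]^{\kk\times\n}$, invokes Theorem~\ref{thm:optmlty_randmzd_rspns} to bound the output alphabet, expresses each column $\s^*_j$ as a convex combination of the $2^\kk$ vertices of the cube $[1,e^\varepsilon]^{\kk}$ (the staircase patterns), and uses convexity of $\mu$ to push the objective onto those vertices, giving an LP-feasible $\tilde\bb$ with $U(\tilde\Q)\geq U(\Q^*)$. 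This establishes the LP equality. Only \emph{then} does the paper obtain a staircase optimizer, by picking a basic feasible solution of the LP polytope (which has at most $\kk$ nonzero coordinates). So your plan, ``invoke Theorem~\ref{thm:sc} to restrict to staircase mechanisms,'' assumes exactly the fact whose proof rests on Theorem~\ref{thm:lp}. If Theorem~\ref{thm:sc} were established by some independent argument, your proof would be a perfectly good and shorter alternative; as things stand in the paper, you would need to replace the appeal to Theorem~\ref{thm:sc} by the convex-combination-of-cube-vertices argument (or some equivalent) to avoid circularity. One minor loose end in your write-up: an all-zero column $\Q_\y$ has no minimum to divide by; such columns carry zero mass and can be dropped outright (or assigned $\bb_j=0$), but this should be said explicitly.
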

Thus, the infinite dimensional nonlinear  program of \eqref{eq:opt} is now reduced
to a finite dimensional linear program. The constraints in \eqref{eq:optsc} ensure that we
get a valid probability matrix $Q=S^{(k)}\Theta$ with rows that sum to one.
One could potentially solve this LP with $2^k$ variables but its
computational complexity scales exponentially in the alphabet size $k=|\cX|$.
For practical values of $k$ this might not always be possible. However, in
the following sections, we prove that in the high privacy regime
($\varepsilon\leq\varepsilon^*$ for some positive $\varepsilon^*$), there is
a single optimal mechanism, which we call the {\em binary mechanism}, which
dominates over all other mechanisms in a very strong sense for all utility
functions of practical interest.

In order to understand the above theorem,
observe that both the objective function and differential privacy constraints
are invariant under {\em permutations} (or relabelling) of the columns of a privatization mechanism $\Q$. In other words, shuffling the columns of an $\varepsilon$-locally differentially private mechanism $\Q$ results in a new $\varepsilon$-locally differentially private mechanism $\Q'$ that achieves the same utility. Similarly, both the objective function and differential privacy constraints
are invariant under {\em merging/splitting} of outputs with the same pattern.
To be specific, consider a privatization mechanism $\Q$ and suppose that there exist two outputs $y$ and $y'$ that have the same pattern, i.e.
 $Q(y|\cdot) = C\,Q(y'|\cdot)$ for some positive constant $C$.
Then, we can consider a new mechanism $Q'$ by merging the two columns corresponding to $y$ and $y'$.
Let $y''$ denote this new output.
It follows that $Q'$ satisfies the differential privacy constraints and the resulting utility is also preserved.
Precisely, using the fact that $Q(y|\cdot) = C\,Q(y'|\cdot)$, it follows that
\begin{eqnarray*}
	 \mu(Q_y) + \mu(Q_{y'})  &=&  \mu((1+C)Q_y)
		\;=\; \mu(Q'_{y''}) \;,
\end{eqnarray*}
 by the homogeneity property of $\mu$. Therefore, we can naturally define equivalence classes for staircase mechanisms
that are equivalent up to a permutation of columns and merging/splitting of columns with the same pattern:
\begin{eqnarray*}
	[Q] = \{ Q' \in \cS_\varepsilon \,|\,\exists \text{ a sequence of permutations and merge/split of columns from }Q' \text{ to }Q \}\;.
\end{eqnarray*}
To represent an equivalence class,
we use a mechanism in $[Q]$ that is ordered and merged to match the patterns of the pattern matrix $S^{(k)}$.
For any staircase mechanism $Q$, there exists a possibly different staircase mechanism $Q'\in[Q]$
 such that $Q'=S^{(k)}\B$ for some diagonal matrix $\B$ with nonnegative entries.
Therefore, to solve optimization problems of the form \eqref{eq:opt},
we can restrict our attention to such representatives of equivalent classes.
Further, for privatization mechanisms of the form $\Q=S^{(k)}\B$, the objective function takes the form $\sum_j \mu(S^{(k)}_j) \bb_{j}$,
 a simple linear function of $\B$.

\section{Hypothesis Testing}
\label{sec:hyp}

In this section, we study the fundamental tradeoff between
local differential privacy and hypothesis testing. In this setting,
there are $n$ individuals each with data $\X_i$ sampled from a distribution $\PP_\nu$
for a fixed $\nu\in\{0,1\}$. Let $\Q$ be a non-interactive privatization
mechanism guaranteeing $\varepsilon$-local differential privacy. The output
of the privatization mechanism $Y_i$ is distributed
according to the induced marginal $\MM_\nu$ defined in
\eqref{eq:defM2}. With a slight abuse of notation, we will use
$\MM_\nu$ and $\PP_\nu$ to represent both probability distributions and
probability mass functions.
The power to discriminate data sampled from $\PP_0$ to data sampled from $\PP_1$
depends on the `distance' between the marginals $\MM_0$ and $\MM_1$.
To measure the ability of such statistical discrimination,
our choice of utility of a privatization mechanism $Q$ is an information theoretic quantity called Csisz\'ar's $f$-divergence defined as
\begin{eqnarray}
	D_f(M_0||M_1) = \sum_{\cY} M_1(y) f\Big(\,\frac{M_0(y)}{M_1(y)}\,\Big) \,=U\left(\Po,\Pt,\Q\right)=U\left(\Q\right)\;, \label{eq:defdiv}
\end{eqnarray}
for some convex function $f$ such that $f(1)=0$. The Kullback-Leibler (KL)
divergence $\Dkl(\MM_0||\MM_1)$ is a special case of $f$-divergence with
$f(x)=x\log x$. The total variation distance $\|\MM_0-\MM_1\|_{\rm TV}$ is also special
case with $f(x)=(1/2)|x-1|$. Note that in general, the $f$-divergence is not necessarily a distance metric
since it need not be symmetric or satisfy triangular inequality. We are
interested in characterizing the optimal solution to
\begin{equation}
\label{eq:optf}
\begin{aligned}
& \underset{\Q}{\text{maximize}}
& & D_f(\MM_0||\MM_1)  \\
& \text{subject to}
& & \Q \in \mathcal{\D}_{\varepsilon}
\end{aligned},
\end{equation}
where $\mathcal{\D}_{\varepsilon}$ is the set of all $\varepsilon$-locally
differentially private mechanisms defined in \eqref{eq:defdp}.

A motivating example for this choice of utility is the Neyman-Pearson hypothesis testing framework \citep{CT12}.
Given the privatized views $\{Y_i\}_{i=1}^{n}$,
the data analyst wants to test whether they are generated from $\MM_0$ or $\MM_1$.
Let the null hypothesis be $H_0:\text{$Y_i$'s are generated from }\MM_0$, and
the alternative hypothesis $H_1:\text{$Y_i$'s are generated from }\MM_1$.
For a choice of rejection region $R\subseteq \cY^n$, the probability of false alarm (type I error)
is $\alpha=\MM_0^n(R)$ and the probability of miss detection (type II error) is $\beta=\MM_1^n( \cY^n\setminus R)$.
Let $\beta^{\alpha^*}=\min_{R\subseteq\cY^n,\alpha<\alpha^*} \beta$ denote the
minimum type II error achievable while keeping the type I error rate at most $\alpha^*$.
According to Chernoff-Stein lemma \citep{CT12}, we know that
\begin{eqnarray*}
	\lim_{n\to\infty} \frac{1}{n}\log \beta^{\alpha^*} = -\Dkl(\MM_0||\MM_1)\;.
\end{eqnarray*}
Suppose the analyst knows $\PP_0$, $\PP_1$, and $Q$.
Then in order to achieve optimal asymptotic error rate,
one would want to maximize the KL divergence of the induced marginals,
over all $\varepsilon$-locally differentially private mechanisms $Q$.
The results we present in this section (Theorems \ref{thm:hypbin} and \ref{thm:hyprr} to be precise) provide an explicit construction of optimal mechanisms
in high and low privacy regimes. Using these optimality results, we prove a fundamental limit on
the achievable error rates under differential privacy.
Precisely, with data collected from an
$\varepsilon$-locally differentially privatization mechanism,
one cannot achieve an asymptotic type II error smaller than
\begin{eqnarray*}
	\lim_{n\to\infty} \frac{1}{n}\log \beta^{\alpha^*} \geq - \frac{(1+\delta)(e^\varepsilon-1)^2}{(e^\varepsilon+1)}\|\PP_0-\PP_1\|_{\rm TV}^2 \;\geq - \frac{(1+\delta)(e^\varepsilon-1)^2}{2(e^\varepsilon+1)} \Dkl(\PP_0||\PP_1) \; \;,
\end{eqnarray*}
whenever $\varepsilon\leq \varepsilon^*$, where $\varepsilon^*$
is dictated by Theorem \ref{thm:hypbin} and $\delta>0$ is some arbitrarily small but positive constant. In the equation above, the second inequality follows from Pinsker's inequality.
Since $(e^{\varepsilon}-1)^2=O(\varepsilon^2)$ for small $\varepsilon$,
the effective sample size is now reduced from $n$ to $\varepsilon^2 n$.
This is the price of privacy.
In the low privacy regime where
$\varepsilon\geq\varepsilon^*$, for $\varepsilon^*$ dictated by Theorem \ref{thm:hyprr}, one cannot achieve an asymptotic type II error smaller than
\begin{eqnarray*}
	\lim_{n\to\infty} \frac{1}{n}\log \beta^{\alpha^*} \geq - \Dkl(\PP_0||\PP_1) + (1-\delta)G(\PP_0,\PP_1)e^{-\varepsilon}\;.
\end{eqnarray*}
\subsection{Optimality of staircase mechanisms}
\label{sec:hypopt} From the definition of $D_f(\Mo||\Mt)$, we have that
\begin{equation*} D_f(\Mo||\Mt)=\sum_{\cY}(\PP_1^T\Q_\y) f(\PP_0^T\Q_\y/\PP_1^T\Q_\y)=\sum_{\cY}\mu\left(\Q_\y\right), \end{equation*}
where $\PP_\nu^T\Q_\y =\sum_{\cX}
\PP_\nu\left(\x\right)\Q\left(\y|\x\right) $ and
$\mu\left(\Q_\y\right)=(\PP_1^T\Q_\y) f(\PP_0^T\Q_\y/\PP_1^T\Q_\y)$. For any
$\gamma >0$,
\begin{eqnarray*}
\mu\left(\gamma \Q_\y\right) &=& \left(\PP_1^T(\gamma\Q_\y)\right) f\left(\PP_0^T\left(\gamma\Q_\y\right)/\PP_1^T\left(\gamma\Q_\y\right)\right) \nonumber \\
&=& \gamma \left(\PP_1^T\Q_\y\right) f\left(\PP_0^T\Q_\y/\PP_1^T\Q_\y\right) \nonumber \\
&=& \gamma \mu\left(\Q_\y\right).
\end{eqnarray*}
Moreover, since the function $\phi(z,t)=tf\left(\frac{z}{t}\right)$ is convex
in $(z,t)$ for $0\leq z,t \leq 1$, then $\mu$ is convex in $\Q_\y$. Convexity
and homogeniety together imply sublinearlity. Therefore, Theorems
\ref{thm:sc} and \ref{thm:lp} apply to $D_f(\Mo||\Mt)$ and we have that
staircases are optimal.

\subsection{Optimality of the binary mechanism}
For a given $\PP_0$ and $\PP_1$, the {\em binary mechanism} is defined as a staircase mechanism with only two outputs
$\Y\in\{0,1\}$ satisfying (see Figure \ref{fig:mechanism})
\begin{eqnarray}
\Q(0|x) \,=\, \left\{
\begin{array}{rl}
	\dfrac{e^\varepsilon}{1+e^\varepsilon}& \text{ if } \PP_0(x)\geq \PP_1(x)\;,\\
	\dfrac{1}{1+e^\varepsilon}& \text{ if } \PP_0(x)< \PP_1(x)\;.\\
\end{array}
\right.\;\;\;
\Q(1|x) \,=\, \left\{
\begin{array}{rl}
	\dfrac{e^\varepsilon}{1+e^\varepsilon}& \text{ if } \PP_0(x)< \PP_1(x)\;,\\
	\dfrac{1}{1+e^\varepsilon}& \text{ if } \PP_0(x)\geq \PP_1(x)\;. \\
\end{array}
\right.
\label{eq:defhypbin}
\end{eqnarray}
Although this mechanism is extremely simple, perhaps surprisingly, we will
establish that it is the optimal mechanism when a high level of privacy is
required. Intuitively, the output should be very noisy in the high privacy regime,
and we are better off sending just one bit of information that tells you
whether your data is more likely to have come from $\PP_0$ or $\PP_1$.
\begin{theorem}
\label{thm:hypbin}
	For any pair of distributions $\PP_0$ and $\PP_1$,
	there exists a positive $\varepsilon^*$ that depends on  $\PP_0$ and $\PP_1$ such that
	for any $f$-divergences and any positive $\varepsilon\leq \varepsilon^*$,
	the binary mechanism maximizes the $f$-divergence between the induced marginals over all
	$\varepsilon$-locally differentially private mechanisms.
\end{theorem}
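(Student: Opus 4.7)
By Theorem \ref{thm:lp}, the optimization \eqref{eq:optf} is equivalent to the finite linear program with variables $\theta_A \ge 0$ indexed by subsets $A \subseteq \cX$ (where the pattern $s_A$ has $s_A(x)=e^\varepsilon$ for $x\in A$ and $1$ otherwise), with constraint $S^{(k)}\theta = \unity$ and objective $\sum_A \theta_A\,\mu(s_A)$, where $\mu(s_A)=(P_1^T s_A)\,f(P_0^T s_A/P_1^T s_A)$. In this language the binary mechanism is the feasible point that activates exactly two patterns, indexed by $A^\star := \{x:P_0(x)\ge P_1(x)\}$ and its complement $A^{\star c}$, with $\theta_{A^\star}=\theta_{A^{\star c}}=1/(1+e^\varepsilon)$. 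My plan is to show that for sufficiently small $\varepsilon$, this is the optimal vertex of the LP.

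The first step is a leading-order expansion in $\delta := e^\varepsilon-1$. Since $P_\nu^T s_A = 1+\delta P_\nu(A)$, a Taylor expansion of $f$ around $1$ (treating the smooth case $f\in C^2$ first; a separate order-$\delta$ expansion handles non-smooth cases such as total variation) gives
\[
\mu(s_A) \;=\; \tfrac12 f''(1)\,\delta^2 \bigl(P_0(A)-P_1(A)\bigr)^2 \;+\; O(\delta^3).
\]
Reading the constraint $S^{(k)}\theta=\unity$ row by row yields $\sum_A\theta_A + \delta \sum_{A\ni x}\theta_A = 1$ for each $x\in\cX$, which forces the quantity $C := \sum_{A\ni x}\theta_A$ to be \emph{independent} of $x$ and satisfies $\sum_A \theta_A = 1-\delta C$. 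This uniformity of $C$ is the crucial structural constraint I will exploit.

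The second step is to maximize the leading-order objective under this constraint. Using the elementary inequality $(P_0(A)-P_1(A))^2 \le \|P_0-P_1\|_{\mathrm{TV}}^2$, with equality \emph{only} for $A\in\{A^\star,A^{\star c}\}$ (assuming no ties $P_0(x)=P_1(x)$; ties are handled by arbitrary tie-breaking), I bound
\[
\sum_A \theta_A(P_0(A)-P_1(A))^2 \;\le\; \|P_0-P_1\|_{\mathrm{TV}}^2\,(1-\delta C).
\]
This bound is attained only by concentrating $\theta$ on the pair $\{A^\star,A^{\star c}\}$, and the balance condition $\sum_{A\ni x}\theta_A = C$ for every $x$ then forces $\theta_{A^\star}=\theta_{A^{\star c}}=C$; combining with $2C=1-\delta C$ gives $C=1/(1+e^\varepsilon)$, which is exactly the binary mechanism, achieving leading-order value $\tfrac{f''(1)\delta^2}{1+e^\varepsilon}\|P_0-P_1\|_{\mathrm{TV}}^2$. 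Any competing vertex has at least one active pattern $A\notin\{A^\star,A^{\star c}\}$ and is therefore strictly worse at order $\delta^2$ by a margin proportional to $\|P_0-P_1\|_{\mathrm{TV}}^2 - (P_0(A)-P_1(A))^2 > 0$.

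The final step is a perturbation argument to extend leading-order optimality to small positive $\varepsilon$. The LP has finitely many vertices (bounded by $\binom{2^k}{k}$), and at each vertex the objective value is analytic in $\varepsilon$; the strict $\delta^2$-gap at $\varepsilon=0^+$ therefore persists on some interval $(0,\varepsilon^\star]$. The main obstacle, and the point requiring care, is making this perturbation uniform across all competing vertices: one must control the $O(\delta^3)$ remainders in $\mu(s_A)$ uniformly in $A$ (using $P_\nu(A)\in[0,1]$ and a local Lipschitz bound on $f''$ near $1$) and compare against the minimum leading-order gap $\tfrac12 f''(1)\bigl(\|P_0-P_1\|_{\mathrm{TV}}^2 - \max_{A\notin\{A^\star,A^{\star c}\}}(P_0(A)-P_1(A))^2\bigr)$, which is strictly positive and depends only on $P_0,P_1$. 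Solving the resulting scalar inequality for $\varepsilon$ yields the explicit threshold $\varepsilon^\star=\varepsilon^\star(P_0,P_1,f)$ claimed by the theorem.
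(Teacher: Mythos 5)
Your LP-plus-Taylor-expansion approach is genuinely different from the paper's, and it has a real gap: the threshold you obtain depends on $f$, whereas the theorem asserts that a \emph{single} $\varepsilon^*$, depending only on $\PP_0$ and $\PP_1$, works simultaneously for \emph{all} $f$-divergences. The paper emphasizes this universality, and its proof is specifically engineered to deliver it. In your final step you explicitly acknowledge that the bound you solve for yields $\varepsilon^\star(P_0,P_1,f)$; taking an infimum over all convex $f$ with $f(1)=0$ would drive this threshold to zero, since one can scale higher-order derivatives of $f$ arbitrarily relative to $f''(1)$ while preserving convexity. There is no way to make the $O(\delta^3)$-versus-$\tfrac12 f''(1)\delta^2$ comparison uniform in $f$.

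The paper's route is entirely different and avoids Taylor expansion of $f$ altogether. It proves Lemma \ref{lemma:hyp_test_region_bin}: for $\varepsilon\leq\varepsilon^*(\PP_0,\PP_1)$, every $\varepsilon$-locally differentially private mechanism induces marginals whose likelihood ratios $\Mo(\y)/\Mt(\y)$ lie inside the interval achieved by the binary mechanism. This containment of likelihood ratios translates into containment of hypothesis-testing regions, $R(\Mo,\Mt)\subseteq R(\tilde\Mo,\tilde\Mt)$, and then Blackwell's theorem (via Theorem 2.5 of \cite{KOV14b}) gives a Markov chain $\nu$--$Y_{\rm bin}$--$Y_{\rm dp}$. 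The data-processing inequality for $f$-divergences then yields $D_f(\Mo\|\Mt)\leq D_f(\tilde\Mo\|\tilde\Mt)$ for \emph{every} $f$ at once, with a threshold set purely by the geometry of $\PP_0,\PP_1$. The dominance argument is thus stronger than any single-objective optimality and is what makes the theorem's universality claim provable.

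Beyond universality, there are secondary technical problems with your route. The leading-order term $\tfrac12 f''(1)\delta^2$ assumes $f\in C^2$ near $1$ and $f''(1)>0$; convex $f$ with $f(1)=0$ need not satisfy either (e.g.\ $f(x)=(x-1)^4$ has $f''(1)=0$ and is handled by no $\delta^2$-level argument, and total variation $f(x)=\tfrac12|x-1|$ is not differentiable at $1$, as you note but do not resolve). Your perturbation step also asserts that each vertex's objective value is analytic in $\varepsilon$; this requires $f$ analytic and additionally must contend with the fact that the vertex set of the LP itself changes combinatorially as $\varepsilon$ varies, since the pattern matrix $\s^{(\kk)}$ depends on $\varepsilon$. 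None of these issues arise in the paper's dominance-based proof. Your machinery is, in contrast, very close to what the paper actually uses for Theorem \ref{thm:hyprr} (randomized response for KL) and Theorem \ref{thm:estbin} (binary mechanism for mutual information), where a single fixed objective is optimized via LP duality with an $f$-dependent threshold; you might find it instructive that the paper reserves the LP-duality route for exactly those results and switches to Blackwell dominance precisely because of the universality requirement here.
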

This implies that in the high privacy regime,
which is a typical setting studied in much of the differential privacy literature,
the binary mechanism is universally optimal for all $f$-divergences.
In particular this threshold $\varepsilon^*$ is {\em universal}, in that it does not depend on the particular choice of
which $f$-divergence we are maximizing. It is only a function of $\PP_0$ and $\PP_1$.
This is established by proving a very strong statistical dominance
using Blackwell's celebrated result on comparisons of statistical experiments  \cite{Bla53}.
In a nutshell, we prove that any $\varepsilon$-locally differentially private mechanism can be simulated from the output of the binary mechanism for sufficiently small $\varepsilon$. Hence, the binary mechanism
dominates over all other mechanisms and at the same time achieves the maximum divergence. A similar idea has been used previously in \citep{KOV14b}
to exactly characterize how much privacy degrades under composition attacks.

The optimality of binary mechanisms is not just for high privacy regimes. The
next theorem shows that it is {\em the} optimal solution  of \eqref{eq:optf}
for all $\varepsilon$, when the objective function is
 the total variation distance: $D_f(\MM_0||\MM_1) = \|\MM_0-\MM_1\|_{\rm TV}$.
\begin{theorem}
\label{thm:hyptv}
	For any pair of distributions $\PP_0$ and $\PP_1$,
	and any $\varepsilon\geq 0$,
	the binary mechanism maximizes the total variation distance between the
	induced marginals $\MM_0$ and $\MM_1$ among all $\varepsilon$-locally differentially private mechanisms.
\end{theorem}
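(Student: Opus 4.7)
The plan is to compute the total variation achieved by the binary mechanism in closed form and then prove a matching converse upper bound for every $\varepsilon$-locally differentially private mechanism. Write $\delta(x) = \PP_0(x) - \PP_1(x)$, $A^* = \{x : \delta(x) > 0\}$, and $\Delta = \|\PP_0 - \PP_1\|_{TV} = \sum_{x\in A^*} \delta(x)$. A direct substitution into \eqref{eq:defhypbin} yields $\MM_0(0) - \MM_1(0) = \frac{e^\varepsilon - 1}{e^\varepsilon + 1}\Delta$ and $\MM_0(1) - \MM_1(1) = -\frac{e^\varepsilon - 1}{e^\varepsilon + 1}\Delta$, so the binary mechanism attains value $\frac{e^\varepsilon - 1}{e^\varepsilon + 1}\Delta$.

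For the converse, the key observation is that total variation is attained by a single test. Given any $\Q \in \mathcal{D}_\varepsilon$, partition $\cY$ into $\cY^+ = \{y : \MM_0(y) \geq \MM_1(y)\}$ and $\cY^- = \cY \setminus \cY^+$, and set $q(x) = \Q(\cY^+|x)$. Then $\|\MM_0 - \MM_1\|_{TV} = \MM_0(\cY^+) - \MM_1(\cY^+) = \sum_x \delta(x)\, q(x)$. Since $\varepsilon$-differential privacy holds for every output set, applying it to $\cY^+$ and $\cY^-$ forces $\max_x q(x) \leq e^\varepsilon \min_x q(x)$ and $\max_x (1-q(x)) \leq e^\varepsilon \min_x (1-q(x))$. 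Writing $a = \min_x q(x)$ and $b = \max_x q(x)$, the identity $\sum_x \delta(x) = 0$ lets me rewrite $\sum_x \delta(x) q(x) = \sum_x \delta(x)(q(x) - a)$; bounding termwise (using $q(x) - a \leq b - a$ on $A^*$ where $\delta > 0$, and $q(x) - a \geq 0$ on $A^{*c}$ where $\delta \leq 0$) gives $\|\MM_0 - \MM_1\|_{TV} \leq (b - a)\Delta$.

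It then suffices to maximize $b - a$ over the two-variable polytope $\{0 \leq a \leq b \leq 1,\; b \leq e^\varepsilon a,\; 1 - a \leq e^\varepsilon(1-b)\}$. This is a tiny linear program; its unique non-trivial vertex lies at the intersection of the two privacy constraints and gives $a = 1/(1+e^\varepsilon)$, $b = e^\varepsilon/(1+e^\varepsilon)$, hence $b - a = (e^\varepsilon - 1)/(e^\varepsilon + 1)$, matching the forward calculation. I expect this last step to be the main technical obstacle: either single-sided privacy constraint alone yields only the loose bound $b - a \leq (e^\varepsilon - 1)\min(a,\, 1-b)$, so one must combine the constraints coming from $\cY^+$ and $\cY^-$ (equivalently, differential privacy applied simultaneously to $q$ and to $1-q$) in order to sharpen the leading factor from $(e^\varepsilon - 1)$ down to $(e^\varepsilon - 1)/(e^\varepsilon + 1)$. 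An alternative route would invoke Theorems \ref{thm:sc} and \ref{thm:lp} to reduce to the LP over staircase patterns and verify that patterns supported on $A^*$ or $A^{*c}$ dominate, but the single-test reduction above is cleaner and avoids enumerating $2^k$ patterns.
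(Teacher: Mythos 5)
Your proof is correct, and it takes a genuinely different route from the paper's.

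The paper invokes Theorem \ref{thm:lp} to reduce the maximization to the linear program over $2^k$ staircase scalings, then constructs an explicit dual feasible point $\dd^*_i = \tfrac{1}{2}\tfrac{e^\varepsilon-1}{e^\varepsilon+1}\lvert \po(\x_i)-\pt(\x_i)\rvert$ and verifies feasibility by a term-by-term inequality; strong duality and the forward calculation of the binary mechanism's TV then close the argument. Your proof bypasses the staircase reduction entirely. The crucial move is observing that $\|\MM_0-\MM_1\|_{\rm TV}$ is realized by a single binary test, reducing the whole problem to a one-dimensional object $q(x)=\Q(\cY^+\mid x)$, after which the local differential privacy constraint applied to the pair $\{\cY^+,\cY^-\}$ pins the range $b-a$ of $q$ to at most $(e^\varepsilon-1)/(e^\varepsilon+1)$ via a two-variable LP. The bound $\sum_x \delta(x)q(x) = \sum_x \delta(x)(q(x)-a) \le (b-a)\Delta$ (using $\sum_x\delta(x)=0$ and the sign split on $A^*$ versus $A^{*c}$) is clean and tight, and the small LP has the correct unique maximizer at the intersection of the two constraints, recovering exactly the binary mechanism's achievable value. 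Your approach is more elementary (no dependence on Theorems \ref{thm:sc} or \ref{thm:lp}, no enumeration of $2^k$ patterns, no duality) and makes the role of the two-sided privacy constraint on $\cY^+$ and $\cY^-$ transparent --- your remark that one-sided privacy alone gives only the weaker $(e^\varepsilon-1)\min(a,1-b)$ factor is exactly right and identifies where the $1/(e^\varepsilon+1)$ sharpening comes from. What the paper's approach buys in return is uniformity: the same dual-certificate template is reused for Theorems \ref{thm:hyprr}, \ref{thm:estbin}, and \ref{thm:estrr}, whereas the single-test reduction is specific to total variation (for KL or mutual information the utility is not determined by a single test set). Both arguments are complete; yours is shorter for this particular theorem.
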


When maximizing the  KL divergence between the induced marginals,
we show that the binary mechanism still achieves good performance for $\varepsilon\leq C$
where $C$ is a constant that does not depend on $\PP_0$ and $\PP_1$.
For the special case of KL divergence,
let ${\rm OPT}$ denote the maximum value of $\eqref{eq:optf}$
and ${\rm BIN}$ denote the KL divergence when the binary mechanism is used.
The next theorem shows that
\begin{eqnarray*}
	{\rm BIN} & \geq & \frac{1}{2(e^\varepsilon+1)^2} {\rm OPT}\;.
\end{eqnarray*}
\begin{theorem}
\label{thm:hypappx}
	For any $\varepsilon$ and any pair of distributions
	$\PP_0$ and $\PP_1$, the binary mechanism is an $1/(2(e^\varepsilon+1)^2)$ approximation of the
	 maximum KL divergence between the
	induced marginals $\MM_0$ and $\MM_1$ among all $\varepsilon$-locally differentially private mechanisms.
\end{theorem}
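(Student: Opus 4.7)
The plan is to bound $\mathrm{BIN}$ from below and $\mathrm{OPT}$ from above, both in terms of $\|\PP_0-\PP_1\|_{\rm TV}^2$, and then take the ratio. The cleanest way to execute this is to funnel both quantities through the total variation distance, which Theorem \ref{thm:hyptv} already identifies as being maximized by the binary mechanism itself.

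First I would directly compute the induced marginals for the binary mechanism defined in \eqref{eq:defhypbin}. Letting $A=\{x:\PP_0(x)\geq \PP_1(x)\}$ and using $\PP_0(A)-\PP_1(A)=\|\PP_0-\PP_1\|_{\rm TV}$, a one-line calculation gives
\begin{equation*}
\MM_0(0)-\MM_1(0) \;=\; \frac{(e^\varepsilon-1)(\PP_0(A)-\PP_1(A))}{1+e^\varepsilon} \;=\; \frac{e^\varepsilon-1}{e^\varepsilon+1}\,\|\PP_0-\PP_1\|_{\rm TV},
\end{equation*}
and the analogous identity on the output $y=1$, so that $\|\MM_0-\MM_1\|_{\rm TV}=\frac{e^\varepsilon-1}{e^\varepsilon+1}\|\PP_0-\PP_1\|_{\rm TV}$. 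Applying Pinsker's inequality $\Dkl(\MM_0\|\MM_1)\geq 2\|\MM_0-\MM_1\|_{\rm TV}^2$ then yields
\begin{equation*}
\mathrm{BIN} \;\geq\; \frac{2(e^\varepsilon-1)^2}{(e^\varepsilon+1)^2}\,\|\PP_0-\PP_1\|_{\rm TV}^2.
\end{equation*}

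For the upper bound on $\mathrm{OPT}$, I would invoke the KL bound established by \cite{DJW13} (the very inequality summarized in the Related Work subsection of this paper),
\begin{equation*}
\Dkl(\MM_0\|\MM_1) \;\leq\; 4(e^\varepsilon-1)^2\,\|\PP_0-\PP_1\|_{\rm TV}^2,
\end{equation*}
valid for every $\varepsilon$-locally differentially private $\Q$. Maximizing the left side over $\Q\in\mathcal{D}_\varepsilon$ gives $\mathrm{OPT}\leq 4(e^\varepsilon-1)^2\|\PP_0-\PP_1\|_{\rm TV}^2$. Dividing the two bounds yields
\begin{equation*}
\frac{\mathrm{BIN}}{\mathrm{OPT}} \;\geq\; \frac{1}{2(e^\varepsilon+1)^2},
\end{equation*}
which is exactly the claimed approximation guarantee.

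The main technical obstacle is really encapsulated in the Duchi--Jordan--Wainwright upper bound on $\mathrm{OPT}$; everything else is either a direct substitution into the binary mechanism's definition or one application of Pinsker's inequality. If one prefers a self-contained derivation, I would replace the citation by a short argument that exploits $Q(y|x)/Q(y|x')\in[e^{-\varepsilon},e^\varepsilon]$ to Taylor-expand $\log(M_0(y)/M_1(y))$ and bound the result by $(e^\varepsilon-1)^2$ times a $\chi^2$-type quantity that in turn is dominated by $\|\PP_0-\PP_1\|_{\rm TV}^2$; this route adds length but no conceptual difficulty, since the ratios of marginals inherit the $[e^{-\varepsilon},e^\varepsilon]$ bound from the mechanism.
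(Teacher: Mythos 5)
Your proof is correct and follows essentially the same route as the paper's: lower-bound $\mathrm{BIN}$ via Pinsker's inequality applied to the (directly computed) total variation distance under the binary mechanism, upper-bound $\mathrm{OPT}$ by invoking the $4(e^\varepsilon-1)^2\|\PP_0-\PP_1\|_{\rm TV}^2$ bound of Duchi, Jordan, and Wainwright, and take the ratio. The only cosmetic difference is that the paper cites Corollary \ref{coro:hyptv} for the TV identity, whereas you compute it directly from \eqref{eq:defhypbin}; the underlying calculation is identical.
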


Observe that $2(e^\varepsilon+1)^2 \leq 32$ for $\varepsilon\leq 1$. Therefore,
for any $\varepsilon\leq 1$, the simple binary mechanism is at most a constant factor away from the optimal mechanism.

\subsection{Optimality of the randomized response mechanism}

The {\em randomized response mechanism} (see Figure \ref{fig:mechanism}) is a staircase mechanism with $\cY=\cX$ satisfying
\begin{eqnarray}
	Q(y|x) \,=\, \left\{
\begin{array}{rl}
	\dfrac{e^\varepsilon}{|\cX|-1+e^\varepsilon}& \text{ if } y=x\;,\\
	\dfrac{1}{|\cX|-1+e^\varepsilon}& \text{ if } y\neq x\;.\\
\end{array}
\right.
	\label{eq:defrr}
\end{eqnarray}
In other words, the randomized response is a simple randomization over the same alphabet where the true data is released with probability ${e^\varepsilon}/\left({|\cX|-1+e^\varepsilon}\right)$. We view it as a multiple choice generalization to the
randomized response method proposed by  \cite{War65}. We now establish that for the special case of optimizing the KL divergence between the induced marginals, the randomized response mechanism is the optimal solution of \eqref{eq:optf}
 in the low privacy regime (i.e., $\varepsilon\geq\varepsilon^*$ for some threshold $\varepsilon^*$ that depends on $\PP_0$ and $\PP_1$).
\begin{theorem}
\label{thm:hyprr}
	There exists a positive $\varepsilon^*$ that depends on $\PP_0$ and $\PP_1$  such that
	for all $\PP_0$ and $\PP_1$, and all $\varepsilon\geq \varepsilon^*$,
	the {randomized response mechanism}  maximizes the KL divergence between the induced marginals
	among all $\varepsilon$-locally differentially private mechanisms.
\end{theorem}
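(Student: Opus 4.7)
The plan is to invoke Theorems \ref{thm:sc} and \ref{thm:lp} to reduce \eqref{eq:optf} with $f(x)=x\log x$ to the finite-dimensional linear program with variables $\theta\in\mathbb{R}_{\ge 0}^{2^\kk}$, constraint $S^{(\kk)}\theta=\unity$, and objective $\sum_{j=1}^{2^\kk}\mu(S_j^{(\kk)})\,\theta_j$, where $\mu(v)=(P_0^T v)\log\bigl((P_0^T v)/(P_1^T v)\bigr)$. The randomized response mechanism \eqref{eq:defrr} corresponds exactly to the basic feasible solution (BFS) that places weight $\theta_j=1/(\kk-1+e^\varepsilon)$ on each of the $\kk$ patterns of Hamming weight one (singleton supports $\{1\},\ldots,\{\kk\}$) and zero on all others. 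Since the LP has at most $\binom{2^\kk}{\kk}$ basic feasible solutions, it suffices to show that for $\varepsilon$ sufficiently large this particular BFS strictly dominates every other one.

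The next step is an asymptotic analysis in $c=e^{-\varepsilon}$ as $\varepsilon\to\infty$. For a pattern with support $A\subseteq\{1,\ldots,\kk\}$ one has $P_\nu^T S_A^{(\kk)}=(e^\varepsilon-1)P_\nu(A)+1$, so after rescaling $\tilde\theta_j=e^\varepsilon\theta_j$ the constraint $S^{(\kk)}\theta=\unity$ becomes, to leading order, the \emph{covering constraint} $\sum_{A_j\ni i}\tilde\theta_j=1$ for every input $i$. A direct Taylor expansion of the induced marginals under randomized response gives
\begin{equation*}
\Dkl(M_0\|M_1)\;=\;\Dkl(P_0\|P_1)\;-\;G(P_0,P_1)\,e^{-\varepsilon}\;+\;O\bigl(e^{-2\varepsilon}\bigr),
\end{equation*}
for an explicit positive constant $G(P_0,P_1)$ depending only on $P_0,P_1$. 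For any other BFS, whose active supports necessarily contain at least one set $A$ of size $|A|\ge 2$, the leading-order limit of its KL is $\sum_j \tilde\theta_j P_0(A_j)\log\bigl(P_0(A_j)/P_1(A_j)\bigr)$, which by the log-sum inequality (equivalently, strict convexity of $x\log x$) is bounded above by $\Dkl(P_0\|P_1)$.

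In the generic case where some active $A_j$ with $|A_j|\ge 2$ merges two inputs $x,x'$ with distinct likelihood ratios $P_0(x)/P_1(x)\neq P_0(x')/P_1(x')$, the log-sum inequality is strict, producing a positive limiting gap $\Delta>0$ that depends on $P_0,P_1$ and the BFS support. Since there are only finitely many BFS supports, the minimum over all competing BFS yields a uniform gap $\Delta_{\min}>0$; then choosing $\varepsilon^{*}$ so that $G(P_0,P_1)\,e^{-\varepsilon^{*}}<\Delta_{\min}$ guarantees that randomized response is the unique LP maximizer for all $\varepsilon\ge\varepsilon^{*}$.

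The main obstacle, and the delicate part of the technical work, is the \emph{degenerate} case in which several inputs share identical likelihood ratios $P_0(x)/P_1(x)$, so that a BFS merging only ratio-equivalent inputs attains the same leading-order KL as randomized response and the first-order gap above vanishes. In that case the comparison must be pushed to subleading order $e^{-\varepsilon}$. My planned remedy is to quotient $\cX$ by the equivalence $x\sim x'\Leftrightarrow P_0(x)/P_1(x)=P_0(x')/P_1(x')$, use the merge/split argument given after Theorem \ref{thm:lp} to restrict attention to staircase mechanisms acting only on the quotient, and then carry out a second-order expansion of $\Dkl(M_0\|M_1)$ in $c=e^{-\varepsilon}$, exploiting the homogeneity and sublinearity of $\mu$ to show that randomized response is still (at least non-strictly) optimal. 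This second-order argument is where I expect essentially all the technical difficulty to lie.
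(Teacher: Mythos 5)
Your approach is a primal counterpart to the paper's dual argument: both reduce via Theorem~\ref{thm:lp} to the finite LP over $\bb$, and both hinge on the same leading-order log-sum comparison in $e^{-\varepsilon}$. The paper constructs an explicit dual certificate $\dd^*$ and verifies its feasibility for large $\varepsilon$ by showing the slack ${\s^{(\kk)}}_j^{T}\dd^* - \mu_j$ has leading coefficient equal to the log-sum gap $P_0(T_j)\log\frac{P_0(T_j)}{P_1(T_j)}-\sum_{x_i\in T_j}P_0(x_i)\log\frac{P_0(x_i)}{P_1(x_i)}$; you instead compare the finitely many basic feasible solutions directly through the limiting covering reformulation $\sum_{A_j\ni i}\tilde\theta_j=1$. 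You correctly flag the case of ties among the likelihood ratios $P_0(x)/P_1(x)$ as the crux, and that is in fact a genuine gap in the paper's own proof: the strict log-sum inequality the paper asserts for every non-singleton $T_j$ fails when $T_j$ collects ratio-equivalent inputs, and the claimed dual variable is then infeasible.

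The problem is that your planned remedy — quotient by ratio-equivalence and then argue that randomized response is ``still (at least non-strictly) optimal'' — cannot rescue the theorem as stated, because in the degenerate case the randomized response mechanism on $\cX$ is \emph{strictly} suboptimal for all sufficiently large $\varepsilon$. Take $\kk=3$, $P_0=(\tfrac13,\tfrac13,\tfrac13)$, $P_1=(\tfrac14,\tfrac14,\tfrac12)$, so that $P_0(1)/P_1(1)=P_0(2)/P_1(2)$. Writing $c=e^{-\varepsilon}$, a direct expansion of the induced marginals gives
\begin{align*}
\Dkl(M_0^{\rm RR}\|M_1^{\rm RR}) &= \Dkl(P_0\|P_1)-\tfrac13\,c+O(c^2),\\
\Dkl(M_0^{B}\|M_1^{B}) &= \Dkl(P_0\|P_1)-\tfrac{\log 2}{3}\,c+O(c^2),
\end{align*}
where $B$ is the binary mechanism with $T=\{1,2\}$. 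Since $\log 2<1$, the binary mechanism beats randomized response by $\Theta(e^{-\varepsilon})$ as $\varepsilon\to\infty$ (numerically $\approx 0.046$ versus $\approx 0.043$ already at $\varepsilon=3$), so no threshold $\varepsilon^*$ can exist for this pair. Thus a correct proof must either add the hypothesis that the ratios $P_0(x)/P_1(x)$ are pairwise distinct — under which your first-order argument, like the paper's, closes cleanly and the finitely-many-BFS uniformization is sound — or else weaken the stated conclusion to optimality of the randomized response acting on the quotient alphabet, which is a different mechanism on $\cX$ (in the example above it \emph{is} the binary mechanism), not the randomized response of \eqref{eq:defrr}.
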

The randomized response mechanism is particularly important because it does not depend on $\PP_0$ or $\PP_1$. Thus, even if the data providers and analysts do not have access to the priors, they can still use the randomized response mechanism to achieve the optimal (or near-optimal) utility in the moderate to low privacy regimes.
\subsection{Numerical experiments}
\label{subsec:num_exp}

A typical approach for achieving $\varepsilon$-local differential privacy is to add geometric noise with appropriately chosen variance.
For an input with alphabet size $|\cX|=k$, this amounts to relabelling the inputs as integers
$\{1,\ldots,k\}$ and adding geometric noise,
i.e., $\Q(\y|\x)= ((1-\varepsilon^{1/(k-1)})/(1+\varepsilon^{1/(k-1)})) \varepsilon^{|y-x|/(k-1)}$ for $y \in \mathbb{Z}$. The output is then truncated at $1$ and $k$ to preserve the support.

For $100$ instances of randomly chosen $P_0$ and $P_1$ defined over an
input alphabet of size $|\cX|=6$, we compare the average
performance of the binary, randomized response, and geometric mechanisms
to the average performance of the optimal staircase mechanism for various values of $\varepsilon$. The optimal staircase mechanism is computed by solving the linear program in Equation \eqref{eq:optsc} for each fixed pair $(P_0,P_1)$ and $\varepsilon$. The left panel of Figure \ref{fig:eps} shows the average performance measured by the normalized divergence $D_{\rm kl}(M_0||M_1)/D_{\rm kl}(P_0||P_1)$ for all 4 mechanisms. The average is taken over the 100 instances of $P_0$ and $P_1$.
In the low privacy (large $\varepsilon$) regime, the randomized response achieves optimal performance (which converges exponentially in $\varepsilon$ to 1) as predicted. In the high privacy regime (small $\varepsilon$), the binary mechanism
achieves optimal performance (which converges quadratically in $\varepsilon$ to 0) as predicted. In all regimes, both the binary and randomized response mechanisms provide significant gains over the geometric mechanism.

\begin{figure}[h]
	\begin{center}
	\includegraphics[width=.47\textwidth]{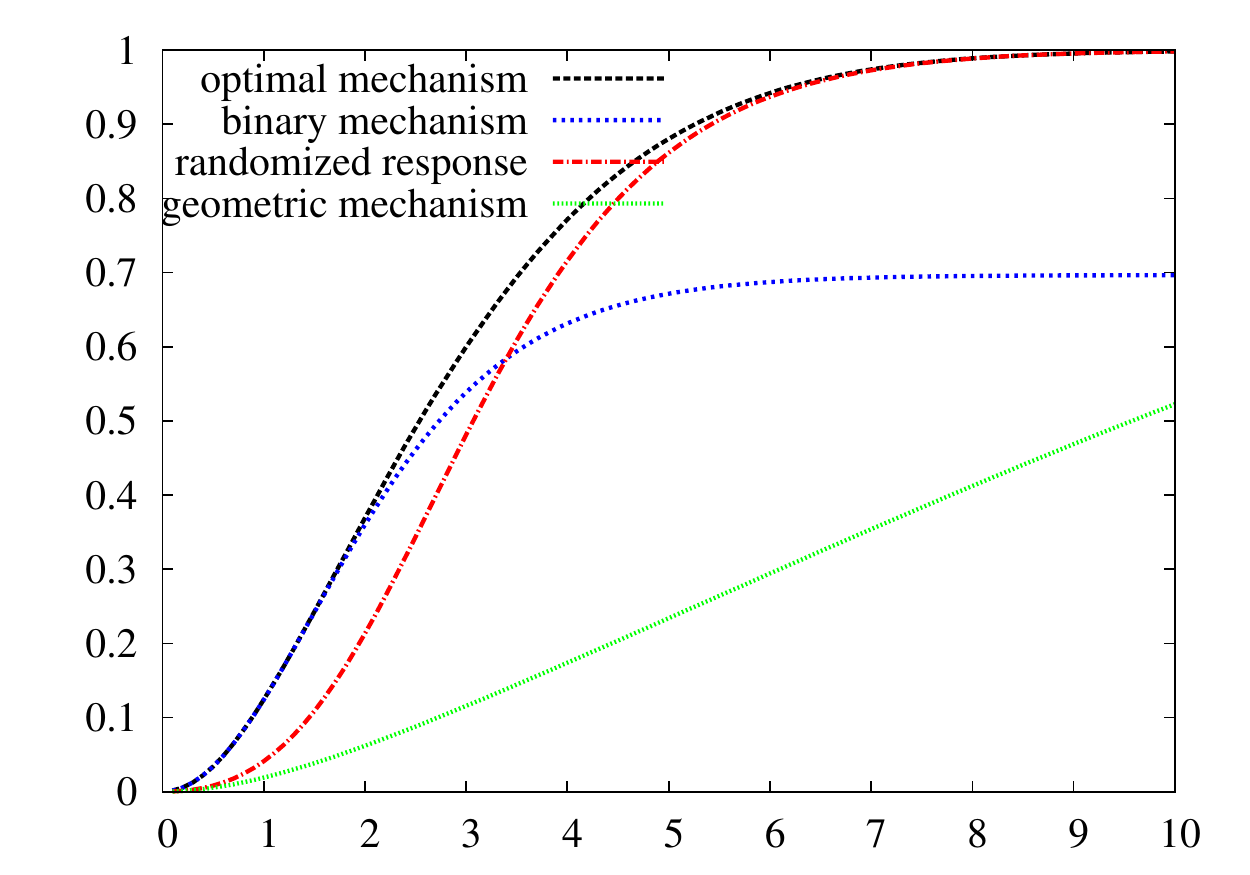}
	\put(-210,144){$\frac{D_{\rm kl}(M_0||M_1)}{D_{\rm kl}(P_0||P_1)}$}
	\put(-100,-10){$\varepsilon$}
	\includegraphics[width=.47\textwidth]{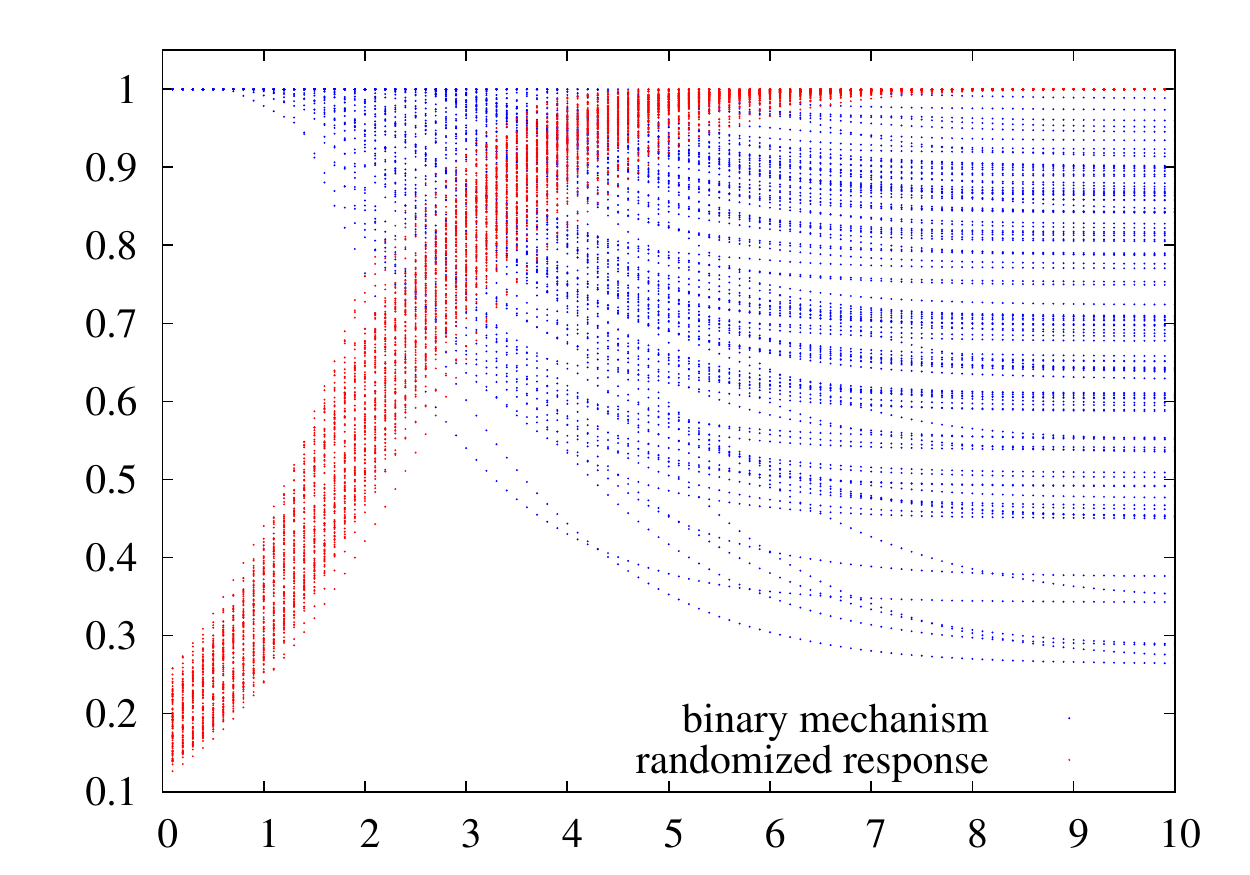}
	\put(-200,144){$\frac{D_{\rm kl}(M_0||M_1)}{OPT}$}
	\put(-100,-10){$\varepsilon$}
	\put(-33.6,18.9){\Huge \textcolor{blue}{$\cdot$}}
	\put(-33.6,12.9){\Huge \textcolor{red}{$\cdot$}}
	\end{center}
	\caption{The binary and randomized response mechanisms are optimal in the high-privacy (small $\varepsilon$) and low-privacy (large $\varepsilon$) regimes, respectively, and improve over the geometric mechanism significantly (left). When the regimes are mismatched, $D_{\rm kl}(M_0||M_1)$ under these mechanisms can be as bad as $10\%$ of the optimal one (right).}
	\label{fig:eps}
\end{figure}
To illustrate how much worse the binary and the randomized response mechanisms can be relative to the optimal staircase mechanism,
we plot in the right panel of Figure \ref{fig:eps} the divergence under each mechanism normalized by the divergence under the optimal mechanism. This is done for all 100 instances of $P_0$ and $P_1$.
In all instances, the binary mechanism is optimal for small $\varepsilon$ and the randomized response mechanism is optimal for large $\varepsilon$.  However, $D_{\rm kl}(M_0||M_1)$ under the randomized response mechanism can be as bad as $10\%$ of the optimal one (for small $\varepsilon$). Similarly, $D_{\rm kl}(M_0||M_1))$ under the binary mechanism can be as bad as $25\%$ of the optimal one (for large $\varepsilon$).
To overcome this issue, we propose the following simple strategy: use the better among these two mechanisms. The performance of this strategy is illustrated in
Figure \ref{fig:max}. For various input alphabet size $|\cX|\in\{3,4,6,12\}$, we plot the performance of this mixed strategy for each value of $\varepsilon$ and each
of the $100$ randomly generated instances of $P_0$ and $P_1$.  This mixed strategy achieves at least $70\%$ for $|\cX| = 6$ (and $55\%$ for $|\cX| = 12$) of the optimal divergence for all instances. Figure \ref{fig:max} shows that this mixed strategy is not too sensitive to the size of the alphabet $k$. Therefore, this strategy provides a good mechanism that can be readily used in practice for any value of $\varepsilon$.

\begin{figure}[h]
	\begin{center}
	\includegraphics[width=.47\textwidth]{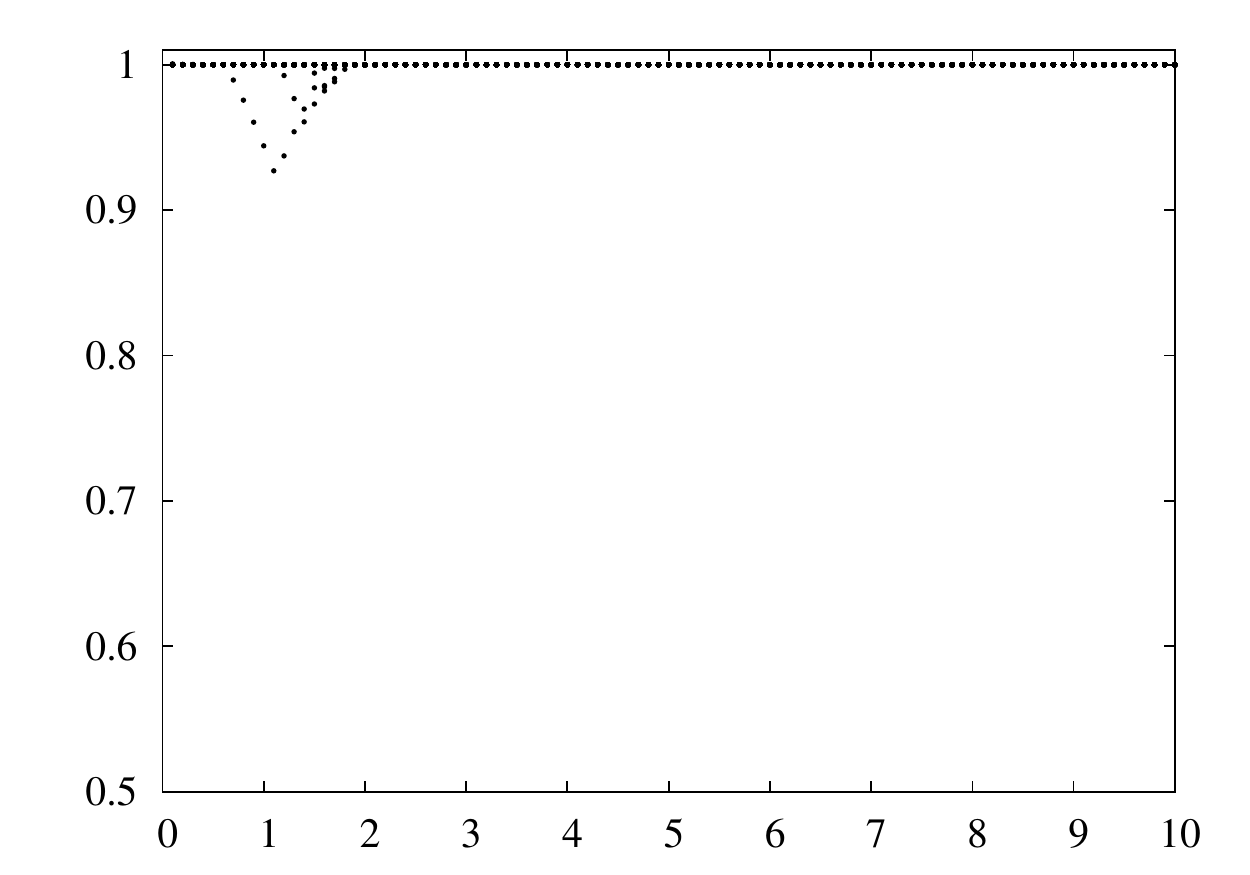}
	\put(-200,144){$\frac{D_{\rm kl}(M_0||M_1)}{OPT}$}
	\put(-100,-5){$\varepsilon$}
	\put(-105,146){$|\cX|=3$}
	\includegraphics[width=.47\textwidth]{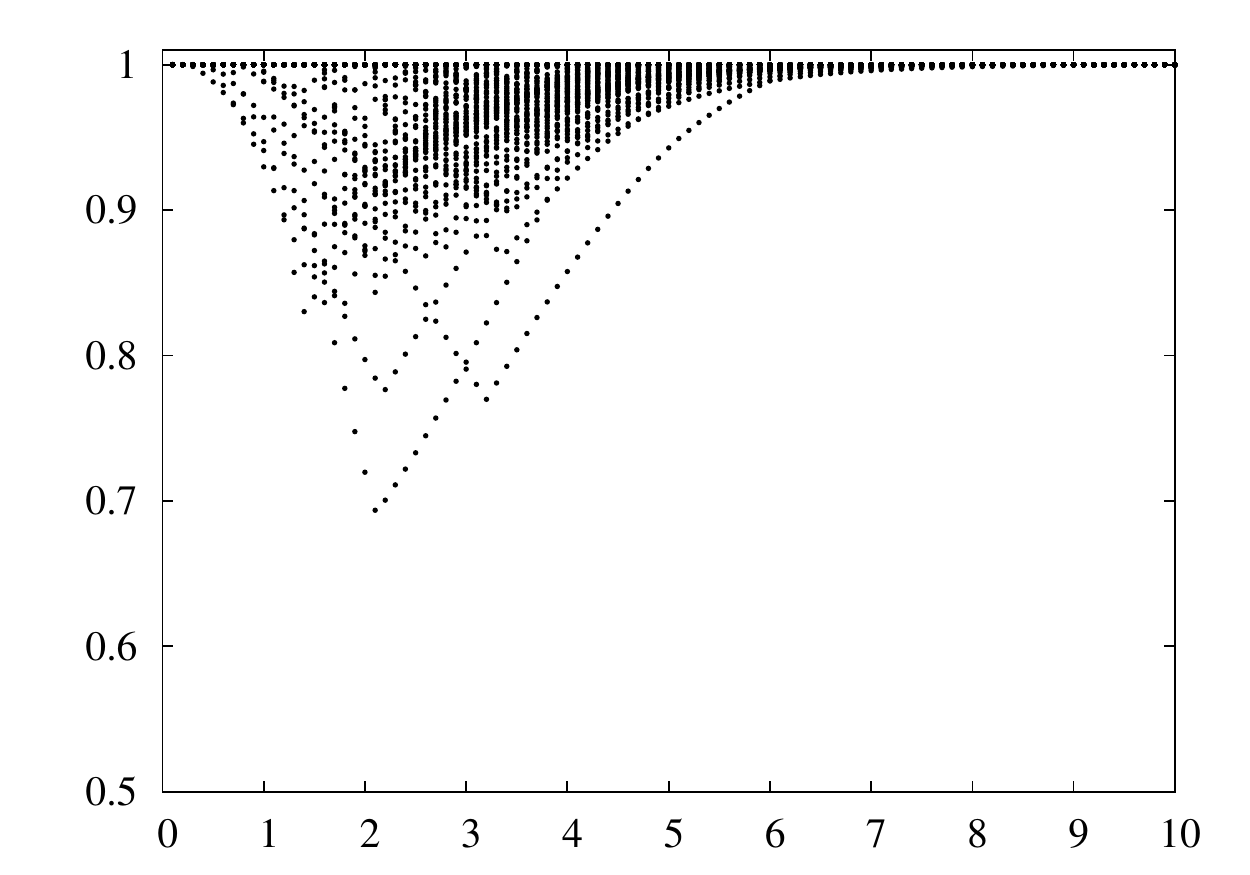}
	\put(-200,144){$\frac{D_{\rm kl}(M_0||M_1)}{OPT}$}
	\put(-100,-5){$\varepsilon$}
	\put(-105,146){$|\cX|=4$}
	\\
	\includegraphics[width=.47\textwidth]{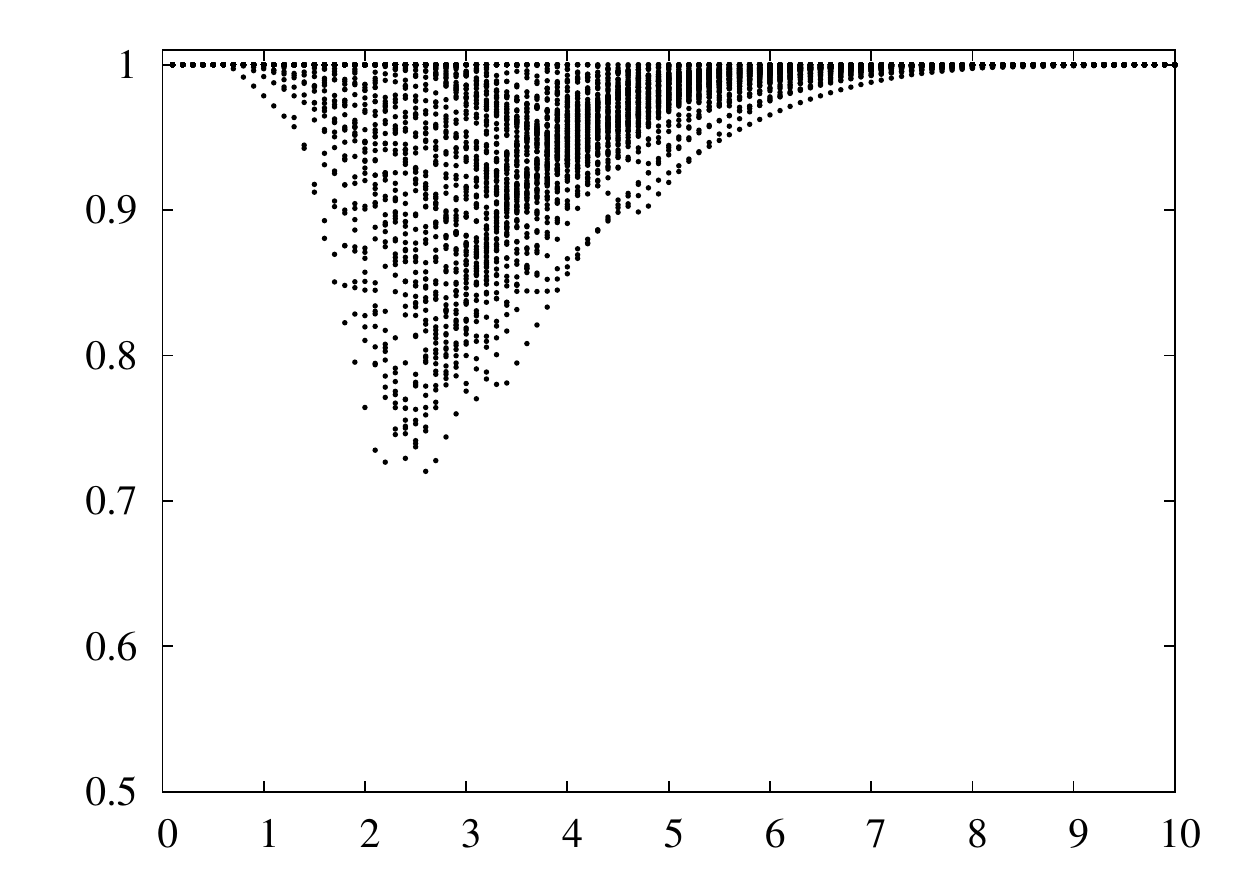}
	\put(-200,144){$\frac{D_{\rm kl}(M_0||M_1)}{OPT}$}
	\put(-100,-5){$\varepsilon$}
	\put(-105,146){$|\cX|=6$}
	\put(-105,160){\phantom{a}}
	\includegraphics[width=.47\textwidth]{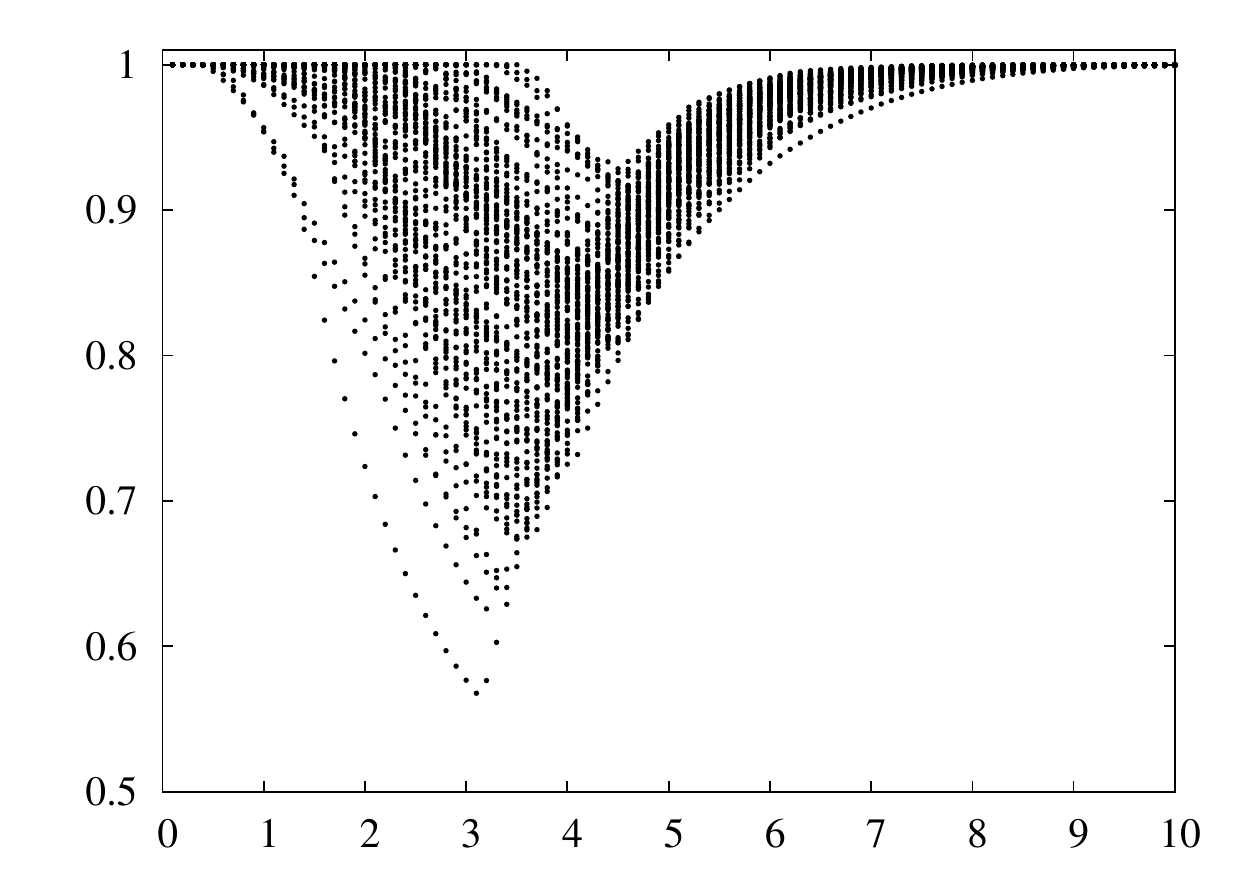}
	\put(-200,144){$\frac{D_{\rm kl}(M_0||M_1)}{OPT}$}
	\put(-100,-5){$\varepsilon$}
	\put(-105,146){$|\cX|=12$}
	\end{center}
	\caption{For varying input alphabet size $|\cX|\in\{3,4,6, 12\}$, at least $55\%$ of the optimal divergence can be achieved by taking the better one between the binary and the randomized response mechanisms.}
	\label{fig:max}
\end{figure}

\subsection{Lower bounds}
\label{sec:hyplb}
In this section, we provide converse results on the fundamental limit of differentially private mechanisms; these results
follow from our main theorems and
are of independent interest in other applications
where lower bounds in statistical analysis are studied \citep{BNO08,HT10,CH12,De12}.
For example,
a bound similar to \eqref{eq:converseKLlow} was used to provide converse results on the sample complexity for
statistical estimation with differentially private data in \cite{DJW13}.

\begin{corollary}
\label{coro:hyphigh}
For any $\varepsilon\geq 0$, let $Q$ be any conditional distribution that guarantees $\varepsilon$-local differential privacy.
Then, for any pair of distributions $\PP_0$, $\PP_1$ and any positive $\delta>0$,
there exists a positive $\varepsilon^*$ that depends on $\PP_0$, $\PP_1$ and $\delta$ such that for any $\varepsilon\leq \varepsilon^*$
the induced marginals $\MM_0$ and $\MM_1$ satisfy the bound
\begin{eqnarray}
	\Dkl\big(\MM_0||\MM_1\big) + \Dkl\big(\MM_1||\MM_0\big) &\leq& \frac{2(1+\delta)(e^{\varepsilon}-1)^2}{(e^\varepsilon+1)} \,\big\| \PP_0-\PP_1 \big\|_{\rm TV}^2\;.
	\label{eq:converseKLlow}
\end{eqnarray}
\end{corollary}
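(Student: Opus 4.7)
The plan is to reduce the problem to the binary mechanism via Theorem \ref{thm:hypbin} and then carry out an elementary two-outcome calculation. The first key observation is that the symmetric KL is itself an $f$-divergence: $\Dkl(\MM_0||\MM_1)+\Dkl(\MM_1||\MM_0)=\sum_y(\MM_0(y)-\MM_1(y))\log(\MM_0(y)/\MM_1(y))=D_f(\MM_0||\MM_1)$ with $f(x)=(x-1)\log x$, which is convex on $(0,\infty)$ (via $f''(x)=1/x+1/x^2>0$) and satisfies $f(1)=0$. Applying Theorem \ref{thm:hypbin} to this $f$ produces a threshold $\varepsilon_0^*=\varepsilon_0^*(\Po,\Pt)>0$ below which the binary mechanism of \eqref{eq:defhypbin} maximizes the symmetric KL over all $Q\in\mathcal{D}_\varepsilon$. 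It therefore suffices to establish the bound when $\MM_0,\MM_1$ are the binary marginals.

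Set $A=\{x:\Po(x)\geq\Pt(x)\}$ and $\Delta=\|\Po-\Pt\|_{\rm TV}=\Po(A)-\Pt(A)$. Plugging the binary mechanism into \eqref{eq:defM2} yields $\alpha_\nu:=\MM_\nu(0)=((e^\varepsilon-1)\PP_\nu(A)+1)/(1+e^\varepsilon)$, so that $\alpha_0-\alpha_1=(e^\varepsilon-1)\Delta/(1+e^\varepsilon)$, and
\begin{equation*}
\Dkl(\MM_0||\MM_1)+\Dkl(\MM_1||\MM_0) \;=\; (\alpha_0-\alpha_1)\log\frac{\alpha_0(1-\alpha_1)}{\alpha_1(1-\alpha_0)}.
\end{equation*}
The next step is to apply $\log(1+u)\leq u$ separately to $\log(\alpha_0/\alpha_1)=\log(1+(\alpha_0-\alpha_1)/\alpha_1)$ and to $\log((1-\alpha_1)/(1-\alpha_0))=\log(1+(\alpha_0-\alpha_1)/(1-\alpha_0))$, producing the upper bound $(\alpha_0-\alpha_1)(1/\alpha_1+1/(1-\alpha_0))$. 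Because the binary marginals are themselves $\varepsilon$-LDP outputs, $\alpha_1,\,1-\alpha_0\geq 1/(1+e^\varepsilon)$, whence $1/\alpha_1+1/(1-\alpha_0)\leq 2(1+e^\varepsilon)$. Multiplying through yields $2(e^\varepsilon-1)^2\Delta^2/(1+e^\varepsilon)$, which already matches the stated bound with $\delta=0$; the $(1+\delta)$ slack in the corollary simply absorbs the looseness of $\log(1+u)\leq u$, and we may take $\varepsilon^*=\varepsilon_0^*$ regardless of $\delta$.

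The main obstacle is the reduction itself: we must verify that $f(x)=(x-1)\log x$ satisfies the $f$-divergence hypotheses and, crucially, that the threshold $\varepsilon_0^*$ delivered by Theorem \ref{thm:hypbin} depends only on $(\Po,\Pt)$ and not on the particular $f$-divergence (this is exactly what Theorem \ref{thm:hypbin} asserts). Once the binary mechanism is reached, the remaining calculation is routine and no delicate Taylor expansion near $\varepsilon=0$ is required.
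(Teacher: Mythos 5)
Your proof is correct and takes the same high-level route as the paper (reduce to the binary mechanism via Theorem~\ref{thm:hypbin}), but the finishing step is genuinely different and in some ways cleaner. The paper's proof simply Taylor-expands the binary mechanism's KL divergence around $\varepsilon=0$ to obtain $\Dkl(\MM_0||\MM_1)=\frac{(e^\varepsilon-1)^2}{e^\varepsilon+1}\|\PP_0-\PP_1\|_{\rm TV}^2+O(\varepsilon^3)$, with the $(1+\delta)$ factor absorbing the uncontrolled $O(\varepsilon^3)$ remainder; this is why the paper's $\varepsilon^*$ must depend on $\delta$. You instead recognize the symmetrized KL as the single $f$-divergence with $f(x)=(x-1)\log x$, invoke Theorem~\ref{thm:hypbin} once, and then replace the asymptotic expansion with the exact inequalities $\log(1+u)\leq u$ and $\alpha_1,\,1-\alpha_0\geq 1/(1+e^\varepsilon)$. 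This yields $\Dkl(\MM_0||\MM_1)+\Dkl(\MM_1||\MM_0)\leq \frac{2(e^\varepsilon-1)^2}{e^\varepsilon+1}\|\PP_0-\PP_1\|_{\rm TV}^2$ for \emph{all} $\varepsilon\leq\varepsilon^*_0$, i.e.\ the stated bound with $\delta=0$, so the $(1+\delta)$ slack is unnecessary and $\varepsilon^*$ need not depend on $\delta$ at all — a modest strengthening of the corollary. The only cosmetic wrinkle is the phrase ``the binary marginals are themselves $\varepsilon$-LDP outputs'' as justification for $\alpha_1,\,1-\alpha_0\geq 1/(1+e^\varepsilon)$; what you really want to say is that $\alpha_\nu$ and $1-\alpha_\nu$ are convex combinations of the two staircase entries $\frac{1}{1+e^\varepsilon}$ and $\frac{e^\varepsilon}{1+e^\varepsilon}$, hence lie in that interval. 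With that phrasing tightened, the argument is complete.
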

This follows from Theorem \ref{thm:hypbin} and observing that the binary mechanism achieves
\begin{align*}
	&\Dkl\big(\MM_0||\MM_1\big) \\
	&\hspace{0.5cm}=\; \frac{(e^\varepsilon-1)\PP_0(T)+1}{e^\varepsilon+1}\log\Big(\frac{1+(e^\varepsilon-1)\PP_0(T)}{1+(e^\varepsilon-1)\PP_1(T)}\Big) \\
&~~~~~~~~~~~~~~~~~~~~~~~~~~~~~~~~~~
 	+ \frac{(e^\varepsilon-1)\PP_0(T^c)+1}{e^\varepsilon+1}\log\Big(\frac{1+(e^\varepsilon-1)\PP_0(T^c)}{1+(e^\varepsilon-1)\PP_1(T^c)}\Big) \\
	&\hspace{0.5cm}=\; \frac{(e^\varepsilon-1)^2}{e^\varepsilon+1} (\PP_0(T)-\PP_1(T)) + O(\varepsilon^3)\\
	&\hspace{0.5cm}=\; \frac{(e^{\varepsilon}-1)^2}{e^\varepsilon+1}\,\big\| \PP_0-\PP_1 \big\|_{\rm TV}^2  + O(\varepsilon^3)\;,
\end{align*}
where $T\subseteq\cX$ is the set of $\x$ such that $\PP_0(\x)\geq\PP_1(\x)$.
Compared to \cite[Theorem 1]{DJW13}, we recover their bound of $4(e^\varepsilon-1)^2 \|\PP_0-\PP_1\|_{\rm TV}^2$
with a smaller constant.
We want to note that Duchi et al.'s bound holds for all values of $\varepsilon$ and uses a different technique of
bounding the KL divergence directly, however no achievable mechanism has been provided.
We instead provide an explicit mechanism, that is optimal in high privacy regime.

Similarly, in the low privacy regime, we can show the following converse result.
\begin{corollary}
\label{coro:hyplow}
For any $\varepsilon\geq 0$, let $Q$ be any conditional distribution that guarantees $\varepsilon$-local differential privacy.
Then, for any pair of distributions $\PP_0$ and $\PP_1$ and any positive $\delta>0$,
there exists a positive $\varepsilon^*$ that depends on $\PP_0$ and $\PP_1$ and $\delta$ such that for any $\varepsilon\geq \varepsilon^*$
the induced marginals $\MM_0$ and $\MM_1$ satisfy the bound
\begin{eqnarray}
	\Dkl\big(\MM_0||\MM_1\big) + \Dkl\big(\MM_1||\MM_0\big) &\leq&  \Dkl(\PP_0||\PP_1) - (1-\delta)G(P_0,P_1) e^{-\varepsilon} \;.\label{eq:rr_cor}
\end{eqnarray}
where $G(\PP_0,\PP_1)  = \sum_{\cX} (1-\PP_0(x)) \log (\PP_1(x)/\PP_0(x))$.
\end{corollary}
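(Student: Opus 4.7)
The plan is to use Theorem \ref{thm:hyprr} to reduce the problem to a direct calculation for the randomized response mechanism, and then perform a Taylor expansion in the small parameter $s := e^{-\varepsilon}$. Since Theorem \ref{thm:hyprr} ensures that for $\varepsilon \geq \varepsilon^*$ the randomized response mechanism $Q^{\rm RR}$ (defined in \eqref{eq:defrr}, and crucially independent of $P_0$ and $P_1$) simultaneously maximizes both $\Dkl(M_0||M_1)$ and $\Dkl(M_1||M_0)$ among all $\varepsilon$-locally differentially private mechanisms, it suffices to bound $\Dkl(M_0^{\rm RR}||M_1^{\rm RR}) + \Dkl(M_1^{\rm RR}||M_0^{\rm RR})$ by the claimed right-hand side.

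Next I would rewrite the induced marginals in the convenient mixture form $M_\nu^{\rm RR} = \pi P_\nu + (1-\pi)U$, with $U$ uniform on $\cX$ and $\pi = (e^\varepsilon-1)/(e^\varepsilon + |\cX|-1)$. Setting $s = e^{-\varepsilon}$, one has $1-\pi = |\cX| s + O(s^2)$ and $M_\nu^{\rm RR}(y) = P_\nu(y) + s(1 - |\cX| P_\nu(y)) + O(s^2)$; a parallel expansion of the log ratio gives $\log\bigl(M_0^{\rm RR}(y)/M_1^{\rm RR}(y)\bigr) = \log(P_0(y)/P_1(y)) + s(1/P_0(y) - 1/P_1(y)) + O(s^2)$. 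Substituting into $\Dkl(M_0^{\rm RR}||M_1^{\rm RR}) = \sum_y M_0^{\rm RR}(y) \log(M_0^{\rm RR}(y)/M_1^{\rm RR}(y))$, the zeroth-order term recovers $\Dkl(P_0||P_1)$, and the first-order term in $s$ is a linear functional of the priors built from the perturbation cross terms.

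The main technical obstacle is the algebraic simplification showing that, once one symmetrizes over the two directions $0 \to 1$ and $1 \to 0$, the first-order coefficient collapses to $-G(P_0,P_1) = -\sum_x (1-P_0(x))\log(P_1(x)/P_0(x))$; this uses $\sum_y P_\nu(y) = 1$ and cancellations between the log-ratio and marginal expansions. Because the priors are positive, the $O(s^2)$ remainder admits a uniform bound depending only on $P_0$ and $P_1$, so for any $\delta > 0$ we may choose $\varepsilon^* = \varepsilon^*(P_0, P_1, \delta)$ large enough that this remainder is dominated by $\delta\,G(P_0,P_1)\,s$. Combining the reduction via Theorem \ref{thm:hyprr} with this expansion yields the inequality \eqref{eq:rr_cor} throughout the low-privacy regime.
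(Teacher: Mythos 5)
Your overall strategy matches the paper's: invoke Theorem \ref{thm:hyprr} (and its mirror image with the roles of $P_0$ and $P_1$ swapped, taking the larger of the two resulting thresholds $\varepsilon^*$), then Taylor-expand the randomized-response marginals in $s = e^{-\varepsilon}$. Your mixture rewriting $M_\nu^{\rm RR} = \pi P_\nu + (1-\pi)U$ and the resulting first-order expansions of $M_\nu^{\rm RR}(y)$ and of $\log\bigl(M_0^{\rm RR}(y)/M_1^{\rm RR}(y)\bigr)$ are correct, and absorbing the $O(s^2)$ remainder into the $(1-\delta)$ slack is the right closing move.

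The gap is precisely the step you flag as the ``main technical obstacle'' and then assert rather than carry out: that the first-order coefficient ``collapses to $-G(P_0,P_1)$.'' Multiplying your two expansions and summing over $y$ gives
\begin{eqnarray*}
\Dkl\bigl(M_0^{\rm RR}\|M_1^{\rm RR}\bigr) = \Dkl(P_0\|P_1) + s\sum_{y\in\cX}\Bigl[\bigl(1 - |\cX|\,P_0(y)\bigr)\log\tfrac{P_0(y)}{P_1(y)} + 1 - \tfrac{P_0(y)}{P_1(y)}\Bigr] + O(s^2)\;,
\end{eqnarray*}
whose coefficient depends on $|\cX|$ and on $\sum_y P_0(y)/P_1(y)$, neither of which appears in $-G(P_0,P_1)=\sum_y(1-P_0(y))\log\bigl(P_0(y)/P_1(y)\bigr)$. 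A concrete check: for $|\cX|=2$, $P_0=(1/2,1/2)$, $P_1=(1/4,3/4)$, the displayed coefficient is $-2/3$ (negative, as DPI demands), while $-G(P_0,P_1)\approx +0.144$---even the sign disagrees. Symmetrizing the two directions does not produce the cancellation you conjecture; the symmetrized coefficient still carries $|\cX|$-dependence and $P_0/P_1$ ratio terms absent from $G$. So the crucial algebraic collapse your proof relies on does not occur, and in fact carrying out the algebra exposes an inconsistency in the corollary itself (its left side is symmetrized while its right side is not, so the claimed inequality already fails in the $\varepsilon\to\infty$ limit whenever $\Dkl(P_1\|P_0)>0$). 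Treat the stated form of $G$ as a target to verify, not to assume; as written, your plan reproduces the paper's strategy but leaves its central computation unverified and, on inspection, false.
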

This follows directly from Theorem \ref{thm:hyprr}
and observing that the randomized response mechanism achieves
$
\Dkl(M_0||M_1) = \Dkl(P_0||P_1) - G(P_0,P_1) e^{-\varepsilon} + O( e^{-2\varepsilon})\;.$

Figure \ref{fig:scatter} illustrates the gap between the divergence achieved by
the geometric mechanism described in the previous section and
the optimal mechanisms (the binary mechanism for the high privacy regime and the randomized response mechanism for the low privacy regime).
For each instance of the $100$ randomly generated $P_0$ and $P_1$ defined over input alphabets of size $k=6$,
we plot the resulting divergence $D_{\rm kl}(M_0||M_1)$
as a function of $\|P_0-P_1\|_{\rm TV}$ for $\varepsilon=0.1$,
and as a function of $D_{\rm kl}(P_0||P_1)$ for $\varepsilon=10$.
The binary and the randomized response mechanisms exhibit the scaling predicted by Equation \eqref{eq:converseKLlow} and \eqref{eq:rr_cor}, respectively.
\begin{figure}[h]
	\begin{center}
	\includegraphics[width=.47\textwidth]{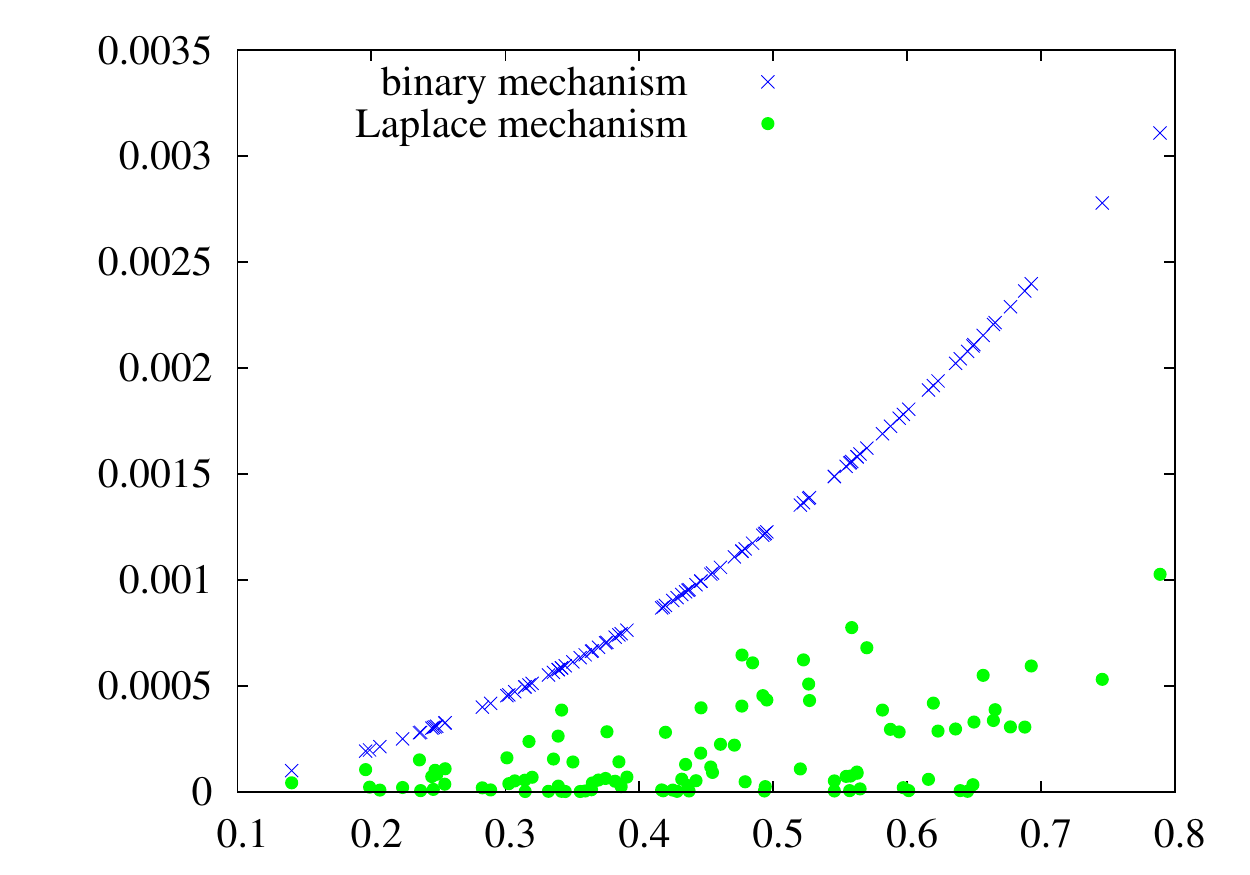}
	\put(-200,144){$D_{\rm kl}(M_0||M_1)$}
	\put(-120,-10){$\|P_0-P_1\|_{\rm TV}$}
	\includegraphics[width=.47\textwidth]{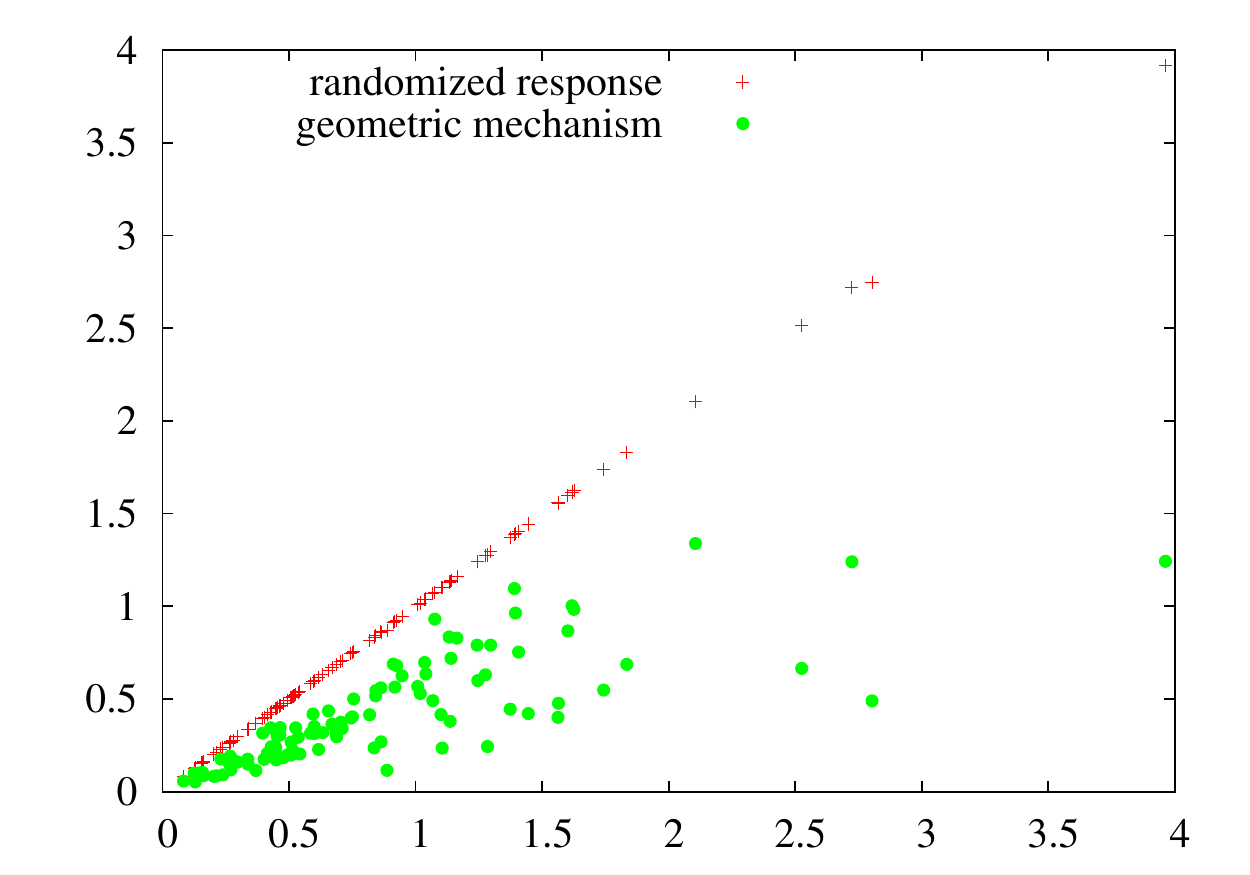}
	\put(-200,144){$D_{\rm kl}(M_0||M_1)$}
	\put(-120,-10){$D_{\rm kl}(P_0||P_1)$}
	\end{center}
	\caption{For small $\varepsilon=0.1$ (left) the binary mechanism achieves the optimal KL divergence, which scales as Equation \eqref{eq:converseKLlow}. For large $\varepsilon=10$ (right) the randomized response achieves the optimal KL divergence, which scales as Equation \eqref{eq:rr_cor}. Both mechanisms improve significantly over the geometric mechanism.}
	\label{fig:scatter}
\end{figure}
Similarly, for total variation, we can get the following converse result.
\begin{corollary}
\label{coro:hyptv}
For any $\varepsilon\geq 0$, let $Q$ be any conditional distribution that guarantees $\varepsilon$-local differential privacy. Then, for any pair of distributions $\PP_0$ and $\PP_1$, the induced marginals $\MM_0$ and $\MM_1$ satisfy the bound
$\big\| \MM_0-\MM_1\big\|_{\rm TV} \leq ( ({e^{\varepsilon}-1)}/({e^\varepsilon+1})) \,\big\| \PP_0-\PP_1 \big\|_{\rm TV}\;,$
and equality is achieved by the binary mechanism.
\end{corollary}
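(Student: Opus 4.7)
The plan is to deduce the corollary directly from Theorem~\ref{thm:hyptv}, which asserts that among all $\varepsilon$-locally differentially private mechanisms, the binary mechanism maximizes $\|M_0 - M_1\|_{\rm TV}$. This immediately reduces the corollary to a single computation: evaluate $\|M_0 - M_1\|_{\rm TV}$ when $Q$ is the binary mechanism defined in \eqref{eq:defhypbin}, and check that it equals $\frac{e^\varepsilon - 1}{e^\varepsilon + 1}\,\|P_0 - P_1\|_{\rm TV}$. Since Theorem~\ref{thm:hyptv} guarantees this value is the maximum, the inequality in the corollary follows, and equality is attained by the binary mechanism by construction.

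For the computation, I would introduce the discriminating set $T = \{x \in \cX : P_0(x) \geq P_1(x)\}$, which is exactly the set of inputs on which the binary mechanism outputs $0$ with high probability $e^\varepsilon/(1+e^\varepsilon)$. A key identity I would invoke is $P_0(T) - P_1(T) = \|P_0 - P_1\|_{\rm TV}$, which is the standard representation of total variation as the maximum signed measure difference over subsets. Using the definition \eqref{eq:defM2} of the induced marginals together with \eqref{eq:defhypbin}, I would compute
\begin{equation*}
M_0(0) - M_1(0) \;=\; \frac{e^\varepsilon}{1+e^\varepsilon}\bigl(P_0(T) - P_1(T)\bigr) + \frac{1}{1+e^\varepsilon}\bigl(P_0(T^c) - P_1(T^c)\bigr).
\end{equation*}
Since $P_0(T^c) - P_1(T^c) = -(P_0(T) - P_1(T))$, this simplifies to $\frac{e^\varepsilon - 1}{e^\varepsilon + 1}(P_0(T) - P_1(T))$, and by symmetry $M_0(1) - M_1(1) = -[M_0(0) - M_1(0)]$.

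Plugging into $\|M_0 - M_1\|_{\rm TV} = \tfrac{1}{2}\sum_{y \in \{0,1\}} |M_0(y) - M_1(y)|$ gives exactly $\frac{e^\varepsilon - 1}{e^\varepsilon + 1}\,\|P_0 - P_1\|_{\rm TV}$, completing the identification of the maximal value and simultaneously exhibiting the mechanism that achieves it. There is no genuine obstacle here: the entire content of the corollary is packaged in Theorem~\ref{thm:hyptv}, and what remains is a short algebraic verification that uses nothing beyond the definition of the binary mechanism and the variational characterization of total variation. The only care needed is bookkeeping to ensure the cancellation $P_0(T^c) - P_1(T^c) = -(P_0(T) - P_1(T))$ is applied correctly, so that the coefficients $e^\varepsilon/(1+e^\varepsilon)$ and $1/(1+e^\varepsilon)$ combine into the claimed factor.
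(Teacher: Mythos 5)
Your proof is correct and follows the paper's own route exactly: Corollary~\ref{coro:hyptv} is deduced from Theorem~\ref{thm:hyptv} plus a direct evaluation of $\|M_0-M_1\|_{\rm TV}$ under the binary mechanism, and your algebraic verification matches the formula used in the paper's proof of Theorem~\ref{thm:hyptv}.
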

This follows from Theorem \ref{thm:hyptv} and explicitly computing the total variation achieved by the binary mechanism.

\section{Information Preservation}
\label{sec:est}

In this section, we study the fundamental tradeoff between
local privacy and mutual information. Consider a random variable $\X$ distributed according to $\PP$. The information content of $\X$ is captured by information theoretic quantity called entropy
\begin{equation*}
H\left(\X\right)=-\sum_{\cX}\pp\left(\x\right)\log
\pp\left(\x\right).
\end{equation*}
We are interested in releasing a differentially private version of $\X$ represented by $\Y$. The random variable $\Y$
should preserve the information content of $\X$ as much as possible while
meeting the local differential privacy constraints.
Similar to the hypothesis testing setting,
we will show that a variant of the binary mechanism is optimal in the high privacy regime and that the
randomized response mechanism is optimal in the low privacy regime.

Let $\Q$ be a non-interactive
privatization mechanism guaranteeing $\varepsilon$-local differential
privacy. The output of the privatization mechanism $\Y$ is distributed
according to the induced marginal $\MM$ given by
\begin{eqnarray*}
	\MM(S) &=& \sum_{\x\in\cX} \Q(S|\x)  \PP(\x) \;,
\end{eqnarray*}
for $S \subseteq \cY$. With a
slight abuse of notation, we will use $\MM$ and $\PP$ to
represent both probability distributions and probability mass functions. The
information content in $\Y$ about $\X$ is captured by the well celebrated
information theoretic quantity called mutual information. The mutual
information between $\X$ and $\Y$ is given by
\begin{equation*}
I\left(\X;\Y\right)=\sum_{\cX}\sum_{ \cY} \pp\left(\x\right)\Q\left(\y|\x\right)\log\left(\frac{\Q\left(\y|\x\right)}{\sum_{l \in \cX}\pp\left(l\right)\Q\left(\y|l\right)}\right)=U\left(\PP,\Q\right)=U\left(\Q\right).
\end{equation*}
Notice that $I\left(\X;\Y\right)\ \leq
H\left(\X\right)$ and $I\left(\X;\Y\right)$ is convex in $\Q$ \citep{CT12}. To preserve the information context of $\X$, we wish to choose a privatization
mechanism $\Q$ such that the mutual information between $\X$ and $\Y$ is
maximized subject to differential privacy constraints.
In other words, we are interested in characterizing the optimal solution to
\begin{equation}
\label{eq:opti}
\begin{aligned}
& \underset{\Q}{\text{maximize}}
& & I\left(\X;\Y\right)  \\
& \text{subject to}
& & \Q \in \mathcal{\D}_{\varepsilon}
\end{aligned},
\end{equation}
where $\mathcal{\D}_{\varepsilon}$ is the set of all $\varepsilon$-locally
differentially private mechanisms defined in \eqref{eq:defdp}. The above
mutual information maximization problem can be thought of as a conditional
entropy minimization problem since $I\left(\X;\Y\right) = H\left(\X\right)-
H\left(\X|\Y\right)$.
\subsection{Optimality of staircase mechanisms}
\label{sec:estopt}
From the definition of $I\left(\X;\Y\right)$, we have that
\begin{equation*}
I\left(\X;\Y\right)=\sum_{\cY} \sum_{\cX} \pp\left(\x\right)\Q\left(\y|\x\right)\log\left(\frac{\Q\left(\y|\x\right)}{\pp^T\Q_\y}\right)= \sum_{\cY} \mu\left(\Q_\y\right),
\end{equation*}
where $\pp^T\Q_\y=\sum_{\cX}\pp(\x)\Q\left(\y|\x\right)$ and $\mu\left(\Q_\y\right)=\sum_{\cX} \pp\left(\x\right)\Q\left(\y|\x\right)\log\left(\Q\left(\y|\x\right)/\pp^T\Q_\y\right)$. Notice that
$\mu\left(\gamma\Q_\y\right)=\gamma\mu\left(\Q_\y\right)$, and by the log-sum
inequality, $\mu$ is convex. Convexity and homogeneity together imply sublinearity.
Therefore, Theorems \ref{thm:sc} and \ref{thm:lp} apply to
$I\left(\X;\Y\right)$ and we have that staircase mechanisms are optimal.

\subsection{Optimality of the binary mechanism}
For a given $\PP$, the {\em binary mechanism for mutual information}
is a staircase mechanism with only two outputs
$\Y\in\{0,1\}$ (see Figure \ref{fig:mechanism})
\begin{eqnarray}
\Q(0|x) \,=\, \left\{
\begin{array}{rl}
	\dfrac{e^\varepsilon}{1+e^\varepsilon}& \text{ if } \x \in \T\;,\\
	\dfrac{1}{1+e^\varepsilon}& \text{ if } \x\notin \T\;,\\
\end{array}
\right.\;\;\;
\Q(1|x) \,=\, \left\{
\begin{array}{rl}
	\dfrac{e^\varepsilon}{1+e^\varepsilon}& \text{ if } \x\notin \T \;,\\
	\dfrac{1}{1+e^\varepsilon}& \text{ if } \x\in \T \;, \\
\end{array}
\right.
\label{eq:defestbin}
\end{eqnarray}
with
\begin{eqnarray}
\label{eq:tbin}
	\T &\in& \underset{A\subseteq \cX}{\arg\min} \;\; \Big| \PP(A) - \frac12 \Big|\;.
\end{eqnarray}
Observe that there are multiple valid choices for $\T$. Indeed, for any minimizing set $\T$, $\T^c$ is also a minimizing set since $|\PP(\T)-1/2|=|\PP(\T^c)-1/2|$.
When there are multiple pairs, any pair $(\T,\T^c)$ can be chosen to define the binary mechanism. All valid binary mechanisms are equivalent from a utility maximization perspective.

In what follows, we will establish that this simple mechanism is optimal in the high privacy regime. Intuitively, in the high privacy regime,
we cannot release more than one bit of information, and hence, the input alphabet is
reduced to a binary output alphabet. In this case, it makes sense to partition the original alphabet $\cX$ in a way that preserves the information content of $X$ as much as possible. Indeed, our choice of $\T$ in Equation \eqref{eq:tbin} maximizes the information
contained in the released bit because $\T \in \underset{A\subseteq \cX}{\arg\max}|\PP(A)-1/2| =
\underset{A\subseteq \cX}{\arg\max}\big(-\PP(A)\log \PP(A) - \PP(A^c)\log \PP(A^c)\big) $ (see Section \ref{sec:estbin} for a proof).
\begin{theorem}
\label{thm:estbin}
	For any distribution $\PP$,
	there exists a positive $\varepsilon^*$ that depends on  $\PP$ such that
	for any positive $\varepsilon\leq \varepsilon^*$,
	the binary mechanism maximizes the mutual information between the input and the output of a privatization mechanism
	over all $\varepsilon$-locally differentially private mechanisms.
\end{theorem}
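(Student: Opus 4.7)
Proof proposal.

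The plan is to recast (\ref{eq:opti}) as a finite-dimensional linear program via Theorems \ref{thm:sc}--\ref{thm:lp}, and then certify optimality of the binary vertex for small $\varepsilon$ using LP duality together with a small-$\varepsilon$ Taylor expansion. As established in Section \ref{sec:estopt}, $I(\X;\Y) = \sum_y \mu(\Q_y)$ with $\mu$ sublinear, so Theorem \ref{thm:sc} restricts attention to staircase mechanisms and Theorem \ref{thm:lp} collapses the problem to $\max_{\bb\geq0}\mu^T\bb$ subject to $S^{(k)}\bb = \unity$. I index columns of $S^{(k)}$ by subsets $A \subseteq \cX$ (so $s^{(A)}_i = e^\varepsilon$ iff $i\in A$) and write $\mu_A := \mu(s^{(A)})$. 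The binary mechanism \eqref{eq:defestbin} is the feasible point $\bb_T = \bb_{T^c} = 1/(1+e^\varepsilon)$ and $\bb_A = 0$ elsewhere, with primal value $(\mu_T+\mu_{T^c})/(1+e^\varepsilon)$.

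Using $\pp^T s^{(A)} = 1 + (e^\varepsilon-1)\pp(A)$ one obtains the compact form $\mu_A = \varepsilon e^\varepsilon \pp(A) - \pp^T s^{(A)}\log(\pp^T s^{(A)})$; expanding $e^\varepsilon$ and $(1+u)\log(1+u)$ in powers of $\varepsilon$ gives
\begin{equation*}
\mu_A \,=\, \pp(A)\,\pp(A^c)\,\frac{\varepsilon^2}{2} \,+\, O(\varepsilon^3).
\end{equation*}
Since \eqref{eq:tbin} chooses $T$ as a minimizer of $|\pp(A)-1/2|$ and hence as a maximizer of $\pp(A)\pp(A^c)$, this leading-order coefficient is largest exactly at $A = T$ and $A = T^c$, and the binary mechanism attains the leading-order maximum.

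To certify LP optimality, I would exhibit a feasible dual $\lambda\in\reals^k$ whose objective matches the binary primal value. The dual is $\min\unity^T\lambda$ subject to $\unity^T\lambda + (e^\varepsilon-1)\Lambda_A \geq \mu_A$ for every $A$, where $\Lambda_A := \sum_{i\in A}\lambda_i$; complementary slackness at $A = T,T^c$ forces $\unity^T\lambda = (\mu_T + \mu_{T^c})/(1+e^\varepsilon)$, matching the binary primal value. A natural ansatz is $\lambda_i = c_T\,\pp(i)$ for $i\in T$ and $\lambda_i = c_{T^c}\,\pp(i)$ for $i\in T^c$, with $c_T, c_{T^c}$ determined by the two tightness conditions. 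This makes $\Lambda_A = c_T\pp(A\cap T) + c_{T^c}\pp(A\cap T^c)$ strictly positive of order $\varepsilon^2$, and the remaining dual inequalities rearrange to
\begin{equation*}
\Lambda_A \,\geq\, \tfrac{\varepsilon}{2}\bigl[\,\pp(A)\pp(A^c) - \pp(T)\pp(T^c)\,\bigr] \,+\, O(\varepsilon^2),
\end{equation*}
whose right-hand side is strictly negative to leading order whenever $\pp(A)\pp(A^c) < \pp(T)\pp(T^c)$. Finiteness of the subset collection then produces a single $\varepsilon^* > 0$ below which every such inequality holds with room to spare.

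The main obstacle will be the degenerate case $\pp(A)\pp(A^c) = \pp(T)\pp(T^c)$ for some $A \notin\{T,T^c\}$, where the $\varepsilon$-terms cancel and one must compare $O(\varepsilon^2)$ corrections. In such a case $\pp(A) \in \{\pp(T),\pp(T^c)\}$; the former reduces the dual constraint at $A$ to the tight one at $T$, and the latter reduces it to $\Lambda_A \geq \Lambda_{T^c}$, i.e.\ $(c_T - c_{T^c})\pp(A\cap T) \geq 0$. Choosing the representative with $\pp(T) \leq 1/2$ (possible since $T$ and $T^c$ are interchangeable in \eqref{eq:tbin}), a direct computation from the two tightness equations gives $c_T - c_{T^c} = (1-2\pp(T))\varepsilon^2/12 + O(\varepsilon^3) \geq 0$, closing the inequality. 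Combined with the generic case, this yields a uniform $\varepsilon^* > 0$ depending on $\pp$, as claimed.
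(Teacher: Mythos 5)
Your proposal follows the same broad strategy as the paper: invoke Theorems~\ref{thm:sc}--\ref{thm:lp} to reduce to a finite LP, pass to the dual, and certify optimality of the binary vertex for small $\varepsilon$ via a Taylor expansion in $\varepsilon$. The dual ansatz differs cosmetically --- you take $\lambda_i \propto \pp(i)$ within each block of $\{\T,\T^c\}$, whereas the paper distributes the block weight uniformly (i.e.\ $\alpha^*_i = (\text{block constant})/|\T|$ for $i\in\T$) --- but both ansatzes have exactly two free block constants fixed by the two tightness conditions at $\T$ and $\T^c$, and both produce the correct leading-order picture $\mu_A = \pp(A)\pp(A^c)\varepsilon^2/2 + O(\varepsilon^3)$, $\unity^T\lambda = \pp(\T)\pp(\T^c)\varepsilon^2/2 + O(\varepsilon^3)$. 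Up to that point the argument is essentially a restatement of the paper's proof and is correct.

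The gap is in your treatment of the degenerate case, and the flaw is a sign error that cannot be patched with your ansatz. Writing the dual constraint at $A$ as $\unity^T\lambda + (e^\varepsilon-1)\Lambda_A \geq \mu_A$, and using tightness at $\T$, the subcase $\pp(A)=\pp(\T)$ with $A\neq\T$ does \emph{not} ``reduce to the tight one at $\T$'': it forces $\Lambda_A \geq \Lambda_\T$, which with your ansatz (and the identity $\pp(A\cap\T^c)=\pp(\T\setminus A)$) becomes $\pp(\T\setminus A)\,(c_{\T^c}-c_\T)\geq 0$, i.e.\ $c_{\T^c}\geq c_\T$. For the complementary subcase $\pp(A)=\pp(\T^c)$ you correctly get $c_\T \geq c_{\T^c}$. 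But $A$ and $A^c$ always appear together, and $\Lambda_A+\Lambda_{A^c}=\unity^T\lambda=\Lambda_\T+\Lambda_{\T^c}$, so $\Lambda_A\geq\Lambda_\T$ and $\Lambda_{A^c}\geq\Lambda_{\T^c}$ can only hold simultaneously with equality, which forces $c_\T = c_{\T^c}$. Your expansion $c_\T - c_{\T^c} = \Theta\big((1-2\pp(\T))\varepsilon^2\big)$ is nonzero whenever $\pp(\T)\neq\tfrac12$, so the certificate fails there. A concrete counterexample is $\pp=(0.4,0.4,0.2)$ on $|\cX|=3$: both $\T=\{1\}$ and $A=\{2\}$ maximize $\pp(\cdot)\pp(\cdot^c)$, $\pp(\T)=0.4<\tfrac12$, and the constraint at $A=\{2\}$ reads $0.4\,c_{\T^c}\geq 0.4\,c_\T$, which your $c_\T > c_{\T^c}$ violates for small $\varepsilon$. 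You are right that this degeneracy is a real issue --- the paper's own proof invokes a ``uniqueness of the maximizer assumption'' that is not actually stated in Theorem~\ref{thm:estbin}, so it has the same hole --- but your resolution does not close it; one would need a genuinely different certificate in the tied case (e.g.\ a convex combination over the several optimal partitions, since the primal then has multiple optimal vertices and the dual optimum need not sit at a face where only two columns are tight).
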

This implies that in the high privacy regime,
the binary mechanism is the optimal solution for \eqref{eq:opti}.

Next, we show that the binary mechanism achieves near-optimal performance for all $(\cX,\PP)$ and $\varepsilon \leq 1$ even when $\varepsilon^*<1$.
Let ${\rm OPT}$ denote the maximum value of $\eqref{eq:opti}$
and ${\rm BIN}$ denote the mutual information achieved by the binary mechanism given in \eqref{eq:defestbin}.
The next theorem shows that
\begin{eqnarray*}
	{\rm BIN} & \geq & \frac{1}{1+e^\varepsilon} {\rm OPT}\;.
\end{eqnarray*}
\begin{theorem}
\label{thm:estappx}
	For any $\varepsilon \leq 1$ and any distribution $\PP$, the binary mechanism is an $(1+e^\varepsilon)$-approximation of the
	 maximum mutual information between the input and the output of a privatization mechanism
	among all $\varepsilon$-locally differentially private mechanisms.
\end{theorem}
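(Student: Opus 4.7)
The strategy is to reduce Theorem \ref{thm:estappx} to a one-dimensional inequality via Theorems \ref{thm:sc} and \ref{thm:lp}. From the expressions given in Section \ref{sec:estopt}, for any staircase mechanism we have $I(\X;\Y) = \sum_y \theta_y g(p_y)$ with $p_y = \PP(A_y)$, $r(u) := 1+(e^\varepsilon-1)u$, and
\[ g(u) \;=\; \varepsilon e^\varepsilon u \,-\, r(u)\log r(u) \;. \]
Since each row of $\Q$ sums to one, $\PP(\Y=y) = \theta_y r(p_y)$ and $\sum_y \PP(\Y=y) = 1$. Setting $h(u) := g(u)/r(u)$ will let me rewrite
\[ {\rm OPT} \;=\; \sum_y \PP(\Y=y)\,h(p_y) \;\leq\; \sup_{u \in \mathcal{F}} h(u)\;, \]
where $\mathcal{F}$ collects the admissible values of $p_y$. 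The construction of the binary mechanism selects $\T = \arg\min_{A \subseteq \cX} |\PP(A) - 1/2|$, which forces every subset $A$ to satisfy $\PP(A) \in [0, p] \cup [1-p, 1]$ with $p := \PP(\T) \leq 1/2$ (without loss of generality), hence $\mathcal{F} \subseteq [0, p] \cup [1-p, 1]$.

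Next I would study the shape of $h$ on $[0, 1]$. Direct differentiation gives $h'(u) = \varepsilon e^\varepsilon/r(u)^2 - (e^\varepsilon - 1)/r(u)$ and $h''(u)$ proportional to $(e^\varepsilon - 1)\, r(u) - 2\varepsilon e^\varepsilon$. Since $r(u) \leq e^\varepsilon$, $h$ is concave on $[0, 1]$ whenever $e^\varepsilon(e^\varepsilon - 1) \leq 2\varepsilon e^\varepsilon$, i.e., $e^\varepsilon \leq 1 + 2\varepsilon$. This inequality is exactly what $\varepsilon \leq 1$ delivers (it fails only past $\varepsilon \approx 1.26$), so concavity holds throughout the hypothesis. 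Together with $h(0) = h(1) = 0$, concavity forces $h$ to attain its maximum at a unique interior critical point $\tilde p$ solving $r(\tilde p) = \varepsilon e^\varepsilon/(e^\varepsilon-1)$. A short calculation (equivalent to $e^{2\varepsilon} - 1 > 2\varepsilon e^\varepsilon$, which follows from the convexity of $e^\varepsilon$) will then show $\tilde p \in (0, 1/2)$ for every $\varepsilon > 0$.

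I will finish by bounding $\sup_{\mathcal{F}} h$ by $g(p) + g(1-p) = (1+e^\varepsilon) \cdot {\rm BIN}$ in two cases. If $p \leq \tilde p$, the monotonicity of $h$ (non-decreasing on $[0, \tilde p]$, non-increasing on $[\tilde p, 1]$) yields $\sup_{\mathcal{F}} h = \max(h(p), h(1-p))$; the inequality $r(u) \geq 1$ gives $h(u) \leq g(u)$, so $\sup_{\mathcal{F}} h \leq \max(g(p), g(1-p)) \leq g(p) + g(1-p)$. If $p > \tilde p$, then $\tilde p \in [0, p]$ and $\sup_{\mathcal{F}} h = h(\tilde p) \leq g(\tilde p)$. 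To complete this case I would verify that $\xi(q) := g(q) + g(1-q)$ is non-decreasing on $(0, 1/2]$: $g$ is concave (established in Section \ref{sec:estopt}), hence $g'$ is non-increasing, so for $q \in (0, 1/2]$ with $1-q \geq q$, $\xi'(q) = g'(q) - g'(1-q) \geq 0$. Applied at $p \geq \tilde p$, this yields $g(p) + g(1-p) \geq g(\tilde p) + g(1-\tilde p) \geq g(\tilde p) \geq h(\tilde p)$, completing the bound. The main non-routine step is the concavity of $h$, which is where the hypothesis $\varepsilon \leq 1$ is used; the rest reduces to elementary monotonicity.
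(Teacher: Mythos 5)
Your reduction to a one-dimensional problem is a genuinely different and, in my view, tighter route than the paper's. The paper first proves (Lemma~\ref{mi_up_all}) the dual-variable bound ${\rm OPT}\leq (\max_j\mu_j)\cdot k/(e^\varepsilon+k-1)$, then devotes Lemma~\ref{mu_bound} to a three-way case analysis on the mode $\PP(x^*)$ to show $\max_j\mu_j \leq g(p)+g(1-p)$, and only then combines. You bypass the dual program entirely: from $Q_j = \theta_j S^{(k)}_j$ and $S^{(k)}\theta=\unity$ you get $M(y)=\theta_y r(p_y)$ summing to one, so OPT is a convex combination of the values $h(p_y)=g(p_y)/r(p_y)$, giving ${\rm OPT}\leq\sup_{u\in\mathcal F}h(u)$ for free. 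The single concavity computation $h''\propto (e^\varepsilon-1)r(u)-2\varepsilon e^\varepsilon\leq 0$ (valid exactly because $e^\varepsilon\leq 1+2\varepsilon$ for $\varepsilon\leq 1$) then pins the maximizer $\tilde p$ of $h$ in $(0,1/2)$, and the two cases $p\lessgtr\tilde p$ close quickly using $h\leq g$ and the monotonicity of $g(q)+g(1-q)$ on $(0,1/2]$. I verified the key computation: $h'(u)=\varepsilon e^\varepsilon/r(u)^2-(e^\varepsilon-1)/r(u)$, so $h'(\tilde p)=0$ iff $r(\tilde p)=\varepsilon e^\varepsilon/(e^\varepsilon-1)$, and $\tilde p<1/2$ iff $e^{2\varepsilon}-1>2\varepsilon e^\varepsilon$, which is Hermite--Hadamard for $e^t$ on $[0,2\varepsilon]$. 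Your approach locates the hypothesis $\varepsilon\leq 1$ very cleanly as the concavity threshold of $h$, whereas the paper's $\varepsilon\leq 1$ arises implicitly inside the case analysis of Lemma~\ref{mu_bound}. Two small corrections: the concavity of $g$ (equivalently $\mu(\cdot,\varepsilon)$) is not ``established in Section~\ref{sec:estopt}'' --- that section shows $\mu$ is convex as a function on $\reals^k$, a different statement --- but it does follow from the derivative $\mu'(z,\varepsilon)$ displayed in Section~\ref{sec:mu_bound}, which gives $\mu''(z,\varepsilon)=-(e^\varepsilon-1)^2/r(z)<0$; and in Case~2 you should write $\sup_{\mathcal F}h\leq h(\tilde p)$ rather than equality, since $\tilde p$ need not lie in the finite set $\mathcal F$.
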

Note that $1+e^\varepsilon \leq 4$ for $\varepsilon\leq 1$ which is a commonly studied regime in differential privacy applications. Therefore,
we can always use the simple binary mechanism and the resulting mutual information is at most a constant factor away from the optimal value.

\subsection{Optimality of the randomized response mechanism}
In the low privacy regime ($\varepsilon\geq\varepsilon^*$), the {\em randomized response mechanism} defined in\eqref{eq:defrr} is optimal.
\begin{theorem}
\label{thm:estrr}
	There exists a positive $\varepsilon^*$ that depends on $\PP$  such that
	for any distribution $\PP$  and all $\varepsilon\geq \varepsilon^*$,
	the {randomized response mechanism}  maximizes the mutual information between the input and the output of a privatization mechanism
	over all $\varepsilon$-locally differentially private mechanisms.
\end{theorem}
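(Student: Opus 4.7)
The plan is to reduce to the LP of Theorem~\ref{thm:lp} and exhibit a dual certificate for the randomized response. From Section~\ref{sec:estopt}, the column function $\mu(\Q_\y)=\sum_{\x}\pp(\x)\Q(\y|\x)\log(\Q(\y|\x)/\pp^T\Q_\y)$ is sublinear, so \eqref{eq:opti} is equivalent to
\begin{equation*}
	\max_{\bb\geq 0}\; \mu^T\bb \quad\text{subject to}\quad S^{(\kk)}\bb=\unity,\qquad \mu_j:=\mu(S^{(\kk)}_j).
\end{equation*}

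First I would locate the randomized response inside this LP. Its $\kk$ output columns are proportional to exactly those staircase patterns with a single $e^\varepsilon$ entry; indexing these by $j(1),\dots,j(\kk)$ (with $j(i)$ the pattern carrying $e^\varepsilon$ in row $i$), the choice $\bb^{\rm rr}_{j(i)}=1/(\kk-1+e^\varepsilon)$ and $\bb^{\rm rr}_{j}=0$ otherwise reproduces \eqref{eq:defrr} and satisfies $S^{(\kk)}\bb^{\rm rr}=\unity$. The LP dual reads $\min\{\unity^T\lambda:(S^{(\kk)})^T\lambda\geq\mu\}$, and I would define the candidate $\lambda\in\reals^\kk$ by imposing equality on the $\kk$ active primal constraints,
\begin{equation*}
	(e^\varepsilon-1)\lambda_i + \sum_{l=1}^\kk \lambda_l \;=\; \mu_{j(i)},\qquad i=1,\dots,\kk.
\end{equation*}
This diagonal-plus-rank-one system is invertible, and a direct calculation gives $\lambda_i=-\pp(i)\log\pp(i)+o(1)$ as $\varepsilon\to\infty$, with $\unity^T\lambda\to H(\X)$.

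By construction the primal-dual pair $(\bb^{\rm rr},\lambda)$ satisfies complementary slackness, so LP duality reduces the theorem to verifying dual feasibility: for every $A\subseteq\cX$, letting $S_A$ denote the staircase pattern with $e^\varepsilon$ on $A$ and $1$ on $A^c$,
\begin{equation*}
	e^\varepsilon\sum_{i\in A}\lambda_i+\sum_{i\in A^c}\lambda_i \;\geq\; \mu(S_A).
\end{equation*}
For $|A|=1$ this is equality by construction; for $|A|\in\{0,\kk\}$ one has $\mu(S_A)=0$ while the left side tends to $H(\X)$ or $e^\varepsilon H(\X)$, which are positive since $\PP$ is positive. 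For $2\leq|A|\leq\kk-1$, using $\pp^TS_A=\PP(A)e^\varepsilon+\PP(A^c)$ and expanding in powers of $e^{-\varepsilon}$ yields
\begin{equation*}
	\mu(S_A) \;=\; -\PP(A)\log\PP(A)\, e^\varepsilon + O(\varepsilon),
\end{equation*}
so the asymptotic gap between the two sides is
\begin{equation*}
	\Bigl(\sum_{i\in A}\bigl(-\pp(i)\log\pp(i)\bigr) + \PP(A)\log\PP(A)\Bigr)\, e^\varepsilon \;=\; \PP(A)\,H\!\bigl(\pp(\cdot)/\PP(A)\bigr)\, e^\varepsilon,
\end{equation*}
which is strictly positive for $|A|\geq 2$ by strict concavity of $x\mapsto-x\log x$.

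The main obstacle is upgrading this leading-order gap into a uniform dual inequality: the $O(\varepsilon)$ and $O(1)$ remainders in $\mu(S_A)$ and in $\lambda$ must be controlled so that the strictly positive $\Theta(e^\varepsilon)$ gap is not overwhelmed. Since there are only $2^\kk$ patterns and $\PP$ is positive with every $\pp(i)>0$, uniformly bounding these remainders yields a finite $\varepsilon^{\ast}(\PP)$ beyond which every dual inequality is satisfied; LP duality then certifies the randomized response as optimal for all $\varepsilon\geq\varepsilon^{\ast}$.
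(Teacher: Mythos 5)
Your proposal follows essentially the same route as the paper's proof: reduce to the LP of Theorem~\ref{thm:lp}, take the dual, and exhibit a dual certificate that is tight exactly on the $\kk$ single-$e^\varepsilon$ patterns used by the randomized response. The dual variable you define via the linear system $(e^\varepsilon-1)\lambda_i+\sum_l\lambda_l=\mu_{j(i)}$ is, after solving, precisely the paper's explicit $\dd^*_i=\frac{1}{(e^\varepsilon-1)(e^\varepsilon+\kk-1)}\{(e^\varepsilon+\kk-2)\mu_{j_i}-\sum_{l\neq i}\mu_{j_l}\}$, and your feasibility argument rests on the same leading-order expansion $\mu(S_A)\sim-\PP(A)\log\PP(A)\,e^\varepsilon$ and the same strict inequality (you phrase the gap as $\PP(A)H(\pp(\cdot)/\PP(A))\,e^\varepsilon$, the paper phrases it as the strict log-sum inequality $\PP(T_j)\log\PP(T_j)>\sum_{i\in T_j}\pp(\x_i)\log\pp(\x_i)$; these are identical). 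Your treatment of the degenerate cases $|A|\in\{0,\kk\}$ is slightly more explicit than the paper's, and you are more candid about the need to uniformly control the $O(\varepsilon)$ remainders over the finitely many patterns before taking $\varepsilon^*=\max_j\varepsilon(j)$, but the paper does the same thing implicitly. No genuine gap; this is the paper's argument with the certificate derived from its defining equations rather than quoted in closed form.
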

Observe that the randomized response is not a function of $P$. Therefore, it can be used even when the distribution $P$ is unknown.
\subsection{Numerical experiments}
For $100$ instances of randomly chosen $\PP$ defined over an
input alphabet of size $|\cX|=6$, we compare the average
performance of the binary, randomized response, and the geometric mechanisms
to the average performance of the optimal mechanism. We plot (in Figure \ref{fig:epsI}, left) the average performance measured by the normalized mutual information ${I\left(\X;\Y\right)}/{H\left(\X\right)}$ for all 4 mechanisms. The average is taken over the 100 random instances of $\PP$. In the low privacy (large $\varepsilon$) regime, the randomized response achieves optimal performance as predicted, which converges to one. In the high privacy regime (small $\varepsilon$), the binary mechanism achieves optimal performance as predicted. In all regimes, both mechanisms significantly improve over the geometric mechanism.
\begin{figure}
	\begin{center}
	\includegraphics[width=.47\textwidth]{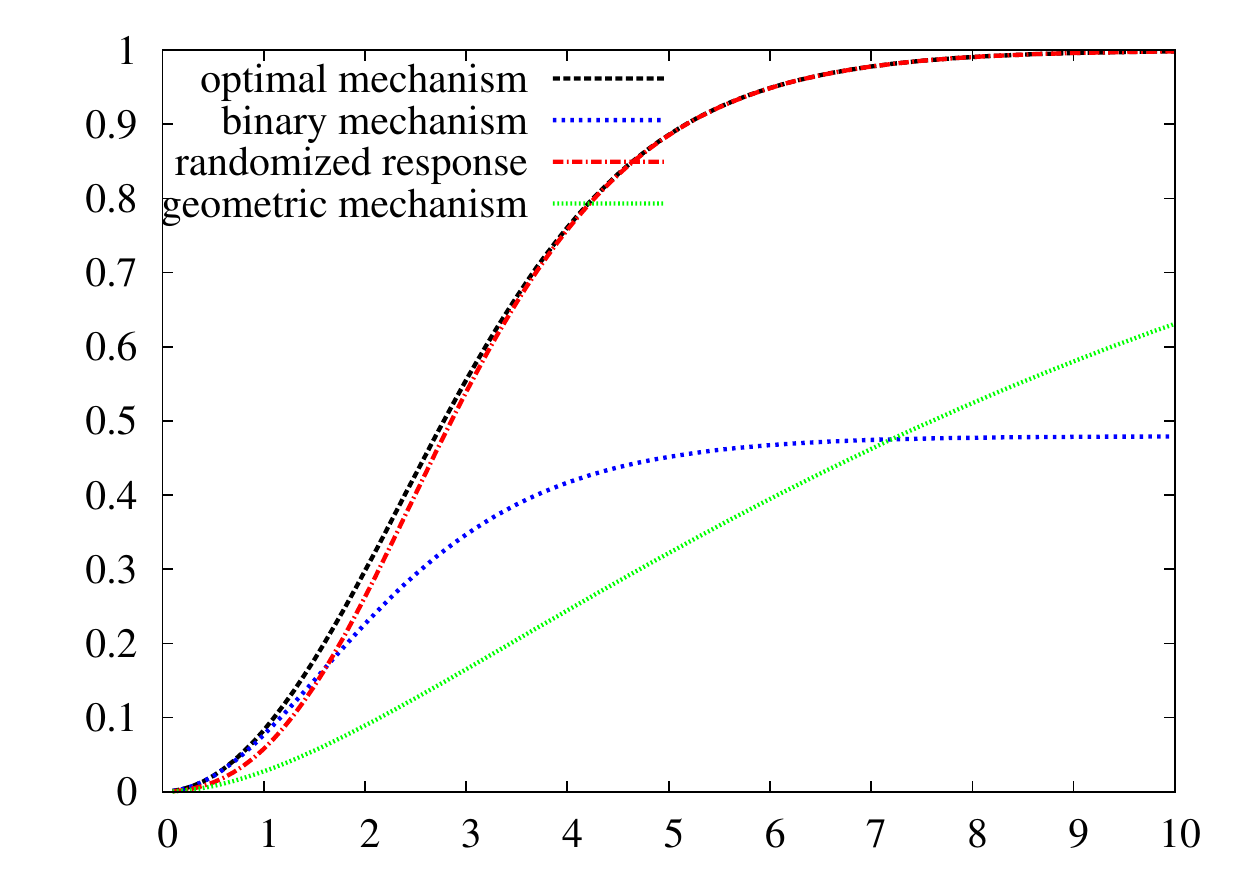}
	\put(-210,144){$\frac{I\left(\X;\Y\right)}{H\left(\X\right)}$}
	\put(-100,-10){$\varepsilon$}
	\includegraphics[width=.47\textwidth]{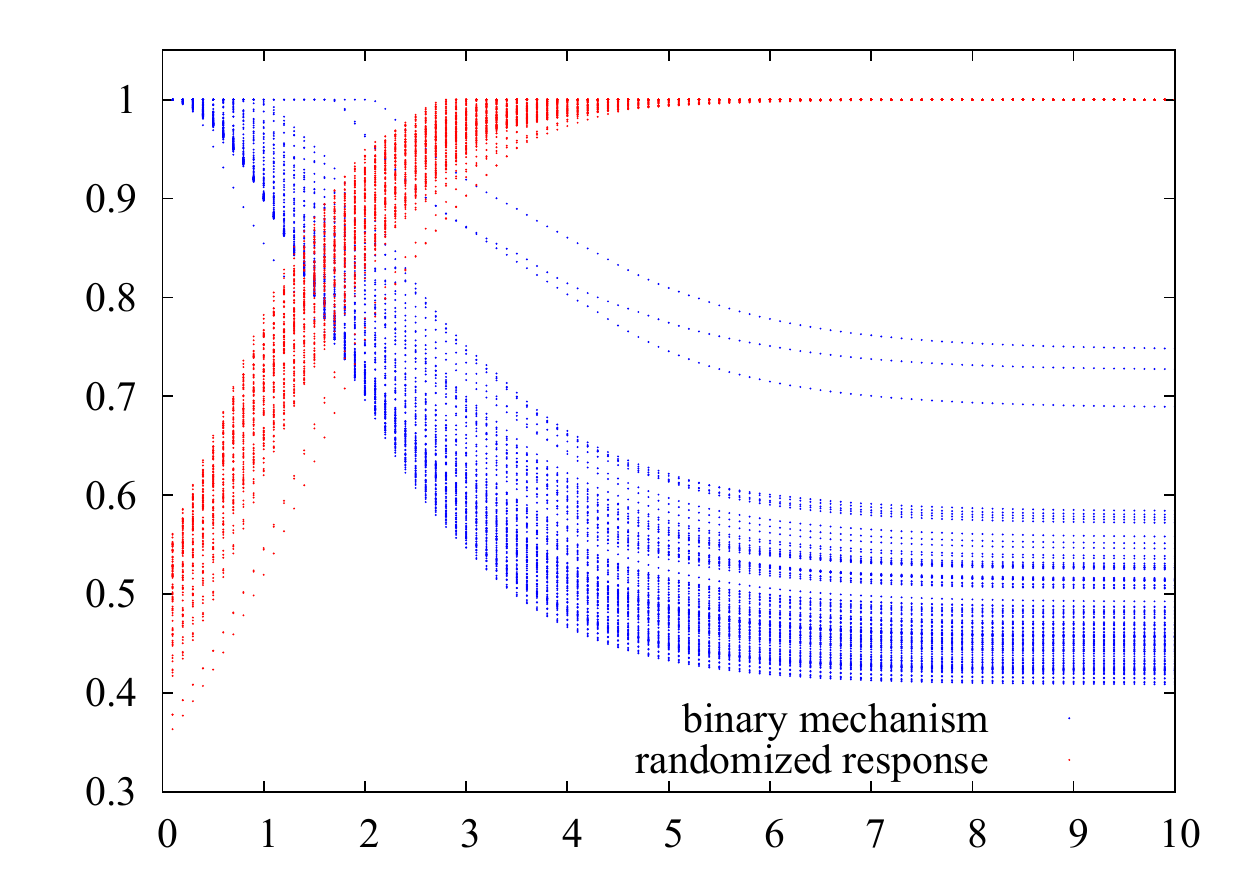}
	\put(-200,144){$\frac{I\left(\X;\Y\right)}{OPT}$}
	\put(-100,-10){$\varepsilon$}
	\put(-33.6,18.9){\Huge \textcolor{blue}{$\cdot$}}
	\put(-33.6,12.9){\Huge \textcolor{red}{$\cdot$}}
	\end{center}
	\caption{The binary and randomized response mechanisms are optimal in the high-privacy (small $\varepsilon$) and low-privacy (large $\varepsilon$) regimes, respectively, and improve over the geometric mechanism significantly (left). When the regimes are mismatched, $I\left(\X;\Y\right)$ under these mechanisms can each be as bad as $35\%$ of the optimal one (right).}
	\label{fig:epsI}
\end{figure}
To illustrate how much worse the binary and randomized response mechanisms can be (relative to the optimal staircase mechanism), we plot (in Figure \ref{fig:epsI}, right) the mutual information under each mechanism normalized by the mutual information under the optimal staircase mechanism. This is done for all 100 instances of $\PP$. In all instances, the binary mechanism is optimal for small $\varepsilon$ and the randomized response mechanism is optimal for large $\varepsilon$. However, $I\left(\X;\Y\right)$ under the randomized response mechanism can be as bad as $35\%$ of the optimal one (for small $\varepsilon$). Similarly, $I\left(\X;\Y\right)$ under the binary mechanism can be as bad as $40\%$ of the optimal one (for large $\varepsilon$).

For $|\cX|\in\{3,4,6, 12\}$, we plot (in Figure \ref{fig:maxI}) the performance of the better between the binary and randomized response mechanisms normalized by the optimal mechanism for all $100$ randomly generated instances of $\PP$. This mixed strategy achieves at least $75\%$ for $|\cX| = 6$ (and $65\%$ for $|\cX| = 12$) of the optimal mutual infirmation for all instances of $\PP$. Moreover, it is not sensitive to the size of the alphabet $|\cX|$.
\begin{figure}[tb!]
	\begin{center}
	\includegraphics[width=.47\textwidth]{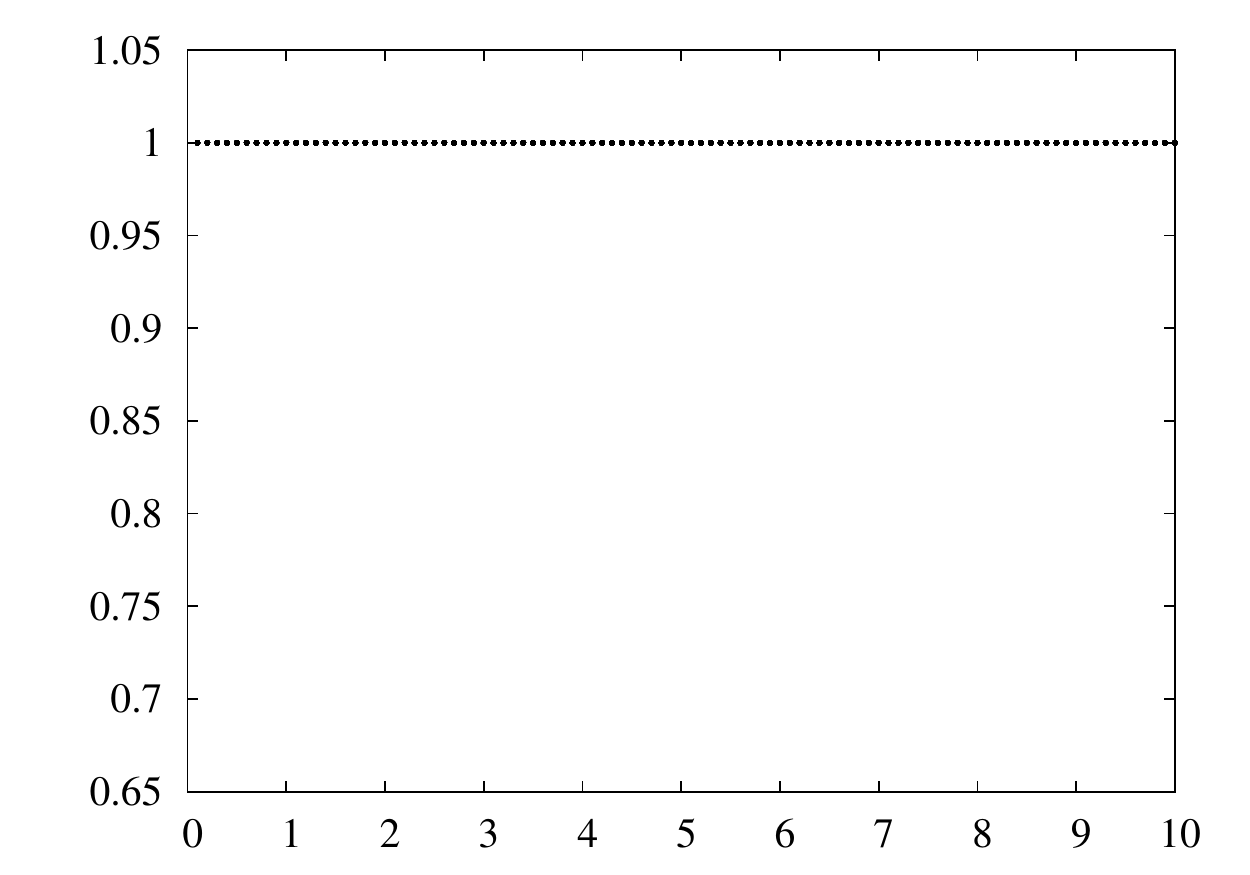}
	\put(-200,144){$\frac{I\left(\X;\Y\right)}{OPT}$}
	\put(-100,-5){$\varepsilon$}
	\put(-105,146){$|\cX|=3$}
	\includegraphics[width=.47\textwidth]{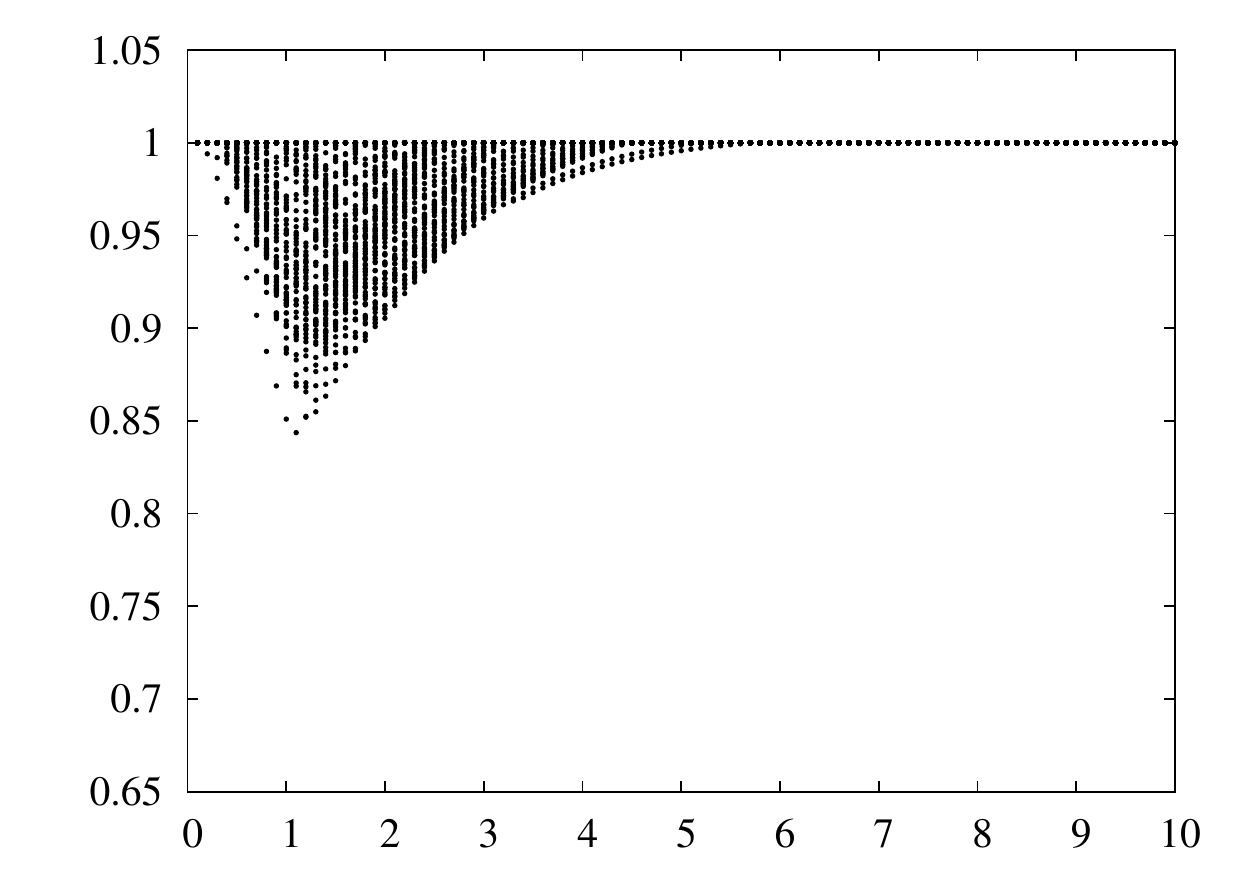}
	\put(-200,144){$\frac{I\left(\X;\Y\right)}{OPT}$}
	\put(-100,-5){$\varepsilon$}
	\put(-105,146){$|\cX|=4$}
	\\
	\includegraphics[width=.47\textwidth]{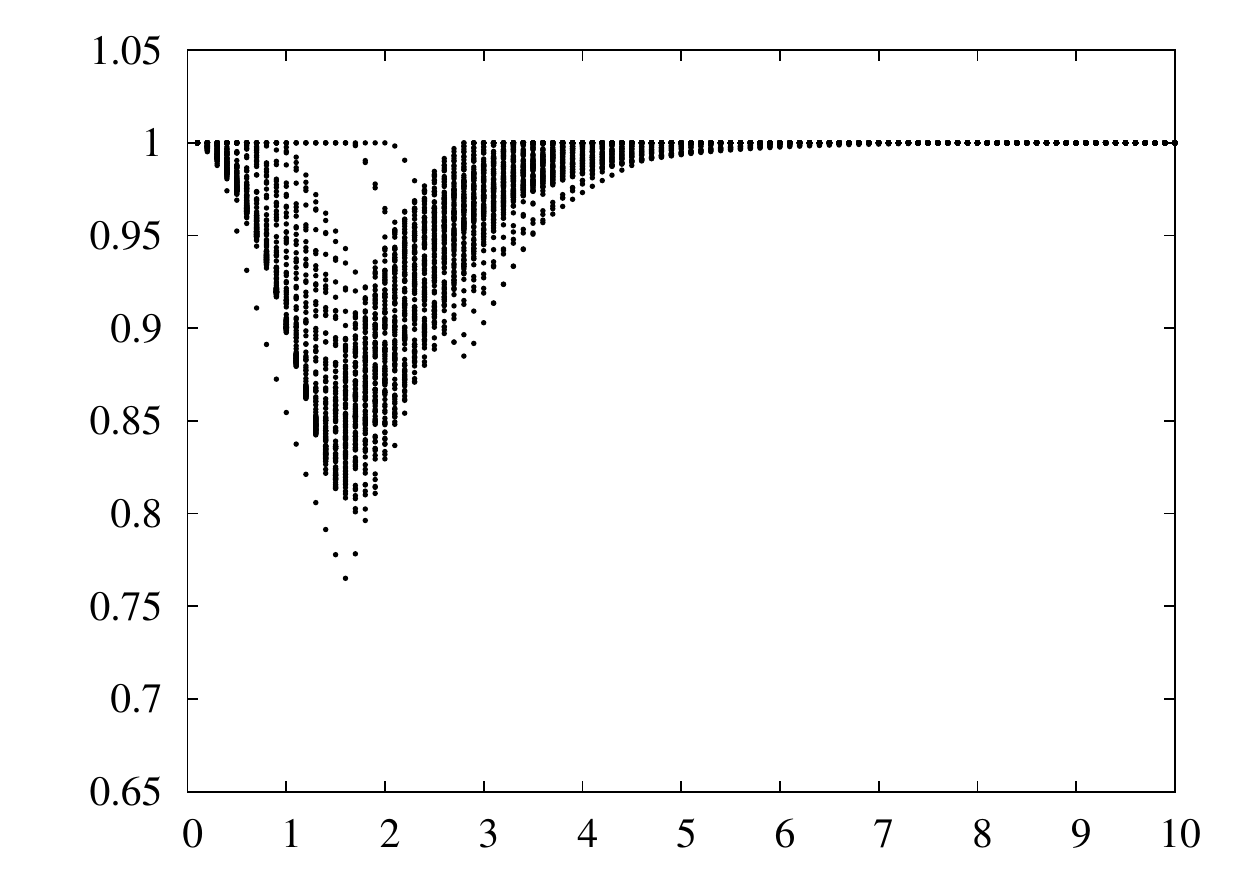}
	\put(-200,144){$\frac{I\left(\X;\Y\right)}{OPT}$}
	\put(-100,-5){$\varepsilon$}
	\put(-105,146){$|\cX|=6$}
	\put(-105,160){\phantom{a}}
	\includegraphics[width=.47\textwidth]{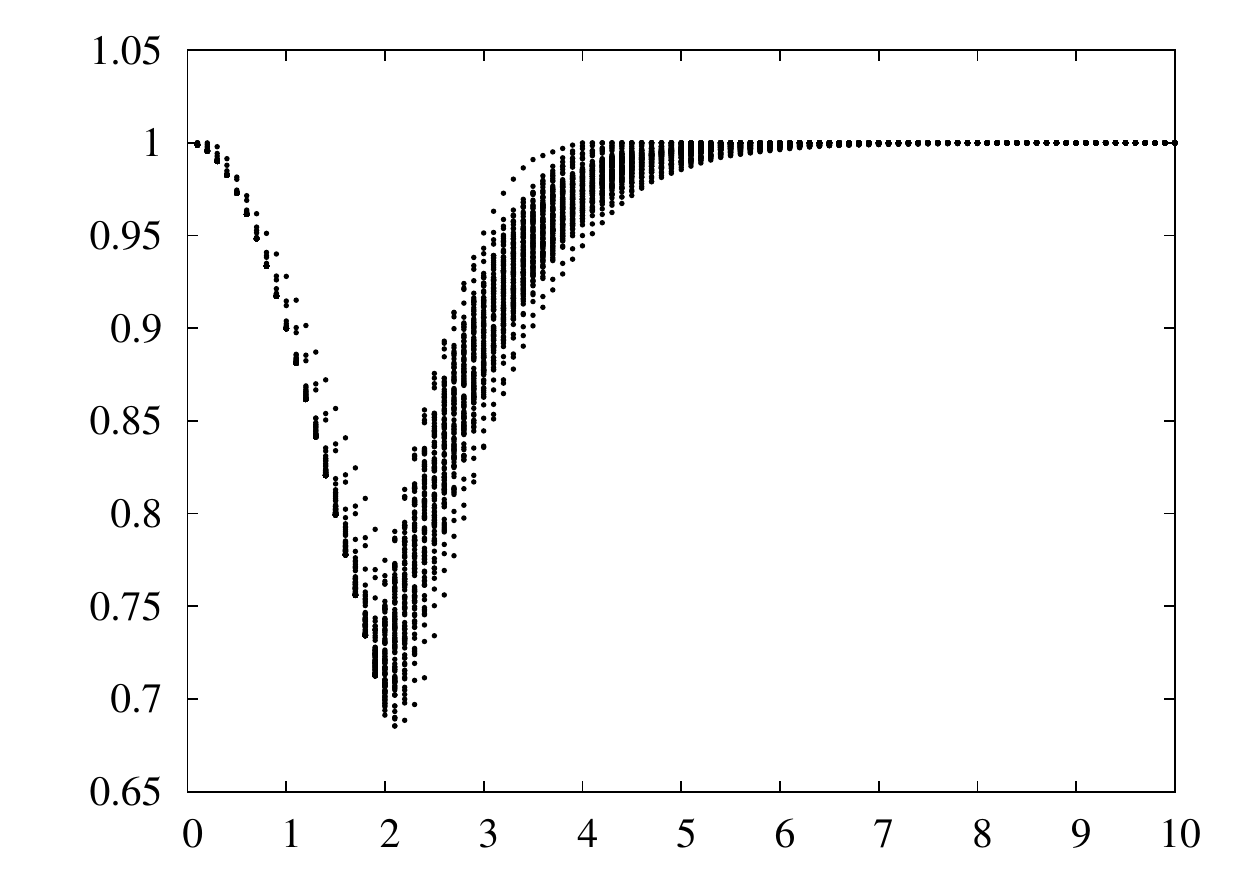}
	\put(-200,144){$\frac{I\left(\X;\Y\right)}{OPT}$}
	\put(-100,-5){$\varepsilon$}
	\put(-105,146){$|\cX|=12$}
	\end{center}
	\caption{For varying input alphabet size $|\cX|\in\{3,4,6, 12\}$, at least $65\%$ of the maximum $I\left(\X;\Y\right)$ can be achieved by taking the better one between the binary and the randomized response mechanisms.}
	\label{fig:maxI}
\end{figure}

\subsection{Lower bounds}
\label{sec:estlb}
In this section, we provide converse results on the fundamental limit of locally differentially private mechanisms
when utility is measured via mutual information.
\begin{corollary}
\label{coro:esthigh}
For any $\varepsilon\geq 0$, let $\Q$ be any conditional distribution that guarantees $\varepsilon$-local differential privacy.
Then, for any distribution $\PP$ and any positive $\delta>0$,
there exists a positive $\varepsilon^*$ that depends on $\PP$ and $\delta$ such that for any $\varepsilon\leq \varepsilon^*$ the following bound holds
\begin{eqnarray}
	I\left(\X;\Y\right) &\leq& (1+\delta)\frac{1}{2} \PP\left(\T\right)\PP\left(\T^c\right)\varepsilon^2,
	\label{eq:converseIlow}
\end{eqnarray}
where $\T$ is defined in \eqref{eq:tbin}.
\end{corollary}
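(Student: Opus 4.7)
The plan is to reduce the corollary to an explicit calculation for the binary mechanism, and then Taylor expand. By Theorem \ref{thm:estbin}, there is a threshold $\varepsilon^*_1$ depending on $\PP$ such that for $\varepsilon\le\varepsilon^*_1$, the binary mechanism defined in \eqref{eq:defestbin}--\eqref{eq:tbin} maximizes $I(X;Y)$ over all $\varepsilon$-locally differentially private $Q$. Therefore, for such $\varepsilon$, the bound $I(X;Y)\le \mathrm{BIN}(\varepsilon)$ holds, where $\mathrm{BIN}(\varepsilon)$ denotes the mutual information induced by the binary mechanism. The remainder of the argument is to show $\mathrm{BIN}(\varepsilon) = \tfrac{1}{2}\PP(T)\PP(T^c)\varepsilon^2 + O(\varepsilon^4)$, and then to pick $\varepsilon^*$ small enough to absorb the higher-order term into the multiplicative slack $(1+\delta)$.

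For the explicit computation, set $p=\PP(T)$ and write $Y\in\{0,1\}$ for the binary output. Using $\PP(T)+\PP(T^c)=1$, a direct calculation shows the induced marginal satisfies
\begin{equation*}
\MM(0) \;=\; \frac{1}{2} + \frac{(p-(1-p))(e^\varepsilon-1)}{2(e^\varepsilon+1)} \;=\; \frac{1}{2} + \frac{(2p-1)}{2}\tanh(\varepsilon/2),
\end{equation*}
while the conditional law $\Q(\cdot\mid x)$ is a Bernoulli with parameter $e^\varepsilon/(1+e^\varepsilon)$ regardless of whether $x\in T$ or $x\in T^c$. Writing $I(X;Y)=H(Y)-H(Y\mid X)$, both entropies are of the form $h\bigl(\tfrac{1}{2}+t\bigr)$ for binary entropy $h$, with $t_Y=\tfrac{1}{2}(2p-1)\tanh(\varepsilon/2)$ and $t_{Y\mid X}=\tfrac{1}{2}\tanh(\varepsilon/2)$ respectively.

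Taylor expanding $h(\tfrac{1}{2}+t)=\ln 2 - 2t^2 + O(t^4)$ around $t=0$ and using $\tanh(\varepsilon/2) = \varepsilon/2 + O(\varepsilon^3)$, the $\ln 2$ terms cancel and one obtains
\begin{equation*}
\mathrm{BIN}(\varepsilon) \;=\; \frac{\varepsilon^2}{8}\bigl(1-(2p-1)^2\bigr) + O(\varepsilon^4) \;=\; \frac{\varepsilon^2}{2}\,\PP(T)\PP(T^c) + O(\varepsilon^4),
\end{equation*}
where the implicit constant in $O(\varepsilon^4)$ depends only on $\PP$. Given $\delta>0$, choose $\varepsilon^*_2(\PP,\delta)$ so that the remainder is at most $\delta\cdot\tfrac{1}{2}\PP(T)\PP(T^c)\varepsilon^2$ for all $\varepsilon\le\varepsilon^*_2$; this is possible because $\PP(T)\PP(T^c)>0$ (the set $T$ minimizes $|\PP(A)-1/2|$ and the optimum is at most $1/2$ for any nontrivial distribution, and equal to $1/2$ only when $\PP$ is a point mass, in which case $I(X;Y)=0$ trivially). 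Setting $\varepsilon^*=\min\{\varepsilon^*_1,\varepsilon^*_2\}$ then yields the claimed bound.

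The main conceptual step is the reduction via Theorem \ref{thm:estbin}; once that is in hand, the rest is a routine second-order expansion. The only mild subtlety is ensuring the higher-order term can be absorbed into $(1+\delta)$ uniformly in the relevant regime, which is handled by shrinking $\varepsilon^*$ as a function of $\PP$ and $\delta$, and by separately treating the degenerate case $\PP(T)\PP(T^c)=0$ where both sides of \eqref{eq:converseIlow} vanish.
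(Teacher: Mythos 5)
Your proposal is correct and follows essentially the same route as the paper: invoke Theorem~\ref{thm:estbin} to reduce to the binary mechanism, compute its mutual information explicitly, and Taylor expand, shrinking $\varepsilon^*$ so the remainder is absorbed into the factor $(1+\delta)$. One small wording slip: the conditional law $\Q(\cdot\mid x)$ is \emph{not} the same Bernoulli for $x\in T$ and $x\in T^c$ (the success probability flips between $e^\varepsilon/(1+e^\varepsilon)$ and $1/(1+e^\varepsilon)$); what is true, and what your argument actually uses, is that $H(Y\mid X=x)=h\bigl(e^\varepsilon/(1+e^\varepsilon)\bigr)$ for every $x$ by the symmetry $h(q)=h(1-q)$, so the computation is unaffected.
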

This follows from Theorem \ref{thm:estbin} (optimality of the binary mechanism) and observing that the binary mechanism achieves
 \begin{eqnarray*}
I\left(X;Y\right)&=&\frac{1}{e^{\varepsilon}+1}\left\{P\left(\T \right)e^{\varepsilon}\log\frac{e^{\varepsilon}}{P\left(\T^c\right)+e^{\varepsilon}P\left(\T \right)}+P\left(\T^c\right)\log\frac{1}{P\left(\T^c\right)+e^{\varepsilon}P\left(\T \right)}\right\}   \nonumber \\
&&+\frac{1}{e^{\varepsilon}+1}\left\{P\left(\T^c\right)e^{\varepsilon}\log\frac{e^{\varepsilon}}{P\left(\T \right)+e^{\varepsilon}P\left(\T^c\right)}+P\left(\T \right)\log\frac{1}{P\left(\T \right)+e^{\varepsilon}P\left(\T^c\right)}\right\} \nonumber \\
&=&  \frac{1}{2} \PP\left(\T\right)\PP\left(\T^c\right)\varepsilon^2 + O\left(\varepsilon^3\right).
\end{eqnarray*}
Similarly, in the low privacy regime, we can show the following converse result.
\begin{corollary}
\label{coro:est_low}
For any $\varepsilon\geq 0$, let $Q$ be any conditional distribution that guarantees $\varepsilon$-local differential privacy.
Then, for any distribution $\PP$ and any positive $\delta>0$,
there exists a positive $\varepsilon^*$ that depends on $\PP$ and $\delta$ such that for any $\varepsilon\geq \varepsilon^*$ the following bound holds
\begin{eqnarray*}
	I\left(\X;\Y\right) &\leq& H\left(\X\right)  - (1-\delta)\left(\kk-1\right)\varepsilon e^{-\varepsilon}.
\end{eqnarray*}
\end{corollary}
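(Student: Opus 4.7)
The plan is to mirror the proof of Corollary \ref{coro:esthigh} (the high-privacy companion) but with the role of the binary mechanism replaced by the randomized response, and Theorem \ref{thm:estbin} replaced by Theorem \ref{thm:estrr}. Concretely, Theorem \ref{thm:estrr} asserts that in the low-privacy regime $\varepsilon \geq \varepsilon^*$, the randomized response mechanism of \eqref{eq:defrr} is the maximizer of $I(X;Y)$ over $\cS_\varepsilon$ (and hence over all $\varepsilon$-locally differentially private $Q$). Thus it suffices to evaluate $I(X;Y)$ at the randomized response and expand it asymptotically in $e^{-\varepsilon}$; the claimed upper bound is then an immediate consequence, with the threshold $\varepsilon^*$ chosen to absorb the lower-order correction into the slack $\delta$.

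The first step is to compute $H(Y|X)$ exactly. Under randomized response, the conditional law of $Y$ given $X=x$ is the same for every $x$: one outcome has mass $\alpha := e^\varepsilon/(k-1+e^\varepsilon)$ and the other $k-1$ outcomes each have mass $\beta := 1/(k-1+e^\varepsilon)$. Therefore
\begin{equation*}
H(Y|X) \;=\; -\alpha\log\alpha \,-\, (k-1)\beta\log\beta.
\end{equation*}
For $\varepsilon$ large, $\beta = e^{-\varepsilon}(1+O(e^{-\varepsilon}))$ and $\alpha = 1-(k-1)\beta$, so $-\alpha\log\alpha = (k-1)\beta + O(\beta^2)$ and $-(k-1)\beta\log\beta = (k-1)\varepsilon e^{-\varepsilon}(1+O(e^{-\varepsilon}))$. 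Consequently
\begin{equation*}
H(Y|X) \;=\; (k-1)\,\varepsilon\, e^{-\varepsilon} \,+\, O(e^{-\varepsilon}).
\end{equation*}

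The second step is to show that $H(Y)$ is close to $H(X)$. The induced marginal is $M(y) = P(y)(\alpha-\beta)+\beta = P(y)+\beta(1-kP(y))$, so $M(y)\to P(y)$ as $\varepsilon\to\infty$. A first-order Taylor expansion of $-M(y)\log M(y)$ around $M(y)=P(y)$ (using positivity of $P$ so that $\log P(y)$ is bounded), followed by summation over $y$, gives $H(Y) = H(X) + O(e^{-\varepsilon})$. Combining with the previous step,
\begin{equation*}
I(X;Y) \;=\; H(Y) - H(Y|X) \;=\; H(X) \,-\, (k-1)\,\varepsilon\, e^{-\varepsilon} \,+\, O(e^{-\varepsilon}).
\end{equation*}

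For the final step, observe that the main penalty term $(k-1)\varepsilon e^{-\varepsilon}$ dominates the $O(e^{-\varepsilon})$ correction for large $\varepsilon$. Hence, given any $\delta>0$, there exists $\varepsilon^*=\varepsilon^*(P,\delta)$ (larger than the threshold coming from Theorem \ref{thm:estrr}) such that for every $\varepsilon\geq\varepsilon^*$ the correction is bounded by $\delta(k-1)\varepsilon e^{-\varepsilon}$, giving $I(X;Y)\leq H(X)-(1-\delta)(k-1)\varepsilon e^{-\varepsilon}$ under randomized response, and therefore under any $\varepsilon$-LDP $Q$ by Theorem \ref{thm:estrr}. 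The only real subtlety is managing the asymptotic bookkeeping: one must verify that the first-order term in the expansion of $H(Y)$, which is linear in $\beta\approx e^{-\varepsilon}$ without an accompanying $\varepsilon$ factor, does not dilute the leading $(k-1)\varepsilon e^{-\varepsilon}$ penalty; the extra $\varepsilon$ factor in the conditional entropy ensures this, and is precisely why the threshold $\varepsilon^*$ must be allowed to depend on $P$.
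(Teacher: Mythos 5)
Your proof is correct and follows essentially the same route as the paper: invoke Theorem \ref{thm:estrr} to reduce to the randomized response mechanism, expand $I(X;Y)$ for large $\varepsilon$, and absorb the lower-order remainder into the $(1-\delta)$ slack. The one place you are actually more careful than the paper is the size of the remainder. The paper states, just after the corollary, that the randomized response achieves $I(X;Y)=H(X)-(k-1)\varepsilon e^{-\varepsilon}+O(e^{-2\varepsilon})$, but this is not quite right: expanding
\begin{equation*}
I(X;Y)=\frac{1}{e^\varepsilon+k-1}\sum_{i}\Big\{p(x_i)e^\varepsilon\log\tfrac{e^\varepsilon}{p(x_i)(e^\varepsilon-1)+1}+(1-p(x_i))\log\tfrac{1}{p(x_i)(e^\varepsilon-1)+1}\Big\}
\end{equation*}
produces an $e^{-\varepsilon}$ term with coefficient $-kH(X)-(k-1)-\sum_i\log p(x_i)$, which does not vanish in general (it equals $-(k-1)$ already for uniform $P$). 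Your expansion $I(X;Y)=H(X)-(k-1)\varepsilon e^{-\varepsilon}+O(e^{-\varepsilon})$ is the correct one, and your closing remark — that the extra $\varepsilon$ factor in the leading penalty is what lets it dominate the $O(e^{-\varepsilon})$ correction for $\varepsilon$ large, at the price of a $P$-dependent threshold — is exactly the point the corollary hinges on. So the proof is sound; if anything it repairs a small inaccuracy in the paper's stated asymptotics.
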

This follows directly from Theorem \ref{thm:estrr} (optimality of the randomized response mechanism)
and observing that the randomized response mechanism achieves
\begin{equation}
\label{eq:rr_corII}
I\left(\X;\Y\right) = H\left(\X\right) - \left(\kk-1\right)\varepsilon e^{-\varepsilon} + O( e^{-2\varepsilon}).
\end{equation}

Figure \ref{fig:scatterI} illustrates the gap between the mutual information achieved under
the geometric and
the optimal mechanisms (the binary mechanism for the high privacy regime and the randomized response mechanism for the low privacy regime).
For each instance of the $100$ randomly generated $\PP$ over an input of size $k=6$,
we plot the resulting mutual information $I\left(\X;\Y\right)$
as a function of $\PP\left(\T \right)\PP\left(\T^c\right)$ for $\varepsilon=0.1$,
and as a function of $H\left(\X\right)$ for $\varepsilon=10$.
The binary and the randomized response mechanisms exhibit the scaling predicted by Equations \eqref{eq:converseIlow} and \eqref{eq:rr_corII}, respectively.
\begin{figure}[tb!]
	\begin{center}
	\includegraphics[width=.47\textwidth]{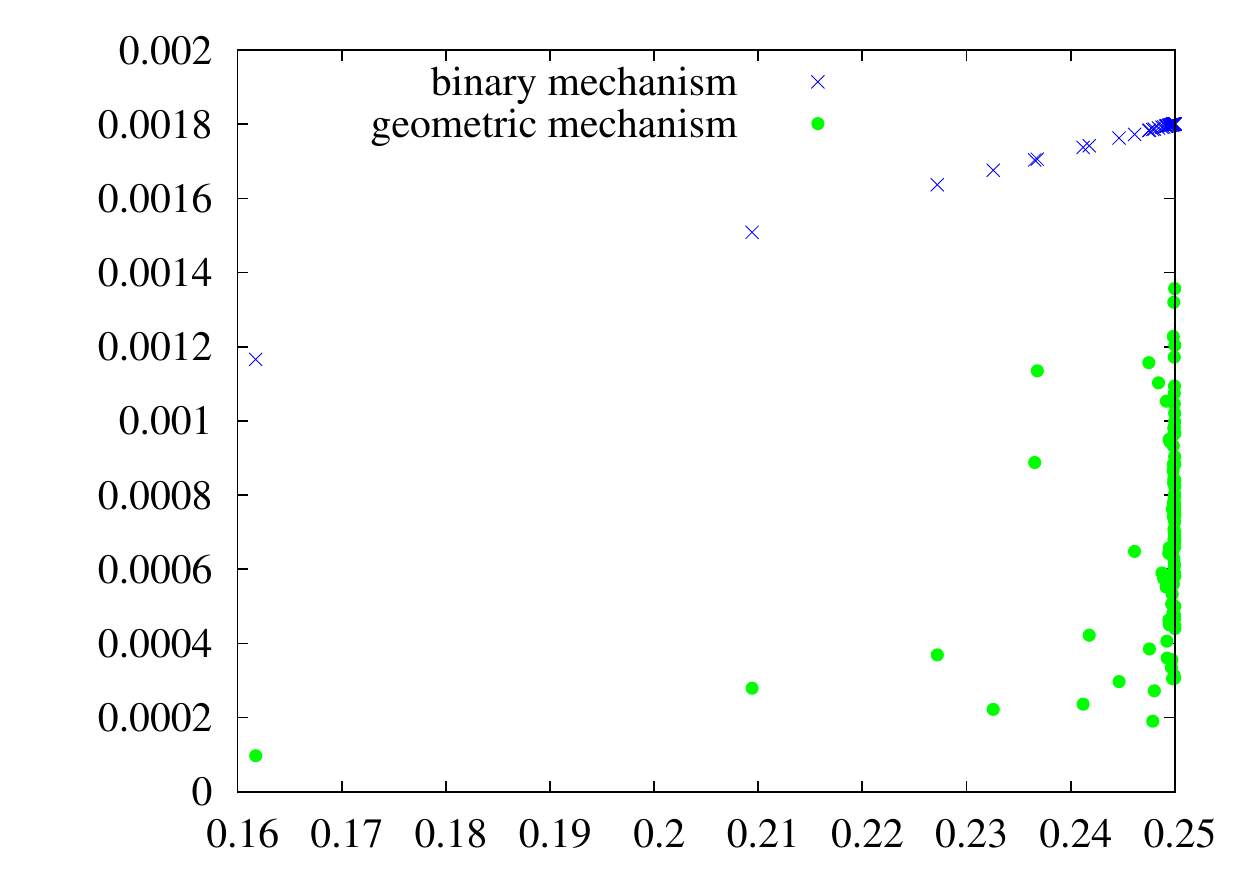}
	\put(-200,144){$I\left(\X;\Y\right)$}
	\put(-120,-10){$\PP\left(\T \right)\PP\left(\T^c\right)$}
	\includegraphics[width=.47\textwidth]{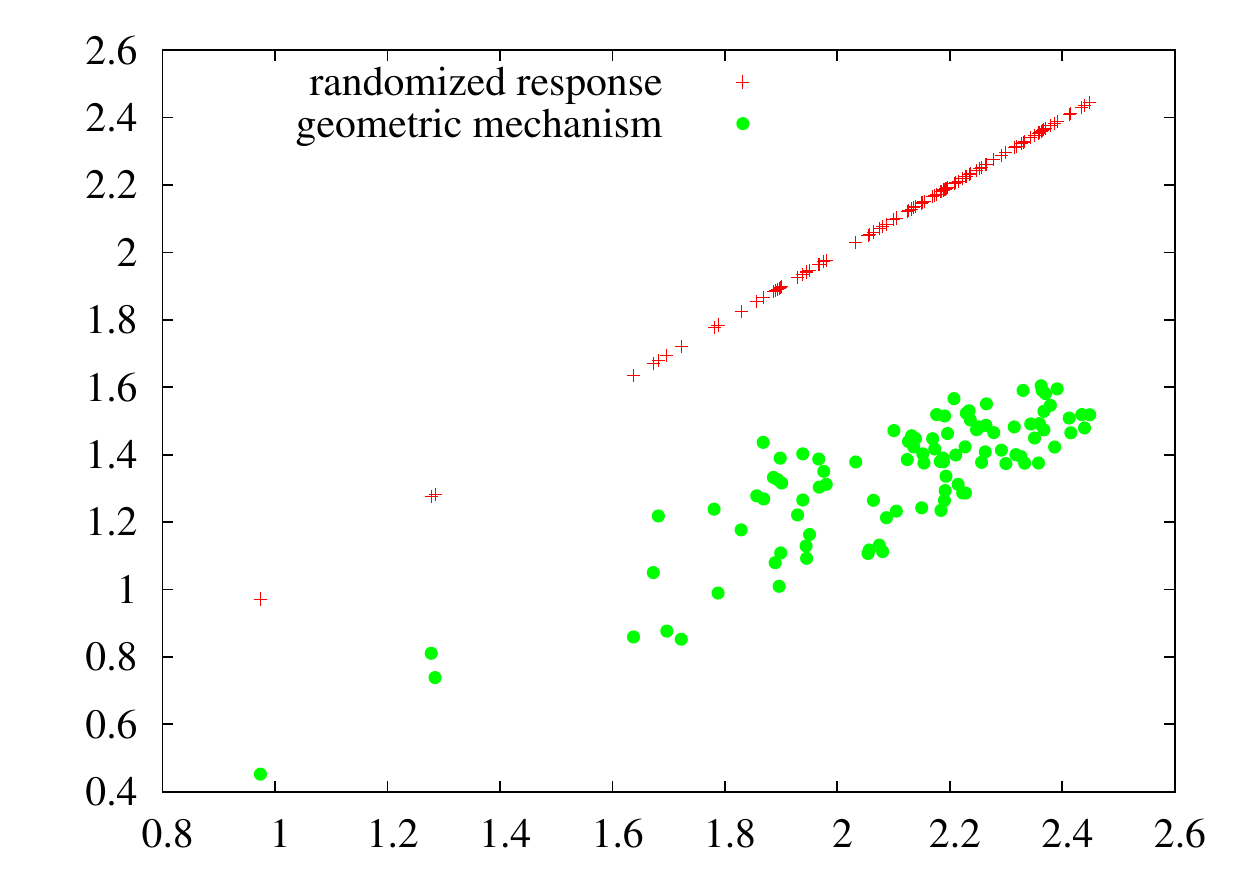}
	\put(-200,144){$I\left(\X;\Y\right)$}
	\put(-120,-10){$H\left(\X\right)$}
	\end{center}
	\caption{For $\varepsilon=0.1$ (left) the binary mechanism achieves the maximum $I\left(\X;\Y\right)$, which scales as Equation \eqref{eq:converseIlow}. For $\varepsilon=10$ (right) the randomized response mechanism achieves the optimal mutual information, which scales as Equation \eqref{eq:rr_corII}.}
	\label{fig:scatterI}
\end{figure}

\section{Generalizations to approximate differential privacy}
\label{sec:general}

In this section, we generalize the results of the previous sections in the
following ways.

\begin{enumerate}

\item We consider the class of utility functions that obey the data
    processing inequality. Consider the composition of two privatization mechanisms $QW =Q\circ W$ where
    the output of the first mechanism $Q$ is applied to another mechanism $W$.
    We say that a utility function $U(\cdot)$ obeys the data processing inequality if the following inequality holds for all $Q$ and $W$
    \begin{eqnarray*}
    	U(QW)& \leq& U(Q) \;.
    \end{eqnarray*}
    The following proposition, proved in Section
    \ref{sec:dp_prop}, shows that the class of utilities obeying the data processing inequality includes all the utility
    functions we studied in Section \ref{sec:result}.
    \begin{proposition}
\label{dp_prop} Any utility function that can be written in the form of $U\left(\Q\right)=\sum_{\cY} \mu(\Q_\y)$, where $\mu$ is any sublinear
function, obeys the data processing inequality.
\end{proposition}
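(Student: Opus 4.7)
The plan is to prove this by a direct column-wise calculation, using sublinearity of $\mu$ to push the cascade $W$ through the sum, and then exchanging the order of summation. Let $Q\in\mathcal{Q}_{k\times\ell}$ map $\cX$ to some intermediate alphabet $\cY$, and let $W\in\mathcal{Q}_{\ell\times m}$ map $\cY$ to an output alphabet $\cZ$. The composition $QW$ has entries $(QW)(z|x) = \sum_{y\in\cY} W(z|y)\,Q(y|x)$, so the $z$-th column satisfies
\begin{equation*}
(QW)_z \;=\; \sum_{y\in\cY} W(z|y)\,Q_y,
\end{equation*}
a nonnegative combination of the columns of $Q$.

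Next, I would bound $\mu((QW)_z)$ column by column. Since $W(z|y)\geq 0$, positive homogeneity gives $\mu(W(z|y)\,Q_y) = W(z|y)\,\mu(Q_y)$, and iterated subadditivity of $\mu$ then yields
\begin{equation*}
\mu\bigl((QW)_z\bigr) \;=\; \mu\!\Bigl(\sum_{y\in\cY} W(z|y)\,Q_y\Bigr) \;\leq\; \sum_{y\in\cY} \mu\bigl(W(z|y)\,Q_y\bigr) \;=\; \sum_{y\in\cY} W(z|y)\,\mu(Q_y).
\end{equation*}

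Summing over $z\in\cZ$ and swapping the order of summation,
\begin{equation*}
U(QW) \;=\; \sum_{z\in\cZ} \mu\bigl((QW)_z\bigr) \;\leq\; \sum_{y\in\cY} \mu(Q_y) \sum_{z\in\cZ} W(z|y) \;=\; \sum_{y\in\cY} \mu(Q_y) \;=\; U(Q),
\end{equation*}
where the penultimate equality uses that $W(\cdot|y)$ is a probability distribution over $\cZ$, so $\sum_z W(z|y) = 1$. This gives the desired data processing inequality. There is no real obstacle; the only subtlety is being careful that subadditivity extends to finite sums (by induction) and that the homogeneity step requires $W(z|y)\geq 0$, both of which are automatic from the definition of sublinearity and the fact that $W$ is a conditional distribution. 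The argument also makes clear why the result does not rely on $W$ being a valid stochastic matrix in the columns — only row-stochasticity is used, as expected for a post-processing channel.
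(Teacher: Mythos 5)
Your proof is correct and takes essentially the same route as the paper's: both expand each column $(QW)_z$ as a nonnegative combination $\sum_y W(z|y)\,Q_y$ of columns of $Q$, bound $\mu$ of that combination via sublinearity, and then swap the order of summation and invoke row-stochasticity of $W$ to collapse $\sum_z W(z|y)=1$. You merely spell out the homogeneity and subadditivity steps that the paper compresses into one line, so this is the same argument.
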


\item We consider $\left(\varepsilon,\delta\right)$-differential
    privacy which generalizes the notion of $\varepsilon$-differential privacy. $\left(\varepsilon,\delta\right)$-differential
    privacy is commonly referred to as approximate differential privacy and it was first introduced in \cite{DKM06}. For the
    release of a random variable $\X \in \cX$, we say that a mechanism $\Q$
    is $\left(\varepsilon,\delta\right)$-locally differentially private if
    \begin{equation}
    \label{eq:approx_diff_privacy}
    \Q\left(S| \x\right) \leq e^{\varepsilon} \Q\left(S| \x'\right) + \delta,
    \end{equation}
    for all $S \subseteq \cY$ and all $\x, \x' \in \cX$.
Note that $\varepsilon$-local differential privacy is a special case of
$\left(\varepsilon,\delta\right)$-local differential privacy where
$\delta=0$.

\item We prove that the {\em quaternary mechanism}, defined in Equation
    \eqref{eq:quat_mech}, is optimal for any $\varepsilon$ and any $\delta$.
    This is different from the treatment conducted in the previous sections
    where we proved the optimality of the binary (randomized response)
    mechanism for sufficiently small (large) $\varepsilon$ and $\delta =
    0$.
\end{enumerate}
The treatment in this section, even though more general than the one in
previous sections in the ways described above, holds only for
binary alphabets (i.e., $|\cX|=2$).
Finding optimal privatization mechanisms under
$(\varepsilon,\delta)$-local differential privacy for larger input alphabets (i.e., $|\cX| > 2$) is an interesting open question.
Unlike $\varepsilon$-local differential privacy, the privacy constraints under $(\varepsilon,\delta)$-local differential privacy
no longer decompose into separate constraints on each output $y$. This makes it difficult to generalize the techniques developed in previous sections of this paper.
However, for the special case of binary input alphabets, we can prove the optimality of one mechanism for all values of $(\varepsilon,\delta)$ and all utility functions that obey the data processing inequality.

For a binary random variable $\X \in \cX = \{0,1\}$, the {\em quaternary
mechanism} maps $\X$ to a quaternary random variable $\Y \in \cY = \{0, 1, 2,
3\}$ and is defined as
\begin{eqnarray}
\label{eq:quat_mech}
\Q_{\rm QT}(0|\x) \,=\, \left\{
\begin{array}{rl}
	\delta & \text{ if } \x= 0\;,\\
	0 & \text{ if } \x = 1\;.\\
\end{array}
\right.\;\;\;
\Q_{\rm QT}(1|\x) \,=\, \left\{
\begin{array}{rl}
	0 & \text{ if } \x = 0\;,\\
	\delta & \text{ if } \x = 1\;. \\
\end{array}
\right.
\end{eqnarray}

\begin{eqnarray*}
\Q_{\rm QT}(2|\x) \,=\, \left\{
\begin{array}{rl}
	(1-\delta)\frac{1}{1+e^\varepsilon}& \text{ if } \x = 0\;,\\
	(1-\delta)\frac{e^\varepsilon}{1+e^\varepsilon}& \text{ if } \x = 1\;. \\
\end{array}
\right.\;\;\;
\Q_{\rm QT}(3|\x) \,=\, \left\{
\begin{array}{rl}
	(1-\delta)\frac{e^\varepsilon}{1+e^\varepsilon}& \text{ if } \x = 0\;,\\
	(1-\delta)\frac{1}{1+e^\varepsilon}& \text{ if } \x = 1\;. \\
\end{array}
\right.
\end{eqnarray*}
%
%
In other words, the quaternary mechanism passes $X$ unchanged with probability $\delta$ and applies the binary mechanism (defined in previous sections) with probability $1-\delta$. The main result of this section can be stated formally as follows.

\begin{theorem}
\label{thm:quat_optmlty}
If $|\cX|=2$, then for any $\varepsilon$, any $\delta$, and any utility $U\left(\Q\right)$ that obeys the data
processing inequality, the quaternary mechanism
maximizes $U\left(\Q\right)$ subject to $\Q \in
\mathcal{\D}_{(\varepsilon,\delta)}$, the set of all
$(\varepsilon,\delta)$-locally differentially private mechanism.
\end{theorem}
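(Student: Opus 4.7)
The plan is to prove the theorem by showing that the quaternary mechanism $Q_{\rm QT}$ Blackwell-dominates every $(\varepsilon,\delta)$-locally differentially private mechanism $Q$ on a binary input. Concretely, I would show that for any such $Q$ there exists a row-stochastic post-processing matrix $W$ with $Q = Q_{\rm QT} W$. Once this is in hand, the hypothesis that $U$ obeys the data-processing inequality gives $U(Q) = U(Q_{\rm QT} W) \leq U(Q_{\rm QT})$, so $Q_{\rm QT}$ maximizes $U$ over $\mathcal{D}_{(\varepsilon,\delta)}$.

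The first step is to invoke the operational characterization of approximate differential privacy from \cite{KOV14}, which says that a binary-input mechanism $Q$ is $(\varepsilon,\delta)$-LDP if and only if its hypothesis-testing region $R(Q) \subseteq [0,1]^2$ (the convex set of achievable type-I/type-II error pairs, sweeping over all randomized decision rules on the output) is contained in an explicit hexagonal ``privacy region'' $R(\varepsilon,\delta)$ whose six vertices depend only on $\varepsilon$ and $\delta$. This reduces the problem to a comparison of statistical experiments.

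The second step is to verify that $R(Q_{\rm QT}) = R(\varepsilon,\delta)$ by direct computation. The four columns of $Q_{\rm QT}$, viewed as points $(Q_{\rm QT}(y|0), Q_{\rm QT}(y|1))$ in the plane, sit on precisely the four extremal rays $(1,0)$, $(0,1)$, $(1,e^\varepsilon)$, $(e^\varepsilon,1)$ that generate the boundary of $R(\varepsilon,\delta)$, with mass $\delta,\delta,(1-\delta),(1-\delta)$ respectively. By going through the $2^4 = 16$ deterministic acceptance regions on these four outputs and taking their convex hull, one sees that every vertex of $R(\varepsilon,\delta)$ is realized by some decision rule on $Q_{\rm QT}$, so $R(\varepsilon,\delta) \subseteq R(Q_{\rm QT})$; the opposite inclusion is immediate from the fact that $Q_{\rm QT}$ is itself $(\varepsilon,\delta)$-LDP, as one checks directly.

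The final step combines Blackwell's equivalence theorem \citep{Bla53} with the data-processing assumption: since $R(Q) \subseteq R(\varepsilon,\delta) = R(Q_{\rm QT})$ for every $Q\in\mathcal{D}_{(\varepsilon,\delta)}$, Blackwell guarantees the existence of the stochastic matrix $W$ with $Q = Q_{\rm QT} W$, and DPI closes the argument. The main obstacle will be the second step, namely verifying the equality $R(Q_{\rm QT}) = R(\varepsilon,\delta)$ by a careful enumeration of decision rules and a matching vertex-by-vertex check against the known description of the privacy region; the first and third steps are essentially citations to \cite{KOV14} and \cite{Bla53}. Unlike the pure $\varepsilon$-DP case treated earlier, one cannot reduce the $(\varepsilon,\delta)$ constraint to a column-wise inequality, so the testing-region formulation is essential to making the Blackwell comparison tractable.
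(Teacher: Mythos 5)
Your proposal follows the paper's argument almost verbatim: invoke the operational characterization of $(\varepsilon,\delta)$-local differential privacy (Theorem \ref{thm:op_diff_prvcy}) to reduce to a containment of hypothesis-testing regions, observe $\mathcal{R}_{\Q_{\rm QT}}=\mathcal{R}_{\varepsilon,\delta}$, apply Blackwell's theorem (Theorem \ref{thm:dpi}) to extract the post-processing map $W$ with $Q=Q_{\rm QT}W$, and finish with the data-processing inequality. The only difference is cosmetic: you propose to verify $\mathcal{R}_{\Q_{\rm QT}}=\mathcal{R}_{\varepsilon,\delta}$ by explicit enumeration of decision rules, whereas the paper simply asserts this equality, but the overall structure and the key ideas are identical.
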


The proof of Theorem \ref{thm:quat_optmlty} depends on an {\em operational definition} of differential privacy which we describe next.
Consider a privatization mechanism $\Q$ that maps $\X \in \{0,1\}$ stochastically to $\Y \in \cY$. Given $\Y$, construct a binary hypothesis test on whether $\X=0$ or $\X=1$. Any binary hypothesis test is completely described by a, possibly randomized, decision rule $\hat{\X}: \Y \rightarrow \{0,1\}$. The two types of error associated with $\hat{\X}$ are {\em false alarm:} $\hat{\X}=1$ when $\X=0$, and {\em miss detection:} $\hat{\X}=0$ when $\X=1$. The probability of false alarm is given by $P_{\rm FA} = \prob(\hat{\X}=1|\X=0)$ while the probability of miss detection is given by $P_{\rm MD} = \prob(\hat{\X}=0|\X=1)$. For a fixed $\Q$, the convex hull of all pairs $(P_{\rm MD},P_{\rm FA})$ for all decision rules $\hat{\X}$ defines a two-dimensional \emph{error region} where $P_{\rm MD}$ is plotted against $P_{\rm FA}$. For example, the quaternary mechanism given in Figure \ref{fig:priv_mech} has an error region $\mathcal{R}_{\Q_{\rm QT}}$ shown in Figure \ref{fig:error_quat}.

\begin{figure}
    \centering
    \begin{subfigure}[b]{0.5\textwidth}
	\begin{center}
        \includegraphics[width=.4\textwidth]{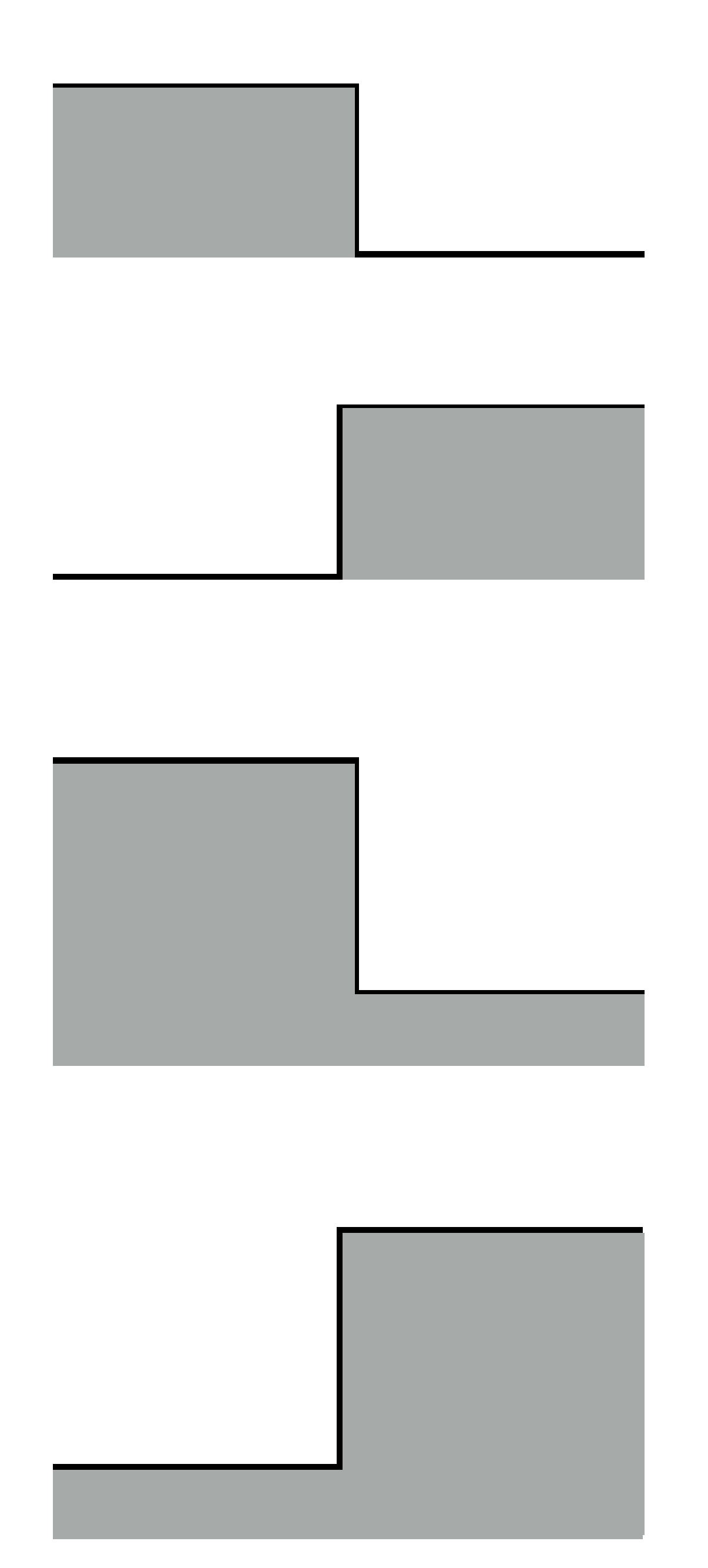}
        \put(-68,170){$\delta$}
		\put(-30,130){$\delta$}
		\put(-68,130){$0$}
		\put(-30,170){$0$}
		\put(-76,80){$\frac{(1-\delta)e^\varepsilon}{1+e^\varepsilon}$}
		\put(-35,80){$\frac{(1-\delta)}{1+e^\varepsilon}$}
		\put(-40,22){$\frac{(1-\delta)e^\varepsilon}{1+e^\varepsilon}$}
		\put(-76,22){$\frac{(1-\delta)}{1+e^\varepsilon}$}
		\put(-76,-10){$x=0$}
		\put(-40,-10){$x=1$}
		\put(-115,22){$y=3$}
		\put(-115,80){$y=2$}
		\put(-115,130){$y=1$}
		\put(-115,170){$y=0$}
	\end{center}
        \caption{Privatization mechanism}
        \label{fig:priv_mech}
    \end{subfigure}%
    ~
    \begin{subfigure}[b]{0.5\textwidth}
	\begin{center}
		\includegraphics[width=0.9\textwidth]{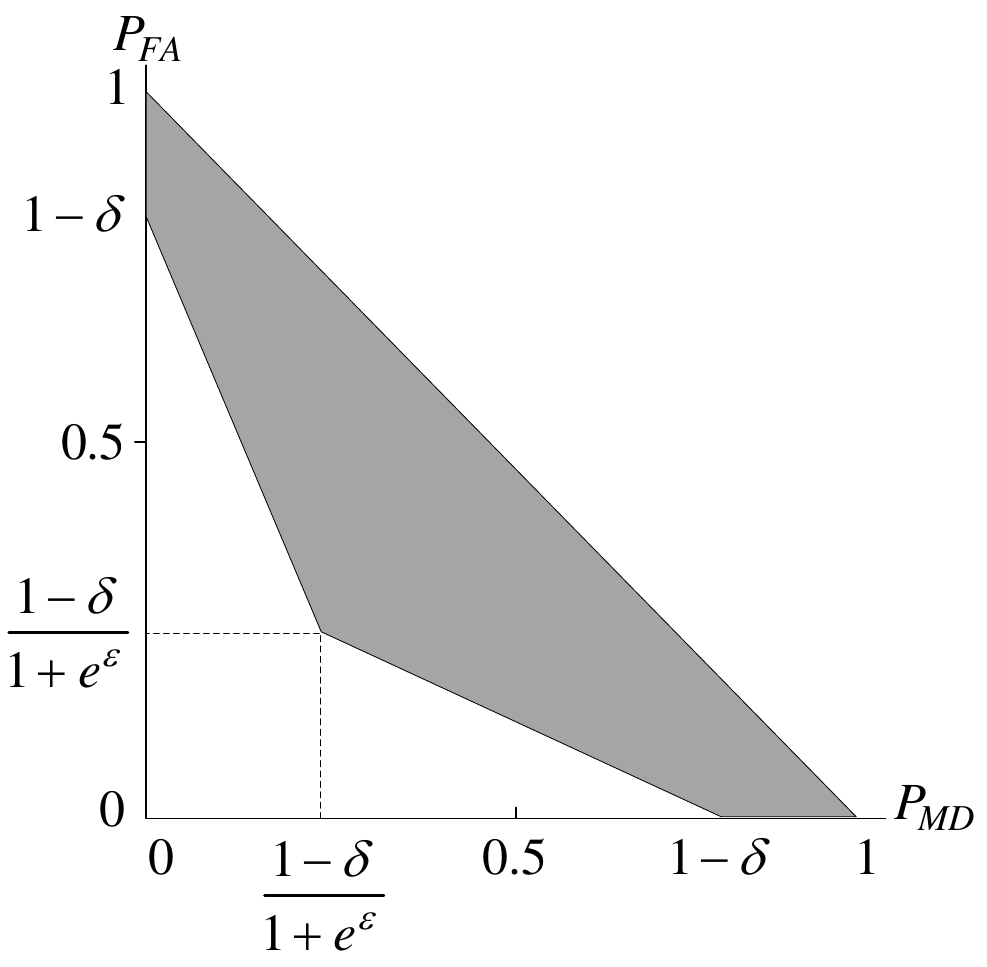}
\put(-135,70){$\mathcal{R}_{\Q_{\rm QT}}= \mathcal{R}_{\varepsilon,\delta}$}

\end{center}
\caption{Error region}
\label{fig:error_quat}
    \end{subfigure}
    \caption{The quaternary mechanism}\label{fig:quat_mech}
\end{figure}

It turns out that $\left(\varepsilon,\delta\right)$-local differential privacy imposes the following conditions on the error region of all  $\left(\varepsilon,\delta\right)$-locally differentially private mechanisms
\begin{eqnarray*}
 	P_{\rm FA} + e^\varepsilon P_{\rm MD} &\geq & 1-\delta \;,\;\;\text{ and } \;\;\;\;  e^\varepsilon P_{\rm FA} + P_{\rm MD}\;\geq\; 1-\delta\;,
\end{eqnarray*}
for any decision rule $\hat{\X}$. These two conditions define an error region $\mathcal{R}_{\varepsilon,\delta}$ shown in Figure \ref{fig:error_quat}. Interestingly, the next theorem shows that the converse result is also true.
\begin{theorem}
\label{thm:op_diff_prvcy}
A mechanism $\Q$ is $\left(\varepsilon,\delta\right)$-locally differentially private if and only if $\mathcal{R}_{\Q} \subseteq \mathcal{R}_{\varepsilon,\delta}$.
\end{theorem}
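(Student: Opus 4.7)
The plan is to interpret both sides as systems of linear inequalities in the two probabilities $Q(S|0)$ and $Q(S|1)$ indexed by subsets $S \subseteq \cY$, and to match up the defining half-planes of $\mathcal{R}_{\varepsilon,\delta}$ with the four instances of \eqref{eq:approx_diff_privacy} one obtains from $(S,S^c)$ and $(x,x') \in \{(0,1),(1,0)\}$. Since $\mathcal{R}_{\varepsilon,\delta}$ is cut out by linear inequalities and $\mathcal{R}_{\Q}$ is a convex hull, it suffices to check deterministic decision rules throughout: writing $S = \{y : \hat{\X}(y)=1\}$ gives $P_{\text{FA}} = Q(S|0)$ and $P_{\text{MD}} = 1 - Q(S|1)$, and convex combinations of such points automatically stay in any half-plane.

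For the only-if direction, I would assume $\Q$ is $(\varepsilon,\delta)$-locally differentially private and, for each $S$, apply \eqref{eq:approx_diff_privacy} with $(x,x') = (1,0)$ on the set $S$: rearranging $Q(S|1) \leq e^{\varepsilon} Q(S|0) + \delta$ yields $e^{\varepsilon} P_{\text{FA}} + P_{\text{MD}} \geq 1 - \delta$. Applying \eqref{eq:approx_diff_privacy} with $(x,x')=(0,1)$ on the complement $S^c$ gives $Q(S^c|0) \leq e^{\varepsilon} Q(S^c|1) + \delta$, which after substituting $Q(S^c|0) = 1 - P_{\text{FA}}$ and $Q(S^c|1) = P_{\text{MD}}$ rearranges to $P_{\text{FA}} + e^{\varepsilon} P_{\text{MD}} \geq 1 - \delta$. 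The remaining two instances of \eqref{eq:approx_diff_privacy} produce upper bounds on $e^{\varepsilon}P_{\text{FA}} + P_{\text{MD}}$ and $P_{\text{FA}} + e^{\varepsilon}P_{\text{MD}}$ which hold automatically on $[0,1]^2$. Hence every deterministic error pair lies in $\mathcal{R}_{\varepsilon,\delta}$, and convexity closes the inclusion.

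For the converse, I would fix an arbitrary $S \subseteq \cY$ and pick off both privacy inequalities by choosing two complementary indicator rules. First, with $\hat{\X} = \unity_S$ we have $(P_{\text{FA}},P_{\text{MD}}) = (Q(S|0),\,1-Q(S|1)) \in \mathcal{R}_{\varepsilon,\delta}$, so the half-plane $e^{\varepsilon} P_{\text{FA}} + P_{\text{MD}} \geq 1 - \delta$ collapses to $Q(S|1) \leq e^{\varepsilon} Q(S|0) + \delta$. Next, with $\hat{\X} = \unity_{S^c}$ we have $(P_{\text{FA}},P_{\text{MD}}) = (1-Q(S|0),\,Q(S|1))$, and the half-plane $P_{\text{FA}} + e^{\varepsilon} P_{\text{MD}} \geq 1 - \delta$ collapses to $Q(S|0) \leq e^{\varepsilon} Q(S|1) + \delta$. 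Since $S$ was arbitrary and the roles of $0,1$ are symmetric in the two derivations, every instance of \eqref{eq:approx_diff_privacy} is recovered.

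The only point really requiring care is the bookkeeping: for each of the two boundary half-planes of $\mathcal{R}_{\varepsilon,\delta}$, one must pair it with the correct indicator rule ($\unity_S$ vs.\ $\unity_{S^c}$) and the correct ordering $(x,x')$ in \eqref{eq:approx_diff_privacy} so that the rearrangement comes out in the useful direction; once that dictionary is set down, each correspondence is a one-line substitution. I do not foresee any other technical obstacle — the reduction to deterministic rules is immediate because the two defining inequalities of $\mathcal{R}_{\varepsilon,\delta}$ are linear and thus preserved under the convex hull that defines $\mathcal{R}_{\Q}$.
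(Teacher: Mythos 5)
Your proof is correct, and it is worth noting that the paper itself does not prove Theorem~\ref{thm:op_diff_prvcy} at all---it simply cites \cite{KOV14b}---so there is no in-paper argument to compare against; your self-contained derivation is exactly the kind of argument the citation is standing in for. The bookkeeping in both directions checks out: the two error coordinates of a deterministic indicator rule substitute into the two half-plane constraints to give exactly the two orderings of the $(\varepsilon,\delta)$ inequality on $S$ (and symmetrically on $S^c$), and the reduction to indicator rules via the convex-hull structure of $\mathcal{R}_\Q$ and the linearity of the half-planes is sound. The one imprecise sentence is the aside that the remaining two instances of \eqref{eq:approx_diff_privacy} produce upper bounds ``which hold automatically on $[0,1]^2$.'' They do not: $P_{\rm FA}+e^\varepsilon P_{\rm MD}\leq e^\varepsilon+\delta$ fails at $(1,1)$ when $\delta<1$. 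What is true, and sufficient, is that those upper-bound constraints are redundant once one uses the central symmetry of $\mathcal{R}_\Q$ (any error pair and its flip $(1-P_{\rm FA},1-P_{\rm MD})$ are both achievable), or simply that they are not among the inequalities the paper uses to define $\mathcal{R}_{\varepsilon,\delta}$. Since your argument already establishes the two defining half-plane constraints, this slip is cosmetic rather than a gap.
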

The proof of the above theorem can be found in \cite{KOV14b}. Notice that it is no coincidence that $\mathcal{R}_{\Q_{\rm QT}}=\mathcal{R}_{\varepsilon,\delta}$. This property will be essential in proving the optimality of the quaternary mechanism.

Theorem \ref{thm:op_diff_prvcy} allows us to benefit from the data processing inequality (DPI) and its converse, which follows from a celebrated result by \cite{Bla53}. These inequalities, while simple by themselves, lead to surprisingly strong technical results. Indeed, there is a long line of such a tradition in the information theory literature (see Chapter 17 of \cite{CT12}). Consider two privatization mechanisms, $Q^{(1)}$ and $Q^{(2)}$. Let $Y$ and $Z$ denote the output of the mechanisms $Q^{(1)}$ and $Q^{(2)}$, respectively. We say that $Q^{(1)}$ dominates $Q^{(2)}$ if there exists a coupling of $Y$ and $Z$ such that $X \text{--} Y \text{--}Z$ forms a Markov chain. In other words, we say $Q^{(1)}$ dominates $Q^{(2)}$ if there exists a stochastic mapping $Q$ such that $Q^{(2)} = Q^{(1)} \circ \Q$.
\begin{theorem}
\label{thm:dpi}
A mechanism $\Q^{(1)}$ dominates a mechanism $\Q^{(2)}$ if and only if $\mathcal{R}_{\Q^{(2)}} \subseteq \mathcal{R}_{\Q^{(1)}}$.
\end{theorem}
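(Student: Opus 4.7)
The plan is to prove the two implications separately; the forward direction is a routine application of the data processing inequality, while the reverse direction is the classical Blackwell equivalence specialized to binary-input experiments.

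For the forward direction, suppose $Q^{(2)} = Q^{(1)} \circ Q$ for some stochastic kernel $Q$. Given any (possibly randomized) decision rule $\hat{X}$ operating on the output $Z$ of $Q^{(2)}$, I would lift it to a rule $\hat{X}'$ operating on the output $Y$ of $Q^{(1)}$ by first drawing $Z' \sim Q(\cdot \mid Y)$ and then outputting $\hat{X}(Z')$. Since the joint law of $(X, Z')$ matches the joint law of $(X, Z)$ under $Q^{(2)}$, both $P_{\rm FA}$ and $P_{\rm MD}$ are preserved. Taking convex hulls over decision rules then yields $\mathcal{R}_{Q^{(2)}} \subseteq \mathcal{R}_{Q^{(1)}}$.

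For the reverse direction, assume $\mathcal{R}_{Q^{(2)}} \subseteq \mathcal{R}_{Q^{(1)}}$ and aim to construct a stochastic kernel $Q$ with $Q^{(2)} = Q^{(1)} \circ Q$. The first step is to characterize each error region by its support function: for each prior $\pi \in [0,1]$ on $\X = 0$, the Bayes risk under mechanism $Q^{(i)}$ is
$$R_\pi(Q^{(i)}) \;=\; \min_{\hat{X}} \bigl\{ \pi P_{\rm FA}(\hat{X}) + (1-\pi) P_{\rm MD}(\hat{X}) \bigr\},$$
and by the Neyman--Pearson lemma these Bayes risks trace the lower boundary of the (convex) error region. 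Hence $\mathcal{R}_{Q^{(i)}}$ is exactly the intersection over $\pi \in [0,1]$ of the halfspaces $\{(u,v) \in [0,1]^2 : \pi u + (1-\pi) v \geq R_\pi(Q^{(i)})\}$, and the assumed region inclusion becomes the pointwise inequality $R_\pi(Q^{(1)}) \leq R_\pi(Q^{(2)})$ for every $\pi$. This is precisely the statement that $Q^{(1)}$ is at least as informative as $Q^{(2)}$ in Blackwell's sense.

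The second step, and the main obstacle, is to upgrade these scalar inequalities into the existence of the kernel $Q$. Assuming without loss of generality that both output alphabets are finite (by an approximation argument if needed), I would formulate the construction as an LP feasibility problem: find $Q \geq 0$ with columns summing to one and $Q^{(1)} Q = Q^{(2)}$. If this LP were infeasible, Farkas' lemma would produce Lagrange multipliers which, after normalization, yield a prior $\pi^\star$ at which $R_{\pi^\star}(Q^{(2)}) < R_{\pi^\star}(Q^{(1)})$, contradicting the region inclusion. The technical heart is matching the Farkas dual vector to a Bayes risk deficit, which is exactly Blackwell's original argument; given the paper's references to \cite{Bla53} and Chapter 17 of \cite{CT12}, I would appeal to these sources for the construction itself and only sketch the LP-duality reduction here.
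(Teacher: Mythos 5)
Your argument is sound, and in fact more substantive than the paper's own treatment: the paper does not prove Theorem \ref{thm:dpi} at all but simply refers the reader to \cite{Bla53}. Your forward direction (simulating $Z$ from $Y$ through the intermediate kernel so that every achievable error pair under $Q^{(2)}$ is achievable under $Q^{(1)}$) is complete, and your reverse direction — translating region inclusion into pointwise Bayes-risk dominance $R_\pi(Q^{(1)})\le R_\pi(Q^{(2)})$ for all priors $\pi$ and invoking LP feasibility/Farkas to manufacture the Markov kernel — is the standard modern proof of Blackwell's theorem, which you, like the paper, ultimately cite for the final step.
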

The proof of the above theorem can be found in \cite{Bla53}. Observe that by Theorems \ref{thm:dpi} and \ref{thm:op_diff_prvcy}, and the fact that $\mathcal{R}_{\Q_{\rm QT}}=\mathcal{R}_{\varepsilon,\delta}$, the quaternary mechanism dominates any other differentially private mechanism. In other words, for any differentially private mechanism $\Q$, there exists a stochastic mapping $W$ such that $\Q = W \circ \Q_{\rm QT}$. Therefore, for any $(\varepsilon,\delta)$ and any utility function $U(.)$ obeying the data processing inequality, we have that $U(\Q) \leq U(\Q_{\rm QT})$. This finishes the proof of Theorem \ref{thm:quat_optmlty}.

\section{Discussion}
\label{sec:discussion}

In this paper, we have considered a broad class of convex utility functions and assumed a setting where individuals cannot collaborate (communicate with each other) before releasing their data. It turns out that the techniques developed in this work can be generalized to find optimal privatization mechanisms in a setting where different individuals can collaborate interactively and each individual can be an analyst \citep{KOV14a}.

 Binary hypothesis testing and information preservation are two canonical problems with a wide range of applications. However, there are a number of non-trivial and interesting extensions to our work.

 \bigskip
 \noindent{\bf Correlation among data.}
In some scenarios the $\X_i$'s could be correlated (e.g., when different individuals observe different functions of the same random variable). In this case, the data analyst is interested in inferring whether the data was generated from $\PP^{n}_0$ or $\PP^{n}_1$, where $\PP^{n}_\nu$ is one of two possible joint priors on $\X_1,...,\X_n$. This is a challenging problem because knowing $\X_i$ reveals information about $\X_j$, $j \neq i$. Therefore, the utility maximization problems for different individuals are coupled in this setting.

 \bigskip
 \noindent{\bf Robust and $m$-ary hypothesis testing.}
In some cases the data analyst need not have access to $\PP_0$ and $\PP_1$, but rather to two classes of prior distribution $\PP_{\theta_0}$ and $\PP_{\theta_1}$ for $\theta_0 \in \Lambda_0$ and $\theta_1 \in \Lambda_1$. Such problems are studied under the rubric of universal hypothesis testing and robust hypothesis testing. One possible direction is to select the privatization mechanism that maximizes the worst case utility:  $\Q^* = \arg \max_{\Q \in \mathcal{\D}_{\varepsilon}} \min_{\theta_0 \in \Lambda_0,\theta_1 \in \Lambda_1} D_f(\MM_{\theta_0}||\MM_{\theta_1})$, where $\MM_{\theta_\nu}$ is the induced marginal under $\PP_{\theta_\nu}$.

 The more general problem of private $m$-ary hypothesis testing is also an interesting but challenging one. In this setting, the $\X_i$'s can follow one of $m$ distributions $\PP_0$, $\PP_1$, ..., $\PP_{m-1}$. Consequently, the $\Y_i$'s can follow one of $m$ distributions $\MM_0$, $\MM_1$, ..., $\MM_{m-1}$. In this case, the utility can be defined as the average $f$-divergence between any two distributions: $1/(m(m-1)) \sum_{i \neq j}D_f(\MM_{i}||\MM_{j})$, or the worst case one: $\min_{i \neq j} D_f(\MM_{i}||\MM_{j})$.

\bigskip
\noindent
{\bf Non-exchangeable utility functions.}
The utility studied in this paper was measured by functions that are exchangeable, i.e.
the utility does not depend on the naming (labelling) or topology of the private and privatized data ($X$ and $Y$).
This made sense for statistical learning applications that depend on information theoretic quantities such as $f$-divergences and mutual information.
However, in some other applications, the utility might be defined over $\cX \cup \cY$ in a metric space,
where there exists a natural measure of distance (or distortion) between the data points.
In this case, we can formulate the problem as a distortion minimization one
\begin{eqnarray*}
	\displaystyle \text{minimize}_{Q\in {\cal D}_\varepsilon } &&  \sum_{x,y} d(x,y) P(x) Q(y|x)\;,
\end{eqnarray*}
where $d(\x,\y)$ is some distortion metric.
\cite{WYZ14} studied this problem, and showed that the mechanism
$Q(y|x) \propto  e^{\varepsilon(1-d(x,y))}/(k-1+e^{\varepsilon})$ achieves near optimal
performance when $\varepsilon$ is large enough, which is the low privacy regime.
Notice that when Hamming distance is used, $d(x,y)={\mathbb I}(x\neq y)$,
this recovers the randomized response mechanism exactly.
This provides a starting point for generalizing the search for optimal mechanisms
under non-exchangeable utility functions.

%
%
\section{Proof of Theorems \ref{thm:sc} and \ref{thm:lp}}
We start the proof with a few definitions, a lemma, and a general result that applies to any convex utility function that obeys a mild assumption.

Recall that for an input alphabet $\cX$ with $|\cX|=\kk$, we represent the set of
$\varepsilon$-locally differentially private mechanisms that lead to output
alphabets $\cY$ with $|\cY|=\n$ by $\mathcal{\D}_{\varepsilon,\n}$. The set of all $\varepsilon$-locally differentially private mechanisms is given by $\mathcal{\D}_{\varepsilon}= \cup_{\n \in \mathbb{N}} \mathcal{\D}_{\varepsilon,\n}$. A utility function $U\left(\Q\right)$ is convex in $\Q$ if
$U\left(\lambda\Q^{(1)}+\left(1-\lambda\right)\Q^{(2)}\right) \leq \lambda
U\left(\Q^{(1)}\right)+\left(1-\lambda\right)U\left(\Q^{(2)}\right)$ for any
$\lambda \in \left(0,1\right)$. Convex utility functions are ubiquitous in
information theory and statistics.
\begin{assumption}
\label{asmptn:well_behaved} If a $\kk \times \n$ privatization mechanism
$\Q^{(1)} \in \mathcal{\D}_{\varepsilon,\n}$ is obtained by deleting an all-zero column
of a $\kk \times \n +1$ privatization mechanism $\Q^{(2)} \in
\mathcal{\D}_{\varepsilon,\n+1}$, then $U\left(\Q^{(1)}\right)=U\left(\Q^{(2)}\right)$.
\end{assumption}
Naturally, one would expect that if we delete the zero columns of a
privatization mechanism $\Q^{(2)}$ to obtain a new privatization mechanism
$\Q^{(1)}$, we would still get the same utility. This is because a
``reasonable'' utility function should not depend on output alphabets with
zero probability.

\begin{theorem}
\label{thm:optmlty_randmzd_rspns} If $U\left(\Q\right)$ is a convex utility
function that satisfies Assumption \ref{asmptn:well_behaved}, then the following holds
\begin{equation}
\max_{\Q \in \mathcal{\D}_{\varepsilon}}
U\left(\Q\right)= \max_{\Q \in \cup_{\n=1}^{\kk} \mathcal{\D}_{\varepsilon,\n}} U\left(\Q\right).
\end{equation}
\end{theorem}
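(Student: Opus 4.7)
The plan is to exploit the polytopal structure of $\mathcal{\D}_{\varepsilon,\n}$ and reduce the output alphabet through an extreme-point argument. The set $\mathcal{\D}_{\varepsilon,\n}$ is a bounded convex polytope in $\reals^{\kk\n}$, cut out by the equalities $\sum_{\y}\Q(\y|\x)=1$ and the inequalities $\Q(\y|\x)\ge 0$ and $\Q(\y|\x)\le e^{\varepsilon}\Q(\y|\x')$. Since $U$ is convex and the polytope is compact, the maximum $\max_{\Q\in\mathcal{\D}_{\varepsilon,\n}} U(\Q)$ is attained at some extreme point $\Q^{*}$. If we can show that every extreme point with $\n>\kk$ must contain at least one all-zero column, Assumption~\ref{asmptn:well_behaved} lets us delete that column, drop to $\mathcal{\D}_{\varepsilon,\n-1}$, and preserve the utility value; iterating at most $\n-\kk$ times lands us in $\mathcal{\D}_{\varepsilon,\kk}$, which yields the claimed equality of maxima.

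To locate a zero column at an extreme point, I would count linearly independent active constraints. The $\kk$ simplex equalities are independent, so the affine hull of $\mathcal{\D}_{\varepsilon,\n}$ has dimension $\kk(\n-1)$, and at a vertex the active inequality normals must contribute rank at least $\kk(\n-1)$ on top of the equality normals. The crucial structural observation is that every inequality, whether a positivity constraint or a differential-privacy ratio, involves entries of a single output column only; hence the inequality normals for distinct columns live in disjoint coordinate subspaces of $\reals^{\kk\n}$ and their ranks simply add. Writing $r_{\y}$ for the rank of active inequality normals supported on column $\y$, the vertex condition becomes $\sum_{\y} r_{\y}\ge \kk(\n-1)$.

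Bounding $r_{\y}$ column by column: if column $\y$ is not all-zero, the DP inequalities force every entry of the column to be strictly positive (a single zero entry would drag the entire column to zero), so no positivity constraint can be tight in that column; the remaining active DP equalities $\Q(\y|\x)=e^{\varepsilon}\Q(\y|\x')$ determine the column only up to a positive multiplicative scalar, which gives $r_{\y}\le \kk-1$. An all-zero column, on the other hand, has all $\kk$ positivity constraints tight, so $r_{\y}\le \kk$. If $z$ denotes the number of all-zero columns, then
\begin{equation*}
\kk(\n-1)\;\le\;\sum_{\y} r_{\y}\;\le\; z\kk+(\n-z)(\kk-1)\;=\;\kk\n-\n+z,
\end{equation*}
which rearranges to $z\ge \n-\kk$. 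Hence $\n>\kk$ forces at least one all-zero column, as needed.

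The main obstacle is establishing the sharp column-local bound $r_{\y}\le \kk-1$ for non-zero columns; once one accepts the ``all-or-nothing'' positivity behaviour of a column under an $\varepsilon$-DP constraint and the fact that tight DP equalities only pin a column down up to scale, the remainder is elementary counting. A subtler care-point is verifying that the inequality-normal subspaces for different columns are genuinely disjoint, which lets their ranks add and leaves room for the $\kk$ equality normals to contribute independent directions; this is immediate from the fact that each inequality constraint is supported on the coordinates of a single column $\y$.
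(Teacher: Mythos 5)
Your proof is correct and matches the paper's approach (Lemma~\ref{lemma:max_non_zero} and Claim~\ref{cl:all_zero_cols}): maximize the convex $U$ at a vertex of the polytope $\mathcal{\D}_{\varepsilon,\n}$, use the homogeneity of the DP and positivity constraints to bound the rank of active inequality normals by $\kk-1$ per nonzero column and $\kk$ per zero column, count to force at least $\n-\kk$ all-zero columns whenever $\n>\kk$, and delete them via Assumption~\ref{asmptn:well_behaved}. The only cosmetic difference is that you first pass to the $\kk(\n-1)$-dimensional affine hull and exploit the column-disjoint support of the inequality constraints to add ranks exactly, whereas the paper simply upper-bounds the total number of linearly independent active constraints (equalities included) by subadditivity.
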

This theorem implies that among all $\varepsilon$-locally differentially private mechanisms, we only need to consider those that lead to output alphabets of size $\n \leq \kk$. In other words, enlarging the input alphabet cannot further maximize the utility. The proof of Theorem \ref{thm:optmlty_randmzd_rspns} is given in Section \ref{subsec:polytope}.

\begin{lemma}
\label{lemma:diff_priv_reps} A $\kk\times \n$ conditional distribution $\Q$ is $\varepsilon$-locally differentially private
 if and only if it can be written as
$\Q=\s\B$, where $\s$ is a $\kk \times \n$ matrix with $S_{ij} \in
[1,e^\varepsilon]$ and $\B=\mbox{diag}\left(\bb_1,\ldots,\bb_\n\right)$ with its
diagonal entries in $\mathbb{R}_{+}$.
\end{lemma}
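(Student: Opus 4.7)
The plan is to exploit the fact that, for finite output alphabets, the $\varepsilon$-local differential privacy condition
$$\sup_{S\subseteq\cY,\,x,x'} \frac{Q(S|x)}{Q(S|x')} \leq e^\varepsilon$$
is equivalent to the per-column condition $Q(y|x)/Q(y|x')\leq e^\varepsilon$ for every $y\in\cY$ and every $x,x'\in\cX$. This reduction (which I would state and justify briefly at the start using the mediant inequality: if $a/b,\,c/d \leq e^\varepsilon$ then $(a+c)/(b+d)\leq e^\varepsilon$, extended by induction to arbitrary sums over $S$) is the key that decouples the DP constraint into a constraint on each column separately.

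For the backward direction, suppose $Q=S\Theta$ with $S_{ij}\in[1,e^\varepsilon]$ and $\Theta=\mathrm{diag}(\theta_1,\ldots,\theta_\ell)$ with $\theta_j\geq 0$. Then $Q(y_j|x_i)=S_{ij}\theta_j$, and for any subset $T\subseteq\cY$ and any $x_i,x_{i'}\in\cX$,
$$\frac{Q(T|x_i)}{Q(T|x_{i'})} \;=\; \frac{\sum_{j:y_j\in T} S_{ij}\theta_j}{\sum_{j:y_j\in T} S_{i'j}\theta_j} \;\leq\; e^\varepsilon,$$
since every term satisfies $S_{ij}\theta_j\leq e^\varepsilon\cdot S_{i'j}\theta_j$ (using $S_{ij}\leq e^\varepsilon$ and $S_{i'j}\geq 1$). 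This shows $Q\in\mathcal{D}_\varepsilon$.

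For the forward direction, assume $Q$ is $\varepsilon$-locally differentially private. For each column $j\in\{1,\ldots,\ell\}$ define $\theta_j=\min_{i\in[k]} Q(y_j|x_i)$. Two cases arise. If $\theta_j>0$, set $S_{ij}:=Q(y_j|x_i)/\theta_j$. By construction $S_{ij}\geq 1$, and if $i^*$ attains the min then $S_{ij}=Q(y_j|x_i)/Q(y_j|x_{i^*})\leq e^\varepsilon$ by the per-column DP condition. If $\theta_j=0$, then some $Q(y_j|x_{i^*})=0$; applying the DP bound $Q(y_j|x_i)\leq e^\varepsilon Q(y_j|x_{i^*})=0$ gives that the entire $j$-th column of $Q$ is zero, so we may set $S_{ij}=1$ for all $i$ (any value in $[1,e^\varepsilon]$ would work) and the identity $Q(y_j|x_i)=S_{ij}\theta_j=0$ still holds. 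Collecting the $S_{ij}$ and $\theta_j$ across all columns gives the required factorization $Q=S\Theta$.

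The proof is essentially bookkeeping once the per-column reduction is in place; the only subtlety I expect is the degenerate case of all-zero columns, which is handled uniformly by choosing $\theta_j=0$ and assigning $S_{ij}$ arbitrarily within $[1,e^\varepsilon]$. No obstacle beyond that, since the lemma is a structural restatement rather than a quantitative result.
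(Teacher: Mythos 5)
Your proof is correct and follows essentially the same route as the paper: reduce the set-valued DP condition to the per-column ratio condition (the paper records this as a separate claim in its proof of the follow-on lemma), prove the backward direction by direct computation, and prove the forward direction by setting $\theta_j = \min_i Q(y_j|x_i)$ and handling the all-zero column degeneracy exactly as the paper does via its claim that a single zero entry forces the whole column to be zero.
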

The proof of Lemma \ref{lemma:diff_priv_reps} is provided in Section \ref{subsec:dp_rep}. With the above results, we are now ready to prove Theorems \ref{thm:sc} and \ref{thm:lp}. By Lemma \ref{lemma:diff_priv_reps}, for any $\Q \in \mathcal{\D}_{\varepsilon,\n}$ we have that $\Q_j=\bb_j\s_j$. Suppose $U\left(\Q\right)=\sum_{j\in [\n]} \mu(Q_j)$, where $\mu$ is a sublinear function. Since $\mu$ is sublinear, it is convex and
$\mu\left(\bb_j\s_j\right)=\bb_j\mu\left(\s_j\right)$. $U\left(\Q\right)$ is convex in $\Q$ because
\begin{eqnarray}
U\left(\lambda\Q^{(1)}+\left(1-\lambda\right)\Q^{(2)}\right)&=&
\sum_{j\in [\n]}\mu\left(\lambda\bb^{(1)}_j\s^{(1)}_j+\left(1-\lambda\right)\bb^{(2)}_j\s^{(2)}_j\right) \nonumber \\
&\leq& \sum_{j\in [\n]}\lambda\mu\left(\bb^{(1)}_j\s^{(1)}_j\right)+\left(1-\lambda\right)\mu\left(\bb^{(2)}_j\s^{(2)}_j\right)\nonumber \\
&=& \lambda U\left(\Q^{(1)}\right) + \left(1-\lambda\right) U\left(\Q^{(2)}\right),
\end{eqnarray}
for any $\lambda \in (0,1)$. Furthermore, $U\left(\Q\right)$ satisfies
Assumption \ref{asmptn:well_behaved} because $\mu\left(\Q_j\right)=0$
whenever $\bb_j=0$. Let $\Q^*=\s^*\B^*\in \arg \max_{\Q \in \cup_{\n=1}^{\kk} \mathcal{\D}_{\varepsilon,\n}}
U\left(\Q\right)$ and note that by
Theorem \ref{thm:optmlty_randmzd_rspns}, $U\left(\Q^*\right)= \max_{\Q \in
\mathcal{\D}_{\varepsilon}} U\left(\Q\right)$. Suppose that $\Q^*$ is of dimensions $\kk
\times \n$, where $\n \leq \kk$. Each of the $\n$ columns of $\Q^*$ can be
expressed as a convex combination of the columns of $\s^{(\kk)}$, the staircase pattern matrix. This is
because the $2^\kk$ columns of $\s^{(\kk)}$ are the corner points of the cube
$[1,e^\varepsilon]^\kk$ and each $\s^*_j \in [1,e^\varepsilon]^\kk$.
Therefore, $\s^*_j = \sum_{i=1}^{2^{\kk}} \lambda_{ij} \s^{(\kk)}_{i}$, where
$\lambda_{ij} \geq 0$ for all $i$ and $j$, and $\sum_{i=1}^{2^{\kk}}
\lambda_{ij}=1$ for all $j$. Create the $2^{\kk}$-dimensional vector
$\tilde{\bb}$ such that $\tilde{\bb}_i = \sum_{j=1}^{\n}
\lambda_{ij}\bb^*_j$ and let $\tilde{\Q}=\s^{(\kk)}\tilde{\B}$.
\begin{eqnarray}
U\left(\Q^*\right)-U(\tilde{\Q})&=& \sum_{j=1}^{\n}\mu\left(\s^*_j\right)\bb^*_j - \sum_{i=1}^{2^\kk}\mu\left(\left(\sum_{j=1}^{\n}
\lambda_{ij}\bb^*_j\right)\s^{(\kk)}_j\right) \nonumber \\
&=& \sum_{j=1}^{\n}\mu\left(\sum_{i=1}^{2^{\kk}} \lambda_{ij} \s^{(\kk)}_{i}\right)\bb^*_j - \sum_{i=1}^{2^\kk}\sum_{j=1}^{\n}
\lambda_{ij}\bb^*_j\mu\left(\s^{(\kk)}_j\right) \nonumber \\
&=& \sum_{j=1}^{\n}\bb^*_j\left\{\mu\left(\sum_{i=1}^{2^{\kk}} \lambda_{ij} \s^{(\kk)}_{i}\right) - \sum_{i=1}^{2^\kk}
\lambda_{ij}\mu\left(\s^{(\kk)}_j\right)\right\} \nonumber \\
&\leq& 0,
\end{eqnarray}
by the convexity of $\mu\left(z\right)$ and the non-negativity of
$\bb^*_j$'s. Moreover, observe that since $\s^{(\kk)}\tilde{\bb} =\unity$,
$\tilde{\bb}$ is a valid choice for the linear program of \eqref{eq:optsc}. This implies that
\begin{equation}
\max_{\s^{(\kk)}\bb = \unity, \bb \geq 0}
\sum_{j=1}^{2^{\kk}}\mu\left(\s^{(\kk)}_j\right)\bb_j \geq U(\tilde{\Q}) \geq U\left(\Q^*\right) = \max_{\Q \in \mathcal{\D}_{\varepsilon}}
U\left(\Q\right)
\end{equation}
On the other hand, for any $\tilde{\Q}=\s^{(\kk)}\tilde{\B}$, where $\tilde{\bb}$ is valid for the linear program of \eqref{eq:optsc}, we have that
$\tilde{\Q} \in \mathcal{\D}_{\varepsilon,2^\kk} \subset \mathcal{\D}_{\varepsilon}$ and therefore,
$\max_{\s^{(\kk)}\bb = \unity, \bb \geq 0}
\sum_{j=1}^{2^{\kk}}\mu\left(\s^{(\kk)}_j\right)\bb_j \leq \max_{\Q \in \mathcal{\D}}
U\left(\Q\right)$. Thus, $\max_{\s^{(\kk)}\bb = \unity, \bb \geq 0}
\sum_{j=1}^{2^{\kk}}\mu\left(\s^{(\kk)}_j\right)\bb_j = \max_{\Q \in \mathcal{\D}}
U\left(\Q\right)$. This proves Theorem \ref{thm:lp}.

The polytope given by  $\s^{(\kk)}\bb = \unity$ and $\bb \geq 0$ is a closed and bounded one. Thus, the linear program of \eqref{eq:optsc} is bounded and has a solution, say $\bb^*$, at one of the corner points of the polytope. Since there are $\kk$ equality constraints given
by $\s^{(\kk)}\bb = \unity$ and $2^\kk$ inequality constraints given by $\bb
\geq 0$, any corner point, including $\bb^*$, cannot have more than $\kk$
non-zero entries. Form
$\tilde{\s}$ by deleting the columns of $\s^{(\kk)}$ corresponding to zero entries of
$\bb^*$. Similarly, form $\tilde{\bb}$ by deleting the zero entries of
$\bb^*$ and let $\tilde{\Q}=\tilde{\s}\tilde{\B}$, where $\tilde{\B}={\rm diag}\tilde{\bb}$. It is easy to verify that
$U(\tilde{\Q})= U(\Q^*) =\mu^T\bb^*$; hence, $\tilde{\Q}$ solves linear program of \eqref{eq:optsc}. Moreover,
$\tilde{\Q}$ has at most $\kk$ columns and $\tilde{\s}_{ij}=
\{1,e^\varepsilon\}$. Therefore, $\tilde{\Q}$ is a staircase mechanism of dimension $\kk \times \n$, where $\n \leq \kk$.

\subsection{Proof of Theorem \ref{thm:optmlty_randmzd_rspns}}
\label{subsec:polytope}
We start the proof of Theorem \ref{thm:optmlty_randmzd_rspns} with an important lemma the proof of which is presented in Section \ref{subsec:max_non_zero}.
\begin{lemma}
\label{lemma:max_non_zero} The set of all $\kk\times\n$, $\varepsilon$-locally differentially private mechanisms $\mathcal{\D}_{\varepsilon,\n}$ forms a closed and bounded polytope in $\mathbb{R}_{+}^{\kk\n}$. Moreover, let $\Q$ be a corner point of the polytope formed by $\mathcal{\D}_{\varepsilon,\n}$, then $\Q$ has at most $\kk$ non-zero columns.
\end{lemma}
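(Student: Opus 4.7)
The plan is to handle the two claims separately, starting with the polytope description and then the vertex count.

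For the polytope claim, I would observe that in the discrete setting the supremum over $S\subseteq\cY$ in \eqref{eq:deflocalDP1} is attained at singletons: if $\Q(\y|\x)\leq e^\varepsilon \Q(\y|\x')$ holds for every $\y$, then summing over $S$ preserves the bound. Thus $\mathcal{\D}_{\varepsilon,\n}$ is cut out of $\reals^{\kk\n}$ by finitely many linear (in)equalities: non-negativity $\Q(\y|\x)\geq 0$, row-sum equalities $\sum_\y \Q(\y|\x)=1$ for each $\x$, and privacy halfspaces $\Q(\y|\x)-e^\varepsilon\Q(\y|\x')\leq 0$ for each triple $(\y,\x,\x')$. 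This exhibits the set as a closed polyhedron, and the row-sum equalities together with non-negativity force $0\leq \Q(\y|\x)\leq 1$, giving boundedness and hence a polytope.

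For the vertex claim I would combine a \emph{column dichotomy} with a rank count. The two-sided privacy bound $e^{-\varepsilon}\Q(\y|\x')\leq \Q(\y|\x)\leq e^\varepsilon \Q(\y|\x')$ implies that if a single entry in column $\y$ is strictly positive then every entry in that column is strictly positive; so every column of a feasible $\Q$ is either identically zero or strictly positive. Let $\Q$ be a corner point of $\mathcal{\D}_{\varepsilon,\n}$ and write $\n_1$ for the number of strictly positive columns.

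The key step is to count tight, linearly independent constraints at $\Q$; at a vertex these must have total rank $\kk\n$. The $\kk$ row-sum equalities always contribute rank $\kk$. For each of the $\n-\n_1$ zero columns, the $\kk$ non-negativity constraints are tight and mutually independent, contributing $\kk(\n-\n_1)$ rows whose supports are disjoint from everything else. For each positive column $\y$ no non-negativity is tight, but a collection of privacy equalities $\Q(\y|\x)=e^\varepsilon \Q(\y|\x')$ may hold; the rank of this within-column block equals $\kk-c_\y$, where $c_\y$ is the number of connected components of the undirected graph on $\{1,\dots,\kk\}$ whose edges are the tight pairs $(\x,\x')$. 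Because the column-wise blocks have pairwise-disjoint column supports and the row-sum equalities couple entries across all columns, the grand total rank equals the sum of the three contributions. Setting this equal to $\kk\n$ gives
\begin{equation*}
\kk \,+\, \kk(\n-\n_1) \,+\, \sum_{\y\text{ positive}} (\kk-c_\y) \;=\; \kk\n,
\end{equation*}
which simplifies to $\sum_{\y\text{ positive}} c_\y = \kk$. Since every positive column satisfies $c_\y\geq 1$, this forces $\n_1\leq \kk$.

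The main obstacle is justifying the rank calculation cleanly. The within-column tight privacy constraints form a (signed) incidence-type matrix whose rank is $\kk$ minus the number of components of the associated graph, a standard linear-algebraic fact. Disjointness of column supports yields block structure across columns, and each row-sum equality spans all columns and therefore stays independent of any within-column relation. These elementary observations, assembled carefully, deliver both claims of the lemma.
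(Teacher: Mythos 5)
Your proposal is essentially correct and follows the same strategy as the paper: exhibit $\mathcal{D}_{\varepsilon,\ell}$ as a bounded polyhedron cut out by row-stochasticity equalities, non-negativity, and the singleton privacy inequalities; then, at a corner point, upper-bound the rank of the active constraints column by column and compare it to $k\ell$. Your per-column count is a refinement of the paper's: you compute the rank of the tight privacy constraints in a positive column as $k - c_y$ using the connected-component structure of the tight-constraint graph, whereas the paper simply argues (Claim 3 there) that a positive column can never admit $k$ linearly independent tight inequalities, since a rank-$k$ system $\tilde{A}_j Q_j = 0$ would force $Q_j = 0$; both give the bound $\leq k-1$ per positive column and the same arithmetic. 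One small overclaim in your writeup: you assert that the grand total rank \emph{equals} the sum $k + k(\ell-\ell_1) + \sum_y(k - c_y)$, whereas the argument needs only the (immediate) inequality rank $\leq$ sum of per-block ranks. The subadditivity gives $k\ell \leq k + k(\ell-\ell_1) + \sum_y(k-c_y)$, hence $\sum_y c_y \leq k$, hence $\ell_1 \leq k$; asserting equality requires a genuine independence argument across blocks that you did not supply and that is unnecessary. The paper avoids this by phrasing everything as upper bounds from the start. With that wording fix, your proof is sound.
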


Fix an $\n > \kk$. Since $U\left(\Q\right)$ is convex in $\Q$, it suffices to
consider the corner points of $\mathcal{\D}_{\varepsilon,\n}$ when maximizing
$U\left(\Q\right)$ subject to $\Q \in \mathcal{\D}_{\varepsilon,\n}$. By Lemma
\ref{lemma:max_non_zero}, any $\Q^{(1)}$, a $\kk\times \n$ corner point of
$\mathcal{\D}_{\varepsilon,\n}$, has at most $\kk$ non-zero columns. Therefore, the
privatization mechanism $\Q^{(2)}$, obtained by deleting the all-zero columns
of $\Q^{(1)}$, has at most $\kk$ columns. Notice that $\Q^{(2)} \in \cup_{i=1}^{\kk}\mathcal{\D}_{\varepsilon,i}$. Since $U\left(\Q\right)$ satisfies
Assumption \ref{asmptn:well_behaved}, we have that
$U\left(\Q^{(1)}\right)=U\left(\Q^{(2)}\right)$ and therefore, it suffices to
consider $\Q \in \cup_{i=1}^{\kk}\mathcal{\D}_{\varepsilon,i}$ when maximizing $U\left(\Q\right)$ subject to
$\Q \in \mathcal{\D}_{\varepsilon,\n}$. Thus,
\begin{eqnarray}
\sup_{\Q \in \mathcal{\D}_{\varepsilon}}
U\left(\Q\right)&=& \sup_{\n \in \mathbb{N}}\left\{ \max_{\Q \in \mathcal{\D}_{\varepsilon,\n}}
U\left(\Q\right)\right\} \nonumber \\
&=& \sup_{\n \in \mathbb{N}} \left\{\max_{\Q \in \cup_{i=1}^{\kk}\mathcal{\D}_{\varepsilon,i}}
U\left(\Q\right)\right\} \nonumber \\
&=& \max_{\Q \in \cup_{i=1}^{\kk}\mathcal{\D}_{\varepsilon,i}}
U\left(\Q\right),
\end{eqnarray}
which finishes the proof.

\subsection{Proof of Lemma \ref{lemma:diff_priv_reps}}
\label{subsec:dp_rep}
\begin{claim}
\label{claim:zero_column} Let $\Q \in \mathcal{\D}_{\varepsilon,\n}$. If $\Q_{ij}=0$ for
some $j \in \{1,...,\n\}$ then $\Q_{ij}=0$ for all $i \in \{1,...,\kk\}$.
\end{claim}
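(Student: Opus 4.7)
The plan is to apply the local differential privacy constraint \eqref{eq:deflocalDP1} pointwise. Suppose $\Q_{ij} = \Q(y_j \mid x_i) = 0$ for some fixed indices $i$ and $j$. I want to conclude that $\Q_{i'j} = 0$ for every $i' \in \{1,\ldots,\kk\}$, so that the entire $j$-th column of $\Q$ vanishes.

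The key step is to specialize the supremum in \eqref{eq:deflocalDP1} to the singleton set $S = \{y_j\}$. On a discrete output alphabet, the constraint $\sup_{S \subseteq \cY} \Q(S|x)/\Q(S|x') \leq e^\varepsilon$ is equivalent to the pointwise inequalities $\Q(y|x) \leq e^\varepsilon \Q(y|x')$ for every $y \in \cY$ and all $x, x' \in \cX$, since any subset probability decomposes into a sum over singletons. Applying this to $y = y_j$ with $x = x_{i'}$ and $x' = x_i$ yields
\begin{equation*}
    \Q(y_j \mid x_{i'}) \;\leq\; e^\varepsilon\, \Q(y_j \mid x_i) \;=\; 0.
\end{equation*}
Non-negativity of conditional probabilities then forces $\Q(y_j \mid x_{i'}) = 0$, i.e.\ $\Q_{i'j} = 0$. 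Since $i'$ was arbitrary in $\{1,\ldots,\kk\}$, the $j$-th column is identically zero.

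There is no real obstacle here; the claim is an immediate structural consequence of the privacy constraint. The only subtlety is the convention used to interpret the ratio in \eqref{eq:deflocalDP1} when the denominator is zero: one takes $0/0 = 1$ and $a/0 = \infty$ for $a > 0$, so the bounded-ratio condition is equivalent to requiring $\Q(y|x) = 0$ whenever $\Q(y|x') = 0$. This convention is standard and implicit in the supremum formulation of local differential privacy, and it is precisely what drives the zero-column phenomenon asserted by the claim.
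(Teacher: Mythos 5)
Your proof is correct and follows essentially the same reasoning as the paper's: both reduce the local DP constraint to the pointwise inequality $\Q(y_j|x) \leq e^\varepsilon \Q(y_j|x')$ (the paper formalizes this reduction separately as Claim~\ref{claim:local_dp_discrete}) and observe that a zero entry then forces the whole column to zero. The paper phrases it as a proof by contradiction while you argue directly, but that is a cosmetic difference.
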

\begin{proof}
Assume that $\Q_{i_1j}=0$ and $\Q_{i_2j}\neq0$ for some $i_1,i_2
\in \{1,...,\kk\}$. It is obvious that $\q\left(\y_{j}|\x_{i_2}\right) \leq
\q\left(\y_{j}|\x_{i_1}\right)e^{\varepsilon}$ is not satisfied. Therefore, $\Q$ is not a locally differentially private mechanism.
\end{proof}

It is easy to check that any $\kk \times \n$ stochastic matrix $\Q=\s\B$, where $\B$ is a diagonal matrix with non-negative entries and $\s$ is a $\kk \times \n$ matrix with $\s_{ij} \in [1,e^\varepsilon]$, satisfies the local differential privacy constraints. Thus, $\Q \in \mathcal{\D}_{\varepsilon,\n}$. On the other hand, assume that $\Q \in \mathcal{\D}_{\varepsilon,\n}$. If $\Q_{ij}=0$ for some $j$ then by Claim \ref{claim:zero_column}
we have that $\Q_{ij}=0$ for all $i$ and therefore, we can set $\bb_j=0$ and
$\s_{ij}=1$ for all $i$. If $\Q_{ij} > 0$ then by Claim
\ref{claim:zero_column} we have that $\Q_{ij}>0$ for all $i$. In this case,
let $\bb_j=\min_{i}\Q_{ij}$ and observe that $\bb_j>0$ since $\Q_{ij}>0$ for
all $i$. Let $\s_{ij}=\Q_{ij}/\bb_i$, then it is clear (from the definition
of $\bb_i$) that $\s_{ij}\geq1$. On the other hand, from the differential
privacy constraints, we have that $ \Q_{ij} \leq \Q_{kj}e^{\varepsilon}
\mbox{ for all }k$ and thus, $\Q_{ij} \leq \min_{k}\Q_{kj} e^{\varepsilon}$
which proves that $\s_{ij}= \Q_{ij}/\min_{k}\Q_{kj} \leq e^{\varepsilon}$. This establishes that any  $\Q \in \mathcal{\D}_{\varepsilon,\n}$ can be written as $\Q=\s\B$, where $\B$ is a diagonal matrix with non-negative entries and $\s$ is a $\kk \times \n$ matrix with $\s_{ij} \in [1,e^\varepsilon]$.

\subsection{Proof of Lemma \ref{lemma:max_non_zero}}
\label{subsec:max_non_zero}
We start by showing that $\mathcal{\D}_{\varepsilon,\n}$ forms a closed
and bounded polytope in $\mathbb{R}_{+}^{\kk\n}$. We are interested in studying the corner points of the polytope formed by $\mathcal{\D}_{\varepsilon,\n}$ because
convex utility functions are maximized at one of these corner points whenever the space of privatization mechanisms is restricted to $\mathcal{\D}_{\varepsilon,\n}$.

\begin{claim}
\label{claim:local_dp_discrete} A privatization mechanism $\Q \in
\mathcal{\D}_{\varepsilon,\n}$ if and only if for all $\x,\x'\in\mathcal{\X}$
and all $\y \in \mathcal{\Y}$ we have that $\Q\left(\y|\x\right) \leq
\Q\left(\y|\x'\right)e^{\varepsilon}$.
\end{claim}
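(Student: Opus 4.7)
The plan is to prove the two directions of the equivalence separately; both are direct consequences of the discrete (finite) structure of the output alphabet $\mathcal{Y}$, which lets the subset-level condition and the point-level condition be related by a single application of finite additivity.

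For the forward direction ($\Rightarrow$), I would start by unpacking the definition of $\mathcal{D}_{\varepsilon,\n}$: if $\Q \in \mathcal{D}_{\varepsilon,\n}$, then $|\ln(\Q(S|\x)/\Q(S|\x'))| \leq \varepsilon$ for every $S \subseteq \mathcal{Y}$ and every $\x, \x' \in \mathcal{X}$. Since this holds symmetrically in $\x, \x'$, it is equivalent to the one-sided multiplicative inequality $\Q(S|\x) \leq e^{\varepsilon}\Q(S|\x')$ for all $\x, \x'$. Specializing to singleton sets $S = \{\y\}$ then immediately yields the desired pointwise bound $\Q(\y|\x) \leq e^{\varepsilon}\Q(\y|\x')$ for every $\y \in \mathcal{Y}$.

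For the backward direction ($\Leftarrow$), assume the pointwise inequality $\Q(\y|\x) \leq e^{\varepsilon}\Q(\y|\x')$ holds for all $\x, \x' \in \mathcal{X}$ and all $\y \in \mathcal{Y}$. For any subset $S \subseteq \mathcal{Y}$, I would apply finite additivity of the conditional probability measure:
\begin{equation*}
\Q(S|\x) \;=\; \sum_{\y \in S} \Q(\y|\x) \;\leq\; \sum_{\y \in S} e^{\varepsilon}\,\Q(\y|\x') \;=\; e^{\varepsilon}\,\Q(S|\x').
\end{equation*}
Because $\x, \x'$ are interchangeable in the hypothesis, the same argument gives $\Q(S|\x') \leq e^{\varepsilon}\Q(S|\x)$, so $|\ln(\Q(S|\x)/\Q(S|\x'))| \leq \varepsilon$, which places $\Q$ in $\mathcal{D}_{\varepsilon,\n}$.

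There is no real obstacle here: this claim is stated precisely because the authors need, in the subsequent proofs (for instance in the proof of Claim \ref{claim:zero_column} and in Lemma \ref{lemma:diff_priv_reps}), to manipulate privacy constraints at the level of individual outputs $\y$ rather than arbitrary measurable sets $S$. The only small point worth flagging is that the argument works uniformly even when some $\Q(\y|\x')$ vanish, since the bound is multiplicative rather than expressed as a ratio; in that degenerate case the pointwise inequality forces $\Q(\y|\x) = 0$ as well, and the subset-level inequality is preserved termwise.
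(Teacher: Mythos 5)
Your proof is correct and follows essentially the same route as the paper: specialize the subset condition to singletons for the forward direction, and sum the pointwise bounds over $S$ for the converse. The additional observations you make (that the symmetric absolute-value form reduces to the one-sided multiplicative inequality once one quantifies over all $\x,\x'$, and that the multiplicative formulation handles vanishing $\Q(\y|\x')$ without issue) are correct but are implicit in the paper's version.
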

\begin{proof}
 By definition, $\Q$ is differentially
private if for all $\x$, $\x' \in\mathcal{\X}$ and all $B\subseteq
\mathcal{\Y}$ we have that $\Q\left(B|\x\right) \leq
\Q\left(B|\x'\right)e^{\varepsilon}$. By choosing
$B=\{\y\}$ for some $\y \in \mathcal{\Y}$ the first direction
of the above lemma is proven. In order to prove the other direction, assume
that for all $\x,\x'\in\mathcal{\X}$ and all $\y \in \mathcal{\Y}$ we
have that $\Q\left(\y|\x\right) \leq
\Q\left(\y|\x'\right)e^{\varepsilon}$. Then for any $B \subseteq
\mathcal{\Y}$, the following holds
\begin{eqnarray}
 \sum_{\y\in B}\Q\left(\y|\x\right) &\leq& \sum_{\y\in B}\Q\left(\y|\x'\right)e^{\varepsilon} \\ \nonumber
\Leftrightarrow ~~\Q\left(B|\x\right) &\leq& \Q\left(B|\x'\right)e^{\varepsilon}.
\end{eqnarray}
\end{proof}

Let $\Q \in \mathcal{\D}_{\varepsilon,\n}$, then by Claim
\ref{claim:local_dp_discrete}, it is easy to see that $\Q$ must satisfy
$\n\kk(\kk-1)$ inequalities of the form $\Q\left(\y|\x\right) \leq
\Q\left(\y|\x'\right)e^{\varepsilon}$. These inequalities can be compactly
represented by
\begin{equation}
\tilde{A}\q\leq 0,
\end{equation}
where $\q=\left[\Q\left(\y_1|\x_1\right) ,...,\Q\left(\y_1|\x_{\kk}\right)
,....,\Q\left(\y_{\n}|\x_1\right) ,...,\Q\left(\y_{\n}|\x_{\kk}\right)
\right]^{T}$ and $\tilde{A}$ is a $\n\kk(\kk-1)\times \kk\n$ matrix that
contains all the local differential privacy linear constraints. Observe that there is a one-to-one mapping between $\Q$ and $\q$. Here is an
example for the case when $\kk=\n=2$
\begin{equation}
\underbrace{\left[
  \begin{array}{cccc}
    1 & -e^{\varepsilon} & 0 & 0 \\
    -e^{\varepsilon} & 1 & 0 & 0 \\
    0 & 0 & 1 & -e^{\varepsilon} \\
    0 & 0 & -e^{\varepsilon} & 1 \\
  \end{array}\right]}_\textrm{$\tilde{A}$}\left[
               \begin{array}{c}
                 \Q\left(\y_1|\x_1\right) \\
                 \Q\left(\y_1|\x_2\right) \\
                 \Q\left(\y_2|\x_1\right) \\
                 \Q\left(\y_2|\x_2\right) \\
               \end{array}
             \right]\leq 0.
\end{equation}
Moreover, since $\Q$ is a row stochastic matrix (a conditional distribution)
it satisfies $\Q\unity=\unity$, where $\unity$ represents the all ones vector
of appropriate dimensions. This condition can be rewritten as
\begin{equation}
B\q=\unity,
\end{equation}
where $B$ is a $\kk\times \kk\n$ binary matrix. For the case when $\kk=\n=2$,
we have that
\begin{equation}
\underbrace{\left[
  \begin{array}{cccc}
    1 & 0 & 1 & 0 \\
    0 & 1 & 0 & 1 \\
  \end{array}\right]}_\textrm{$B$}\left[
               \begin{array}{c}
                 \Q\left(\y_1|\x_1\right) \\
                 \Q\left(\y_1|\x_2\right) \\
                 \Q\left(\y_2|\x_1\right) \\
                 \Q\left(\y_2|\x_2\right) \\
               \end{array}
             \right]= \left[        \begin{array}{c}
                 1 \\
                 1 \\\end{array}\right].
\end{equation}
Finally, observe that $\Q\left(\y|\x\right)\geq0$ for all $\x \in \mathcal{\X}$ and
$\y \in \mathcal{\Y}$. These constraints can be incorporated as follows. Let
$A=\left[\tilde{A}^{T}\;,\; -I_{\n\kk}\right]^{T}$, where $I_{\n\kk}$ is the
$\n\kk\times \n\kk$ identity matrix, then $A\q\leq0$. To summarize, $\Q \in
\mathcal{\D}_{\varepsilon,\n}$ if and only if
\begin{eqnarray}
\label{local_dp_polytope}
A\q&\leq&0 \\ \nonumber
B\q&=&\unity.
\end{eqnarray}
Therefore, the set of all $\kk \times \n$, $\varepsilon$-locally differentially private mechanisms
$\mathcal{\D}_{\varepsilon,\n}$ forms a convex polytope in $\mathbb{R}_{+}^{\kk\n}$.

We now proceed to proving that if $\Q$ is a corner point of the polytope formed by $\mathcal{\D}_{\varepsilon,\n}$, then $\Q$ has at most $\kk$ non-zero columns. This claim is obvious for all $\kk \times \n$ privatization mechanisms with $\n \leq \kk$. Therefore, we restrict our attention to the case where $\n > \kk$. Let $A_j$ be the matrix including all the inequality constraints imposed on
the $j^{th}$ column of $\Q$. Observe that the rows of $A_j$ form a subset of
the rows of $A$, defined in (\ref{local_dp_polytope}), and recall that there
are $\kk(\kk-1)$ differential privacy and $\kk$ non-negativity inequality
constraints imposed on the $j^{th}$ column of $\Q$. Therefore, $A_j$ is a
$\kk^2\times\kk$ matrix and we have that $A_j\Q_j\leq0$, where $\Q_j$
represents the $j^{th}$ column of $\Q$. By Claim \ref{claim:zero_column}, we
know that $\Q_j$ is either equal to zero or contains non-zero entries.
\begin{claim}
\label{cl:all_zero_cols} In what follows, the term linearly independent
inequality constraints refers to linear independent rows of $A_j$.
\begin{itemize}
\item If $\Q_j=0$, then $\kk$ linearly independent inequality constraints
    are achieved with equality.
\item If $\Q_j \neq 0$, then at most $\kk-1$ linearly independent
    inequality constraints can be achieved with equality.
    \end{itemize}
\end{claim}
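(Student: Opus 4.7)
The plan is to handle the two bullets separately, using the explicit form of the rows of $A_j$. Recall that $A_j$ has two types of rows acting on $\Q_j\in\mathbb{R}^k$: the $k(k-1)$ differential-privacy rows of the form $e_i^T - e^\varepsilon e_{i'}^T$ (encoding $\Q(y_j|x_i) \leq e^\varepsilon \Q(y_j|x_{i'})$) and the $k$ non-negativity rows $-e_i^T$ (encoding $\Q(y_j|x_i) \geq 0$). A tight inequality means the corresponding row, applied to $\Q_j$, gives $0$.

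For the first bullet, when $\Q_j = 0$, all $k$ non-negativity constraints $-\Q(y_j|x_i)=0$ are tight simultaneously, and the corresponding rows $\{-e_i^T\}_{i=1}^{k}$ are obviously linearly independent in $\mathbb{R}^k$. So at least $k$ linearly independent tight constraints are achieved, which is all we need.

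For the second bullet, suppose $\Q_j \neq 0$. By Claim \ref{claim:zero_column}, every entry of $\Q_j$ is strictly positive, so none of the non-negativity constraints can be tight; only DP constraints of the form $\Q(y_j|x_i) - e^\varepsilon \Q(y_j|x_{i'}) = 0$ can be tight. Collect the rows of all such tight DP constraints into a matrix $M$ whose rows live in $\mathbb{R}^k$. The key observation is that $M\Q_j = 0$: every tight constraint holds with equality at $\Q_j$ by definition. Since $\Q_j$ is a nonzero vector of $\mathbb{R}^k$ lying in the null space of $M$, we have $\dim \ker M \geq 1$, and hence $\operatorname{rank}(M) \leq k-1$. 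The number of linearly independent rows of $M$ equals $\operatorname{rank}(M)$, so at most $k-1$ linearly independent inequality constraints are tight, as claimed.

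The argument is elementary once the setup is in place; there is no serious obstacle, only the bookkeeping of which constraints can be tight under the two cases. The main point worth being careful about is invoking Claim \ref{claim:zero_column} to rule out mixed cases where only some entries of $\Q_j$ vanish, which would otherwise complicate the row-structure analysis of $A_j$.
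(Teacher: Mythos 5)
Your proof is correct and takes essentially the same approach as the paper: the first bullet via the $k$ linearly independent non-negativity rows, the second bullet via the observation that a full set of $k$ linearly independent tight rows would force $\Q_j=0$. One minor note: invoking Claim~\ref{claim:zero_column} to rule out tight non-negativity constraints is logically unnecessary for the second bullet. The paper simply stacks \emph{all} tight constraints (of whatever type) into a matrix $\tilde A_j$ with $\tilde A_j \Q_j = 0$; since $\Q_j\neq 0$, this already gives $\operatorname{rank}(\tilde A_j)\leq k-1$, regardless of whether the tight rows come from DP or non-negativity constraints. Your extra step is harmless but slightly obscures that the rank bound follows purely from $\Q_j$ lying in the null space of the tight-constraint matrix.
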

\begin{proof}
In fact, the number of linearly independent inequality constraints (achieved
or not) cannot exceed $\kk$ because $A_j$ has a rank less than or equal to
$\kk$. If $\Q_j=0$, then the $\kk$ non-negativity inequality constraints are
    achieved with equality and it is easy to see that they are all linearly
    independent (in fact, they form an orthonormal basis to
    $\mathbb{R}^{\kk}$). This proves the first part of the claim. We now
    establish the second part of the claim by showing that if $\Q_j\neq0$, we cannot have
    $\kk$ linearly independent inequality constraints achieved with equality. Assume that $\Q_j \neq 0$ and
 let
$\tilde{A}_j$ be the matrix including the largest collection of linearly
independent rows of $A_j$ corresponding to the inequality constraints that
are achieved with equality. In other words, $\tilde{A}_j\Q_j=0$. If
$\tilde{A}_j$ contains $\kk$ rows, then its rank is equal to $\kk$. However,
this implies that $\Q_j=0$, a contradiction. Therefore, at most $\kk-1$
linearly independent inequality constraints can be achieved with equality
when $\Q_j \neq 0$.
\end{proof}

Suppose that $\Q$ is a corner point of $\mathcal{\D}_{\varepsilon,\n}$ and out of its
$\n$ columns, $\n_{>0}$ are non-zero and $\n_{=0}$ ($\n_{=0}=\n-\n_{>0}$) are
zero. Moreover, assume that the number of non-zero columns of $\Q$ is larger
than $\kk$ (i.e., $\n_{>0}>\kk$). In this case, from Claim
\ref{cl:all_zero_cols}, we can see that $\Q$ achieves at most $\n_{>0}(\kk-1)
+ (\n-\n_{>0})\kk$ linearly independent inequality constraints with equality.
Furthermore, at most $\kk$ additional linearly independent equality
constraints (linearly independent rows of the matrix $B$ defined in
(\ref{local_dp_polytope})) can be met by $\Q$. Therefore, the total number of
linearly independent constraints that $\Q$ achieves with equality is at most
$\n_{>0}(\kk-1) + (\n-\n_{>0})\kk+\kk = -\n_{>0} + (\n+1)\kk < \n\kk$, where the
last strict inequality follows from the fact that $\n_{>0}>\kk$. This implies
that $\Q$ cannot be a corner point of $\mathcal{\D}_{\varepsilon,\n}$. Therefore, any
corner point of $\mathcal{\D}_{\varepsilon,\n}$ must have at most $\kk$ non-zero columns.

%

\section{Proofs for Hypothesis Testing}
%
%
\subsection{Proof of Theorem \ref{thm:hypbin}}
Let $\T =\left\{\x: \po(\x)\geq \pt(\x)\right\}$. In other words, $\Po(\T) - \Pt(\T) = \max_{A \subseteq \mathcal{\X}} \Po(A) - \Pt(A)$. Recall that for a given $\Po$ and $\Pt$, the binary mechanism is defined as a staircase mechanism with only two outputs
$y\in\{0,1\}$ satisfying
\begin{eqnarray}
\Q(0|x) \,=\, \left\{
\begin{array}{rl}
	\frac{e^\varepsilon}{1+e^\varepsilon}& \text{ if } \PP_0(x)\geq \PP_1(x)\;,\\
	\frac{1}{1+e^\varepsilon}& \text{ if } \PP_0(x)< \PP_1(x)\;.\\
\end{array}
\right.\;\;\;
\Q(1|x) \,=\, \left\{
\begin{array}{rl}
	\frac{e^\varepsilon}{1+e^\varepsilon}& \text{ if } \PP_0(x)< \PP_1(x)\;,\\
	\frac{1}{1+e^\varepsilon}& \text{ if } \PP_0(x)\geq \PP_1(x)\;. \\
\end{array}
\right.
\label{eq:defbin2}
\end{eqnarray}

\begin{lemma}
\label{lemma:hyp_test_region_bin}
	 For any pair of distributions
$\PP_0$ and $\PP_1$,
	there exists a positive $\varepsilon^*$ that depends on  $\PP_0$ and $\PP_1$ such that
	for all $\y \in \mathcal{\Y}$, all $\n \in \mathbb{N}$, and all $\Q \in \mathcal{\D}_{\varepsilon,\n}$ with $\varepsilon\leq \varepsilon^*$, we have that
\begin{equation}
\frac{\left(e^{\varepsilon}-1\right)\Po\left(\T^c\right)+1}{\left(e^{\varepsilon}-1\right)\Pt\left(\T^c\right)+1} \leq \frac{\mo(\y)}{\mt(\y)}\leq \frac{\left(e^{\varepsilon}-1\right)\Po\left(\T\right)+1}{\left(e^{\varepsilon}-1\right)\Pt\left(\T\right)+1}.
\end{equation}
Moreover, the above upper and lower bounds are achieved by the binary
mechanism given in (\ref{eq:defbin2}).
\end{lemma}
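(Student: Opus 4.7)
The plan is to reduce the problem to a finite combinatorial optimization over subsets of $\cX$. First, I would invoke Lemma \ref{lemma:diff_priv_reps} to write an arbitrary $Q\in\mathcal{D}_{\varepsilon,\ell}$ as $Q=S\Theta$, so that for each fixed output $y$ we have $Q(y|x)=\theta(y)\,S(y|x)$ with $S(y|x)\in[1,e^\varepsilon]$. Substituting this into \eqref{eq:defM2} and cancelling the common factor $\theta(y)$ yields
\begin{equation*}
\frac{\mo(y)}{\mt(y)} \;=\; \frac{\sum_{x\in\cX} S(y|x)\,\po(x)}{\sum_{x\in\cX} S(y|x)\,\pt(x)}.
\end{equation*}

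Next, I would treat this as a linear-fractional program in the variables $s_x := S(y|x)\in[1,e^\varepsilon]$ over the hypercube $[1,e^\varepsilon]^k$. Since both the numerator and denominator are linear in $(s_x)$ and the denominator is strictly positive, the extrema are attained at vertices of the hypercube; equivalently, there exists a subset $A\subseteq\cX$ with $s_x=e^\varepsilon$ for $x\in A$ and $s_x=1$ for $x\in A^c$. The ratio then simplifies to $\bigl((e^\varepsilon-1)\po(A)+1\bigr)\big/\bigl((e^\varepsilon-1)\pt(A)+1\bigr)$, so proving the two-sided bound reduces to showing that, for all $A\subseteq\cX$ and all sufficiently small $\varepsilon$,
\begin{equation*}
\frac{(e^\varepsilon-1)\po(T^c)+1}{(e^\varepsilon-1)\pt(T^c)+1} \;\leq\; \frac{(e^\varepsilon-1)\po(A)+1}{(e^\varepsilon-1)\pt(A)+1} \;\leq\; \frac{(e^\varepsilon-1)\po(T)+1}{(e^\varepsilon-1)\pt(T)+1}.
\end{equation*}

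To establish the upper bound, setting $c=e^\varepsilon-1$ and cross-multiplying rearranges to
\begin{equation*}
\bigl(d_T - d_A\bigr) \;+\; c\bigl(\po(T)\pt(A)-\po(A)\pt(T)\bigr) \;\geq\; 0,
\end{equation*}
where $d_A := \po(A)-\pt(A)$. By the very definition $T=\{x:\po(x)\geq\pt(x)\}$, the quantity $d_A$ is maximized by $A=T$, so $d_T-d_A\geq 0$ for every $A$, with strict inequality unless $A$ differs from $T$ only on the indifference set $\{\po=\pt\}$ (in which case $d_T-d_A$ and the $c$-coefficient both vanish, and equality holds trivially). Since $\cX$ is finite and the collection of nonequivalent subsets is finite, taking $\varepsilon^*$ to be the minimum over all $A$ with $d_A<d_T$ of $\log\bigl(1+(d_T-d_A)/(\po(A)\pt(T)-\po(T)\pt(A))_+\bigr)$ produces a strictly positive threshold depending only on $\po,\pt$; for all $\varepsilon\leq\varepsilon^*$ the first term dominates and the inequality is verified. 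The lower bound follows by the symmetric argument with the roles of $\po$ and $\pt$ swapped, which identifies $T^c$ as the minimizing subset. Finally, a direct substitution into \eqref{eq:defbin2} shows that the binary mechanism corresponds to $A=T$ on the output $y=0$ and to $A=T^c$ on $y=1$, so both extremes are attained.

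The main obstacle will be handling ties in $d_A$ cleanly in the definition of $\varepsilon^*$; this requires recognizing that when $d_A=d_T$ the symmetric difference $A\triangle T$ must lie in $\{\po=\pt\}$, which forces the quadratic term to vanish as well and eliminates the apparent singularity in the threshold formula.
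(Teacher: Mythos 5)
Your overall strategy---representing $\Q = \s\B$ via Lemma~\ref{lemma:diff_priv_reps}, cancelling the scalar $\theta(\y)$, and reducing the linear-fractional ratio to vertices of $[1,e^\varepsilon]^k$---matches the paper's exactly; the paper phrases the vertex reduction as linearity of the cross-multiplied inequality in $\s_j$ rather than as a linear-fractional program, but the content is the same, and your rearranged upper-bound inequality coincides with the paper's. Your symmetrization of the lower bound (take reciprocals and exchange $\po \leftrightarrow \pt$, identifying $\T^c$) is a pleasant simplification of the paper's separate direct expansion.

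The genuine gap is your treatment of ties. Suppose $A$ differs from $\T$ only on the indifference set $I := \{\x : \po(\x) = \pt(\x)\}$. Since $I\subseteq\T$, such an $A$ necessarily satisfies $A\subsetneq\T$ with $\T\setminus A\subseteq I$; write $p := \pt(\T\setminus A) = \po(\T\setminus A) > 0$. Then indeed $\bigl(\po(\T)-\pt(\T)\bigr) - \bigl(\po(A)-\pt(A)\bigr) = 0$, but a direct calculation gives
\begin{equation*}
\po(\T)\pt(A) - \po(A)\pt(\T) \;=\; -p\,\bigl(\po(\T)-\pt(\T)\bigr),
\end{equation*}
which is \emph{strictly negative} whenever $\po\neq\pt$. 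Your rearranged inequality therefore reduces to $-c\,p\,\bigl(\po(\T)-\pt(\T)\bigr) \geq 0$, which fails for every $\varepsilon > 0$. So at such a vertex $A$ the upper bound is violated rather than holding with trivial equality; in fact $\{\x:\po(\x)>\pt(\x)\}$, not $\T=\{\x:\po(\x)\geq\pt(\x)\}$, maximizes the ratio among all sets achieving the maximum of $\po(A)-\pt(A)$. Your threshold formula excludes tied $A$ from the minimum on the strength of this false cancellation, and so returns a positive $\varepsilon^*$ where none exists. (The paper's own proof makes the same silent slip by asserting $\po(\T)-\pt(\T) > \po(\T_j)-\pt(\T_j)$ strictly for all $\T_j\neq\T$ without justification; the lemma as stated implicitly requires $\po(\x)\neq\pt(\x)$ for all $\x$, or equivalently that $\T$ be redefined with strict inequality.)
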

Observe that because $\Po\left(\T\right)\geq \Pt\left(\T\right)$ and
$\Po\left(\T^c\right)\leq \Pt\left(\T^c\right)$, the direction of the above
inequalities makes sense.

Let $\tilde{M}_\nu$ be the induced marginal for the binary mechanism when $\PP_\nu$ is the original distribution.
Following the analysis techniques developed in \cite{KOV14b}, we define
hypothesis testing region $R(\tilde{M}_0,\tilde{M}_1)$ as the convex hull of all achievable probabilities
of missed detection and false alarm, when testing whether $\nu=0$ or $\nu=1$ based on $Y_{\rm bin}$
distributed as $\tilde{\MM}_{\nu}$:
\begin{eqnarray*}
	R(\tilde{M}_0,\tilde{M}_1) \equiv {\rm conv}\Big( \big\{ (\tilde{M}_1(S), \tilde{M}_0(S^c))\,:\, \forall S\subseteq \cY \big\} \Big)\;,
\end{eqnarray*}
where $S\in\cY$ is the accept region for hypothesis $\nu=0$.
For the binary mechanism, this ends up being a very simple triangular region.
The slopes defining the two sides of the triangular region are:
$- \max_S \tilde{M}_0(S)/\tilde{M}_1(S) = -((e^\varepsilon-1)\PP_0(T)+1)/((e^\varepsilon-1)\PP_1(T)+1)$ and
$- \min_S \tilde{M}_0(S^c)/\tilde{M}_1(S^c) = -((e^\varepsilon-1)\PP_0(T^c)+1)/((e^\varepsilon-1)\PP_1(T^c)+1)$.

  \begin{figure}[h]
  	\begin{center}
     \includegraphics[width=.6\textwidth]{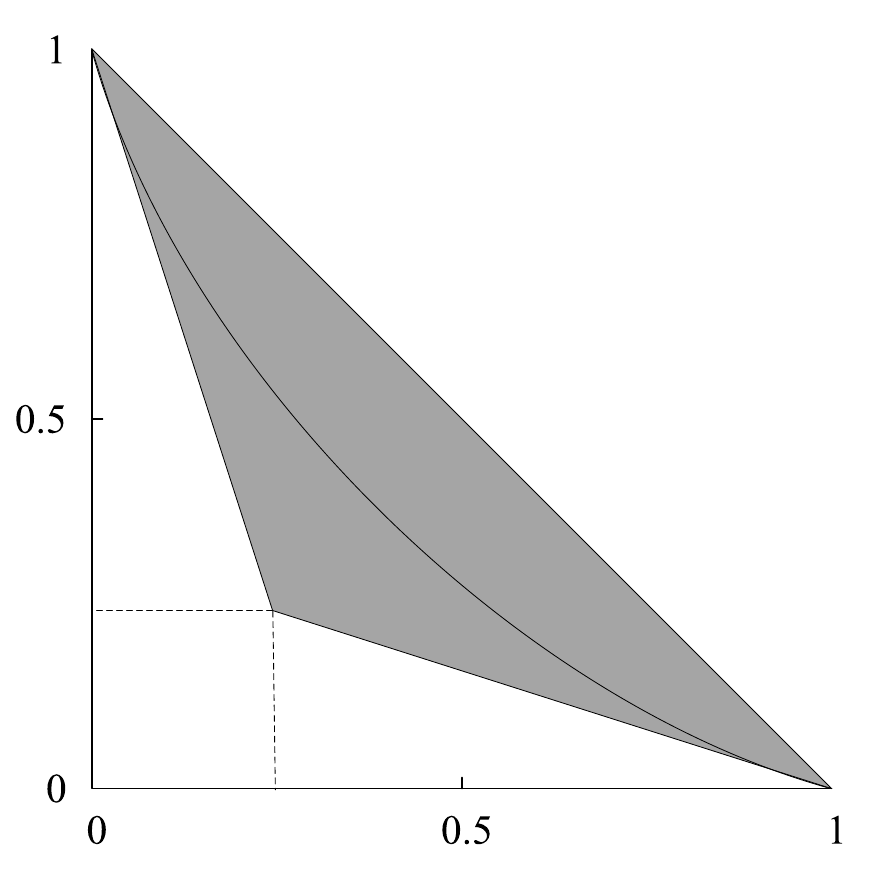}
     \put(-255,275){$\tilde{M}_0(S^c)$}
     \put(-5,30){$\tilde{M}_1(S)$}
     \put(-190,95){\small$R(\tilde{M}_0,\tilde{M}_1)$}
     \put(-130,95){\small$R({M}_0,{M}_1)$}
     \end{center}
     \caption{Hypothesis testing regions for two mechanisms.}
     \label{fig:region}
  \end{figure}%

For any other mechanism satisfying the $\varepsilon$-local differential privacy for
$\varepsilon\leq \varepsilon^*$,
the above lemma implies that for any choice of the rejection region $S$,
the slopes  satisfy $ -{M}_0(S)/ {M}_1(S) \geq -((e^\varepsilon-1)\PP_0(T)+1)/((e^\varepsilon-1)\PP_1(T)+1)$ and
$ -{M}_0(S^c)/ {M}_1(S^c) \leq -((e^\varepsilon-1)\PP_0(T^c)+1)/((e^\varepsilon-1)\PP_1(T^c)+1)$.
In the hypothesis testing region, this implies that
\begin{eqnarray*}
	R( {M}_0, {M}_1) \subseteq R(\tilde{M}_0,\tilde{M}_1)\;,
\end{eqnarray*}
as in the following Figure \ref{fig:region}.

From Theorem 2.5 of \cite{KOV14b}, we know that
this implies a certain Markov property.
Precisely,
let $Y_{\rm bin}$ denote the output of the binary mechanism, and
$Y_{\rm dp}$ denote the output of any $\varepsilon$-local differentially private mechanism.
Then, it follows that there exists a coupling of $Y_{\rm bin}$ and $Y_{\rm dp}$ such that
they form a Markov chain: $\nu$--$Y_{\rm bin}$--$Y_{\rm dp}$, where
$\nu$ is the hypothesis on $\PP_\nu$ whether the data was generated from $\nu=0$ or $\nu=1$.
Then, it follows from the data processing inequality of $f$-divergences that
\begin{eqnarray*}
	D_f(\tilde{M}_0,\tilde{M}_1) \geq D_f({M}_0, {M}_1)\;.
\end{eqnarray*}
It follows that there is no algorithm with larger $f$-divergence than the binary mechanism.

\subsection{Proof of Lemma \ref{lemma:hyp_test_region_bin}}
We start by showing that the binary mechanism achieves the
upper and lower bounds presented in the statement of the lemma. Let $\Mo^B$ and $\Mt^B$ denote the induced marginals under the binary mechanism given in (\ref{eq:defbin2}). For $\nu \in \{0,1\}$, we have that
\begin{eqnarray}
\mm^B_{\nu}\left(0\right)&=&\sum_{\x \in \mathcal{\X}}\po\left(\x\right)\Q(0|x)= \frac{1}{e^\varepsilon+1}\left(\left(e^{\varepsilon}-1\right)\PP_\nu\left(\T\right)+1\right) \nonumber \\
\mm^B_{\nu}\left(1\right)&=&\sum_{\x \in \mathcal{\X}}\po\left(\x\right)\Q(1|x)=\frac{1}{e^\varepsilon+1}\left(\left(e^{\varepsilon}-1\right)\PP_\nu\left(\T^c\right)+1\right).
\end{eqnarray}
Computing $\mo^B\left(0\right)/\mt^B\left(0\right)$ and $\mo^B\left(1\right)/\mt^B\left(1\right)$ we see that the binary mechanism achieves the upper and lower bounds for all values of $\varepsilon$.

As in Lemma
\ref{lemma:diff_priv_reps}, for any $\n \in \mathbb{N}$, $\Q \in
\mathcal{\D}_{\varepsilon,\n}$ can be represented as $\Q=\s\B$, where $\s \in [1,e^\varepsilon]^{\kk\times\n}$
and $\B=\mbox{diag}\left(\bb_1,...,\bb_{\n}\right)$ with its diagonal entries in
$\mathbb{R}_{+}$. We now show that for any $\n \in \mathbb{N}$ and any $\Q \in
\mathcal{\D}_{\varepsilon,\n}$, the following upper bound
holds
\begin{equation}
\frac{\mo(\y)}{\mt(\y)}= \frac{\sum_{i\in[\kk]}\po\left(x_i\right)\s_{ij}}{\sum_{i\in[\kk]}\pt\left(x_i\right)\s_{ij}}\leq \frac{\left(e^{\varepsilon}-1\right)\Po\left(\T\right)+1}{\left(e^{\varepsilon}-1\right)\Pt\left(\T\right)+1},
\end{equation}
for all $\y \in \mathcal{\Y}$ and sufficiently small $\varepsilon$. The above expression can be
alternatively written as
\begin{eqnarray}
\label{eq: hypo_test_priv}
\left(e^{\varepsilon}-1\right)\left(\Po\left(\T\right)-\Pt\left(\T\right)\right)+\left(e^{\varepsilon}-1\right)\sum_{i\in[\kk]}\left(\s_{ij}-1\right)\left(\Po\left(\T\right)\pt\left(x_i\right)-\Pt\left(\T\right)\po\left(x_i\right)\right) \nonumber \\
-\sum_{i\in[\kk]}\left(\s_{ij}-1\right)\left(\po\left(x_i\right)-\pt\left(x_i\right)\right)\geq 0,
\end{eqnarray}
where $\s_j \in [1,e^{\varepsilon}]^{\kk}$. Equation (\ref{eq:
hypo_test_priv}) is linear in $\s_{j}$ and is therefore
minimized (and maximized) at the corner points of $\ [1,e^{\varepsilon}]^{\kk
\times \n}$, a cube in $\mathbb{R}^{\kk\times\n}_{+}$. The corner points of
this cube are given by the staircase patterns: $\s_j \in
\{1,e^{\varepsilon}\}^{\kk}$. To begin with, let $\s_j$ be a
staircase pattern with $\T_j=\{x_i: \s_{ij}=e^{\varepsilon}\}
\neq \T$. Then Equation (\ref{eq: hypo_test_priv}) is equivalent to
\begin{eqnarray}
\label{eq: hypo_test_priv2}
\left(e^{\varepsilon}-1\right)\left\{\left(\Po\left(\T\right)-\Pt\left(\T\right)\right)-\left(\Po\left(\T_j\right)-\Pt\left(\T_j\right)\right)\right\}~~~~~~~~~~~~~~~~~~~~~~~~~~~~~~~~~~~~~~~~~~~~~~ \nonumber \\
+\left(e^{\varepsilon}-1\right)^2\left\{\Po\left(\T\right)\Pt\left(\T_j\right)-\Pt\left(\T\right)\Po\left(\T_j\right)\right\}\geq 0.
\end{eqnarray}
Using the fact that $\Po\left(\T\right)-\Pt\left(\T\right)
> \Po\left(\T_j\right)-\Pt\left(\T_j\right)$ for all
$\T_j\neq\T$, the inequality in (\ref{eq: hypo_test_priv}) holds true for all $\varepsilon$ whenever $\Po\left(\T\right)\Pt\left(\T_j\right) \geq \Pt\left(\T\right)\Po\left(\T_j\right)$. If $\Po\left(\T\right)\Pt\left(\T_j\right) < \Pt\left(\T\right)\Po\left(\T_j\right)$, then the inequality in (\ref{eq: hypo_test_priv}) holds true for all $\varepsilon \leq \varepsilon(j)$, where
\begin{equation}
 \varepsilon(j) = \log\left(\frac{\left(\Po\left(\T\right)-\Pt\left(\T\right)\right)-\left(\Po\left(\T_j\right)-\Pt\left(\T_j\right)\right)}{\Pt\left(\T\right)\Po\left(\T_j\right)-\Po\left(\T\right)\Pt\left(\T_j\right)}+1\right)>0.
\end{equation}
On the other hand, it is easy to verify that when $\T_j=\T$, we
have that
\begin{eqnarray}
\left(e^{\varepsilon}-1\right)\left\{\left(\Po\left(\T\right)-\Pt\left(\T\right)\right)-\left(\Po\left(\T_j\right)-\Pt\left(\T_j\right)\right) \right. ~~~~~~~~~~~~~~~~~~~~~~~~~~~~~~~~~~~~~~~~~~~~~~\nonumber \\
\left.+\left(e^{\varepsilon}-1\right)\left(\Po\left(\T\right)\Pt\left(\T_j\right)-\Pt\left(\T\right)\Po\left(\T_j\right)\right)\right\}= 0,
\end{eqnarray}
for all $\varepsilon$. In this case, set $\varepsilon(j)=0$ and $\varepsilon_1 = \min_{j \in [2^\kk]} \varepsilon(j)$. Therefore, for any $\n \in \mathbb{N}$ and any $\Q \in
\mathcal{\D}_{\varepsilon,\n}$, the upper bound in the statement of the lemma holds for all $\varepsilon \leq \varepsilon_1$.

We now show that for for any $\n \in \mathbb{N}$ and any $\Q \in
\mathcal{\D}_{\varepsilon,\n}$, the following lower bound
holds
\begin{equation}
\frac{\left(e^{\varepsilon}-1\right)\Po\left(\T^c\right)+1}{\left(e^{\varepsilon}-1\right)\Pt\left(\T^c\right)+1}\leq \frac{\mo(\y)}{\mt(\y)}= \frac{\sum_{i\in[\kk]}\po\left(x_i\right)\s_{ij}}{\sum_{i\in[\kk]}\pt\left(x_i\right)\s_{ij}},
\end{equation}
for all $\y \in \mathcal{\Y}$ and sufficiently small $\varepsilon$. The above expression can be
alternatively written as
\begin{eqnarray}
\label{eq: hypo_test_priv_low}
\left(e^{\varepsilon}-1\right)\left(\Po\left(\T\right)-\Pt\left(\T\right)\right)+\left(e^{\varepsilon}-1\right)\sum_{i\in[\kk]}\left(\s_{ij}-1\right)\left(\Po\left(\T\right)\pt\left(x_i\right)-\Pt\left(\T\right)\po\left(x_i\right)\right) \nonumber \\
+e^{\varepsilon}\sum_{i\in[\kk]}\left(\s_{ij}-1\right)\left(\po\left(x_i\right)-\pt\left(x_i\right)\right)\geq 0,
\end{eqnarray}
where $\s_j \in [1,e^{\varepsilon}]^{\kk}$. Equation (\ref{eq:
hypo_test_priv_low}) is linear in $\s_{j}$ and is
therefore minimized at the corner points of $\ [1,e^{\varepsilon}]^{\kk}$, a cube in $\mathbb{R}^{\kk}_{+}$. The corner points of
this cube are given by staircase patterns: $\s_j \in
\{1,e^{\varepsilon}\}^{\kk}$. To begin with, let $\s_j$ be a
staircase pattern with $\T_j=\{x_i: \s_{ij}=e^{\varepsilon}\}
\neq \T^c$, then Equation (\ref{eq: hypo_test_priv_low}) is equivalent to
\begin{eqnarray}
\label{eq: small_eps_low_ boud}
\left(e^{\varepsilon}-1\right)\left\{\left(\Po\left(\T\right)-\Pt\left(\T\right)\right)+e^{\varepsilon}\left(\Po\left(\T_j\right)-\Pt\left(\T_j\right)\right)\right\} ~~~~~~~~~~~~~~~~~~~~~~~~~~~~~~~~~~~~~~~~~~~~~~\nonumber \\
+\left(e^{\varepsilon}-1\right)^2\left\{\Po\left(\T\right)\Pt\left(\T_j\right)-\Pt\left(\T\right)\Po\left(\T_j\right)\right\}\geq 0.
\end{eqnarray}
Using the fact that $\Po\left(\T\right)-\Pt\left(\T\right)
> \Pt\left(\T_j\right)-\Po\left(\T_j\right)$ for all
$\T_j\neq\T^c$, then for sufficiently small $\varepsilon$, Equation (\ref{eq: hypo_test_priv_low}) can be written as
\begin{equation}
\varepsilon\left\{\left(\Po\left(\T\right)-\Pt\left(\T\right)\right)-\left(\Pt\left(\T_j\right)-\Po\left(\T_j\right)\right)\right\}+O\left(\varepsilon^2\right)> 0.
\end{equation}
This proves that there exists a positive $\varepsilon(j)$ such that the left hand side of Equation (\ref{eq: small_eps_low_
boud}) is positive for all $\varepsilon \leq \varepsilon(j)$. On the other hand, it is easy to verify that when
$\T_j=\T^c$, we have that
\begin{eqnarray}
\left(e^{\varepsilon}-1\right)\left\{\left(\Po\left(\T\right)-\Pt\left(\T\right)\right)+e^{\varepsilon}\left(\Po\left(\T_j\right)-\Pt\left(\T_j\right)\right) \right. ~~~~~~~~~~~~~~~~~~~~~~~~~~~~~~~~~~~~~~~~~~~~~~\nonumber \\
\left. +\left(e^{\varepsilon}-1\right)\left(\Po\left(\T\right)\Pt\left(\T_j\right)-\Pt\left(\T\right)\Po\left(\T_j\right)\right)\right\}= 0,
\end{eqnarray}
for all $\varepsilon$. In this case, let $\varepsilon(j)=0$ and let $\varepsilon_2 = \min_{j \in [2^\kk]} \varepsilon(j)$. Therefore, for any $\n \in \mathbb{N}$ and any $\Q \in
\mathcal{\D}_{\varepsilon,\n}$, the lower bound in the statement of the lemma holds for all $\varepsilon \leq \varepsilon_2$. To conclude, let $\varepsilon^* = \min(\varepsilon_1,\varepsilon_2)$. Then both, the upper and lower bounds,
hold for all $\varepsilon \leq \varepsilon^*$.

%
%
\subsection{Proof of Theorem \ref{thm:hyptv}}
The total variation (TV) distance $\|\Mo-\Mt\|_{\rm TV}$ is a special case of
$f$-divergence $D_f(\Mo||\Mt)$ with $f(x)=\frac{1}{2}|x-1|$. Therefore, by Theorem
\ref{thm:lp}, we have that
\begin{equation}
\label{eq:TV_lp}
\begin{aligned}
\max_{\Q \in
\mathcal{\D}_{\varepsilon}} \big\|\Mo-\Mt\big\|_{\rm TV} & = & \underset{\bb}{\text{maximize}} & & \mu^{T}\bb \\
& & \text{subject to} & & \s^{(\kk)}\bb = \unity\\
& & & &  \bb \geq 0,
\end{aligned}
\end{equation}
where $ \mu_j = \mu\left(\s^{(\kk)}_j\right)=\frac{1}{2}\big|\sum_{i \in [\kk]}\left(\po(\x_i)-\pt(\x_i)\right)\s^{(\kk)}_{ij}\big|$ for $j \in \{1, \ldots,2^\kk\}$ and $\s^{(\kk)}$ is the $\kk \times 2^\kk$ staircase pattern matrix given in
Definition \ref{def:staircase_matrix}.

The polytope given by $\s^{(\kk)}\bb = \unity$ and $\bb \geq 0$ is a closed and bounded one. Thus,
there is no duality gap and solving the above linear program is
equivalent to solving its dual
\begin{equation}
\label{eq:dual_TV_lp}
\begin{aligned}
& \underset{\dd}{\text{minimize}} & & \unity^{T}\dd \\
& \text{subject to} & & {\s^{(\kk)}}^{T}\dd \geq \mu.
\end{aligned}
\end{equation}
Note that any satisfiable solution $\dd^*$ to (\ref{eq:dual_TV_lp}) provides
an upper bound to (\ref{eq:TV_lp}) since $\max \mu^{T}\bb = \min
\unity^{T}\dd \leq \unity^{T}\dd^*$. Let $\T =\left\{\x: \po(\x)\geq
\pt(\x)\right\}$ and $\T_j =\{\x_i: \s^{(\kk)}_{ij}=e^\varepsilon\}$ for $j
\in [2^\kk]$. Consider
the following choice of dual variable
\begin{equation}
\dd^*_i = \frac{1}{2}\frac{e^\varepsilon -1}{e^\varepsilon+1}\big|\po(\x_i)-\pt(\x_i)\big|,
\end{equation}
for $i \in [\kk]$. Observe that
\begin{eqnarray}
\label{eq:upper_bound_TV}
\unity^{T}\dd^*&=& \frac{1}{2}\frac{e^\varepsilon -1}{e^\varepsilon+1}\sum_{i\in [\kk]}\big|\po(\x_i)-\pt(\x_i)\big| \nonumber \\
&=&\frac{1}{2}\frac{e^\varepsilon -1}{e^\varepsilon+1}\big\|\Po-\Pt\big\|_{1} \nonumber \\
&=&\frac{e^\varepsilon -1}{e^\varepsilon+1}\big\|\Po-\Pt\big\|_{\rm TV}.
\end{eqnarray}
We claim that $\dd^*$ is a feasible dual variable for all values of
$\varepsilon$. In order to prove that $\dd^*$ is a feasible dual variable, we
show that ${\s^{(\kk)}}^{T}_j\dd^* -\mu_j \geq 0 $ for all $j
\in [2^\kk]$ and all $\varepsilon$. For all $j \in [2^\kk]$, we have that
 \begin{eqnarray}
 g_j &=& 2\left({\s^{(\kk)}}^{T}_j\dd^* -  \mu_j\right) \nonumber \\
 &=& \frac{e^{\varepsilon}-1}{e^{\varepsilon}+1}\sum_{i \in [\kk]}\left|\po(\x_i)-\pt(\x_i)\right|\s^{(\kk)}_{ij} -\left|\sum_{i \in [\kk]}\left(\po(\x_i)-\pt(\x_i)\right)\s^{(\kk)}_{ij}\right| \nonumber \\
&=&\frac{e^{\varepsilon}-1}{e^{\varepsilon}+1}\left\{ \sum_{\x_i\in \T}\left(\po(\x_i)-\pt(\x_i)\right)\s^{(\kk)}_{ij} + \sum_{\x_i\in \T^c}\left(\pt(\x_i)-\po(\x_i)\right)\s^{(\kk)}_{ij} \right\} \nonumber \\
&&~~~-\left|\sum_{\x_i\in \T}\left(\po(\x_i)-\pt(\x_i)\right)\s^{(\kk)}_{ij}-\sum_{\x_i\in \T^c}\left(\pt(\x_i)-\po(\x_i)\right)\s^{(\kk)}_{ij}\right|.
\label{eq:feasibility}
\end{eqnarray}
Notice that we have arranged the equation such that all the summands are
non-negative. Without loss of generality, we will assume that
\begin{equation*}
\sum_{\x_i\in
\T}\left(\po(\x_i)-\pt(\x_i)\right)\s^{(\kk)}_{ij} \geq \sum_{\x_i\in
\T^c}\left(\pt(\x_i)-\po(\x_i)\right)\s^{(\kk)}_{ij}.
\end{equation*}
From the equality $\sum_{\x_i\in
\T}\left(\po(\x_i)-\pt(\x_i)\right) = \sum_{\x_i\in
\T^c}\left(\pt(\x_i)-\po(\x_i)\right)$ and the fact
that $\s^{(\kk)}_{ij} \in \{1,e^\varepsilon\}$ for all $i$ and $j$, it follows that
\begin{eqnarray}
	e^{-\varepsilon} \sum_{\x_i\in \T}\left(\po(\x_i)-\pt(\x_i)\right)\s^{(\kk)}_{ij} \leq \sum_{\x_i\in \T^c}\left(\pt(\x_i)-\po(\x_i)\right)\s^{(\kk)}_{ij}\;.
	\label{eq:lb}
\end{eqnarray}
This is true because the right-hand side is minimized when the $\s^{(\kk)}_{ij}$'s
for $\x_i\in \T^c$ are all equal to 1 and the left-hand side is maximized when
the $\s^{(\kk)}_{ij}$'s for $\x_i\in \T$ are all equal to $e^\varepsilon$. Now, \eqref{eq:feasibility}
can be written as
\begin{eqnarray*}
g_j &=& \frac{1}{e^{\varepsilon}+1}\left\{ -2 \sum_{\x_i\in \T}\left(\po(\x_i)-\pt(\x_i)\right)\s^{(\kk)}_{ij} + 2e^\varepsilon \sum_{\x_i\in \T^c}\left(\pt(\x_i)-\po(\x_i)\right)\s^{(\kk)}_{ij} \right\} \\
&\geq& 0 \;,
\end{eqnarray*}
where the last inequality follows from \eqref{eq:lb}.

This establishes the satisfiability of $\dd^*$ for
all $\varepsilon$ which, in turn, shows that (\ref{eq:upper_bound_TV}) is indeed an upper bound to the primal problem. It remains to show that
this upper bound can be achieved via the binary mechanism. To this
extent, recall that for a given $\Po$ and $\Pt$, the binary mechanism is defined as a staircase mechanism with only two outputs
$y\in\{0,1\}$ satisfying
\begin{eqnarray}
\Q(0|x) \,=\, \left\{
\begin{array}{rl}
	\frac{e^\varepsilon}{1+e^\varepsilon}& \text{ if } \po(x)\geq \pt(x)\;,\\
	\frac{1}{1+e^\varepsilon}& \text{ if } \po(x)< \pt(x)\;.\\
\end{array}
\right.\;\;\;
\Q(1|x) \,=\, \left\{
\begin{array}{rl}
	\frac{e^\varepsilon}{1+e^\varepsilon}& \text{ if } \po(x)< \pt(x)\;,\\
	\frac{1}{1+e^\varepsilon}& \text{ if } \po(x)\geq \pt(x)\;. \\
\end{array}
\right.
\label{eq:defbin3}
\end{eqnarray}
Computing the TV distance between $\Mo$ and $\Mt$ under (\ref{eq:defbin3}), we get that
\begin{equation}
 \big\|\Mo-\Mt\big\|_{\rm TV} =\frac{e^\varepsilon -1}{e^\varepsilon+1}\big\|\Po-\Pt\big\|_{\rm TV}.
\end{equation}
Hence, the binary mechanism in (\ref{eq:defbin3}) achieves the upper bound in (\ref{eq:upper_bound_TV}). This proves the optimality of the binary mechanism for all $\varepsilon$.

%
%
\subsection{Proof of Theorem \ref{thm:hyprr}}
The Kullback-Leibler (KL) divergence $\Dkl(\Mo||\Mt)$ is a special $f$-divergence $D_f(\Mo||\Mt)$ with $f(x)=x\log x$. Therefore, by Theorem
\ref{thm:lp}, we have that
\begin{equation}
\label{eq:KL_lp}
\begin{aligned}
\max_{\Q \in
\mathcal{\D}_{\varepsilon}} \Dkl(\Mo||\Mt) & = & \underset{\bb}{\text{maximize}} & & \mu^{T}\bb \\
& & \text{subject to} & & \s^{(\kk)}\bb = \unity\\
& & & &  \bb \geq 0,
\end{aligned}
\end{equation}
where $ \mu_j = \mu\left(\s^{(\kk)}_j\right)=\sum_{i \in [\kk]}\po(\x_i)\s^{(\kk)}_{ij}\log\left(\frac{\sum_{i \in [\kk]}\po(\x_i)\s^{(\kk)}_{ij}}{\sum_{i \in [\kk]}\pt(\x_i)\s^{(\kk)}_{ij}}\right)$ for $j \in \{1, \ldots,2^\kk\}$ and $\s^{(\kk)}$ is the $\kk \times 2^\kk$ staircase pattern matrix given in
Definition \ref{def:staircase_matrix}.

The polytope given by $\s^{(\kk)}\bb = \unity$ and $\bb \geq 0$ is a closed and bounded one. Thus,
there is no duality gap and solving the above linear program is
equivalent to solving its dual
\begin{equation}
\label{eq:dual_KL_lp}
\begin{aligned}
& \underset{\dd}{\text{minimize}} & & \unity^{T}\dd \\
& \text{subject to} & & {\s^{(\kk)}}^{T}\dd \geq \mu.
\end{aligned}
\end{equation}
Note that any satisfiable solution $\dd^*$ to (\ref{eq:dual_KL_lp}) provides
an upper bound to (\ref{eq:KL_lp}) since $\max \mu^{T}\bb = \min
\unity^{T}\dd \leq \unity^{T}\dd^*$. Let $\T =\left\{\x: \po(\x)\geq
\pt(\x)\right\}$ and $\T_j =\{\x_i: \s^{(\kk)}_{ij}=e^\varepsilon\}$ for $j
\in [2^\kk]$. Set $j_i = \{j: \T_j = \x_i\}$ for $i \in [\kk]$, and consider
the following choice of dual variable
\begin{equation}
\dd^*_i = \frac{1}{\left(e^\varepsilon -1\right)\left(e^\varepsilon+\kk-1\right)}\left\{\left(e^\varepsilon+\kk-2\right)\mu\left(\s^{(\kk)}_{j_i}\right)-\sum_{l \in [\kk], l \neq i}\mu\left(\s^{(\kk)}_{j_l}\right) \right\},
\end{equation}
for $i \in [\kk]$. Observe that since $\T_{j_i} =\x_i$ we have that
$\PP_{\nu}\left(\T_{j_i}\right)=\pp_{\nu}\left(\x_i\right)$ and since
 \begin{eqnarray}
 \mu_j &=& \sum_{i \in [\kk]}\po(\x_i)\s^{(\kk)}_{ij}\log\left(\frac{\sum_{i \in [\kk]}\po(\x_i)\s^{(\kk)}_{ij}}{\sum_{i \in [\kk]}\pt(\x_i)\s^{(\kk)}_{ij}}\right) \nonumber \\
 &=&  \left(\Po\left(\T_j\right)\left(e^\varepsilon-1\right)+1\right)\log\frac{\left(\Po\left(\T_j\right)\left(e^\varepsilon-1\right)+1\right)}{\left(\Pt\left(\T_j\right)\left(e^\varepsilon-1\right)+1\right)}
 \end{eqnarray}
we have that
\begin{eqnarray}
\label{eq:upper_bound_KL_low_prvcy}
\unity^{T}\dd^*&=& \frac{1}{\left(e^\varepsilon -1\right)\left(e^\varepsilon+\kk-1\right)} \sum_{i \in [\kk]}\left\{\left(e^\varepsilon+\kk-2\right)\mu\left(\s^{(\kk)}_{j_i}\right)-\sum_{l \in [\kk], l \neq i}\mu\left(\s^{(\kk)}_{j_l}\right) \right\} \nonumber \\
&=&\frac{1}{\left(e^\varepsilon -1\right)\left(e^\varepsilon+\kk-1\right)}\left\{\left(e^\varepsilon+\kk-2\right)\sum_{i \in [\kk]}\mu\left(\s^{(\kk)}_{j_i}\right) - \sum_{i \in [\kk]}\sum_{l \in [\kk], l \neq i}\mu\left(\s^{(\kk)}_{j_l}\right) \right\} \nonumber \\
&=&\frac{1}{\left(e^\varepsilon -1\right)\left(e^\varepsilon+\kk-1\right)}\left\{\left(e^\varepsilon+\kk-2\right)\sum_{i \in [\kk]}\mu\left(\s^{(\kk)}_{j_i}\right) - (\kk-1)\sum_{i \in [\kk]}\mu\left(\s^{(\kk)}_{j_i}\right) \right\} \nonumber \\
&=&\frac{1}{\left(e^\varepsilon+\kk-1\right)}\sum_{i \in [\kk]}\mu\left(\s^{(\kk)}_{j_i}\right) \nonumber \\
&=&\frac{1}{\left(e^\varepsilon+\kk-1\right)}\sum_{i \in [\kk]}\left(\Po\left(\x_i\right)\left(e^\varepsilon-1\right)+1\right)\log\frac{\left(\Po\left(\x_i\right)\left(e^\varepsilon-1\right)+1\right)}{\left(\Pt\left(\x_i\right)\left(e^\varepsilon-1\right)+1\right)}.
\end{eqnarray}
We claim that $\dd^*$ is a feasible dual variable for sufficiently large
$\varepsilon$. In order to prove that $\dd^*$ is a feasible dual variable, we
show that ${\s^{(\kk)}}^{T}_j\dd^* -\mu_j \geq 0 $ for all $j
\in [2^\kk]$ for all $\varepsilon \geq \varepsilon^*$, where $\varepsilon^*$
is a positive quantity that depends on the priors $\Po$ and $\Pt$. Using the
facts that
\begin{eqnarray}
\log\left(a+e^\varepsilon b\right) &=& \varepsilon + \log b + O\left(e^{-\varepsilon}\right) \nonumber \\
\frac{1}{e^\varepsilon+\kk-1} &=& e^{-\varepsilon} + O\left(e^{-2\varepsilon}\right),
\end{eqnarray}
for large $\varepsilon$, we get that
  \begin{eqnarray}
 \mu_j &=& \left(\Po\left(\T_j\right)\left(e^\varepsilon-1\right)+1\right)\log\frac{\left(\Po\left(\T_j\right)\left(e^\varepsilon-1\right)+1\right)}{\left(\Pt\left(\T_j\right)\left(e^\varepsilon-1\right)+1\right)} \nonumber \\
 &=& \left(\Po\left(\T_j\right) \log \frac{\Po\left(\T_j\right)}{\Pt\left(\T_j\right)}\right)e^\varepsilon + \left(1-\Po\left(\T_j\right)\right)\log \frac{\Po\left(\T_j\right)}{\Pt\left(\T_j\right)} + O\left(e^{-\varepsilon}\right).
 \end{eqnarray}
On the other hand,
  \begin{eqnarray}
  {\s^{(\kk)}}^{T}_j\dd^* &=& \frac{1}{\left(e^\varepsilon -1\right)\left(e^\varepsilon+\kk-1\right)}\left\{\sum_{i \in [\kk]}\s^{(\kk)}_{ij}\left(e^\varepsilon+\kk-2\right)\left(\po\left(\x_i\right)\log\frac{\po\left(\x_i\right)}{\pt\left(\x_i\right)}e^\varepsilon + O\left(1\right)\right) \right\} \nonumber \\
 & &- \frac{1}{\left(e^\varepsilon -1\right)\left(e^\varepsilon+\kk-1\right)}\left\{\sum_{i \in [\kk]}\sum_{l \in [\kk], l \neq i}\s^{(\kk)}_{ij}\left(\po\left(\x_l\right)\log\frac{\po\left(\x_l\right)}{\pt\left(\x_l\right)}e^\varepsilon + O\left(1\right) \right)\right\} \nonumber \\
 &=& \frac{1}{\left(e^\varepsilon -1\right)\left(e^\varepsilon+\kk-1\right)} \left(\left(\sum_{\x_i \in \T_j}\po\left(\x_i\right)\log\frac{\po\left(\x_i\right)}{\pt\left(\x_i\right)}\right)e^{3\varepsilon} + O\left(e^{2\varepsilon}\right)\right) \nonumber \\
 &=& \left(\sum_{\x_i \in \T_j}\po\left(\x_i\right)\log\frac{\po\left(\x_i\right)}{\pt\left(\x_i\right)}\right)e^{\varepsilon} + O\left(1\right).
 \end{eqnarray}
 Assume, to begin with, that $j \neq \{j_1,j_2,...,j_{\kk}\}$. Then
 \begin{equation}
{\s^{(\kk)}}^{T}_j\dd^* -  \mu_j = \left(\Po\left(\T_j\right) \log \frac{\Po\left(\T_j\right)}{\Pt\left(\T_j\right)}-\sum_{\x_i \in \T_j}\po\left(\x_i\right)\log\frac{\po\left(\x_i\right)}{\pt\left(\x_i\right)}\right)e^{\varepsilon} + O\left(1\right).
\end{equation}
Notice that for $j \neq \{j_1,j_2,...,j_{\kk}\}$, $\Po\left(\T_j\right) \log \frac{\Po\left(\T_j\right)}{\Pt\left(\T_j\right)}>\sum_{\x_i \in \T_j}\po\left(\x_i\right)\log\frac{\po\left(\x_i\right)}{\pt\left(\x_i\right)}$ by the log-sum inequality. Therefore, there exists a
$\varepsilon(j) >0 $ such that
${\s^{(\kk)}}^{T}_j\dd^* -  \mu_j \geq 0$ for all $\varepsilon
\geq \varepsilon(j)$. If $j \in \{j_1,j_2,...,j_{\kk}\}$,
it is not hard to check that ${\s^{(\kk)}}^{T}_j\dd^* -  \mu_j =
0$ for all $\varepsilon$. In this case, set $\varepsilon(j)=0$. This establishes the satisfiability of $\dd^*$ for
all $\varepsilon \geq \varepsilon^* = \max_{j \in [2^\kk]}\varepsilon(j)$. The satisfiability of $\dd^*$, in turn, shows that (\ref{eq:upper_bound_KL_low_prvcy}) is indeed an upper bound to the primal problem. It remains to show that
this upper bound can be achieved via the randomized response. To this
extent, recall that the randomized response is given by
\begin{eqnarray}
	Q(y|x) \,=\, \left\{
\begin{array}{rl}
	\frac{e^\varepsilon}{|\cX|-1+e^\varepsilon}& \text{ if } y=x\;,\\
	\frac{1}{|\cX|-1+e^\varepsilon}& \text{ if } y\neq x\;.\\
\end{array}
\right.
	\label{eq:defrr2}
\end{eqnarray}
Computing the KL divergence between $\Mo$ and $\Mt$ under (\ref{eq:defrr2}), we get that
\begin{equation}
\Dkl(\Mo||\Mt) = \frac{1}{\left(e^\varepsilon+\kk-1\right)}\sum_{i \in [\kk]}\left(\Po\left(\x_i\right)\left(e^\varepsilon-1\right)+1\right)\log\frac{\left(\Po\left(\x_i\right)\left(e^\varepsilon-1\right)+1\right)}{\left(\Pt\left(\x_i\right)\left(e^\varepsilon-1\right)+1\right)}.
\end{equation}
Hence, the randomized response in (\ref{eq:defrr2}) achieves the upper bound in (\ref{eq:upper_bound_KL_low_prvcy}). This proves the optimality of the randomized response for all $\varepsilon \geq \varepsilon^* $.

%
%
\subsection{Proof of Theorem \ref{thm:hypappx}}
We start the proof with a fundamental bound on the symmetrized KL divergence between the $\Mo$ and $\Mt$.
\begin{lemma}
\label{lemma:duchi_KL_bound}
For any $\varepsilon \geq 0$, let $\Q$ be any conditional distribution that guarantees $\varepsilon$ differential privacy. Then for any pair of distributions $\Po$ and $\Pt$, the induced marginals $\Mo$ and $\Mt$ must satisfy the bound
\begin{equation}
\Dkl\big(\MM_0||\MM_1\big) + \Dkl\big(\MM_1||\MM_0\big) \leq 4 \left(e^\varepsilon -1 \right)^2 \big \| \Po-\Pt\big\|^2_{\rm TV}.
\end{equation}
\end{lemma}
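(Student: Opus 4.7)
The plan is to adapt the route used in \cite{DJW13} to bound the symmetrized KL divergence directly, without going through any particular achievability mechanism. First I would rewrite
\[
\Dkl(\Mo||\Mt) + \Dkl(\Mt||\Mo) \;=\; \sum_{\y}\bigl(\Mo(\y)-\Mt(\y)\bigr)\log\frac{\Mo(\y)}{\Mt(\y)}\;,
\]
and apply the elementary inequality $(a-b)\log(a/b)\leq (a-b)^2/\min(a,b)$ for $a,b>0$, which reduces the problem to bounding $(\Mo(y)-\Mt(y))^2$ from above and $\min(\Mo(y),\Mt(y))$ from below, summand by summand.

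For the upper bound on $|\Mo(y)-\Mt(y)|$, I would exploit the fact that $\sum_{\x}(\po(\x)-\pt(\x))=0$, so for any constant $c(\y)$,
\[
\Mo(\y)-\Mt(\y) \;=\; \sum_{\x}(\po(\x)-\pt(\x))\bigl(\Q(\y|\x)-c(\y)\bigr)\;.
\]
Choosing $c(\y)=\min_{\x}\Q(\y|\x)$ and invoking the $\varepsilon$-local differential privacy constraint $\Q(\y|\x)\leq e^\varepsilon\Q(\y|\x')$ yields $0\leq \Q(\y|\x)-c(\y)\leq (e^\varepsilon-1)c(\y)$ uniformly in $\x$, so
\[
|\Mo(\y)-\Mt(\y)| \;\leq\; (e^\varepsilon-1)\,c(\y)\sum_{\x}|\po(\x)-\pt(\x)| \;=\; 2(e^\varepsilon-1)\,c(\y)\,\|\Po-\Pt\|_{\rm TV}\;.
\]
At the same time $\Mo(\y),\Mt(\y)\geq c(\y)$, so $\min(\Mo(\y),\Mt(\y))\geq c(\y)$.

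Plugging these two estimates into the $(a-b)\log(a/b)\leq (a-b)^2/\min(a,b)$ bound gives
\[
(\Mo(\y)-\Mt(\y))\log\frac{\Mo(\y)}{\Mt(\y)} \;\leq\; \frac{4(e^\varepsilon-1)^2 c(\y)^2\|\Po-\Pt\|_{\rm TV}^2}{c(\y)} \;=\; 4(e^\varepsilon-1)^2 c(\y)\|\Po-\Pt\|_{\rm TV}^2\;.
\]
Summing over $\y$ and noting $\sum_{\y}c(\y)=\sum_{\y}\min_{\x}\Q(\y|\x)\leq \sum_{\y}\Q(\y|\x_0)=1$ for any fixed $\x_0$ finishes the proof.

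I do not expect a serious obstacle; the mild care is in choosing the right shift $c(\y)$ so that the privacy constraint (which is multiplicative, not additive) can be converted into a useful additive estimate on $|\Mo(\y)-\Mt(\y)|$, and in applying the right elementary $(a-b)\log(a/b)$ inequality so the $c(\y)$ factors cancel cleanly before summing.
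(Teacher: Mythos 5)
Your argument is correct. The paper itself does not prove this lemma---it simply cites it as Theorem~1 of Duchi, Jordan, and Wainwright (2013)---and your derivation is a faithful and self-contained reconstruction of that result along the same lines (the pointwise inequality $(a-b)\log(a/b)\leq (a-b)^2/\min(a,b)$, the centering trick exploiting $\sum_\x(\po(\x)-\pt(\x))=0$ with $c(\y)=\min_\x\Q(\y|\x)$ so the multiplicative privacy constraint becomes the additive estimate $0\leq\Q(\y|\x)-c(\y)\leq(e^\varepsilon-1)c(\y)$, the lower bound $\min(\Mo(\y),\Mt(\y))\geq c(\y)$, and the telescoping $\sum_\y c(\y)\leq 1$), yielding exactly the stated constant $4(e^\varepsilon-1)^2$.
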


The above lemma appears as Theorem 1 in \cite{DJW13}. By Lemma \ref{lemma:duchi_KL_bound}, we have that
\begin{equation}
{\rm OPT} =\max_{\Q \in \mathcal{\D}_\varepsilon}\Dkl\big(\Mo||\Mt\big)\leq 4 \left(e^\varepsilon -1 \right)^2 \big \| \Po-\Pt\big\|^2_{\rm TV}.
\label{eq:KL_bound}
 \end{equation}
Let $\Mo^B$ and $\Mt^B$ be the marginals obtained by using the binary
mechanism given in (\ref{eq:defhypbin}). By Corollary \ref{coro:hyptv}, we
have that $\| \Mo^B-\Mt^B\|_{\rm
TV}=\frac{e^{\varepsilon}-1}{e^\varepsilon+1}\|\PP_0-\PP_1 \big\|_{\rm TV}$.
Consequently, by applying Pinsker's inequality to the KL divergence between
$\Mo^B$ and $\Mt^B$ we get that
\begin{eqnarray}
{\rm BIN} &=& \Dkl\big(\Mo^B||\Mt^B\big)  \nonumber \\
&\geq& 2 \big\| \Mo^B-\Mt^B \big\|^2_{\rm TV} \nonumber \\
&=& 2 \left(\frac{e^{\varepsilon}-1}{e^\varepsilon+1}\right)^2 \big\| \Po-\Pt \big\|^2_{\rm TV}.
\label{eq:revrs_bound}
\end{eqnarray}

Combining (\ref{eq:KL_bound}) and (\ref{eq:revrs_bound}) we get that ${\rm BIN} \geq  \frac{1}{2(e^\varepsilon+1)^2} {\rm OPT}$ which was to be shown.

\section{Proofs for Information Preservation}
\subsection{Proof of Theorem \ref{thm:estbin}}
\label{sec:estbin}
By Theorem \ref{thm:lp}, we have that
\begin{equation}
\label{eq:mutual_info_lp}
\begin{aligned}
\max_{\Q \in
\mathcal{\D}_{\varepsilon}} I\left(\X;\Y\right) & = &  \underset{\bb}{\text{maximize}} & & \mu^{T}\bb \\
& & \text{subject to} & & \s^{(\kk)}\bb = \unity\\
& & & &  \bb \geq 0,
\end{aligned}
\end{equation}
where $ \mu_j = \mu\left(\s^{(\kk)}_j\right)= \sum_{i \in [\kk]}\pp\left(\x_i\right)\s^{(\kk)}_{ij}\log\left(\frac{\s^{(\kk)}_{ij}}{\sum_{i \in [\kk]}\pp\left(\x_i\right)\s^{(\kk)}_{ij}}\right)$ for $j \in \{1, \ldots,2^\kk\}$ and $\s^{(\kk)}$ is the $\kk \times 2^\kk$ staircase pattern matrix given in
Definition \ref{def:staircase_matrix}. The polytope given by $\s^{(\kk)}\bb = \unity$ and $\bb \geq 0$ is a closed and bounded one. Thus,
there is no duality gap and solving the above linear program is
equivalent to solving its dual
\begin{equation}
\label{eq:dual_mutual_information_lp}
\begin{aligned}
& \underset{\dd}{\text{minimize}} & & \unity^{T}\dd \\
& \text{subject to} & & {\s^{(\kk)}}^{T}\dd \geq \mu.
\end{aligned}
\end{equation}
Note that any satisfiable solution $\dd^*$ to
(\ref{eq:dual_mutual_information_lp}) provides an upper bound to
(\ref{eq:mutual_info_lp}) since $\max \mu^{T}\bb = \min \unity^{T}\dd \leq
\unity^{T}\dd^*$. Let $\T_j =\{\x_i:
\s^{(\kk)}_{ij}=e^\varepsilon\}$ and set $j_1 = \{j: \T_j =\T\}$ and $j_2= \{j:\T_j
= \T^c\}$. Consider the following choice of dual variable
\begin{equation}
\dd^*_i=\frac{1}{\left(e^\varepsilon+1\right)\left(e^\varepsilon-1\right)}\begin{cases} \frac{e^\varepsilon\mu\left(\s^{(\kk)}_{j_1}\right)-\mu\left(\s^{(\kk)}_{j_2}\right)}{|\T|} & \forall i \in \T \\\frac{e^\varepsilon\mu\left(\s^{(\kk)}_{j_2}\right)-\mu\left(\s^{(\kk)}_{j_1}\right)}{|\T^c|} & \forall i \in \T^c \end{cases}.
\end{equation}
 Observe that since $\T_{j_1} =\T$, $\T_{j_2}
=\T^c$, and
 \begin{eqnarray}
 \mu_j &=&  \PP\left(\T_j\right)e^{\varepsilon}\log\frac{e^{\varepsilon}}{\PP\left({\T_j} ^c\right)+e^{\varepsilon}\PP\left({\T_j} \right)}+\PP\left({\T_j} ^c\right)\log\frac{1}{\PP\left({\T_j} ^c\right)+e^{\varepsilon}\PP\left({\T_j} \right)},
 \end{eqnarray}
we have that
\begin{eqnarray}
\label{ub_mi_lp}
\unity^{T}\dd^*&=&\frac{1}{\left(e^\varepsilon+1\right)\left(e^\varepsilon-1\right)}\left\{ \sum_{i \in \T}\frac{1}{|\T|}\left(e^\varepsilon\mu\left(\s^{(\kk)}_{j_1}\right)-\mu\left(\s^{(\kk)}_{j_2}\right)\right) \right.  \nonumber \\
&=& \left. ~~~~ +\sum_{i \in \T^c}\frac{1}{|\T^c|}\left(e^\varepsilon\mu\left(\s^{(\kk)}_{j_2}\right)-\mu\left(\s^{(\kk)}_{j_1}\right)\right) \right\} \nonumber \\
&=& \frac{1}{\left(e^\varepsilon+1\right)}\left(\mu\left(\s^{(\kk)}_{j_1}\right)+\mu\left(\s^{(\kk)}_{j_1}\right)\right) \nonumber \\
&=&\frac{1}{e^{\varepsilon}+1}\left\{\PP\left(\T\right)e^{\varepsilon}\log\frac{e^{\varepsilon}}{\PP\left({\T} ^c\right)+e^{\varepsilon}\PP\left({\T} \right)}+\PP\left({\T} ^c\right)\log\frac{1}{\PP\left({\T} ^c\right)+e^{\varepsilon}\PP\left({\T} \right)}\right\}  + \nonumber \\
&&\frac{1}{e^{\varepsilon}+1}\left\{\PP\left({\T}^c\right)e^{\varepsilon}\log\frac{e^{\varepsilon}}{\PP\left({\T} \right)+e^{\varepsilon}\PP\left({\T} ^c\right)}+\PP\left({\T} \right)\log\frac{1}{\PP\left({\T} \right)+e^{\varepsilon}\PP\left({\T} ^c\right)}\right\}.
\end{eqnarray}
We claim that
$\dd^*$ is a feasible dual variable for sufficiently small $\varepsilon$. In order to prove that $\dd^*$ is a feasible dual variable,
we show that $\left({\s^{(\kk)}}^{T}\dd^*\right)_j -\mu_j \geq 0 $ for all $j \in \{1, \ldots,2^\kk\}$ and all $\varepsilon \leq \varepsilon^*$,
where $\varepsilon^*$ is a positive quantity that
depends on $\PP$. Using the following facts
 \begin{eqnarray}
 e^\varepsilon &=& 1 + \varepsilon + \frac{1}{2}\varepsilon + O\left(\varepsilon^3\right) \nonumber \\
 \log\left(a+e^\varepsilon b\right)&=& b\varepsilon + \frac{b(1-b)}{2}\varepsilon^2 + O\left(\varepsilon^3\right) \nonumber \\
 \frac{1}{1+e^\varepsilon}&=& \frac{1}{2} -\frac{1}{4} \varepsilon + O\left(\varepsilon^2\right),
 \end{eqnarray}
 for small $\varepsilon$, we get that
  \begin{eqnarray}
 \mu_j &=& \PP\left(\T_j\right)e^{\varepsilon}\log\frac{e^{\varepsilon}}{\PP\left({\T_j} ^c\right)+e^{\varepsilon}\PP\left({\T_j} \right)}+\PP\left({\T_j} ^c\right)\log\frac{1}{\PP\left({\T_j} ^c\right)+e^{\varepsilon}\PP\left({\T_j} \right)} \nonumber \\
 &=& \PP\left(T_j\right)e^\varepsilon\varepsilon - \left(\PP\left(T_j\right)\left(e^\varepsilon-1\right)+1\right)\log\left(\PP\left(T_j\right)\left(e^\varepsilon-1\right)+1\right) \nonumber \\
 &=& \frac{1}{2}\PP\left(\T_j\right)\PP\left(\T^c_j\right)\varepsilon^2 + O\left(\varepsilon^3\right).
 \end{eqnarray}
 On the other hand,
  \begin{eqnarray}
 \left({\s^{(\kk)}}^{T}\dd^*\right)_j&=& {\s^{(\kk)}_j}^{T}\dd^* \nonumber \\
 &=&\frac{1}{\left(e^\varepsilon+1\right)\left(e^\varepsilon-1\right)}\left\{ \sum_{i \in \T}\frac{\s^{(\kk)}_{ij}}{|\T|}\left(e^\varepsilon\mu\left(\s^{(\kk)}_{j_1}\right)-\mu\left(\s^{(\kk)}_{j_2}\right)\right) \right. \nonumber \\
 && ~~~~~~~~~~~~~~~~~~~~~~~~~~~~~~
\left. +\sum_{i \in \T^c}\frac{\s^{(\kk)}_{ij}}{|\T^c|}\left(e^\varepsilon\mu\left(\s^{(\kk)}_{j_2}\right)-\mu\left(\s^{(\kk)}_{j_1}\right)\right) \right\} \nonumber \\
 &=&\frac{1}{\left(e^\varepsilon+1\right)\left(e^\varepsilon-1\right)}\left(e^\varepsilon\mu\left(\s^{(\kk)}_{j_1}\right)-\mu\left(\s^{(\kk)}_{j_2}\right)\right)\left(\frac{|\T_j\cap\T|}{|\T|}e^\varepsilon+\frac{|\T_j^c\cap\T|}{|\T|}\right) \nonumber \\
& & ~ + \frac{1}{\left(e^\varepsilon+1\right)\left(e^\varepsilon-1\right)}\left(e^\varepsilon\mu\left(\s^{(\kk)}_{j_2}\right)-\mu\left(\s^{(\kk)}_{j_1}\right)\right)\left(\frac{|\T_j\cap\T^c|}{|\T^c|}e^\varepsilon+\frac{|\T_j^c\cap\T^c|}{|\T^c|}\right) \nonumber \\
&=&\frac{1}{\left(e^\varepsilon+1\right)}\left(\frac{1}{2}\PP\left(\T\right)\PP\left(\T^c\right)\varepsilon^2 + O\left(\varepsilon^3\right)\right)\left\{\frac{|\T_j\cap\T^c|}{|\T^c|}+\frac{|\T_j^c\cap\T^c|}{|\T^c|} \right. \nonumber \\
&&  ~~~~~~~~~~~~~~~~~~~~~~~~~~~~~
\left. +\frac{|\T_j\cap\T|}{|\T|}+\frac{|\T_j^c\cap\T|}{|\T|} + O\left(\varepsilon\right) \right\} \nonumber \\
&=& \frac{1}{2}\PP\left(\T\right)\PP\left(\T^c\right)\varepsilon^2 + O\left(\varepsilon^3\right),
 \end{eqnarray}
where we have used the facts that $\T_{j_1}=\T$, $\T_{j_2}=\T^c$, and
\begin{eqnarray}
\mu\left(\s^{(\kk)}_{j_1}\right)&=&\frac{1}{2}\PP\left(\T\right)\PP\left(\T^c\right)\varepsilon^2
+ O\left(\varepsilon^3\right) \nonumber \\
\mu\left(\s^{(\kk)}_{j_2}\right)&=&\frac{1}{2}\PP\left(\T\right)\PP\left(\T^c\right)\varepsilon^2
+ O\left(\varepsilon^3\right).
\end{eqnarray}
Let $f(z)=|z-\frac12|$, $g(z)=-z\log z - (1-z)\log(1-z)$, and $h(z)= z(1-z)$ for
$0 \leq z \leq 1$. On the one hand, $g$ and $h$ are monotonically increasing
over $0 \leq z \leq \frac12$ and monotonically decreasing over $\frac12 \leq z \leq 1$ but
on the other hand, $f$ is monotonically decreasing over $0 \leq z \leq \frac12$ and
monotonically increasing over $\frac12 \leq z \leq 1$. Therefore,
\begin{eqnarray}
\T \in \underset{A\subseteq \cX}{\arg\min}\;\; \Big| \PP(A) - \frac12 \Big| &\Leftrightarrow& \T \in \underset{A\subseteq \cX}{\arg\max}\;\; -\PP(A)\log \PP(A) - \PP(A^c)\log \PP(A^c) \nonumber \\
& \Leftrightarrow & \T \in \underset{A\subseteq \cX}{\arg\max}\;\; \PP(A)\PP(A^c).
\end{eqnarray}
Since the set $\T$ was chosen so
that it maximizes $\PP\left(\T\right)\PP\left(\T^c\right)$,
we have that
$\PP\left(\T\right)\PP\left(\T^c\right)\geq\PP\left(\T_j\right)\PP\left(\T^c_j\right)$
for all $j \in \{1, \ldots,2^\kk\}$. Assume, to begin with, that $j \neq \{j_1,j_2\}$.
Then by the uniqueness of the maximizer assumption stated in the theorem, we
have that
$\PP\left(\T\right)\PP\left(\T^c\right)>\PP\left(\T_j\right)\PP\left(\T^c_j\right)$.
\begin{equation}
\left(\s^{T}\dd^*\right)_j -  \mu_j =
\frac{1}{2}\left(\PP\left(\T\right)\PP\left(\T^c\right)-\PP\left(\T_j\right)\PP\left(\T^c_j\right)\right)\varepsilon^2
+ O\left(\varepsilon^3\right),
\end{equation}
 and thus, there exists an $\varepsilon^*$ that depends on $\PP$ such that $\left({\s^{(\kk)}}^{T}\dd^*\right)_j -  \mu_j \geq 0$ for all $\varepsilon \leq \varepsilon^*$. If $j= \{j_1,j_2\}$, it is not hard
to check that $\left({\s^{(\kk)}}^{T}\dd^*\right)_j -  \mu_j = 0$ for all
$\varepsilon$. This establishes the satisfiability of $\dd^*$ for all
$\varepsilon \leq \varepsilon^*$ which proves an upper bound on the primal problem (given in \eqref{ub_mi_lp}). It remains to show that
the upper bound can be indeed achieved via the binary mechanism. To this extent, recall that the binary mechanism is given by
\begin{eqnarray}
\Q(0|x) \,=\, \left\{
\begin{array}{rl}
	\frac{e^\varepsilon}{1+e^\varepsilon}& \text{ if } \x \in \T\;,\\
	\frac{1}{1+e^\varepsilon}& \text{ if } \x\notin \T\;.\\
\end{array}
\right.\;\;\;
\Q(1|x) \,=\, \left\{
\begin{array}{rl}
	\frac{e^\varepsilon}{1+e^\varepsilon}& \text{ if } \x\notin \T \;,\\
	\frac{1}{1+e^\varepsilon}& \text{ if } \x\in \T \;. \\
\end{array}
\right.
\label{eq:defestbin2}
\end{eqnarray}
Computing the $I\left(\X;\Y\right)$ under \eqref{eq:defestbin2}, we get that
\begin{eqnarray}
I\left(\X;\Y\right)&=& \frac{1}{e^{\varepsilon}+1}\left\{\PP\left(\T\right)e^{\varepsilon}\log\frac{e^{\varepsilon}}{\PP\left({\T} ^c\right)+e^{\varepsilon}\PP\left({\T} \right)}+\PP\left({\T} ^c\right)\log\frac{1}{\PP\left({\T} ^c\right)+e^{\varepsilon}\PP\left({\T} \right)}\right\}  + \nonumber \\
&&\frac{1}{e^{\varepsilon}+1}\left\{\PP\left({\T}^c\right)e^{\varepsilon}\log\frac{e^{\varepsilon}}{\PP\left({\T} \right)+e^{\varepsilon}\PP\left({\T} ^c\right)}+\PP\left({\T} \right)\log\frac{1}{\PP\left({\T} \right)+e^{\varepsilon}\PP\left({\T} ^c\right)}\right\}.
\end{eqnarray}
Hence, the binary mechanism in \eqref{eq:defestbin2} achieves the
upper bound in \eqref{ub_mi_lp}. This proves the optimality of the binary mechanism for all $\varepsilon \leq \varepsilon^*$.

\subsection{Proof of Theorem \ref{thm:estappx}}
We start by proving an upper bound on $\max_{\Q \in
\mathcal{\D}_{\varepsilon}} I\left(\X;\Y\right)$ which is tight for
$\varepsilon \leq 1$. Recall that by Theorem \ref{thm:lp}, we have that
\begin{equation*}
\begin{aligned}
{\rm OPT}&= &\max_{\Q \in
\mathcal{\D}_{\varepsilon}} I\left(\X;\Y\right) & = &  \underset{\bb}{\text{maximize}} & & \sum_{j=1}^{2^\kk}\mu_j\bb_j \\
& & & & \text{subject to} & & \s^{(\kk)}\bb = \unity\\
& & & & & &  \bb \geq 0,
\end{aligned}
\end{equation*}
where
\begin{eqnarray}
\mu_j &=& \mu\left(\s^{(\kk)}_j\right) \nonumber \\
&=& \sum_{i \in
[\kk]}\pp\left(\x_i\right)\s^{(\kk)}_{ij}\log\left(\frac{\s^{(\kk)}_{ij}}{\sum_{i \in
[\kk]}\pp\left(\x_i\right)\s^{(\kk)}_{ij}}\right) \nonumber \\
&=& \PP\left(\T_j\right)e^\varepsilon\varepsilon - \left(\PP\left(\T_j\right)\left(e^\varepsilon-1\right)+1\right)\log\left(\PP\left(\T_j\right)\left(e^\varepsilon-1\right)+1\right),
\end{eqnarray}
$\T_j =\{i: \s^{(\kk)}_{ij}=e^\varepsilon\}$, and $\s^{(\kk)}$ is the $\kk
\times 2^\kk$ staircase pattern matrix given in Definition
\ref{def:staircase_matrix}.
\begin{lemma}
\label{mi_up_all}
For all distributions $\PP$ and all $\varepsilon$, the following bound holds
\begin{equation}
{\rm OPT}= \max_{\Q \in
\mathcal{\D}_{\varepsilon}} I\left(\X;\Y\right)  \leq \left(\max_{j} \mu_j \right) \frac{\kk}{e^\varepsilon+\kk-1}.
\end{equation}
\end{lemma}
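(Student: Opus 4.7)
The plan is to use the LP formulation from Theorem \ref{thm:lp} together with a weighted averaging argument driven by the row constraints $S^{(k)}\bb=\unity$. Let $c_j \equiv \sum_{i=1}^k S^{(k)}_{ij}$ denote the column sum of $\s^{(k)}_j$. Since each column of $\s^{(k)}$ has $|\T_j|$ entries equal to $e^{\varepsilon}$ and the remaining $k-|\T_j|$ entries equal to $1$, we have $c_j = k + |\T_j|(e^{\varepsilon}-1)$. Summing the $k$ equality constraints of the LP yields
\begin{equation*}
\sum_{j=1}^{2^k} c_j\,\bb_j \;=\; \unity^T S^{(k)} \bb \;=\; \unity^T\unity \;=\; k .
\end{equation*}

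Now I would split into two cases based on $|\T_j|$. If $|\T_j|=0$ then $\T_j=\emptyset$, so $\PP(\T_j)=0$ and the formula for $\mu_j$ recalled just above the lemma gives $\mu_j=0$; thus $\mu_j/c_j=0$ trivially. If $|\T_j|\geq 1$ then since $\varepsilon\geq 0$ we have $c_j = k + |\T_j|(e^{\varepsilon}-1) \geq k + (e^{\varepsilon}-1) = e^{\varepsilon}+k-1$, whence $\mu_j/c_j \leq \mu_j/(e^{\varepsilon}+k-1)\leq (\max_{j'} \mu_{j'})/(e^{\varepsilon}+k-1)$. Combining both cases,
\begin{equation*}
\max_{j}\; \frac{\mu_j}{c_j} \;\leq\; \frac{\max_{j'}\mu_{j'}}{e^{\varepsilon}+k-1}\, .
\end{equation*}

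Finally, viewing $\{c_j \bb_j/k\}_j$ as a nonnegative weighting that sums to $1$ by the identity above, I would write
\begin{equation*}
\sum_{j=1}^{2^k}\mu_j\,\bb_j \;=\; \sum_{j=1}^{2^k}\frac{\mu_j}{c_j}\,(c_j\,\bb_j) \;\leq\; \Big(\max_j \frac{\mu_j}{c_j}\Big)\sum_{j=1}^{2^k}c_j\bb_j \;=\; k \cdot \max_j \frac{\mu_j}{c_j} \;\leq\; \big(\max_j\mu_j\big)\,\frac{k}{e^{\varepsilon}+k-1},
\end{equation*}
which gives the claimed bound on the LP optimum and hence on ${\rm OPT}$ by Theorem \ref{thm:lp}. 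There is no real obstacle: the only subtlety is noticing that the patterns with $|\T_j|=0$, which have the smallest column sum $c_j=k$, also happen to have $\mu_j=0$, so they cannot spoil the ratio bound; every remaining pattern contributes $c_j\geq e^{\varepsilon}+k-1$, matching the factor the lemma asks for.
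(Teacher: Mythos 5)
Your proof is correct and gives the same bound as the paper, but the route is a bit different and arguably cleaner. The paper first merges the all-ones column $\s^{(k)}_1$ with the all-$e^\varepsilon$ column $\s^{(k)}_{2^k}$ (exploiting homogeneity of $\mu$) so that every remaining column has $|\T_j|\ge 1$, then bounds $\mu^T\bb \le (\max_j\mu_j)\,\unity^T\bb$ and caps $\unity^T\bb$ by LP weak duality with the constant dual certificate $\dd^*_i = 1/(e^\varepsilon+k-1)$. You instead stay entirely on the primal side: you sum the row constraints $S^{(k)}\bb=\unity$ to get the conserved quantity $\sum_j c_j\bb_j=k$, and then do a weighted-averaging (H\"older-type) estimate on $\sum_j\mu_j\bb_j = \sum_j (\mu_j/c_j)(c_j\bb_j)$. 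The key inequality $c_j\ge e^\varepsilon+k-1$ is the same one the paper uses to check dual feasibility, and your handling of the degenerate pattern $|\T_j|=0$ (observing $\mu_j=0$, so the small column sum $c_j=k$ is harmless) replaces the paper's column-merge preprocessing. Your argument avoids invoking LP duality altogether and is slightly more self-contained, at the cost of needing the explicit nonnegativity of $\mu_j$ (which the paper establishes elsewhere, in the proof of Lemma~\ref{mu_bound}); the paper's version makes the certificate structure explicit, which is more in line with how they prove the other optimality theorems (Theorems~\ref{thm:hyprr}, \ref{thm:estbin}, \ref{thm:estrr}).
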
The proof of this lemma is given in Section \ref{sec:mi_up_all}. In what follows, we will
make the dependency of $\mu_j$ on $\PP\left(\T_j\right)$ and $\varepsilon$ explicit by writing
$\mu_j\left(\PP\left(\T_j\right),\varepsilon\right)$ for $\mu_j$. From the proof of Theorem \ref{thm:estbin}, we have that the partition set $\T$ defined in \eqref{eq:tbin} is given by $\T \in \arg\max_{A\subseteq \cX} \PP(A)\PP(A^c)$. It is easy to check that the binary mechanism given in \eqref{eq:defestbin} achieves the following utility
 \begin{equation}
 {\rm BIN}=\frac{\mu\left(\PP\left(\T\right),\varepsilon\right)+\mu\left(\PP\left(\T^c\right),\varepsilon\right)}{e^\varepsilon+1}.
 \end{equation}
\begin{lemma}
\label{mu_bound}
For all distributions $\PP$ and all $\varepsilon \leq 1$, the following bound holds
\begin{equation}
\frac{\max_{j} \mu_j}{\mu\left(\PP\left(\T\right),\varepsilon\right)+\mu\left(\PP\left(\T^c\right),\varepsilon\right)} \leq 1.
\end{equation}
\end{lemma}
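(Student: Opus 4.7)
The plan is to reduce the inequality to a statement about the scalar function
\begin{equation*}
f(p) \;:=\; p\, e^\varepsilon \varepsilon - (p(e^\varepsilon-1)+1) \log(p(e^\varepsilon-1)+1)\;,
\end{equation*}
on $[0,1]$. A direct expansion shows $\mu_j = f(\PP(\T_j))$ where $\T_j = \{x_i : \s^{(\kk)}_{ij} = e^\varepsilon\}$ ranges over all subsets of $\cX$, and $\mu(\PP(\T),\varepsilon) + \mu(\PP(\T^c),\varepsilon) = f(\PP(\T)) + f(1-\PP(\T))$. So the lemma is equivalent to
\begin{equation*}
\max_{A \subseteq \cX} f(\PP(A)) \;\leq\; f(\PP(\T)) + f(\PP(\T^c))\;,
\end{equation*}
where $\T$ is a minimizer of $|\PP(\cdot) - 1/2|$.

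First I would record three structural facts about $f$: direct substitution gives $f(0) = f(1) = 0$; differentiating twice yields $f''(p) = -(e^\varepsilon-1)^2/(p(e^\varepsilon-1)+1) < 0$, so $f$ is strictly concave on $[0,1]$; and concavity with the boundary values forces $f \geq 0$ with a unique maximizer $p^* \in (0,1)$ such that $f$ is non-decreasing on $[0,p^*]$ and non-increasing on $[p^*,1]$. The main technical step is to show $p^* < 1/2$. I would prove this by showing $f'(1/2) < 0$ for every $\varepsilon > 0$: setting $h(\varepsilon) := f'(1/2) = e^\varepsilon \varepsilon - (e^\varepsilon-1)(1+\log((e^\varepsilon+1)/2))$, one has $h(0) = 0$, and after substituting $s := (e^\varepsilon-1)/(e^\varepsilon+1) \in [0,1)$ and simplifying, $h'(\varepsilon) = e^\varepsilon[\log(1+s) - s]$, which is strictly negative for $s > 0$ by the standard inequality $\log(1+s) < s$. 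Hence $h(\varepsilon) < 0$, and concavity then gives $p^* < 1/2$.

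With these tools, set $q := \min(\PP(\T), \PP(\T^c)) \leq 1/2$. The defining property of $\T$ forces every $\PP(A)$ to lie in $[0,q] \cup [1-q, 1]$. I would split into two cases. In Case A ($q \leq p^*$), monotonicity of $f$ on $[0, p^*]$ and on $[p^*, 1]$ (the latter via $1-q \geq 1/2 > p^*$) gives $\max_{p \in [0,q]} f(p) = f(q)$ and $\max_{p \in [1-q, 1]} f(p) = f(1-q)$, so $\max_A f(\PP(A)) \leq f(q) + f(1-q)$ since both summands are non-negative. In Case B ($q > p^*$), the maxima on the two intervals equal $f(p^*)$ and $f(1-q) \leq f(p^*)$ respectively; writing $F(q) := f(q) + f(1-q)$, concavity of $f$ together with the symmetry $F(q) = F(1-q)$ show that $F$ is concave and symmetric about $1/2$, hence non-decreasing on $[0,1/2]$, so $F(q) \geq F(p^*) = f(p^*) + f(1-p^*) \geq f(p^*)$. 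Either case yields $\max_j \mu_j \leq f(\PP(\T)) + f(\PP(\T^c))$, as required.

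The main obstacle is the sharp derivative calculation establishing $p^* < 1/2$ in the second step; once this is in hand, the case analysis driven by concavity and the defining property of $\T$ is routine. I note that the argument above never actually uses $\varepsilon \leq 1$, so this restriction in the statement of Lemma \ref{mu_bound} appears to be inherited from the enclosing Theorem \ref{thm:estappx} rather than forced by the lemma itself.
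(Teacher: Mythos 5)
Your proof is correct, and it takes a genuinely different and cleaner route than the paper. The paper fixes $x^* \in \arg\max_x P(x)$ and splits into three cases according to whether $P(x^*)$ lies in $[1-z^*,1]$, $[\tfrac12,1-z^*]$, or $[0,\tfrac12]$, using two auxiliary claims (one characterizing $\T$ when $P(x^*)$ is large, one showing some subset has mass in $[\tfrac12-P(x^*),\tfrac12]$); it then bounds the ratio case-by-case via explicit comparisons such as $\mu(z^*)\leq\min(\mu(\tfrac12),\mu(\tfrac14))+\mu(\tfrac34)$. You instead observe that the defining property of $\T$ traps every $P(A)$ in $[0,q]\cup[1-q,1]$ with $q=\min(P(\T),P(\T^c))$, and split only on whether $q\lessgtr p^*$; the rest is driven purely by unimodality, nonnegativity, and the concavity/symmetry of $F(q)=f(q)+f(1-q)$. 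Your argument also patches a gap in the paper: the paper merely asserts ``$z^*(\varepsilon)\leq\tfrac12$ for all $\varepsilon$'' inside Claim 4 without justification, whereas you prove it cleanly by showing $f'(\tfrac12)<0$ via the substitution $s=(e^\varepsilon-1)/(e^\varepsilon+1)$ and the inequality $\log(1+s)<s$. Finally, your remark that $\varepsilon\leq1$ is never actually used is consistent with what appears in the paper's own argument (the restriction does seem to be inherited from Theorem \ref{thm:estappx} rather than needed here), and your proof makes that independence from $\varepsilon\leq1$ transparent.
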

The proof of the above lemma is given in Section \ref{sec:mu_bound}. Combining the results of lemmas \ref{mi_up_all} and \ref{mu_bound} we get that
\begin{eqnarray}
\frac{{\rm OPT}}{{\rm BIN}} &\leq& \frac{\max_{j} \mu_j}{\mu\left(\PP\left(\T\right),\varepsilon\right)+\mu\left(\PP\left(\T^c\right),\varepsilon\right)} \frac{\kk}{e^\varepsilon+\kk-1} \left(e^\varepsilon+1\right) \nonumber \\
&\leq& \frac{\kk}{e^\varepsilon+\kk-1} \left(e^\varepsilon+1\right) \nonumber \\
&\leq& e^\varepsilon+1,
\end{eqnarray}
for all $\varepsilon \leq 1$. This concludes the proof.

\subsection{Proof of Lemma \ref{mi_up_all}}
\label{sec:mi_up_all}
To begin with, since $\s^{(\kk)}_1=\unity=\frac{1}{e^\varepsilon}\s^{(\kk)}_{2^\kk}$ and $\mu$ is homogenous, we have that $\bb_1\mu_1+\bb_{2^\kk}\mu_{2^\kk}=\left(\frac{1}{e^\varepsilon}\bb_1+\bb_{2^\kk}\right)\mu_{2^\kk}$. Therefore, the following two maximization problems are equivalent
\begin{equation}
\begin{aligned}
&  \underset{\bb}{\text{maximize}} & & \sum_{j=1}^{2^\kk}\mu_j\bb_j \\
& \text{subject to} & & \s^{(\kk)}\bb = \unity\\
& & &  \bb \geq 0
\end{aligned}
=\begin{aligned}
&  \underset{\bb}{\text{maximize}} & & \sum_{j=1}^{2^\kk-1}\tilde{\mu}_j\bb_j \\
& \text{subject to} & & \tilde{\s}^{(\kk)}\bb = \unity\\
& & &  \bb \geq 0,
\end{aligned}
\end{equation}
where $\tilde{\mu}_j=\mu_{j+1}$ and $\tilde{\s}^{(\kk)}$ is obtained by deleting the first column of $\s^{(\kk)}$.
Moreover, using the fact that $\max_{j \in [2^\kk-1]} \tilde{\mu}_j \leq \max_{j \in [2^\kk]} \mu_j$ and weak duality, we get that
\begin{eqnarray}
\begin{aligned}
&  \underset{\bb}{\text{maximize}} & & \tilde{\mu}^T\bb \\
& \text{subject to} & & \tilde{\s}^{(\kk)}\bb = \unity\\
& & &  \bb \geq 0
\end{aligned}
& &
 \begin{aligned}
\leq \; \; \left(\max_{j \in [2^\kk-1]} \tilde{\mu}_j \right) &  \underset{\bb}{\text{maximize}} & &  \unity^{T}\bb  \\
& \text{subject to} & & \tilde{\s}^{(\kk)}\bb = \unity\\
& & &  \bb \geq 0
\end{aligned} \nonumber \\
&&
 \begin{aligned}
\leq \; \; \left(\max_{j \in [2^\kk]} \mu_j \right)  & \underset{\dd}{\text{minimize}} & & \unity^{T}\dd \\
& \text{subject to} & & \tilde{\s}^{(\kk)^{T}} \dd \geq \unity. \\
& & &
\end{aligned}
\end{eqnarray}
Consider the following choice of dual variable $\dd^*_i= \frac{1}{e^\varepsilon+\kk-1}$. We claim that $\dd^*$ is satisfiable. This can be easily verified by noting that
\begin{equation}
\left(\tilde{\s}^{(\kk)^{T}}\dd^*\right)_j = \tilde{\s}_j^{(\kk)^{T}}\dd^* = \frac{|\T_j|e^\varepsilon + (\kk-|\T_j|)}{e^\varepsilon+\kk-1} = \frac{|\T_j|(e^\varepsilon-1) + \kk}{e^\varepsilon+\kk-1} \geq 1
\end{equation}
where the last inequality holds since $|\T_j| \geq 1$ (this is true because we have deleted the first column of $\s^{(\kk)}$). Therefore, ${\rm OPT} \leq   \left(\max_{j} \mu_j \right)\unity^{T}\dd^*= \left(\max_{j} \mu_j \right)\frac{\kk}{e^\varepsilon+\kk-1}$ which was to be shown.

\subsection{Proof of Lemma \ref{mu_bound}}
\label{sec:mu_bound}
Let $\mu\left(z,\varepsilon\right)$ be the function obtained by replacing $\PP\left(\T_j\right)$ by the continuous variable $z \in [0,1]$ in $\mu_j\left(\PP\left(\T_j\right),\varepsilon\right)$. Taking the derivative of $\mu\left(z,\varepsilon\right)$ with respect to $z$ we get
\begin{equation}
\mu'\left(z,\varepsilon\right)= e^\varepsilon\varepsilon - (e^\varepsilon-1) - (e^\varepsilon-1)\log\left(z(e^\varepsilon-1)+1\right).
\end{equation}
Observe that $\mu'\left(z,\varepsilon\right)>0$ for all
\begin{equation}
z < z^*(\varepsilon) = \frac{1}{e^\varepsilon-1}\left(e^{\left\{\frac{e^\varepsilon\varepsilon}{e^\varepsilon-1} -1\right\}}-1\right),
\end{equation}
$\mu'\left(z,\varepsilon\right)<0$ for all $z > z^*(\varepsilon)$, and $\mu'\left(z,\varepsilon\right)=0$ for $z=z^*(\varepsilon)$. Combining this with the fact that $\mu\left(0,\varepsilon\right)=\mu\left(1,\varepsilon\right)=0$ we get that $\mu\left(z,\varepsilon\right) \geq 0$ for all $z \in [0,1]$ and for any fixed $\varepsilon$, $\mu\left(z,\varepsilon\right)$ is maximized at $z^*(\varepsilon)$.

Set $\x^* \in \arg\max_{x \in \cX} \PP\left(\x\right)$ and fix an $\varepsilon \leq 1$. We will treat the following three cases separately.\\

\noindent
{\bf Case 1:} $\PP(\x^*) \in [1-z^*(\varepsilon),1]$.
\begin{claim}
\label{claim:max_cm} Let $\T=\{x^*\}$. Then $\{\T,\T^c\} = \arg \max_{A
\subseteq \cX} \PP(A)\PP(A^c)$ and $\max_{A \subseteq
\cX}\mu(\PP(A),\varepsilon)=\max\left(\mu(\PP(T),\varepsilon),\mu(\PP(T^c),\varepsilon)\right)$.
\end{claim}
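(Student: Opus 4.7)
The plan is to exploit the unimodality of the two functions $g(z)=z(1-z)$ and $z\mapsto\mu(z,\varepsilon)$ on $[0,1]$ together with the fact that the hypothesis $\PP(\x^*)\in[1-z^*(\varepsilon),1]$ forces the achievable subset probabilities to cluster into only two intervals near the endpoints of $[0,1]$. First I would check the one analytic fact that makes everything go through: $z^*(\varepsilon)\leq 1/2$ for every $\varepsilon\geq 0$. After substituting $t=e^\varepsilon$, this reduces to the classical inequality $\log t \geq 2(t-1)/(t+1)$ for $t\geq 1$, which is immediate from differentiating. Combined with $\PP(\x^*)\geq 1-z^*(\varepsilon)$, this gives $\PP(\x^*)\geq 1/2$.

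Next I would describe the \emph{gap structure} of achievable subset probabilities. Because $\x^*$ is a heaviest atom, every $A$ with $\x^*\in A$ satisfies $\PP(A)\geq \PP(\x^*)$, with equality iff $A=\{\x^*\}=\T$; every $A$ with $\x^*\notin A$ satisfies $\PP(A)\leq 1-\PP(\x^*)$, with equality iff $A=\cX\setminus\{\x^*\}=\T^c$. Consequently, for every $A\subseteq\cX$,
\begin{equation*}
\PP(A)\in[0,1-\PP(\x^*)]\cup[\PP(\x^*),1]\subseteq[0,z^*(\varepsilon)]\cup[1-z^*(\varepsilon),1],
\end{equation*}
and $\T,\T^c$ are the only subsets whose probability touches the boundary of the central gap $(1-\PP(\x^*),\PP(\x^*))$.

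With the gap structure in hand the two conclusions reduce to monotonicity bookkeeping. For the first, $g(z)=z(1-z)$ is strictly increasing on $[0,1/2]$ and strictly decreasing on $[1/2,1]$, so on each piece of the admissible set the maximum is attained at the endpoint closest to $1/2$; these endpoints are $1-\PP(\x^*)$ (attained only by $\T^c$) and $\PP(\x^*)$ (attained only by $\T$), and the two values of $g$ coincide by the symmetry $g(z)=g(1-z)$. For the second, the derivative analysis already given in the text shows that $\mu(z,\varepsilon)$ is strictly increasing on $[0,z^*(\varepsilon)]$ and strictly decreasing on $[z^*(\varepsilon),1]$. Hence the maximum on the lower piece $[0,1-\PP(\x^*)]$ is attained at $z=1-\PP(\x^*)$, i.e.\ at $A=\T^c$, and the maximum on the upper piece $[\PP(\x^*),1]$ at $z=\PP(\x^*)$, i.e.\ at $A=\T$, giving $\max_A\mu(\PP(A),\varepsilon)=\max\bigl(\mu(\PP(\T),\varepsilon),\mu(\PP(\T^c),\varepsilon)\bigr)$. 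The only real obstacle is the scalar inequality $z^*(\varepsilon)\leq 1/2$; once that is verified the claim is essentially a picture.
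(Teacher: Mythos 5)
Your argument follows the same route as the paper's proof of this claim: combine $z^*(\varepsilon)\leq 1/2$ with the Case-1 hypothesis $\PP(\x^*)\geq 1-z^*(\varepsilon)$ to see that every subset probability lies in $[0,z^*(\varepsilon)]\cup[1-z^*(\varepsilon),1]$ with only $\T^c$ and $\T$ touching the inner endpoints, then finish by the monotonicity of $z(1-z)$ and of $\mu(\cdot,\varepsilon)$ on each piece. The gap structure and the endpoint bookkeeping (using positivity of $\PP$ for the ``attained only by'' uniqueness statements) are all correct.

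The one slip is in the claimed verification of $z^*(\varepsilon)\leq 1/2$, a fact the paper itself states without proof. Since $\mu'(\cdot,\varepsilon)$ is strictly decreasing, $z^*(\varepsilon)\leq 1/2$ is equivalent to $\mu'(1/2,\varepsilon)\leq 0$, which with $t=e^\varepsilon$ reads $h(t):=t\log t-(t-1)\left(1+\log\frac{t+1}{2}\right)\leq 0$ for $t\geq 1$. One has $h(1)=0$ and
\begin{equation*}
h'(t)=\log\frac{2t}{t+1}-\frac{t-1}{t+1}=\log u-(u-1),\qquad u=\frac{2t}{t+1}\in[1,2),
\end{equation*}
so $h'\leq 0$ follows from the \emph{upper} bound $\log u\leq u-1$. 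The inequality you cite, $\log t\geq 2(t-1)/(t+1)$, is the matching \emph{lower} bound on the logarithm and points the wrong way; it neither yields nor is implied by the estimate actually needed. The underlying fact is true, so the proof is easily repaired, but as written your stated reduction does not establish it.
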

\begin{proof}
Observe that $z^*(\varepsilon) \leq \frac12$ for all $\varepsilon$ and $\T^c=\cX \setminus \{\x^*\}$. The function $f(z)=z(1-z)$ decreases over the range $[\frac12,1] \supseteq [1-z^*(\varepsilon),1]$. Thus, for all  $A \supset \T$, $\PP(\T)\PP(\T^c) > \PP(A)\PP(A^c)$ because $\PP(\T) \geq 1-z^*(\varepsilon)$. This proves that $\T \in \arg \max_{A \subseteq \cX} \PP(A)\PP(A^c)$ and for all $A \supset \T$, $A \notin \arg \max_{A \subseteq \cX} \PP(A)\PP(A^c)$. Using a similar approach, we can show that $\T^c \in \arg \max_{A \subseteq \cX} \PP(A)\PP(A^c)$ and for all $A \subset \T^c$, $A \notin \arg \max_{A \subseteq \cX} \PP(A)\PP(A^c)$. Therefore, $\{\T,\T^c\} = \arg \max_{A \subseteq \cX} \PP(A)\PP(A^c)$. This proves the first part of the claim. The function $\mu\left(z,\varepsilon\right)$ increases over the range $[0,z^*(\varepsilon)]$. Thus, for all $A \subseteq \T^c$, $\mu(\PP(A),\varepsilon) \leq \mu(\PP(\T^c),\varepsilon)$ because $\PP(\T^c) \leq z^*(\varepsilon)$. On the other hand, note that $\mu\left(z,\varepsilon\right)$ decreases over the range $[z^*(\varepsilon),1]$ which includes the range $[1-z^*(\varepsilon),1]$. Thus, for all $A$ such that $A \supseteq \T$, $\mu(\PP(A),\varepsilon) \leq \mu(\PP(\T),\varepsilon)$ because $\PP(\T) \geq 1- z^*(\varepsilon)$. This proves that $\max\left(\mu(\PP(T),\varepsilon),\mu(\PP(T^c),\varepsilon)\right) =  \max_{A \subseteq \cX}\mu(\PP(A),\varepsilon)$.
\end{proof}
Using the above claim, we can conclude that the partition set $\T$ defined in \eqref{eq:tbin} is equal to $\{\x^*\}$ and
\begin{eqnarray}
\frac{\max_{j} \mu_j}{\mu\left(\PP\left(\T\right),\varepsilon\right)+\mu\left(\PP\left(\T^c\right),\varepsilon\right)} &=& \frac{\max_{A \subseteq \cX} \mu(\PP(A),\varepsilon)}{\mu\left(\PP\left(\T\right),\varepsilon\right)+\mu\left(\PP\left(\T^c\right),\varepsilon\right)}\nonumber \\
&\leq& \frac{\max_{A \subseteq \cX} \mu(\PP(A),\varepsilon)}{\max\left(\mu(\PP(T),\varepsilon),\mu(\PP(T^c),\varepsilon)\right)} \nonumber \\
&=& 1.
\end{eqnarray}
\noindent {\bf Case 2:} $\PP(\x^*)  \in [\frac12,1-z^*(\varepsilon)]$. Using
the first part of the proof of Claim \ref{claim:max_cm}, we can show that if
$\T=\{x^*\}$, then $\{\T,\T^c\} = \arg \max_{A \subseteq \cX}
\PP(A)\PP(A^c)$. Therefore, the partition set $\T$ defined in \eqref{eq:tbin}
is equal to $\{\x^*\}$ and
\begin{eqnarray}
\frac{\max_{j} \mu_j}{\mu\left(\PP\left(\T\right),\varepsilon\right)+\mu\left(\PP\left(\T^c\right),\varepsilon\right)} &=& \frac{\max_{A \subseteq \cX} \mu(\PP(A),\varepsilon)}{\mu\left(\PP\left(\T\right),\varepsilon\right)+\mu\left(\PP\left(\T^c\right),\varepsilon\right)}\nonumber \\
&\leq&  \frac{\mu(z^*(\varepsilon),\varepsilon)}{\mu\left(\PP\left(\x^*\right),\varepsilon\right)+\mu\left(1-\PP\left(\x^*\right),\varepsilon\right)} \nonumber \\
&\leq& 1.
\end{eqnarray}

\noindent {\bf Case 3:} $\PP(\x^*)  \in [0,\frac12]$.
\begin{claim}
\label{claim:lb_ub} There exists a set $A \subset \cX$ such that $\frac12 -
\PP(\x^*) \leq \PP(A) \leq \frac12$.
\end{claim}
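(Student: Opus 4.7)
The plan is to prove the claim by a discrete intermediate value argument, building the set $A$ greedily. Enumerate the alphabet as $\cX=\{x_1,\ldots,x_\kk\}$ in an arbitrary order and consider the nested chain of subsets $A_0=\emptyset\subset A_1\subset\cdots\subset A_\kk=\cX$ defined by $A_i=\{x_1,\ldots,x_i\}$. Then $\PP(A_0)=0$ and $\PP(A_\kk)=1$, and each increment $\PP(A_i)-\PP(A_{i-1})=\PP(x_i)$ is at most $\PP(\x^*)$ by the maximality of $\x^*$.

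Let $i^*$ be the smallest index for which $\PP(A_{i^*})\geq \tfrac12-\PP(\x^*)$. Such an index exists because $\PP(A_\kk)=1\geq \tfrac12-\PP(\x^*)$ (using $\PP(\x^*)\in[0,\tfrac12]$), and $i^*\geq 1$ unless $\PP(\x^*)=\tfrac12$ exactly. Take $A=A_{i^*}$. The lower bound $\PP(A)\geq \tfrac12-\PP(\x^*)$ is immediate from the definition of $i^*$. For the upper bound, by minimality of $i^*$ we have $\PP(A_{i^*-1})<\tfrac12-\PP(\x^*)$, and therefore
\begin{equation*}
\PP(A_{i^*})\;=\;\PP(A_{i^*-1})+\PP(x_{i^*})\;<\;\bigl(\tfrac12-\PP(\x^*)\bigr)+\PP(\x^*)\;=\;\tfrac12,
\end{equation*}
which also forces $A_{i^*}\neq \cX$ so that $A\subsetneq\cX$. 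In the degenerate case $\PP(\x^*)=\tfrac12$, the interval $[\tfrac12-\PP(\x^*),\tfrac12]=[0,\tfrac12]$ is nonempty and the singleton $A=\{\x^*\}$ itself satisfies $\PP(A)=\tfrac12$ and is a proper subset whenever $|\cX|\geq 2$ (if $|\cX|=1$ the claim is vacuous since $\PP(\x^*)=1$, putting us outside Case 3).

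There is essentially no obstacle here: the argument is a discrete version of the intermediate value theorem, and the key quantitative input is just that each element carries mass at most $\PP(\x^*)$, so the sequence $\PP(A_i)$ cannot jump over an interval of length $\PP(\x^*)$. This intermediate set is exactly what will be needed in the subsequent arguments of Case 3 to relate $\max_j\mu_j$ back to $\mu(\PP(\T),\varepsilon)+\mu(\PP(\T^c),\varepsilon)$.
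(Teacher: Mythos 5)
Your proof is correct and takes essentially the same approach as the paper: a discrete intermediate value argument over cumulative sums, using the key observation that each increment $\PP(x_i)$ is at most $\PP(\x^*)$ so the partial sums cannot skip an interval of length $\PP(\x^*)$. The only cosmetic difference is that the paper stops at the smallest index whose cumulative sum reaches $\tfrac12$ and then drops the last element, whereas you stop at the smallest index whose cumulative sum reaches $\tfrac12 - \PP(\x^*)$; these are mirror formulations of the same idea, and you are also right that the paper's ``sorted in increasing order'' assumption is never actually used.
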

\begin{proof}
Without loss of generality, assume that the sequence $\PP(\x_i)$, $i \in
[\kk]$, is sorted in increasing order. Let $l^* = \min\{l:
\sum_{i=1}^{l}\PP(\x_i) \geq \frac12\}$. From the definition of $l^*$,
$\PP(\{\x_1,\ldots,\x_{l^*-1}\}) < \frac12$ and
$\PP(\{\x_1,\ldots,\x_{l^*}\}) \geq \frac12$. Further,
\begin{equation*}
\PP(\{\x_1,\ldots,\x_{l^*-1}\})=\PP(\{\x_1,\ldots,\x_{l^*}\})-
\PP(\x_{l^*})
\end{equation*}
and since $\x^* \in \arg\max_{x \in \cX} \PP\left(\x\right)$, $\PP(\x_{l^*}) \leq
\PP(\x^*)$. Therefore, if $A=\{\x_1,\ldots,\x_{l^*-1}\}$, then $\frac12 -
\PP(\x^*) \leq \PP(A) \leq \frac12$.
\end{proof}

Let $\PP(\T)=\min\{\PP(B): B \in \arg \max_{A \subseteq
\cX}\PP(A)\PP(A^c)\}$. We claim that $\frac14 \leq \PP(\T) \leq \frac12$. The
upper bound on $\PP(\T)$ follows immediately from its definition. To prove
the lower bound on $\PP(\T)$, consider the set $A$ given in Claim
\ref{claim:lb_ub} and observe that
\begin{eqnarray}
\PP(\T)&\geq& \max(\PP(\x^*),\PP(A)) \nonumber \\
&\geq& \max(\PP(\x^*),\frac12-\PP(\x^*)) \nonumber \\
&\geq& \frac14.
\end{eqnarray}
All the inequalities follow from Claim \ref{claim:lb_ub} and the fact
that $\PP(\x^*) \in [0,\frac12]$.

Since $\frac14 \leq \PP(\T) \leq \frac12$, we have that
$\frac12 \leq \PP(\T^c) \leq \frac34$. Moreover, the function
$\mu\left(z,\varepsilon\right)$ decreases over the range
$[z^*(\varepsilon),1] \supset [\frac12,\frac34]$ and increases over the range
$[\frac14, z^*(\varepsilon)]$. Therefore, $\mu\left(\PP(\T^c),\varepsilon\right)
\geq \mu\left(\frac34,\varepsilon\right)$ and
$\mu\left(\PP(\T),\varepsilon\right) \geq \min\left(
\mu\left(\frac12,\varepsilon\right),\mu\left(\frac14,\varepsilon\right)\right)$.
Putting it all together, we have that
\begin{eqnarray}
\frac{\max_{j} \mu_j}{\mu\left(\PP\left(\T\right),\varepsilon\right)+\mu\left(\PP\left(\T^c\right),\varepsilon\right)} &=& \frac{\max_{A \subseteq \cX} \mu(\PP(A),\varepsilon)}{\mu\left(\PP\left(\T\right),\varepsilon\right)+\mu\left(\PP\left(\T^c\right),\varepsilon\right)}\nonumber \\
&\leq&  \frac{\mu(z^*(\varepsilon),\varepsilon)}{\min\left(
\mu\left(\frac12,\varepsilon\right),\mu\left(\frac14,\varepsilon\right)\right)+\mu\left(\frac34,\varepsilon\right)} \nonumber \\
&\leq& 1.
\end{eqnarray}

\subsection{Proof of Theorem \ref{thm:estrr}}
By Theorem \ref{thm:lp}, we have that
\begin{equation}
\label{eq:mutual_info_lp2}
\begin{aligned}
\max_{\Q \in
\mathcal{\D}_{\varepsilon}} I\left(\X;\Y\right) & = &  \underset{\bb}{\text{maximize}} & & \mu^{T}\bb \\
& & \text{subject to} & & \s^{(\kk)}\bb = \unity\\
& & & &  \bb \geq 0,
\end{aligned}
\end{equation}
where $ \mu_j = \mu\left(\s^{(\kk)}_j\right)= \sum_{i \in [\kk]}\pp\left(\x_i\right)\s^{(\kk)}_{ij}\log\left(\frac{\s^{(\kk)}_{ij}}{\sum_{i \in [\kk]}\pp\left(\x_i\right)\s^{(\kk)}_{ij}}\right)$ for $j \in \{1, \ldots,2^\kk\}$ and $\s^{(\kk)}$ is the $\kk \times 2^\kk$ staircase pattern matrix given in
Definition \ref{def:staircase_matrix}. The polytope given by $\s^{(\kk)}\bb = \unity$ and $\bb \geq 0$ is a closed and bounded one. Thus,
there is no duality gap and solving the above linear program is
equivalent to solving its dual
\begin{equation}
\label{eq:dual_mutual_information_lp2}
\begin{aligned}
& \underset{\dd}{\text{minimize}} & & \unity^{T}\dd \\
& \text{subject to} & & {\s^{(\kk)}}^{T}\dd \geq \mu.
\end{aligned}
\end{equation}
Note that any satisfiable solution $\dd^*$ to
(\ref{eq:dual_mutual_information_lp2}) provides an upper bound to
(\ref{eq:mutual_info_lp2}) since $\max \mu^{T}\bb = \min \unity^{T}\dd \leq
\unity^{T}\dd^*$. Let $\T_j =\{\x_i: \s^{(\kk)}_{ij}=e^\varepsilon\}$ and set $j_i =
\{j: \T_j =i\}$ for $i \in \{1, \ldots,\kk\}$. Consider the following choice
of dual variable
\begin{equation}
\dd^*_i = \frac{1}{\left(e^\varepsilon -1\right)\left(e^\varepsilon+\kk-1\right)}\left\{\left(e^\varepsilon+\kk-2\right)\mu\left(\s^{(\kk)}_{j_i}\right)-\sum_{l \in [\kk], l \neq i}\mu\left(\s^{(\kk)}_{j_l}\right) \right\},
\end{equation}
for $i \in \{1, \ldots,\kk\}$. Observe that since
$\T_{j_i} =i$ we have that $\PP\left(\T_{j_i}\right)=\pp\left(\x_i\right)$
and since
 \begin{eqnarray}
 \mu_j &=&  \PP\left(\T_j\right)e^{\varepsilon}\log\frac{e^{\varepsilon}}{\PP\left({\T_j} ^c\right)+e^{\varepsilon}\PP\left({\T_j} \right)}+\PP\left({\T_j} ^c\right)\log\frac{1}{\PP\left({\T_j} ^c\right)+e^{\varepsilon}\PP\left({\T_j} \right)},
 \end{eqnarray}
we have that
\begin{eqnarray}
\label{ub_mi_hp}
\unity^{T}\dd^*&=& \frac{1}{\left(e^\varepsilon -1\right)\left(e^\varepsilon+\kk-1\right)} \sum_{i \in [\kk]}\left\{\left(e^\varepsilon+\kk-2\right)\mu\left(\s^{(\kk)}_{j_i}\right)-\sum_{l \in [\kk], l \neq i}\mu\left(\s^{(\kk)}_{j_l}\right) \right\} \nonumber \\
&=&\frac{1}{\left(e^\varepsilon -1\right)\left(e^\varepsilon+\kk-1\right)}\left\{\left(e^\varepsilon+\kk-2\right)\sum_{i \in [\kk]}\mu\left(\s^{(\kk)}_{j_i}\right) - \sum_{i \in [\kk]}\sum_{l \in [\kk], l \neq i}\mu\left(\s^{(\kk)}_{j_l}\right) \right\} \nonumber \\
&=&\frac{1}{\left(e^\varepsilon -1\right)\left(e^\varepsilon+\kk-1\right)}\left\{\left(e^\varepsilon+\kk-2\right)\sum_{i \in [\kk]}\mu\left(\s^{(\kk)}_{j_i}\right) - (\kk-1)\sum_{i \in [\kk]}\mu\left(\s^{(\kk)}_{j_i}\right) \right\} \nonumber \\
&=&\frac{1}{\left(e^\varepsilon+\kk-1\right)}\sum_{i \in [\kk]}\mu\left(\s^{(\kk)}_{j_i}\right) \nonumber \\
&=&\frac{1}{\left(e^\varepsilon+\kk-1\right)}\sum_{i \in [\kk]}\left\{\pp\left(\x_i\right)e^{\varepsilon}\log\frac{e^{\varepsilon}}{\pp\left(\x_i\right)\left(e^{\varepsilon}-1\right)+1} \right. \nonumber \\
&& ~~~~~~~~~~~~~~~~~~~~~~~~~~~~~~~~~~~~~~~~  \left. +\left(1-\pp\left(\x_i\right)\right)\log\frac{1}{\pp\left(x_i\right)\left(e^{\varepsilon}-1\right)+1}\right\}.
\end{eqnarray}
We claim that $\dd^*$ is a feasible dual variable for sufficiently large
$\varepsilon$. In order to prove that $\dd^*$ is a feasible dual variable, we
show that $\left({\s^{(\kk)}}^{T}\dd^*\right)_j -\mu_j \geq 0 $ for all  $j \in \{1,
\ldots,2^\kk\}$ and all $\varepsilon \geq \varepsilon^*$, where
$\varepsilon^*$ is a positive quantity that depends on $\PP$. Using the fact
that
\begin{equation}
\log\left(a+e^\varepsilon b\right) = \varepsilon + \log b + O\left(e^{-\varepsilon}\right),
\end{equation}
for large $\varepsilon$, we get that
  \begin{eqnarray}
 \mu_j &=& \PP\left(\T_j\right)e^{\varepsilon}\log\frac{e^{\varepsilon}}{\PP\left({\T_j} ^c\right)+e^{\varepsilon}\PP\left({\T_j} \right)}+\PP\left({\T_j} ^c\right)\log\frac{1}{\PP\left({\T_j} ^c\right)+e^{\varepsilon}\PP\left({\T_j} \right)} \nonumber \\
 &=& \PP\left(T_j\right)e^\varepsilon\varepsilon - \left(\PP\left(T_j\right)\left(e^\varepsilon-1\right)+1\right)\log\left(\PP\left(T_j\right)\left(e^\varepsilon-1\right)+1\right) \nonumber \\
 &=& \PP\left(T_j\right)e^\varepsilon\varepsilon - \left(\PP\left(T_j\right)\left(e^\varepsilon-1\right)+1\right)\left(\varepsilon + \log \PP\left(\T_j\right) + O\left(e^{-\varepsilon}\right) \right) \nonumber \\
 &=& -\left(\PP\left(T_j\right)\log \PP\left(T_j\right)\right) e^\varepsilon + O\left(\varepsilon\right).
 \end{eqnarray}
On the other hand,
  \begin{eqnarray}
 \left({\s^{(\kk)}}^{T}\dd^*\right)_j&=& {\s^{(\kk)}_j}^{T}\dd^*  \nonumber \\
 &=& e^\varepsilon \sum_{i \in \T_j}\dd^*_i + \sum_{i \in \T^c_j}\dd^*_i \nonumber \\
  &=&  - \frac{1}{\left(e^\varepsilon -1\right)\left(e^\varepsilon+\kk-1\right)}\left\{\sum_{i \in [\kk]}\s^{(\kk)}_{ij}\left(e^\varepsilon+\kk-2\right)\left(\pp\left(\x_i\right)\log\pp\left(\x_i\right)e^\varepsilon \right. \right. \nonumber \\
& & ~~~~~~~~~~~~~~~~~~~~~~~~~~~~~~~~~~~~~~~~~~~~~~~~~~~~~~~~~~~~~~~~~~~~~~~~~~~~~~~ \left. + O\left(\varepsilon\right)\right)  \Bigg\} \nonumber \\
 & &+ \frac{1}{\left(e^\varepsilon -1\right)\left(e^\varepsilon+\kk-1\right)}\left\{\sum_{i \in [\kk]}\sum_{l \in [\kk], l \neq i}\s^{(\kk)}_{ij}\left( \left(\pp\left(\x_l\right)\log\pp\left(\x_l\right)\right)e^\varepsilon + O\left(\varepsilon\right) \right)\right\} \nonumber \\
 &=&- \frac{1}{\left(e^\varepsilon -1\right)\left(e^\varepsilon+\kk-1\right)} \left(\left(\sum_{i \in \T_j}\pp\left(\x_i\right)\log \pp\left(\x_i\right)\right)e^{3\varepsilon} + O\left(e^{2\varepsilon}\varepsilon\right)\right) \nonumber \\
 &=& -\left(\sum_{i \in \T_j}\pp\left(\x_i\right)\log \pp\left(\x_i\right)\right)e^{\varepsilon} + O\left(\varepsilon\right).
 \end{eqnarray}
 Assume, to begin with, that $j \neq \{j_1,j_2,...,j_{\kk}\}$. Then
 \begin{equation}
\left({\s^{(\kk)}}^{T}\dd^*\right)_j -  \mu_j = \left(\PP\left(T_j\right)\log \PP\left(T_j\right)-\sum_{i \in \T_j}\pp\left(\x_i\right)\log \pp\left(\x_i\right)\right)e^{\varepsilon} + O\left(\varepsilon\right).
\end{equation}
Notice that for $j \neq \{j_1,j_2,...,j_{\kk}\}$, $\PP\left(T_j\right)\log
\PP\left(T_j\right)>\sum_{i \in \T_j}\pp\left(\x_i\right)\log
\pp\left(\x_i\right)$. Therefore, there exists an $\varepsilon^* >0$ such
that $\left({\s^{(\kk)}}^{T}\dd^*\right)_j -  \mu_j \geq 0$ for all $\varepsilon \geq
\varepsilon^*$. If $j \in \{j_1,j_2,...,j_{\kk}\}$, it is not hard to check
that $\left({\s^{(\kk)}}^{T}\dd^*\right)_j -  \mu_j = 0$ for all $\varepsilon$. This
establishes the satisfiability of $\dd^*$ for all $\varepsilon \geq
\varepsilon^*$. It remains to show that the upper bound can be indeed
achieved via the randomized response mechanism. To this extent, recall that
the randomized response is given by
\begin{eqnarray}
	Q(y|x) \,=\, \left\{
\begin{array}{rl}
	\dfrac{e^\varepsilon}{|\cX|-1+e^\varepsilon}& \text{ if } y=x\;,\\
	\dfrac{1}{|\cX|-1+e^\varepsilon}& \text{ if } y\neq x\;.\\
\end{array}
\right.
	\label{eq:defrr3}
\end{eqnarray}
Computing the $I\left(\X;\Y\right)$ under \eqref{eq:defrr3}, we get that
\begin{eqnarray}
I\left(\X;\Y\right)& = & \frac{1}{e^\varepsilon + \kk -1}\sum_{i \in [\kk]}\left\{\pp\left(\x_i\right)e^{\varepsilon}\log\frac{e^{\varepsilon}}{\pp\left(\x_i\right)\left(e^{\varepsilon}-1\right)+1} \right. \nonumber \\
& & ~~~~~~~~~~~~~ \left. + \left(1-\pp\left(\x_i\right)\right)\log\frac{1}{\pp\left(x_i\right)\left(e^{\varepsilon}-1\right)+1}\right\}.
\end{eqnarray}
Hence, the randomized response mechanism achieves the upper bound
\eqref{ub_mi_hp}. This proves the optimality of the randomized response for
all $\varepsilon \geq \varepsilon^*$.
\section{Proof of Proposition \ref{dp_prop}}
\label{sec:dp_prop}
 Let $U\left(\Q\right)$ be a utility mechanism of the form
$U\left(\Q\right)=\sum_{\cY} \mu(\Q_\y)$, where $\mu$ is a sublinear
function. Consider a stochastic mapping $W$ of dimensions $\n \times m$ and
let $\Q W$ be the stochastic mapping obtained by first applying $\Q$ to $\X
\in \cX$ to obtain $\Y \in \cY$ and then applying $W$ to $\Y$ to obtain $Z
\in \mathcal{Z}$.
\begin{eqnarray}
U\left(\Q W\right) &=&\sum_{\mathcal{Z}} \mu\left(\left(\Q W\right)_z\right) \nonumber \\
&=&\sum_{\mathcal{Z}} \mu\left(\sum_{\cY}\Q_{\y} W_{\y,z}\right) \nonumber \\
&\leq& \sum_{\cY,\mathcal{Z}} W_{\y,z}\mu\left(\Q_{\y} \right) \nonumber \\
&=&\sum_{\cY} \mu(\Q_\y) \nonumber \\
&=& U\left(\Q\right),
\end{eqnarray}
where the inequality follows from sublinearity and the second to last
equality follows from the row stochastic property of $W$. Therefore,
$U\left(\Q\right)$ obeys the data processing inequality.

\acks{This research is supported in part by NSF CISE award CCF-1422278, NSF SaTC award
CNS-1527754, NSF CMMI award MES-1450848 and NSF ENG award ECCS-1232257.}


\bibliographystyle{plain}
\bibliography{privacy}

\begin{thebibliography}{37}
\providecommand{\natexlab}[1]{#1}
\providecommand{\url}[1]{\texttt{#1}}
\expandafter\ifx\csname urlstyle\endcsname\relax
  \providecommand{\doi}[1]{doi: #1}\else
  \providecommand{\doi}{doi: \begingroup \urlstyle{rm}\Url}\fi

\bibitem[Acquisti(2004)]{acquisti1}
A.~Acquisti.
\newblock Privacy in electronic commerce and the economics of immediate
  gratification.
\newblock In \emph{Proceedings of the 5th ACM conference on Electronic
  commerce}, pages 21--29. ACM, 2004.

\bibitem[Acquisti and Grossklags(2007)]{acquisti2}
A.~Acquisti and J.~Grossklags.
\newblock What can behavioral economics teach us about privacy.
\newblock \emph{Digital Privacy}, page 329, 2007.

\bibitem[Barber and Duchi(2014)]{BD14}
R.~F. Barber and J.~C. Duchi.
\newblock Privacy and statistical risk: Formalisms and minimax bounds.
\newblock \emph{arXiv preprint arXiv:1412.4451}, 2014.

\bibitem[Beimel et~al.(2008)Beimel, Nissim, and Omri]{BNO08}
A.~Beimel, K.~Nissim, and E.~Omri.
\newblock Distributed private data analysis: Simultaneously solving how and
  what.
\newblock In \emph{Advances in Cryptology--CRYPTO 2008}, pages 451--468.
  Springer, 2008.

\bibitem[Blackwell(1953)]{Bla53}
D.~Blackwell.
\newblock Equivalent comparisons of experiments.
\newblock \emph{The Annals of Mathematical Statistics}, 24\penalty0
  (2):\penalty0 265--272, 1953.

\bibitem[Blocki et~al.(2012)Blocki, Blum, Datta, and Sheffet]{BBDS12}
J.~Blocki, A.~Blum, A.~Datta, and O.~Sheffet.
\newblock The {J}ohnson-{L}indenstrauss transform itself preserves differential
  privacy.
\newblock In \emph{Foundations of Computer Science, 2012 IEEE 53rd Annual
  Symposium on}, pages 410--419. IEEE, 2012.

\bibitem[Chatzikokolakis et~al.(2010)Chatzikokolakis, Chothia, and Guha]{CCG10}
K.~Chatzikokolakis, T.~Chothia, and A.~Guha.
\newblock Statistical measurement of information leakage.
\newblock In \emph{Tools and Algorithms for the Construction and Analysis of
  Systems}, pages 390--404. Springer, 2010.

\bibitem[Chaudhuri and Hsu(2012)]{CH12}
K.~Chaudhuri and D.~Hsu.
\newblock Convergence rates for differentially private statistical estimation.
\newblock \emph{arXiv preprint arXiv:1206.6395}, 2012.

\bibitem[Chaudhuri and Monteleoni(2008)]{CM08}
K.~Chaudhuri and C.~Monteleoni.
\newblock Privacy-preserving logistic regression.
\newblock In \emph{NIPS}, volume~8, pages 289--296, 2008.

\bibitem[Chaudhuri et~al.(2012)Chaudhuri, Sarwate, and Sinha]{CSS12}
K.~Chaudhuri, A.~D. Sarwate, and K.~Sinha.
\newblock Near-optimal differentially private principal components.
\newblock In \emph{NIPS}, pages 998--1006, 2012.

\bibitem[Cover and Thomas(2012)]{CT12}
T.~M. Cover and J.~A. Thomas.
\newblock \emph{Elements of information theory}.
\newblock John Wiley \& Sons, 2012.

\bibitem[De(2012)]{De12}
A.~De.
\newblock Lower bounds in differential privacy.
\newblock In \emph{Theory of Cryptography}, pages 321--338. Springer, 2012.

\bibitem[Duchi et~al.(2013)Duchi, Jordan, and Wainwright]{DJW13}
J.~C Duchi, M.~I. Jordan, and M.~J. Wainwright.
\newblock Local privacy and statistical minimax rates.
\newblock In \emph{Foundations of Computer Science, 2013 IEEE 54th Annual
  Symposium on}, pages 429--438. IEEE, 2013.

\bibitem[Dwork(2006)]{Dwo06}
C.~Dwork.
\newblock Differential privacy.
\newblock In \emph{Automata, languages and programming}, pages 1--12. Springer,
  2006.

\bibitem[Dwork and Lei(2009)]{DL09}
C.~Dwork and J.~Lei.
\newblock Differential privacy and robust statistics.
\newblock In \emph{Proceedings of the 41st annual ACM symposium on Theory of
  computing}, pages 371--380. ACM, 2009.

\bibitem[Dwork et~al.(2006{\natexlab{a}})Dwork, Kenthapadi, McSherry, Mironov,
  and Naor]{DKM06}
C.~Dwork, K.~Kenthapadi, F.~McSherry, I.~Mironov, and M.~Naor.
\newblock Our data, ourselves: Privacy via distributed noise generation.
\newblock In \emph{Advances in Cryptology-EUROCRYPT 2006}, pages 486--503.
  Springer, 2006{\natexlab{a}}.

\bibitem[Dwork et~al.(2006{\natexlab{b}})Dwork, McSherry, Nissim, and
  Smith]{DMNS06}
C.~Dwork, F.~McSherry, K.~Nissim, and A.~Smith.
\newblock Calibrating noise to sensitivity in private data analysis.
\newblock In \emph{Theory of Cryptography}, pages 265--284. Springer,
  2006{\natexlab{b}}.

\bibitem[Geng and Viswanath(2012)]{GV12}
Q.~Geng and P.~Viswanath.
\newblock The optimal mechanism in differential privacy.
\newblock \emph{arXiv preprint arXiv:1212.1186}, 2012.

\bibitem[Geng and Viswanath(2013{\natexlab{a}})]{GV13}
Q.~Geng and P.~Viswanath.
\newblock The optimal mechanism in ($\epsilon$,$\delta$)-differential privacy.
\newblock \emph{arXiv preprint arXiv:1305.1330}, 2013{\natexlab{a}}.

\bibitem[Geng and Viswanath(2013{\natexlab{b}})]{GV13multi}
Q.~Geng and P.~Viswanath.
\newblock The optimal mechanism in differential privacy: Multidimensional
  setting.
\newblock \emph{arXiv preprint arXiv:1312.0655}, 2013{\natexlab{b}}.

\bibitem[Ghosh et~al.(2012)Ghosh, Roughgarden, and Sundararajan]{GRS12}
A.~Ghosh, T.~Roughgarden, and M.~Sundararajan.
\newblock Universally utility-maximizing privacy mechanisms.
\newblock \emph{SIAM Journal on Computing}, 41\penalty0 (6):\penalty0
  1673--1693, 2012.

\bibitem[Hardt and Roth(2012)]{HR12}
M.~Hardt and A.~Roth.
\newblock Beating randomized response on incoherent matrices.
\newblock In \emph{Proceedings of the 44th symposium on Theory of Computing},
  pages 1255--1268. ACM, 2012.

\bibitem[Hardt and Rothblum(2010)]{HR10}
M.~Hardt and G.~N. Rothblum.
\newblock A multiplicative weights mechanism for privacy-preserving data
  analysis.
\newblock In \emph{Foundations of Computer Science, 2010 51st Annual IEEE
  Symposium on}, pages 61--70. IEEE, 2010.

\bibitem[Hardt and Talwar(2010)]{HT10}
M.~Hardt and K.~Talwar.
\newblock On the geometry of differential privacy.
\newblock In \emph{Proceedings of the 42nd ACM symposium on Theory of
  computing}, pages 705--714. ACM, 2010.

\bibitem[Hardt et~al.(2012)Hardt, Ligett, and McSherry]{HLM12}
M.~Hardt, K.~Ligett, and F.~McSherry.
\newblock A simple and practical algorithm for differentially private data
  release.
\newblock In \emph{NIPS}, pages 2348--2356, 2012.

\bibitem[Kairouz et~al.(2013)Kairouz, Oh, and Viswanath]{KOV14b}
P.~Kairouz, S.~Oh, and P.~Viswanath.
\newblock The composition theorem for differential privacy.
\newblock \emph{arXiv preprint arXiv:1311.0776}, 2013.

\bibitem[Kairouz et~al.(2014{\natexlab{a}})Kairouz, Oh, and Viswanath]{KOV14}
P.~Kairouz, S.~Oh, and P.~Viswanath.
\newblock Extremal mechanisms for local differential privacy.
\newblock \emph{arXiv preprint arXiv:1407.1338}, 2014{\natexlab{a}}.

\bibitem[Kairouz et~al.(2014{\natexlab{b}})Kairouz, Oh, and Viswanath]{KOV14a}
P.~Kairouz, S.~Oh, and P.~Viswanath.
\newblock Differentially private multi-party computation: Optimality of
  non-interactive randomized response.
\newblock \emph{arXiv preprint arXiv:1407.1546}, 2014{\natexlab{b}}.

\bibitem[Kapralov and Talwar(2013)]{KT13}
M.~Kapralov and K.~Talwar.
\newblock On differentially private low rank approximation.
\newblock In \emph{SODA}, volume~5, page~1. SIAM, 2013.

\bibitem[Lei(2011)]{Lei11}
J.~Lei.
\newblock Differentially private m-estimators.
\newblock In \emph{NIPS}, pages 361--369, 2011.

\bibitem[McSherry and Talwar(2007)]{MT07}
F.~McSherry and K.~Talwar.
\newblock Mechanism design via differential privacy.
\newblock In \emph{Foundations of Computer Science, 2007. 48th Annual IEEE
  Symposium on}, pages 94--103. IEEE, 2007.

\bibitem[Sankar et~al.(2013)Sankar, Rajagopalan, and Poor]{SRP13}
L.~Sankar, S.~R. Rajagopalan, and H.~V. Poor.
\newblock Utility-privacy tradeoffs in databases: An information-theoretic
  approach.
\newblock \emph{Information Forensics and Security, IEEE Transactions on},
  8\penalty0 (6):\penalty0 838--852, 2013.

\bibitem[Sarwate and Sankar(2014)]{SS14}
A.~D. Sarwate and L.~Sankar.
\newblock A rate-disortion perspective on local differential privacy.
\newblock In \emph{Communication, Control, and Computing (Allerton), 2014 52nd
  Annual Allerton Conference on}, pages 903--908. IEEE, 2014.

\bibitem[Tsybakov and Zaiats(2009)]{TZ09}
A.~B. Tsybakov and V.~Zaiats.
\newblock \emph{Introduction to nonparametric estimation}, volume~11.
\newblock Springer, 2009.

\bibitem[Wang et~al.(2014{\natexlab{a}})Wang, Ying, and Zhang]{WYZ14}
W.~Wang, L.~Ying, and J.~Zhang.
\newblock On the relation between identifiability, differential privacy and
  mutual-information privacy.
\newblock \emph{arXiv preprint arXiv:1402.3757}, 2014{\natexlab{a}}.

\bibitem[Wang et~al.(2014{\natexlab{b}})Wang, Huang, Mitra, and
  Dullerud]{YZMD14}
Y.~Wang, Z.~Huang, S.~Mitra, and G.E. Dullerud.
\newblock Entropy-minimizing mechanism for differential privacy of
  discrete-time linear feedback systems.
\newblock In \emph{Decision and Control (CDC), 2014 IEEE 53rd Annual Conference
  on}, pages 2130--2135, Dec 2014{\natexlab{b}}.
\newblock \doi{10.1109/CDC.2014.7039713}.

\bibitem[Warner(1965)]{War65}
S.~L. Warner.
\newblock Randomized response: A survey technique for eliminating evasive
  answer bias.
\newblock \emph{Journal of the American Statistical Association}, 60\penalty0
  (309):\penalty0 63--69, 1965.

\end{thebibliography}

\end{document}